\documentclass[10pt]{article}

\title{Analysis of Langevin Monte Carlo via convex optimization}

\usepackage[a4paper]{geometry}

\usepackage{amsthm,amsmath}
\usepackage[utf8]{inputenc}
\usepackage{amssymb}
\usepackage{makeidx}
\usepackage[english]{babel}
\usepackage{graphicx}
\usepackage{amsfonts,amssymb}
\usepackage{oldgerm}
\usepackage{mathrsfs}
\usepackage[active]{srcltx}
\usepackage{verbatim}
\usepackage{enumitem, kantlipsum} 
\usepackage{aliascnt}
\usepackage{bbm}
\usepackage{array}
\usepackage{hyperref}
\usepackage[ruled,vlined]{algorithm2e}
 \usepackage[disable]{todonotes}
\usepackage{xargs}
\usepackage{cellspace}
\usepackage{xr}

\usepackage[Symbolsmallscale]{upgreek}
\usepackage{xcolor}
\definecolor{mydarkblue}{rgb}{0,0.08,0.45}
\usepackage{multicol} 
\setlength{\multicolsep}{0.1pt plus 1pt minus 1.5pt}

\usepackage{subcaption} 
\usepackage{wrapfig}
\usepackage[final]{fixme}
\fxsetup{layout=inline}

\usepackage{accents}
\usepackage{dsfont}
\usepackage{aliascnt}
\usepackage{cleveref}

\makeatletter
\newtheorem{theorem}{Theorem}
\crefname{theorem}{theorem}{Theorems}
\Crefname{Theorem}{Theorem}{Theorems}

\newaliascnt{lemma}{theorem}
\newtheorem{lemma}[lemma]{Lemma}
\aliascntresetthe{lemma}
\crefname{lemma}{lemma}{lemmas}
\Crefname{Lemma}{Lemma}{Lemmas}

\newaliascnt{corollary}{theorem}
\newtheorem{corollary}[corollary]{Corollary}
\aliascntresetthe{corollary}
\crefname{corollary}{corollary}{corollaries}
\Crefname{Corollary}{Corollary}{Corollaries}

\newaliascnt{proposition}{theorem}
\newtheorem{proposition}[proposition]{Proposition}
\aliascntresetthe{proposition}
\crefname{proposition}{proposition}{propositions}
\Crefname{Proposition}{Proposition}{Propositions}

\newaliascnt{definition}{theorem}

\aliascntresetthe{definition}
\crefname{definition}{definition}{definitions}
\Crefname{Definition}{Definition}{Definitions}

\newaliascnt{remark}{theorem}

\aliascntresetthe{remark}
\crefname{remark}{remark}{remarks}
\Crefname{Remark}{Remark}{Remarks}

\crefname{example}{example}{examples}
\Crefname{Example}{Example}{Examples}

\crefname{figure}{figure}{figures}
\Crefname{Figure}{Figure}{Figures}

\newtheorem{assumption}{\textbf{A}\hspace{-3pt}}
\Crefname{assumption}{\textbf{A}\hspace{-3pt}}{\textbf{A}\hspace{-3pt}}
\crefname{assumption}{\textbf{A}}{\textbf{A}}

\Crefname{assumptionG}{\textbf{G}\hspace{-3pt}}{\textbf{G}\hspace{-3pt}}
\crefname{assumptionG}{\textbf{G}}{\textbf{G}}

\Crefname{assumptionQ}{\textbf{Q}\hspace{-3pt}}{\textbf{Q}\hspace{-3pt}}
\crefname{assumptionQ}{\textbf{Q}}{\textbf{Q}}

\newcounter{hypA}
\setcounter{hypA}{-1}


\def\xstar{x^\star}

\newcommandx{\functionspace}[2][1=+]{\mathbb{F}_{#1}(#2)}

\newcommand{\argmin}{\operatorname*{arg\,min}}

\newcommandx{\VarDeux}[3][3=]{\operatorname{Var}^{#3}_{#1}\left\{#2 \right\}}

\newcommand{\1}{\mathbbm{1}}

\newcommand{\Borel}{\mathcal{B}}

\newcommand{\LeftEqNo}{\let\veqno\@@leqno}


\newcommand{\ceil}[1]{\left\lceil #1 \right\rceil}



\newcommand{\N}{\ensuremath{\mathbb{N}}}

\newcommand{\PE}{\mathbb{E}}


\newcommand{\abs}[1]{\left\vert #1 \right\vert}
\newcommand{\absLigne}[1]{\vert #1 \vert}
\newcommand{\tvnorm}[1]{\| #1 \|_{\mathrm{TV}}}

\newcommandx{\Vnorm}[2][1=V]{\| #2 \|_{#1}}
\newcommandx{\VnormEq}[2][1=V]{\left\| #2 \right\|_{#1}}
\newcommandx{\norm}[2][1=]{\ifthenelse{\equal{#1}{}}{\left\Vert #2 \right\Vert}{\left\Vert #2 \right\Vert^{#1}}}
\newcommandx{\normLigne}[2][1=]{\ifthenelse{\equal{#1}{}}{\Vert #2 \Vert}{\Vert #2\Vert^{#1}}}

\newcommand{\parenthese}[1]{\left(#1 \right)}

\newcommand{\parentheseDeux}[1]{\left[ #1 \right]}
\newcommand{\defEns}[1]{\left\lbrace #1 \right\rbrace }
\newcommand{\defEnsLigne}[1]{\lbrace #1 \rbrace }

\newcommand{\ensemble}[2]{\left\{#1\,:\eqsp #2\right\}}

\newcommand{\ps}[2]{\left\langle#1,#2 \right\rangle}
\newcommand{\psLigne}[2]{\langle#1,#2 \rangle}

\def\rset{\mathbb{R}}
\def\nset{\mathbb{N}}
\def\nsets{\mathbb{N}^{*}}



\newcommand{\probaLigne}[1]{\mathbb{P}( #1 )}
\newcommandx\probaMarkovTilde[2][2=]
{\ifthenelse{\equal{#2}{}}{{\widetilde{\mathbb{P}}_{#1}}}{\widetilde{\mathbb{P}}_{#1}\left[ #2\right]}}

\newcommand{\expe}[1]{\PE \left[ #1 \right]}

\newcommand{\expeLigne}[1]{\PE [ #1 ]}

\newcommand{\bigO}{\ensuremath{\mathcal O}}




\newcommand{\couplage}[2]{\Pi(#1,#2)}
\newcommand{\Pens}{\mathcal{P}}



\newcommand{\plusinfty}{+\infty}



\newcounter{hypoconbis}
\newcounter{saveconbis}
\newcommand\debutH{\begin{list}
{\textbf{H\arabic{hypoconbis}}}{\usecounter{hypoconbis}}\setcounter{hypoconbis}{\value{saveconbis}}}
\newcommand\finH{\end{list}\setcounter{saveconbis}{\value{hypoconbis}}}

\def\ie{\textit{i.e.}}
\def\eqsp{\;}
\newcommand{\coint}[1]{\left[#1\right)}
\newcommand{\ocint}[1]{\left(#1\right]}
\newcommand{\ooint}[1]{\left(#1\right)}
\newcommand{\ccint}[1]{\left[#1\right]}

\newcommand{\ocintLigne}[1]{(#1]}
\newcommand{\oointLigne}[1]{(#1)}

\newcommand{\boule}[2]{\operatorname{B}(#1,#2)}

\def\rmd{\mathrm{d}}
\def\rmZ{\mathrm{Z}}
\def\rme{\mathrm{e}}

\newcommandx\sequence[3][2=,3=]
{\ifthenelse{\equal{#3}{}}{\ensuremath{( #1_{#2})}}{\ensuremath{( #1_{#2})_{ #2 \in #3 }}}}
\newcommandx{\sequencen}[2][2=n\in\N]{\ensuremath{( #1)_{ #2 }}}
\newcommandx{\sequencek}[2][2=k\in\N]{\ensuremath{( #1)_{ #2 }}}
\newcommandx{\sequencens}[2][2=n\in\N^*]{\ensuremath{( #1)_{ #2 }}}
\newcommandx{\sequenceks}[2][2=k\in\N^*]{\ensuremath{( #1)_{ #2 }}}
\newcommandx{\sequencet}[2][2=t \geq 0]{\ensuremath{( #1)_{ #2 }}}
\newcommandx\sequenceDouble[4][3=,4=]
{\ifthenelse{\equal{#3}{}}{\ensuremath{( (#1_{#3},#2_{#3}) )}}{\ensuremath{(  (#1_{#3},#2_{#3}), \eqsp #3 \in #4 )}}}
\newcommandx{\sequencenDouble}[3][3=n\in\N]{\ensuremath{( (#1_{n},#2_{n}), \eqsp #3 )}}

\def\iid{i.i.d.}
\def\lsc{l.s.c}

\def\eg{e.g.}
\newcommand{\wrt}{w.r.t.}


\newcommand{\opnorm}[1]{{\left\vert\kern-0.25ex\left\vert\kern-0.25ex\left\vert #1 
    \right\vert\kern-0.25ex\right\vert\kern-0.25ex\right\vert}}

\def\generator{\mathcal{A}}

\def\bfe{\mathbf{e}}

\def\Id{\operatorname{Id}}

\newcommandx{\CPE}[3][1=]{{\mathbb E}_{#1}\left[#2 \left \vert #3 \right. \right]} 
\newcommandx{\CPVar}[3][1=]{\mathrm{Var}^{#3}_{#1}\left\{ #2 \right\}}
\newcommand{\CPP}[3][]
{\ifthenelse{\equal{#1}{}}{{\mathbb P}\left(\left. #2 \, \right| #3 \right)}{{\mathbb P}_{#1}\left(\left. #2 \, \right | #3 \right)}}

\def\YL{\mathbf{Y}}
\def\XEM{X}


\def\steps{\gamma}
\def\step{\steps}
\def\tstep{\tilde{\steps}}
\newcommandx{\stepa}[1][1=]{\steps_{#1}}
\def\Step{\Gamma}

\newcommandx{\Stepa}[1][1=]{\Step_{#1}}
\def\weights{\lambda}
\def\weight{\weights}
\newcommandx{\weighta}[1][1=]{\weights_{#1}}
\def\Weight{\Lambda}

\newcommandx{\Weighta}[1][1=]{\Weight_{#1}}
\newcommandx{\WeightaInv}[1][1=]{\Weight_{#1}^{-1}}


\def\KL{\operatorname{KL}}
\newcommand{\KLarg}[2]{\KL\left(#1 \middle \vert #2  \right)}

\def\wasserstein{W}

\newcommand{\wassersteinTarg}[2]{\wasserstein_2^2\left(#1 ,  #2  \right)}


\newcommandx{\osc}[2][1=]{\mathrm{osc}_{#1}(#2)}

\def\Id{\operatorname{Id}}

\def\Lone{\mathrm{L}^1}

\newcommand\density[2]{\frac{\rmd #1}{\rmd #2}}

\def\boreleanA{\mathsf{A}}
\def\openU{\mathsf{U}}

\def\setProba{\mathcal{P}}

\def\eventA{\boreleanA}
\def\msa{\eventA}

\def\transpose{\operatorname{T}}

\def\Fscr{\mathscr{F}}
\def\Hscr{\mathscr{H}}
\def\Escr{\mathscr{E}}
\def\Sscr{\mathscr{S}}
\def\tFscr{\tilde{\mathscr{F}}}

\def\tEscr{\tilde{\mathscr{E}}}

\def\divergence{\operatorname{div}}

\def\Pensa{\Pens^{\text{a}}}
\def\tmu{\tilde{\mu}}

\def\trho{\tilde{\rho}}
\def\brho{\bar{\rho}}
\def\YL{\mathbf{Y}}
\def\XEM{X}
\def\steps{\gamma}

\def\Leb{\operatorname{Leb}}
\newcommand{\AC}[1]{\operatorname{AC}(#1)}
\newcommand{\ACloc}[1]{\operatorname{AC}_\text{loc}(#1)}
\def\proj{\operatorname{proj}}

\def\Rker{R}
\def\Sker{S}
\def\Tker{T}

\def\bRker{{\bar R}}
\def\bSker{{\bar S}}

\def\tRker{{\tilde R}}
\def\tSker{{\tilde S}}

\newcommandx{\Rkers}[1][1=]{R_{\steps_{#1}}}
\newcommandx{\Qkers}[1][1=]{Q_{\steps}^{#1}}
\newcommandx{\Rkera}[1][1=]{R_{\steps_{#1}}}
\newcommandx{\Qkera}[1][1=]{Q_{\steps}^{#1}}
\newcommandx{\Skers}[1][1=]{S_{\steps_{#1}}}
\newcommandx{\Tkers}[1][1=]{T_{\steps_{#1}}}
\newcommandx{\Skera}[1][1=]{S_{\steps_{#1}}}
\newcommandx{\Tkera}[1][1=]{T_{\steps_{#1}}}

\newcommandx{\bRkers}[1][1=]{\bar{R}_{\steps_{#1},\steps_{#1+1}}}
\newcommandx{\bQkers}[1][1=]{\bar{Q}_{\steps}^{#1}}
\newcommandx{\bRkera}[1][1=]{\bar{R}_{\steps_{#1},\steps_{#1+1}}}
\newcommandx{\bQkera}[1][1=]{\bar{Q}_{\steps}^{#1}}
\newcommandx{\bSkers}[1][1=]{\bar{S}_{\steps_{#1}}}
\newcommandx{\bTkers}[1][1=]{\bar{T}_{\steps_{#1}}}
\newcommandx{\bSkera}[1][1=]{\bar{S}_{\steps_{#1}}}
\newcommandx{\bTkera}[1][1=]{\bar{T}_{\steps_{#1}}}

\newcommandx{\tRkers}[1][1=]{\tilde{R}_{\steps_{#1},\steps_{#1+1}}}
\newcommandx{\tQkers}[1][1=]{\tilde{Q}_{\steps}^{#1}}
\newcommandx{\tRkera}[1][1=]{\tilde{R}_{\steps_{#1},\steps_{#1+1}}}
\newcommandx{\tQkera}[1][1=]{\tilde{Q}_{\steps}^{#1}}
\newcommandx{\tSkers}[1][1=]{\tilde{S}_{\steps_{#1}}}
\newcommandx{\tTkers}[1][1=]{\tilde{T}_{\steps_{#1}}}
\newcommandx{\tSkera}[1][1=]{\tilde{S}_{\steps_{#1}}}
\newcommandx{\tTkera}[1][1=]{\tilde{T}_{\steps_{#1}}}


\def\ClyapU{M}


\def\msz{\mathsf{Z}}
\def\mcz{\mathcal{Z}}
\def\gradrv{Z}
\def\gradst{\Theta}

\def\vargrad{\upsilon}


\def\bnu{\bar{\nu}}
\def\bX{\bar{X}}


\def\tS{\tilde{S}}
\def\tgamma{\tilde{\gamma}}
\def\tnu{\tilde{\nu}}
\def\tZ{\tilde{Z}}
\def\tstep{\tilde{\step}} 
\def\tX{\tilde{X}}

\def\Lt{\tilde{L}}
\def\tL{\Lt}
\def\tz{\tilde{z}}
\def\tmcz{\tilde{\mcz}}
\def\tmsz{\tilde{\msz}}
\def\tgradst{\tilde{\gradst}}
\def\teta{\tilde{\eta}}


\def\prox{\operatorname{prox}}

\newcommand\fraca[2]{#1/#2}

\def\signop{\operatorname{sign}}
\def\bfe{\mathbf{e}}
\def\tm{\tilde{m}}
\def\tL{\tilde{L}}
\def\vareps{\varepsilon}
\def\xstarf{x_f}


\usepackage{authblk}

\author[1]{Alain Durmus}
\author[2]{Szymon Majewski}
\author[3]{Błażej Miasojedow}

\affil[1]{CMLA - \'Ecole normale supérieure Paris-Saclay, CNRS, Université Paris-Saclay, 94235 Cachan, France.}
\affil[2]{Institute of Mathematics, Polish Academy of Science}
\affil[3]{Institute of Applied Mathematics and Mechanics, University of Warsaw and\\ Institute of Mathematics, Polish Academy of Sciences }

\begin{document}
\footnotetext[1]{
Email: alain.durmus@cmla.ens-cachan.fr}
\footnotetext[2]{Email: smajewski@impan.pl}
\footnotetext[3]{Email: B.Miasojedow@mimuw.edu.pl}
\maketitle

\begin{abstract}
  In this paper, we provide new insights on the Unadjusted Langevin
  Algorithm.  We show that this method can be formulated as a first
  order optimization algorithm of an objective functional defined on
  the Wasserstein space of order $2$. Using this interpretation and
  techniques borrowed from convex optimization, we give a
  non-asymptotic analysis of this method to sample from logconcave
  smooth target distribution on $\rset^d$. Based on this
  interpretation, we propose two new methods for sampling from a
  non-smooth target distribution, which we analyze as well. Besides,
  these new algorithms are natural extensions of the Stochastic Gradient Langevin
  Dynamics (SGLD) algorithm, which is a popular extension of the Unadjusted
  Langevin Algorithm. Similar to SGLD, they only rely on
  approximations of the gradient of the target log density and can be
  used for large-scale Bayesian inference.
\end{abstract}

\section{Introduction}

This paper deals with the problem of sampling from a probability
measure $\pi$ on $(\rset^d,\Borel(\rset^d))$ which admits a density,
still denoted by $\pi$, with respect to the Lebesgue measure given for
all $x \in \rset^d$ by 
\begin{equation*}
  \pi(x) = \left. \rme^{-U(x)} \middle / \int_{\rset^d} \rme^{-U(y)} \rmd y \right. \eqsp,
\end{equation*}
where $U : \rset^d \to \rset$.  This problem arises in various fields
such that Bayesian statistical inference
\cite{gelman:carlin:stern:rudin:2014}, machine learning
\cite{andrieu:defreitas:doucet:jordan:2003}, ill-posed inverse
problems \cite{stuart:2010} or computational physics
\cite{krauth:2006}. Common and current methods to tackle this issue
are Markov Chain Monte Carlo methods
\cite{brooks:gelman:jone:meng:2011}, for example the
Hastings-Metropolis algorithm
\cite{metropolis:rosenbluth:rosenbluth:teller:teller:1953,hastings:1970}
or Gibbs sampling \cite{geman:geman:1984}.  All these methods boil
down to building a Markov kernel on $(\rset^d,\Borel(\rset^d))$ whose
invariant probability distribution is $\pi$.  Yet, choosing an
appropriate proposal distribution for the Hastings-Metropolis
algorithm is a tricky subject. For this reason, it has been proposed
to consider continuous dynamics which naturally leave the target
distribution $\pi$ invariant.  Perhaps, one of the most famous such
examples are the over-damped Langevin diffusion \cite{parisi:1981}
associated with $U$, assumed to be continuously differentiable:
\begin{equation}
\label{eq:langevin}
  \rmd \YL_t =  - \nabla U ( \YL_t) \rmd t + \sqrt{2} \rmd B_t \eqsp,
\end{equation}
where $(B_t)_{t \geq 0}$ is a $d$-dimensional Brownian motion. 
On appropriate conditions on $U$, this SDE admits a unique strong
solution $(\YL_t)_{t \geq 0}$ and defines a strong Markov semigroup
$(P_t)_{t \geq 0}$ which converges to $\pi$ in total variation
\cite[Theorem 2.1]{roberts:tweedie:1996} or Wasserstein distance
\cite{bolley:gentil:guillin:2012}.  However, simulating path solutions
of such stochastic differential equations is not possible in most
cases, and discretizations of these equations are used instead. In
addition, numerical solutions associated with these schemes define
Markov kernels for which $\pi$ is not invariant anymore. Therefore
quantifying the error introduced by these approximations is crucial to
justify their use to sample from the target $\pi$.  We consider in
this paper the Euler-Maruyama discretization of \eqref{eq:langevin}
which defines the (possibly inhomogenous) Markov chain $(\XEM_k)_{k
  \geq 0}$ given for all $k \geq 0$ by 
\begin{equation}
\label{eq:definition_em}
  \XEM_{k+1} = \XEM_k - \steps_{k+1} \nabla U(\XEM_k) + \sqrt{2 \steps_{k+1} } G_{k+1} \eqsp,
\end{equation}
where $(\steps_k)_{k \geq 1}$ is a sequence of step sizes which can be
held constant or converges to $0$, and $(G_{k})_{k \geq 1}$ is a
sequence of \iid~standard $d$-dimensional Gaussian random
variables. The use of the Euler-Maruyama discretization
\eqref{eq:definition_em} to approximatively sample from $\pi$ is
referred to as the Unadjusted Langevin Algorithm (ULA) (or the
Langevin Monte Carlo algorithm (LMC)), and has already been the matter
of many works. For example, weak error estimates have been obtained in
\cite{talay:tubaro:1991}, \cite{mattingly:stuart:higham:2002} for the
constant step size setting and in \cite{lamberton:pages:2003},
\cite{lemaire:2005} when $(\steps_k)_{k \geq 1}$ is non-increasing and
goes to $0$. Explicit and non-asymptotic bounds on the total
variation (\cite{dalalyan:2014}, \cite{durmus:moulines:2017}) or the
Wasserstein distance (\cite{durmus2016high}) between the distribution
of $X_k$ and $\pi$ have been obtained. Roughly, all these results are
based on the comparison between the discretization and the diffusion
process and quantify how the error introduced by the discretization
accumulate throughout the algorithm. In this paper, we propose an
other point of view on ULA, which shares nevertheless some relations
with the Langevin diffusion \eqref{eq:langevin}. Indeed, it has been
shown in \cite{jordan1998variational} that the family of distributions
$(\mu_0 P_t)_{t \geq 0}$, where $(P_t)_{t \geq 0}$ is the semi-group
associated with \eqref{eq:langevin} and $\mu_0$ is a probability
measure on $\Borel(\rset^d)$ admitting a second moment, is the
solution of a gradient flow equation in the Wasserstein space of order
$2$ associated with a particular functional $\Fscr$, see
\Cref{sec:ula-as-gradient}. Therefore, if $\pi$ is invariant for
$( P_t)_{t \geq 0}$, then it is a stationary solution of this
equation, and is the unique minimizer of $\Fscr$ if $U$ is convex.
Starting from this observation, we interpret ULA as a
first order optimization  algorithm on the Wasserstein space of order
$2$ with objective functional $\Fscr$. Namely, we adapt some proofs of
convergence for the gradient descent algorithm from the convex
optimization literature to obtain non-asymptotic and explicit bounds
between the Kullback-Leibler divergence from $\pi$ to averaged
distributions associated with ULA for the constant and non-increasing
step-size setting. Then, these bounds easily imply computable bounds
in total variation norm and Wasserstein distance. If the potential $U$
is strongly convex and gradient Lipschitz, we get back the results of
\cite{durmus:moulines:2017}, \cite{durmus2016high} and
\cite{cheng:bartlett:2017},  when the step-size is held
constant in \eqref{eq:definition_em} (see
\Cref{tab:comparison_gamma_strong_convex}). In the case where $U$ is only
convex and from a warm start, we get a bound on the complexity for ULA of order
$d \bigO(\varepsilon^{-2})$ and $d \bigO(\varepsilon^{-4})$ to get one
sample close from $\pi$ with an accuracy $\varepsilon >0$, in Kullback
Leibler (KL) divergence and total variation distance respectively
(\Cref{tab:comparison_gamma__convex_warm_start}.
The bounds we get starting
from a minimizer of $U$ are presented in
\Cref{tab:comparison_gamma__convex}. 

\begin{table}[h]
\centering
\begin{normalsize}
\begin{tabular}{|c|c|c|c|}
   \hline
 & Total variation  & Wasserstein distance & KL divergence   \\
\hline
\cite{durmus2016high} & $d \bigO(\varepsilon^{-2})$ &  $d \bigO(\varepsilon^{-2})$ & $-$ \\
  \hline
\cite{cheng:bartlett:2017} & $d \bigO(\varepsilon^{-2})$ & $d \bigO(\varepsilon^{-2})$ & $d \bigO(\varepsilon^{-1})$\\
  \hline
This paper & $d \bigO(\varepsilon^{-2})$ & $d \bigO(\varepsilon^{-2})$ & $d \bigO(\varepsilon^{-1})$\\
\hline
\end{tabular}
\end{normalsize}
\caption{\normalsize{Complexity for ULA when $U$ is strongly convex and gradient Lipschitz (up to logarithmic terms)}}
\label{tab:comparison_gamma_strong_convex}
\end{table}

\begin{table}[h]
\centering
\begin{normalsize}
\begin{tabular}{|c|c|c|c|}
   \hline
  & Total variation  & Wasserstein distance & KL divergence   \\  
  \hline
\cite{cheng:bartlett:2017} & $d \bigO(\varepsilon^{-6})$ & - & $d \bigO(\varepsilon^{-3})$\\
  \hline
This paper & $d \bigO(\varepsilon^{-4})$ & - & $d \bigO(\varepsilon^{-2})$\\
\hline
\end{tabular}
\end{normalsize}
\caption{\normalsize{Complexity of ULA from a warm start when $U$ is convex and gradient Lipschitz (up to logarithmic terms)}}
\label{tab:comparison_gamma__convex_warm_start}
\end{table}

\begin{table}[h]
\centering
\begin{normalsize}
\begin{tabular}{|c|c|c|c|}
   \hline
  & Total variation  & Wasserstein distance & KL divergence   \\
  \hline
\cite{durmus:moulines:2017} & $d^5 \bigO(\varepsilon^{-2})$ & - &  - \\
  \hline
 This paper & $d^3 \bigO(\varepsilon^{-4})$ & - & $d^3 \bigO(\varepsilon^{-2})$\\
\hline
\end{tabular}
\end{normalsize}
\caption{\normalsize{Complexity of ULA when $U$ is convex and gradient Lipschitz (up to logarithmic terms)}}
\label{tab:comparison_gamma__convex}
\end{table}

In addition, we propose two new algorithms to sample from a class of
non-smooth log-concave distributions for which we derive computable
non-asymptotic bounds as well. The first one can be applied to
Lipschitz convex potential for which unbiased estimates of
subgradients are available. Remarkably, the bounds we obtain for this
algorithm depend on the dimension only through the initial condition
and the variance of the stochastic sub-gradient estimates.  The second
method we propose is a generalization of the Stochastic Gradient
Langevin Dynamics algorithm \cite{welling:teh:2011}. This latter is a
popular extension of ULA, in which the gradient is replaced by a
sequence of \iid~unbiased estimators. For this new scheme, we assume
that $U$ can be decomposed as the sum of two functions $U_1$ and
$U_2$, where $U_1$ is at least continuously differentiable and $U_2$
is only convex, and use stochastic gradient estimates for $U_1$ and
the proximal operator associated with $U_2$. This new method is close
to the one proposed in \cite{durmus:moulines:pereyra:2016} but is
different. To get computable bounds from the target
distribution $\pi$, we interpret this algorithm as a first order
optimization algorithm and provide explicit bounds between the
Kullback-Leibler divergence from $\pi$ to distributions associated
with SGLD. In the case where $U$ is strongly convex and gradient
Lipschitz (\ie~$U_2 = 0$), we get back the same complexity as
\cite{dalalyan:karagulyan:2017} which is of order
$d \bigO(\varepsilon^{-2})$ for the Wasserstein distance. We obtain
the same complexity for the total variation distance and a complexity
of order $d \bigO(\varepsilon^{-1})$ for the KL divergence
(\Cref{tab:comparison_gamma_strong_convex_sgld}).  In the case where
$U$ is only convex and from a warm start, we get a complexity of order
$d \bigO(\varepsilon^{-2})$ and $d \bigO(\varepsilon^{-4})$ to get one
sample close from $\pi$ with an accuracy $\varepsilon >0$ in KL
divergence and total variation distance respectively, see
\Cref{tab:comparison_gamma_convex_sgld}. The bounds we get starting
from a minimizer of $U$ are presented in
\Cref{tab:comparison_gamma_convex_sgld_xstar}.

Furthermore, SGLD has
been also analyzed in a general setting, \ie~the potential $U$ is not
necessarily convex. In \cite{teh:vollmer:zygalakis:2015}, a study of this scheme is done by weak
error estimates.  Finally, \cite{raginsky2017non} and
\cite{xu2017global} gives some results regarding the potential use of
SGLD as an optimization algorithm to minimize the potential $U$ by
targeting a target density proportional to $x \mapsto \rme^{-\beta U(x)}$
for some $\beta >0$.

\begin{table}
\centering
\begin{normalsize}
\begin{tabular}{|c|c|c|c|c|}
   \hline
 & Total variation  & Wasserstein distance & KL divergence   \\
\hline
\cite{dalalyan:karagulyan:2017} & $-$ &  $d \bigO(\varepsilon^{-2})$ & $-$ \\
  \hline
This paper & $d \bigO(\varepsilon^{-2})$ & $d \bigO(\varepsilon^{-2})$ & $d \bigO(\varepsilon^{-1})$\\
\hline
\end{tabular}
\end{normalsize}
\caption{\normalsize{Complexity for SGLD when $U$ is strongly convex and gradient Lipschitz (up to logarithmic terms)}}
\label{tab:comparison_gamma_strong_convex_sgld}
\end{table}

\begin{table}[h!]
\centering
\begin{normalsize}
\begin{tabular}{|c|c|c|c|c|}
   \hline
 & Total variation  & Wasserstein distance & KL divergence   \\
\hline
This paper & $d \bigO(\varepsilon^{-4})$ & $-$ & $d \bigO(\varepsilon^{-2})$\\
\hline
\end{tabular}
\end{normalsize}
\caption{\normalsize{Complexity for SGLD from a warm start when $U$ is convex and gradient Lipschitz}}
\label{tab:comparison_gamma_convex_sgld}
\end{table}

\begin{table}[h!]
\centering
\begin{normalsize}
\begin{tabular}{|c|c|c|c|c|}
   \hline
 & Total variation  & Wasserstein distance & KL divergence   \\
\hline
This paper & $d^3 \bigO(\varepsilon^{-4})$ & $-$ & $d^3 \bigO(\varepsilon^{-2})$\\
\hline
\end{tabular}
\end{normalsize}
\caption{\normalsize{Complexity for SGLD from a warm start when $U$ is convex and gradient Lipschitz}}
\label{tab:comparison_gamma_convex_sgld_xstar}
\end{table}

In summary, our contributions are the following:
\begin{itemize}
\item We give a new interpretation of ULA and use it to get bounds on
  the Kullback-Leibler divergence from $\pi$ to the iterates of
  ULA. We recover the dependence on the dimension of \cite[Theorem
  3]{cheng:bartlett:2017} in the strongly convex case and get tighter
  bounds. Note that this result implies previously known bounds
  between $\pi$ and ULA in Wasserstein distance and the total
  variation distance but with a completely different technique. We
  also give computable bounds when $U$ is only convex which improves the
  results of \cite{durmus:moulines:2017}, \cite{dalalyan:2014} and \cite{cheng:bartlett:2017}.
\item We give two new methodologies to sample from a non-smooth
  potential $U$ and make a non-asymptotic analysis of them. These two new algorithms are generalizations of SGLD.
\end{itemize}

The paper is organized as follows. In \Cref{sec:ula-as-gradient}, we
give some intuition on the strategy we take to analyze ULA and its
variants. These ideas come from gradient flow theory in
Wasserstein space. In \Cref{sec:main-lemmas}, we give the main results
we obtain on ULA and their proof. In \Cref{sec:expl-bounds-extens},
two variants of ULA are presented and analyzed. Finally, numerical
experiments on logistic regression models are presented in \Cref{sec:numer-exper} to support our
theoretical findings regarding our new methodologies.

\subsection*{Notations and conventions}
Denote by $\Borel(\rset^d)$ the Borel $\sigma$-field of $\rset^d$,
$\Leb$ the Lebesgue measure on $\Borel(\rset^d)$,
$\functionspace[]{\rset^d}$ the set of all Borel measurable functions
on $\rset^d$ and for $f \in \functionspace[]{\rset^d}$,
$\Vnorm[\infty]{f}= \sup_{x \in \rset^d} \abs{f(x)}$.  For $\mu$ a
probability measure on $(\rset^d, \mathcal{B}(\rset^d))$ and $f \in
\functionspace[]{\rset^d}$ a $\mu$-integrable function, denote by
$\mu(f)$ the integral of $f$ \wrt~$\mu$.  Let $\mu$ and $\nu$ be two
sigma-finite measures on $(\rset^d, \Borel(\rset^d))$. Denote by $\mu
\ll \nu$ if $\mu$ is absolutely continuous \wrt~$\nu$ and $\rmd \mu /
\rmd \nu$ the associated density. Let $\mu,\nu$ be two probability
measures on $(\rset^d, \Borel(\rset^d))$. Define the Kullback-Leibler
divergence of $\mu$ from $\nu$ by 
\begin{equation*}
  \KLarg{\mu}{\nu} = 
  \begin{cases}
    \int_{\rset^d} \frac{\rmd \mu}{\rmd \nu}(x) \log \parenthese{\frac{\rmd \mu}{\rmd \nu} (x)} \rmd \nu (x) \eqsp, & \text{if } \mu \ll \nu \\
\plusinfty & \text{ otherwise} \eqsp.
  \end{cases}
\end{equation*}

We say that $\zeta$ is a
transference plan of $\mu$ and $\nu$ if it is a probability measure on
$(\rset^d \times \rset^d, \mathcal{B}(\rset^d \times \rset^d) )$ such
that for all measurable set $\boreleanA$ of $\rset^d$,
$\zeta(\boreleanA \times \rset^d) = \mu(\boreleanA)$ and
$\zeta(\rset^d \times \boreleanA) = \nu(\boreleanA)$. We denote by
$\couplage{\mu}{\nu}$ the set of transference plans of $\mu$ and
$\nu$. Furthermore, we say that a couple of $\rset^d$-random variables
$(X,Y)$ is a coupling of $\mu$ and $\nu$ if there exists $\zeta \in
\couplage{\mu}{\nu}$ such that $(X,Y)$ are distributed according to
$\zeta$.  For two probability measures $\mu$ and $\nu$, we define the
Wasserstein distance of order $2$ as
\begin{equation}
\label{eq:definition_wasserstein}
W_2(\mu,\nu) =  \left( \inf_{\zeta \in \couplage{\mu}{\nu}} \int_{\rset^d \times \rset^d} \norm[2]{x-y}\rmd \zeta (x,y)\right)^{1/2} \eqsp.
\end{equation}
By \cite[Theorem 4.1]{VillaniTransport}, for all $\mu,\nu$ probability
measures on $\rset^d$, there exists a transference plan $\zeta^\star
\in \couplage{\mu}{\nu}$ such that for any coupling $(X,Y)$
distributed according to $\zeta^\star$, $W_2(\mu,\nu) =
\PE[\norm[2]{X-Y}]^{1/2}$. This kind of transference plan
(respectively coupling) will be called an optimal transference plan
(respectively optimal coupling) associated with $W_2$.  We denote by
$\setProba_2(\rset^d)$ the set of probability measures with finite
second moment: for all $\mu \in \setProba_2(\rset^d)$, $\int_{\rset^d}
\norm[2]{x} \rmd \mu( x) < \plusinfty$. By \cite[Theorem
6.16]{VillaniTransport}, $\setProba_2(\rset^d)$ equipped with the
Wasserstein distance $W_2$ of order $2$ is a complete separable metric
space. Denote by $\Pensa(\rset^d) = \{ \mu \in \Pens_2(\rset^d) \, : \, \mu \ll \Leb \}$.

For two probability measures $\mu$ and $\nu$ on $\rset^d$,  the total variation distance distance between $\mu$ and $\nu$ is defined by
$\tvnorm{\mu-\nu} = \sup_{\eventA \in \Borel(\rset^d)}\abs{\mu(\eventA) - \nu(\eventA)}$.

Let $n \in \nset \cup \{\infty\}$ and $\openU \subset \rset^d$ be an
open set of $\rset^d$. Denote by $C^n(\openU)$ the set of $n$-th
continuously differentiable function from $\openU$ to $\rset$. Denote
by $C^n_c(\openU)$ the set of $n$-th continuously differentiable
function from $\openU$ to $\rset$ with compact support.  Let $I
\subset \rset$ be an interval and $f : I \to \rset$. $f$ is absolutely
continuous on $I$ if for all $\varepsilon >0$, there exists $\delta >0$
such that for all $n \in \nset^*$ and $t_1,\ldots,t_{2n} \in I$, $t_1
\leq \cdots \leq t_{2n}$,
\begin{equation*}
  \text{if $\sum_{k=1}^n\defEns{t_{2k}-t_{2k-1}} \leq \delta$ } \text{ then } \sum_{k=1}^n \abs{f(t_{2k})-f(t_{2k-1})} \leq \varepsilon \eqsp.
\end{equation*}
 In the sequel, we take
the convention that $\sum_{p}^n =0$ and $\prod_p ^n = 1$ for $n,p \in
\nset$, $n <p$.


\section{Interpretation of ULA as an optimization algorithm}
\label{sec:ula-as-gradient}

Throughout this paper, we assume that $U$ satisfies the following condition for $m \geq 0$.
\begin{assumption}[$m$]
  \label{assum:convexity}
\begin{sf}
$U : \rset^d \to \rset$ is $m$-convex, \ie~for all $x,y \in \rset^d$, 
\begin{equation*}
U(t x + (1-t) y  ) \leq t U(x) + (1-t) U(y) -t(1-t)(m/2)\norm[2]{x-y}
\end{equation*}
\end{sf}
\end{assumption}

Note that \Cref{assum:convexity}$(m)$ includes the case where $U$ is
only convex when $m=0$.  
We consider in this Section the following additional
condition on $U$ which will be relaxed in \Cref{sec:expl-bounds-extens}.
\begin{assumption}
  \label{assum:grad_lip}
  \begin{sf}
    $U$ is continuously differentiable and $L$-gradient Lipschitz, \ie~there exists $L \geq 0$ such that for all $x,y \in \rset^d$, $\norm{\nabla U(x)-\nabla U(y)} \leq L\norm{x-y}$
  \end{sf}
\end{assumption}

Under \Cref{assum:convexity} and \Cref{assum:grad_lip}, the Langevin
diffusion \eqref{eq:langevin} has a unique strong solution
$(\YL_t^x)_{t \geq 0}$ starting at $x \in \rset^d$. The Markovian
semi-group $(P_t)_{t \geq 0}$, given for all $t \geq 0$, $x \in
\rset^d$ and $\eventA \in \Borel(\rset^d)$ by $P_t(x,\eventA) =
\probaLigne{\YL_t^x \in \eventA}$, is reversible with respect to $\pi$
and $\pi$ is its unique invariant probability measure, see
\cite[Theorem 1.2, Theorem 1.6]{ambrosio:savare:zambotti:2009}. Using
this probabilistic framework, \cite[Theorem 1.2]{roberts:tweedie:1996}
shows that $(P_t)_{t \geq 0}$ is irreducible with respect to the
Lebesgue measure, strong Feller and $\lim_{t \to \plusinfty}
\tvnorm{P_t(x,\cdot) - \pi} = 0$ for all $x \in \rset^d$. But to
study the properties of the semi-group $(P_t)_{t \geq 0}$, an other
complementary and significant approach can be used.
This dual point of view is based on the adjoint of the infinitesimal
generator associated with $(P_t)_{t \geq 0}$. The \textit{strong}
generator of \eqref{eq:langevin} $(\generator,
\mathrm{D}(\generator))$ is defined for all $f \in
\mathrm{D}(\generator)$ and $x \in \rset^d$ by
\begin{equation*}
  \generator f(x)  = \lim_{t \to 0} t^{-1}(P_t f(x) -f(x)) \eqsp,
\end{equation*}
where $\mathrm{D}(\generator)$ is the subset of $C_0(\rset^d)$ such
that for all $f \in \mathrm{D}(\generator)$, there exists $g \in
C_0(\rset^d)$ such that $ \lim_{t \to 0} \norm{t^{-1}(P_t f
  -f)-g}_{\infty} =0$. In particular for $f \in C_c^2(\rset^d)$, we get by Itô's formula
\begin{equation*}
  \generator f = \ps{\nabla f}{\nabla U} + \Delta f \eqsp.
\end{equation*}
In addition, by \cite[Proposition 1.5]{ethier:kurtz:1986}, for all $f
\in C^2_c(\rset^d)$, $P_tf(x) \in \mathrm{D}(\generator)$ and  for $x \in
\rset^d$, $t \mapsto P_t f(x)$ is continuously differentiable,
\begin{equation}
\label{eq:backward_kolmogorov}
  \frac{\rmd P_t f(x)}{\rmd t} = \generator P_t f(x) = P_t \generator f(x) \eqsp.
\end{equation}

For all $\mu_0 \in \Pensa_2(\rset^d)$ and $t > 0$, by Girsanov's Theorem
\cite[Theorem~5.1, Corollary~5.16, Chapter~3]{karatzas:shreve:1991},
$\mu_0 P_t(\cdot)$ admits a density with respect to the
Lebesgue measure denoted by $\rho_t^{x}$. This density is solution by
\eqref{eq:backward_kolmogorov} of the Fokker-Planck 
equation (in the weak sense):
\begin{equation*}
  \frac{\partial \rho_t^{x}}{\partial t} = \divergence (\nabla \rho_t^{x}+ \rho_t^{x} \nabla U(x)) \eqsp,
\end{equation*}
meaning that for all $\phi \in C_c^{\infty}(\rset^d)$ and $t >0$,
\begin{equation}
\label{eq:backward_kolmogorov_2}
  \frac{\partial }{\partial t} \int_{\rset^d} \phi(y) \rho^x_t(\rmd y) = \int_{\rset^d} \generator \phi(y) \, \rho^x_t(\rmd y) \eqsp.
\end{equation}
In the landmark paper \cite{jordan1998variational}, the authors shows
that if $U$ is infinitely continuously differentiable, $(\rho^x_{t})_{t > 0}$ is the limit of the minimization scheme which
defines a sequence of probability measures  $(\tilde{\rho}_{k,\steps}^x)_{k \in \nset}$ as follows. For $x \in \rset^d$ and $\gamma >0$ set $\rho_{0,\steps}^x = \rmd \mu_0 / \rmd \Leb$ and 
\begin{equation}
\label{eq:min_step_descent}
\begin{aligned}
\tilde{\rho}_{k,\steps} = \frac{\rmd \tilde{\mu}_{k,\steps}}{\rmd \Leb} & \, , \,  \tilde{\mu}_{k,\steps} = \underset{\mu \in \Pensa_2(\rset^d)}{\text{argmin}}
& & \wasserstein_2(\tmu_{k,h},\mu) + \steps \Fscr(\mu) \eqsp, \eqsp k \in \nset \eqsp,
\end{aligned}
\end{equation} 
where $\Fscr : \Pens_2(\rset^d) \to \ocint{- \infty , \plusinfty}$ is the free energy functional, 
\begin{equation}
  \label{eq:def_free_energy}
  \Fscr =
\Hscr + \Escr \eqsp,
\end{equation}
$\Hscr, \Escr : \Pens_2(\rset^d) \to \ocint{- \infty ,
  \plusinfty}$ are the Boltzmann H-functional and the potential energy functional, given for all $\mu \in \Pens_2(\rset^d)$ by 
\begin{align}
\label{eq:def_Boltz_entropy}
  \Hscr(\mu) & =  
  \begin{cases}
    \int_{\rset^d} \density{\mu}{\Leb} (x)  \log \parenthese{\density{\mu}{\Leb} (x) } \rmd x & \text{ if } \mu \ll \Leb \\
\plusinfty \text{ otherwise} \eqsp,
  \end{cases}
\\
\label{eq:def_potential_energy}
  \Escr(\mu) & = \int_{\rset^d} U(x) \rmd \mu(x)  \eqsp.
\end{align}
More precisely, setting $\brho^x_{0,\steps} = \rmd \mu_0 / \rmd \Leb$ and
$\brho_{t,\steps} = \trho_{k,\steps}$ for $t \in
\coint{k\steps,(k+1)\steps}$, \cite[Theorem
5.1]{jordan1998variational} shows that for all $t >0$,
$\brho_{t,\steps}$ converges to $\rho_{t,\steps}$ weakly in
$\Lone(\rset^d)$ as $\steps$ goes to $0$.  This result has been
extended and cast into the framework of gradient flows in the
Wasserstein space $(\Pens_2(\rset^d), \wasserstein_2)$, see
\cite{ambrosio2008gradient}. We provide a short introduction to this
topic in \Cref{sec:defin-usef-results} and present useful concepts and
results for our proofs.  Note that this scheme can be seen as a
proximal type algorithm (see \cite{martinet:1970} and
\cite{rockafeller:1976}) on the Wasserstein space $(\Pens_2(\rset^d),
\wasserstein_2)$ used to minimize the functional $\Fscr$. The
following lemma shows that $\pi$ is the unique minimizer of
$\Fscr$. As a result, the distribution of the Langevin diffusion is
the steepest descent flow of $\Fscr$ and we get back intuitively that
this process converges to the target distribution $\pi$.
\begin{lemma}\label{lem:kl_minimizer} 
  Assume \Cref{assum:convexity}$(0)$. The following holds:
  \begin{enumerate}[label=\alph*)]
  \item \label{item:1:lem:kl_minimizer}  $\pi \in \Pens_2(\rset^d)$, $\Escr(\pi) < \plusinfty$ and $\Hscr(\pi) < \plusinfty$.
  \item \label{item:2:lem:kl_minimizer}  For all 
  $\mu \in \mathcal{P}_2(\rset^d)$ satisfying $\Escr(\mu) <
  \infty$
\begin{equation}
\label{eq:lem:kl_minimizer} 
\Fscr(\mu) - \Fscr(\pi) = \KLarg{\mu}{\pi} \eqsp.
\end{equation}
  \end{enumerate}
\end{lemma}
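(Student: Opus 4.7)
The plan is to reduce \eqref{eq:lem:kl_minimizer} to an algebraic manipulation using the explicit form $\pi(x) = Z^{-1} \rme^{-U(x)}$ with $Z = \int_{\rset^d} \rme^{-U(y)} \rmd y$. This is essentially the Gibbs variational principle: since $-\log \pi = U + \log Z$, one has $\Fscr(\pi) = -\log Z$, and substituting $\log(\rmd \mu / \rmd \pi) = \log \rho + U + \log Z$ into $\KLarg{\mu}{\pi}$ yields the identity after integrating.

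For part (a), I first argue that $U$ is coercive, i.e.\ $U(x) \to +\infty$ as $\|x\| \to +\infty$. This is because convexity of $U$ combined with $Z < +\infty$ forces every sublevel set $\{U \leq t\}$ to be a convex set of finite Lebesgue measure, hence compact. Consequently $U$ attains a minimum and, by convexity, admits a linear minorant $U(x) \geq \alpha \|x\| + \beta$ outside a large ball for some $\alpha > 0$ and $\beta \in \rset$ (otherwise $\rme^{-U}$ would not be integrable). This exponential-linear decay yields $\int_{\rset^d} \|x\|^k \rme^{-U(x)} \rmd x < +\infty$ for every $k \geq 0$, so $\pi \in \Pens_2(\rset^d)$ and $\Escr(\pi) = \int U \, \rmd \pi$ is finite. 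The pointwise identity $\log \pi = -U - \log Z$ then gives $\Hscr(\pi) = -\Escr(\pi) - \log Z \in \rset$, and in particular $\Fscr(\pi) = -\log Z$.

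For part (b), I split into cases. If $\mu \not\ll \Leb$, then $\mu \not\ll \pi$ as well (since $\pi$ has a strictly positive density, so $\Leb$ and $\pi$ share the same null sets), hence $\KLarg{\mu}{\pi} = +\infty$ by definition; on the other hand, $\Hscr(\mu) = +\infty$ by \eqref{eq:def_Boltz_entropy} and $\Escr(\mu) < +\infty$ give $\Fscr(\mu) = +\infty$, so both sides agree. If instead $\mu \ll \Leb$ with density $\rho$, then $\rmd \mu / \rmd \pi = Z \rho \, \rme^{U}$ and one has the pointwise identity
\begin{equation*}
\log \frac{\rmd \mu}{\rmd \pi}(x) = \log \rho(x) + U(x) + \log Z.
\end{equation*}
Integrating against $\mu$ and recognizing each term yields $\KLarg{\mu}{\pi} = \Hscr(\mu) + \Escr(\mu) + \log Z = \Fscr(\mu) - \Fscr(\pi)$.

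The main obstacle will be the well-definedness of the entropy integral $\int \rho \log \rho \, \rmd \Leb$ when splitting the integral in the last step, since $\rho \log \rho$ need not have constant sign. This is handled using the fact that for every $\mu \in \Pens_2(\rset^d)$ the negative part of $\rho \log \rho$ is Lebesgue-integrable: comparing $\mu$ against a standard Gaussian $\nu$ via $\KLarg{\mu}{\nu} \geq 0$ yields a uniform bound of the form $\Hscr(\mu) \geq -c_1 - c_2 \int \|x\|^2 \, \rmd \mu(x) > -\infty$. Combined with $\Escr(\mu) < +\infty$ and $\log Z < +\infty$, this makes the splitting rigorous whether $\KLarg{\mu}{\pi}$ is finite (in which case both sides are finite and equal) or $+\infty$ (in which case $\Hscr(\mu) = +\infty$ and both sides are $+\infty$).
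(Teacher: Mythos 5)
Your proposal is correct and follows essentially the same route as the paper: derive a linear lower bound $U(x) \geq C_1\|x\| - C_2$ (the paper cites \cite[Lemma 2.2.1]{brazitikos:giannopoulos:valettas:al:2014}; you sketch a direct coercivity argument from bounded sublevel sets, which amounts to the same thing) to get $\pi \in \Pens_2(\rset^d)$ with finite $\Escr(\pi)$ and $\Hscr(\pi)$, and then obtain \eqref{eq:lem:kl_minimizer} by direct substitution of $\log\pi = -U - \log Z$. The extra care you take in part (b) --- handling $\mu \not\ll \Leb$ explicitly via equivalence of null sets for $\pi$ and $\Leb$, and justifying the splitting of the Kullback--Leibler integral by a Gaussian comparison giving $\Hscr(\mu) > -\infty$ for $\mu \in \Pens_2(\rset^d)$ --- is sound and slightly tighter than the paper's one-line computation, but it is the same proof.
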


\begin{proof}
  The proof is postponed to \Cref{sec:proof-crefl}.
\end{proof}

Based on this interpretation, we could think about minimizing $\Fscr$
on the Wasserstein space to get close to $\pi$ using the minimization
scheme \eqref{eq:min_step_descent}. However, while this scheme is
shown in \cite{jordan1998variational} to be well-defined, finding
explicit recursions $(\trho_{k,\steps})_{k \in \nset}$ is as
difficult as minimizing $\Fscr$ and therefore can not be used in
practice. In addition, to the authors knowledge,
there is no efficient and practical schemes to optimize this
functional.  On the other hand, discretization schemes have been used to
approximate the Langevin diffusion $(\YL_t)_{t \geq 0}$
\eqref{eq:langevin} and its long-time behaviour. One of the most
popular method is the Euler-Maruyama discretization $(\XEM_k)_{k \in
  \nset}$ given in \eqref{eq:definition_em}. While most work study the
theoretical properties of this discretization to ensure to get samples
close to the target distribution $\pi$, by comparing the distributions
of $(\XEM_k)_{k \in \nset}$ and $(\YL_t)_{t \geq 0}$ through couplings
or weak error expansions, we interpret this scheme as a first order optimization algorithm for the  objective functional $\Fscr$.

\section{Main results for the Unadjusted Langevin algorithm} \label{sec:main-lemmas}

Let $f : \rset^d \to \rset$ be a convex continuously differentiable
objective function with $\xstarf \in \argmin_{\rset^d} f \not =
\emptyset$. The \textit{inexact} or \textit{stochastic} gradient descent algorithm used to estimate
$f(\xstarf)$ defines the sequence $(x_k)_{k \in \nset}$ starting from $x_0
\in \rset^d$ by the following recursion for $n \in \nset$:
\begin{equation*}
  x_{n+1} = x_n - \step_{n+1} \nabla f(x_n) + \step_{n+1} \Xi(x_n) \eqsp,
\end{equation*}
where $(\step_k)_{ k\in \nsets}$ is a non-increasing sequence of step
sizes and $\Xi : \rset^d \to \rset^d$ is a deterministic or/and stochastic
perturbation of $\nabla f$. To get explicit bound on the convergence (in expectation)
of the sequence $(f(x_n))_{n \in \nset}$ to $f(\xstarf)$, one
possibility (see \eg~\cite{beck:teboule:2009}) is to show that the
following inequality holds: for all $n \in \nset$,
\begin{equation}
\label{eq:ineq_form_stoch_gradient}
2\step_{n+1}(f(x_{n+1}) - f(\xstarf)) \leq \Vert x_{n} - \xstarf\Vert - \Vert x_{n+1} - \xstarf\Vert_2^2 + C\step^2_{n+1} \eqsp,
\end{equation}
for some constant $C \geq 0$.
In a
similar manner as for inexact gradient algorithms, in this section we will establish that ULA satisfies an inequality of the form
\eqref{eq:ineq_form_stoch_gradient} with the objective function $\Fscr$
defined by \eqref{eq:def_free_energy} on $\Pens_2(\rset^d)$, but instead of the Euclidean norm, the Wasserstein distance of order
$2$ will be used.


Consider the family of Markov kernels
$(\Rker_{\gamma_k})_{k \in \nset^*}$ associated with the Euler-Maruyama discretization $(\XEM_k)_{k \in
  \nset}$, \eqref{eq:definition_em}, for a sequence of step sizes $(\gamma_k)_{k \in \nset^*}$, given
for all $\steps >0, x \in \rset^d$ and $\eventA \in \Borel(\rset^d)$ by
\begin{equation}
  \label{eq:def_Rker_euler}
\Rker_{\steps}(x,\eventA) =  (4 \uppi \steps)^{-d/2} \int_{\eventA} \exp\parenthese{-\norm[2]{y-x-\steps \nabla U(x)}/{(4\steps)}} \rmd y \eqsp. 
\end{equation}
\begin{proposition} \label{thm:basic-one-step}
Assume \Cref{assum:convexity}$(m)$ for $m \geq 0$ and \Cref{assum:grad_lip}.
For all $\steps \in \ocintLigne{0,L^{-1}}$ and $\mu \in \Pens_2(\rset^d)$, we have
\begin{equation}
2\steps \defEns{\Fscr(\mu \Rker_{\steps}) - \Fscr( \pi )} \leq (1-m\steps)\wasserstein_2^2(\mu, \pi) - \wasserstein_2^2(\mu \Rker_{\steps} , \pi) + 2\steps^2 L d \eqsp,
\end{equation}
where $\Fscr$ is defined in \eqref{eq:def_free_energy}.
\end{proposition}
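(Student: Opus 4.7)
The plan is to split one step of ULA into a deterministic forward-gradient step for $\Escr$ followed by a heat-flow convolution (the Wasserstein gradient flow of $\Hscr$), and to handle each piece with its own descent-type inequality in $\wasserstein_2$. With $T(x) = x - \steps\nabla U(x)$ and $\nu_0 = T_{\sharp}\mu$, the ULA transition factors as $\mu\Rker_\steps = \nu_0 \ast \gamma_{2\steps}$, where $\gamma_{2t}$ denotes the centred Gaussian of covariance $2tI_d$. For the analysis I introduce the heat interpolation $\nu_t = \nu_0 \ast \gamma_{2t}$ for $t \in [0,\steps]$, which solves the heat equation and connects $\nu_0$ at $t=0$ to $\mu\Rker_\steps$ at $t=\steps$.

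For the deterministic step $\mu \mapsto \nu_0$, I would take an optimal $\wasserstein_2$-coupling $(X_0,X^{\star})$ of $\mu$ and $\pi$, expand $\|T(X_0)-X^{\star}\|^{2}$, apply the $m$-convexity inequality
\begin{equation*}
\langle \nabla U(X_0),\,X_0-X^{\star}\rangle \;\geq\; U(X_0)-U(X^{\star})+(m/2)\|X_0-X^{\star}\|^{2},
\end{equation*}
and absorb the residual $\steps^{2}\|\nabla U(X_0)\|^{2}$ term via the classical smoothness descent lemma $U(T(X_0)) \leq U(X_0) - (\steps/2)\|\nabla U(X_0)\|^{2}$, which is valid under \Cref{assum:grad_lip} owing to the constraint $\steps \leq L^{-1}$. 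Averaging over the coupling should yield the clean one-sided bound
\begin{equation*}
\wasserstein_2^{2}(\nu_0,\pi) + 2\steps\bigl(\Escr(\nu_0)-\Escr(\pi)\bigr) \;\leq\; (1-m\steps)\,\wasserstein_2^{2}(\mu,\pi).
\end{equation*}

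For the heat step $\nu_0 \mapsto \mu\Rker_\steps$, the plan is to use that $(\nu_t)$ is the $\wasserstein_2$-gradient flow of the $0$-geodesically convex Boltzmann functional $\Hscr$. This yields the evolution variational inequality $\tfrac{d}{dt}\tfrac12\wasserstein_2^{2}(\nu_t,\pi) \leq \Hscr(\pi) - \Hscr(\nu_t)$, and together with the non-increase of $t \mapsto \Hscr(\nu_t)$ (de Bruijn's identity, $\tfrac{d}{dt}\Hscr(\nu_t) = -I(\nu_t) \leq 0$), integration on $[0,\steps]$ produces
\begin{equation*}
\wasserstein_2^{2}(\mu\Rker_\steps,\pi) + 2\steps\bigl(\Hscr(\mu\Rker_\steps)-\Hscr(\pi)\bigr) \;\leq\; \wasserstein_2^{2}(\nu_0,\pi).
\end{equation*}
A short Taylor expansion of $U$ using $L$-smoothness together with $\mathbb{E}\|G\|^{2}=d$ supplies the last ingredient $\Escr(\mu\Rker_\steps) \leq \Escr(\nu_0) + \steps L d$, which accounts for the potential-energy discretization error along the noise step.

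Chaining these three estimates and recalling $\Fscr = \Hscr + \Escr$ is then pure bookkeeping: the gradient-step inequality controls $\wasserstein_2^{2}(\nu_0,\pi)$, the heat-flow EVI controls $\wasserstein_2^{2}(\mu\Rker_\steps,\pi)$, and the $\Escr$ gap contributes the final error $2\steps^{2}Ld$. The hard part will be the heat-flow EVI: the naive synchronous coupling $(Y_0+\sqrt{2\steps}G,\,X^{\star})$ only delivers the loose bound $\wasserstein_2^{2}(\mu\Rker_\steps,\pi) \leq \wasserstein_2^{2}(\nu_0,\pi) + 2\steps d$, which is not sharp enough to close the desired estimate. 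The EVI provides exactly the missing sharpening, relying on the Brenier map from $\nu_t$ to $\pi$ combined with concavity of $\log\det$ on the positive-definite cone, and is the step where the Wasserstein gradient-flow machinery summarised in the paper's preliminaries genuinely enters.
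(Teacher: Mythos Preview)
Your proposal is correct and follows essentially the same approach as the paper: the same factorisation $\Rker_\steps = \Sker_\steps\Tker_\steps$ into a deterministic gradient step and a heat convolution, the same three lemmas (descent-type inequality for $\Escr$ along $\Sker_\steps$ via $m$-convexity and $L$-smoothness, EVI for $\Hscr$ along the heat flow combined with entropy monotonicity, and the $L$-smoothness bound $\Escr(\mu\Rker_\steps)\leq \Escr(\mu\Sker_\steps)+\steps Ld$), and the same bookkeeping at the end. The paper records the $\Escr$-descent step slightly differently by keeping the residual $-\steps^2(1-\steps L)\int\|\nabla U\|^2\,\rmd\mu$ term explicitly rather than absorbing it via the descent lemma, but this is cosmetic.
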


For our analysis, we decompose $\Rker_{\steps}$ for all $\steps >0$ in
the product of two elementary kernels $\Sker_{\steps}$ and $\Tker_{\steps}$ given for all  $x \in \rset^d$ and $\eventA \in \Borel(\rset^d)$ by
\begin{equation}
  \label{eq:def_Sker_Tker}
  \Sker_{\steps}(x,\eventA) =  \updelta_{x-\steps \nabla U(x)}(\eventA) \eqsp, \eqsp \Tker_{\steps}(x,\eventA) = (4 \uppi \steps)^{-d/2} \int_{\eventA} \exp\parenthese{-\norm[2]{y-x}/{(4\steps)}} \rmd y \eqsp.
\end{equation}
We take the convention that $\Sker_0 = \Tker_0 = \Id$ is the identity
kernel given for all $x \in \rset^d$ by $\Id(x,\{x\}) = 1$.  $\Skers$
is the deterministic part of the Euler-Maruyama discretization, which
corresponds to gradient descent step relative to $U$ for the $\Escr$ functional, whereas $\Tkers$
is the random part, that corresponds to going along the gradient flow of $\Hscr$. Note then $\Rkers = \Skers \Tkers$ and consider the
following decomposition
\begin{equation}
\label{eq:decomposition_free_energy}
  \Fscr(\mu \Rker_{\steps}) - \Fscr( \pi ) = \Escr(\mu \Rker_{\steps}) -\Escr(\mu \Sker_{\steps})
+\Escr(\mu \Sker_{\steps} ) - \Escr( \pi ) + \Hscr( \mu \Rker_{\steps} )-\Hscr( \pi )\eqsp.
\end{equation}
The proof of \Cref{thm:basic-one-step} then consists in bounding each
difference in the decomposition above. This is the matter of the following Lemma:
\begin{lemma} \label{lem:conv-potential-change}
Assume \Cref{assum:grad_lip}. 
For all $\mu \in \Pens_2(\rset^d)$ and $\steps >0$, 
\begin{equation*}
  \Escr(\mu \Tkers) - \Escr(\mu) \leq L d \steps \eqsp.
\end{equation*}
\end{lemma}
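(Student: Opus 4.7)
The plan is to use the probabilistic representation of $\mu \Tkers$. By definition of $\Tkers$ in \eqref{eq:def_Sker_Tker}, if $X$ has distribution $\mu$ and $G$ is a standard $d$-dimensional Gaussian independent of $X$, then $X + \sqrt{2\steps}\, G$ has distribution $\mu \Tkers$. Therefore
\[
\Escr(\mu \Tkers) - \Escr(\mu) = \PE\bigl[U(X + \sqrt{2\steps}\,G)\bigr] - \PE[U(X)].
\]

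The key tool is the standard descent inequality for $L$-gradient Lipschitz functions, which follows directly from \Cref{assum:grad_lip}: for all $x,h \in \rset^d$,
\[
U(x + h) \leq U(x) + \langle \nabla U(x), h\rangle + (L/2)\norm[2]{h}.
\]
I would apply this with $x = X$ and $h = \sqrt{2\steps}\,G$ to obtain
\[
U(X + \sqrt{2\steps}\,G) \leq U(X) + \sqrt{2\steps}\,\langle \nabla U(X), G\rangle + L\steps\,\norm[2]{G}.
\]

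Taking expectations, the cross term vanishes by independence of $X$ and $G$ together with $\PE[G] = 0$, while $\PE[\norm[2]{G}] = d$. This yields
\[
\PE\bigl[U(X + \sqrt{2\steps}\,G)\bigr] \leq \PE[U(X)] + L d \steps,
\]
which is exactly the claimed inequality. The only mild subtlety is ensuring that the expectations involved are well defined: if $\Escr(\mu) = +\infty$ the bound is trivial, and otherwise the $L$-gradient Lipschitz hypothesis implies $|U(x)| \leq |U(0)| + \norm{\nabla U(0)}\norm{x} + (L/2)\norm[2]{x}$, so that $\Escr(\mu) < \infty$ and $\Escr(\mu\Tkers) < \infty$ for any $\mu \in \Pens_2(\rset^d)$, justifying the manipulation. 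There is no real obstacle here; the statement is essentially an immediate consequence of the descent lemma combined with the second moment of a standard Gaussian.
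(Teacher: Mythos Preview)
Your proof is correct and follows essentially the same approach as the paper: both use the descent inequality for $L$-smooth functions, apply it with the Gaussian perturbation, and conclude using that the linear term has zero mean while the quadratic term contributes $Ld\steps$. The only difference is cosmetic---the paper writes the argument in integral form against the Gaussian density while you phrase it probabilistically via $X + \sqrt{2\steps}\,G$---and your explicit remark on integrability is a welcome addition that the paper leaves implicit.
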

\begin{proof}
 First note that by \cite[Lemma 1.2.3]{nesterov:2004}, for all $x,\tilde{x}\in \rset^d$, we have 
\begin{equation}
  \label{eq:1:lem:basic-potential-gradient-step}
\abs{  U(\tilde{x}) - U(x)-  \ps{\nabla U(x)}{\tilde{x}-x}} \leq (L/2) \norm[2]{\tilde{x}-x}\eqsp.
\end{equation}
  Therefore, for all $\mu \in \Pens_2(\rset^d)$ and $\steps >0$, we get 
\begin{align*}
    \Escr(\mu \Tkers) - \Escr(\mu) &= (4\uppi \steps)^{-d/2} \int_{\rset^d} \int_{\rset^d}  \defEns{U(x+y) - U(x)} \rme^{\norm{y}^2/(4\steps)} \rmd y \rmd \mu(x) \\
& \leq  (4\uppi \steps)^{-d/2} \int_{\rset^d} \int_{\rset^d}  \defEns{\ps{\nabla U(x)}{y}+ (L/2) \norm[2]{y} }\rme^{\norm{y}^2/(4\steps)} \rmd y \rmd \mu(x) \eqsp,
\end{align*}
which concludes the proof. 
\end{proof}

\begin{lemma}
\label{lem:basic-potential-gradient-step}
Assume \Cref{assum:convexity}$(m)$ for $m \geq 0$ and \Cref{assum:grad_lip}.
For all $\steps \in \ocintLigne{0, L^{-1}}$ and $\mu,\nu \in \Pens_2(\rset^d)$, 
\[
2\steps\defEns{\Escr(\mu \Skers) - \Escr(\nu)} \leq (1-m\steps) \wasserstein_2^2(\mu, \nu) - \wasserstein_2^2(\mu \Skers, \nu) -\steps^2 (1-\steps L)\int_{\rset^d} \norm[2]{\nabla U(x)} \rmd \mu (x) \eqsp,
\]
where $\Escr$ and $\Tkers$ are defined in \eqref{eq:def_potential_energy} and \eqref{eq:def_Sker_Tker} respectively.
\end{lemma}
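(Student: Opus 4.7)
The plan is to first establish a pointwise deterministic inequality, and then lift it to the Wasserstein setting by integrating against an optimal coupling.

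\textbf{Step 1: pointwise inequality.} Fix $x,z \in \rset^d$ and set $y = x - \steps \nabla U(x)$. I would combine two classical inequalities from convex analysis. From \Cref{assum:grad_lip} (descent lemma / \cite[Lemma 1.2.3]{nesterov:2004}),
\begin{equation*}
U(y) - U(x) \leq \ps{\nabla U(x)}{y-x} + (L/2)\norm[2]{y-x}\eqsp.
\end{equation*}
From \Cref{assum:convexity}$(m)$,
\begin{equation*}
U(x) - U(z) \leq \ps{\nabla U(x)}{x-z} - (m/2)\norm[2]{x-z}\eqsp.
\end{equation*}
Summing, using $y-x = -\steps \nabla U(x)$ so that $\nabla U(x) = (x-y)/\steps$, and applying the polarization identity
\begin{equation*}
2\ps{x-y}{y-z} = \norm[2]{x-z} - \norm[2]{y-z} - \norm[2]{x-y}\eqsp,
\end{equation*}
I expect to obtain after multiplying by $2\steps$ the pointwise bound
\begin{equation*}
2\steps\{U(y) - U(z)\} \leq (1 - m\steps)\norm[2]{x-z} - \norm[2]{y-z} - \steps^2(1 - \steps L)\norm[2]{\nabla U(x)}\eqsp.
\end{equation*}
This is the main technical step, but it is a purely deterministic computation of the kind standard in the analysis of gradient descent; the assumption $\steps \leq L^{-1}$ ensures the last term is non-positive and can in fact be kept with its sign.

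\textbf{Step 2: lifting to $\wasserstein_2$.} Let $\zeta \in \couplage{\mu}{\nu}$ be an optimal transference plan for $\wasserstein_2(\mu,\nu)$, whose existence is recalled just after \eqref{eq:definition_wasserstein}. Let $T_\steps : \rset^d \to \rset^d$ denote the map $x \mapsto x - \steps \nabla U(x)$, so that $\mu \Skers = (T_\steps)_{\#}\mu$. Integrating the pointwise inequality of Step~1 against $\zeta(\rmd x, \rmd z)$ gives
\begin{align*}
2\steps \int_{\rset^d \times \rset^d} \{U(T_\steps(x)) - U(z)\}\rmd \zeta(x,z)
&\leq (1-m\steps)\int \norm[2]{x-z}\rmd \zeta \\
&\quad -\int \norm[2]{T_\steps(x)-z}\rmd \zeta - \steps^2(1-\steps L)\int \norm[2]{\nabla U(x)}\rmd \zeta\eqsp.
\end{align*}

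\textbf{Step 3: identification of each term.} The left-hand side equals $2\steps\{\Escr(\mu \Skers) - \Escr(\nu)\}$ since $\zeta$ has marginals $\mu,\nu$ and $T_\steps$ pushes $\mu$ to $\mu\Skers$. On the right-hand side, the first integral is exactly $\wasserstein_2^2(\mu,\nu)$ by optimality of $\zeta$. The last integral depends only on the $\mu$-marginal and equals $\int \norm[2]{\nabla U(x)}\rmd \mu(x)$. Finally, the push-forward $(T_\steps, \mathrm{Id})_\#\zeta$ is a transference plan between $\mu\Skers$ and $\nu$, so
\begin{equation*}
\wasserstein_2^2(\mu\Skers, \nu) \leq \int \norm[2]{T_\steps(x)-z}\rmd \zeta(x,z)\eqsp.
\end{equation*}
Substituting these identifications yields the claimed inequality.

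The only genuine content is Step~1, and even there the main obstacle is purely bookkeeping: combining the smoothness and strong convexity bounds in the right order so that the cross term $\ps{\nabla U(x)}{y-z}$ can be exchanged via polarization for three squared distances. The $\wasserstein_2$-lifting in Steps 2--3 is then a routine application of the optimal coupling together with the fact that push-forwards provide admissible (in general suboptimal) transference plans.
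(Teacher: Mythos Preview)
Your proof is correct and follows essentially the same approach as the paper: the paper derives the identical pointwise inequality \eqref{eq:2:lem:basic-potential-gradient-step} by combining the descent lemma with $m$-strong convexity (completing the square where you invoke polarization, which amounts to the same computation), and then integrates it against an optimal coupling $(X,Y)$ of $\mu,\nu$, bounding $\wasserstein_2^2(\mu\Skers,\nu)$ above by $\expeLigne{\norm[2]{X-\steps\nabla U(X)-Y}}$ exactly as you do in Step~3.
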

\begin{proof}
Using \eqref{eq:1:lem:basic-potential-gradient-step} and  \Cref{assum:convexity}$(m)$, for all $x,y \in \rset^d$, we get 
\begin{align*}
  U(x-\steps\nabla U(x)) - U(y) & =   U(x-\steps\nabla U(x))-U(x)+U(x) - U(y)  \\
& \leq -\steps(1-\steps L/2)\norm[2]{\nabla U(x)} +\ps{\nabla U(x)}{x-y} - (m/2) \norm[2]{y-x} \eqsp.
\end{align*}
Multiplying both sides by $2\steps$ we obtain:
\begin{multline}
  \label{eq:2:lem:basic-potential-gradient-step}
2 \steps\defEns{  U(x-\steps\nabla U(x)) - U(y)} \leq (1-m\steps)\norm[2]{x-y} - \norm[2]{x-\steps \nabla U(x) -y}  \\
-\steps^2(1-\steps L)\norm[2]{\nabla U(x)} \eqsp.
\end{multline}
Let now $(X,Y)$ be an optimal coupling between $\mu$ and $\nu$.
Then by definition and \eqref{eq:2:lem:basic-potential-gradient-step}, we get 
\begin{multline*}
2\steps\defEns{\Escr(\mu \Skers) - \Escr(\nu)} \leq (1-m\steps)\wasserstein_2^2(\mu, \nu) - \expe{\norm[2]{X-\steps \nabla U(X) -Y}}\\ -\steps^2(1-\steps L)\expe{\norm[2]{\nabla U(X)}} \eqsp.
\end{multline*}
Using that $\wasserstein^2_2(\mu \Skers,\nu) \leq \expeLigne{\norm[2]{X-\steps \nabla U(X) -Y}}$ concludes the proof.
\end{proof}

\begin{lemma}
 \label{lem:ent-grad-flow-step-inequality} 
Let $\mu,\nu \in \Pens_2(\rset^d)$, $\Hscr(\nu) < \infty$.
Then for all $\steps > 0$, 
\[
2\steps\defEns{\Hscr(\mu \Tkers) - \Hscr(\nu)} \leq W_2^2(\mu, \nu) - W_2^2(\mu \Tker_{\steps}, \nu) \eqsp,
\]
where $\Tkers$ is given in \eqref{eq:def_Sker_Tker}.
\end{lemma}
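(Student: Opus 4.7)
The plan is to invoke the characterization of the Gaussian convolution semigroup $(T_t)_{t \geq 0}$ as the Wasserstein gradient flow of the Boltzmann $H$-functional $\mathscr{H}$ on $(\mathcal{P}_2(\mathbb{R}^d), W_2)$. By McCann's theorem, $\mathscr{H}$ is displacement convex (geodesically convex with convexity parameter $\lambda = 0$). For the gradient flow of a geodesically convex functional, the Evolutionary Variational Inequality (EVI) holds (see e.g.\ \cite{ambrosio2008gradient}): for every $\nu \in \mathcal{P}_2(\mathbb{R}^d)$ with $\mathscr{H}(\nu) < \infty$ and for almost every $t > 0$,
\begin{equation*}
\tfrac{1}{2}\tfrac{d}{dt} W_2^2(\mu T_t, \nu) \leq \mathscr{H}(\nu) - \mathscr{H}(\mu T_t) \eqsp.
\end{equation*}

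First I would integrate this differential inequality over $[0, \gamma]$ to obtain
\begin{equation*}
\tfrac{1}{2}\{W_2^2(\mu T_\gamma, \nu) - W_2^2(\mu, \nu)\} \leq \gamma \mathscr{H}(\nu) - \int_0^\gamma \mathscr{H}(\mu T_t)\,\rmd t \eqsp.
\end{equation*}
Next, I would invoke the energy-dissipation identity for the heat flow: differentiating $\mathscr{H}(\mu T_t)$ along the Fokker-Planck equation $\partial_t \rho_t = \Delta \rho_t$ gives $\frac{d}{dt}\mathscr{H}(\mu T_t) = -\int |\nabla \rho_t|^2 / \rho_t \,\rmd x \leq 0$, so that $t \mapsto \mathscr{H}(\mu T_t)$ is non-increasing. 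Hence $\mathscr{H}(\mu T_t) \geq \mathscr{H}(\mu T_\gamma)$ for $t \in [0, \gamma]$ and $\int_0^\gamma \mathscr{H}(\mu T_t) \rmd t \geq \gamma \mathscr{H}(\mu T_\gamma)$. Plugging this in and rearranging yields exactly the stated bound.

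The main obstacle is justifying the EVI rigorously in our setting, which requires verifying that $\mathscr{H}$ satisfies the hypotheses of the general Ambrosio-Gigli-Savaré framework (lower semicontinuity, displacement convexity, coercivity on $\mathcal{P}_2(\mathbb{R}^d)$) and identifying $(\mu T_t)_{t \geq 0}$ as the corresponding gradient flow curve via its associated continuity equation with velocity field $-\nabla \log \rho_t$. These facts should be collected in the preliminary section \Cref{sec:defin-usef-results} referenced earlier. A more self-contained alternative would be to parametrize a Wasserstein geodesic $(\rho_s)_{s \in [0,1]}$ between $\mu T_t$ and $\nu$, apply $0$-displacement convexity $\mathscr{H}(\rho_s) \leq (1-s)\mathscr{H}(\mu T_t) + s \mathscr{H}(\nu)$, and pass to the limit $s \downarrow 0$ after combining with the continuity equation for $(\mu T_t)_{t \geq 0}$, thereby recovering the EVI directly without appealing to the abstract gradient flow theorem.
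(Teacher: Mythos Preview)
Your proposal is correct and follows essentially the same approach as the paper: invoke the EVI for the heat flow as the Wasserstein gradient flow of $\mathscr{H}$ (this is precisely \Cref{theo:heat_flow_prop} in the appendix), integrate it, and use the monotonicity of $t \mapsto \mathscr{H}(\mu T_t)$ to replace $\mathscr{H}(\mu T_t)$ by $\mathscr{H}(\mu T_\gamma)$. The only refinement in the paper's version is that the integration is carried out on $(\epsilon,\gamma)$ and then $\epsilon \downarrow 0$, since the curve $(\mu T_t)_{t>0}$ is only known to be locally absolutely continuous on $\mathbb{R}^*_+$ (not through $t=0$ when $\mu$ is singular); continuity of $W_2$ together with $\mu T_\epsilon \to \mu$ in $W_2$ then recovers the left endpoint.
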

\begin{proof}
  Denote for all $t \geq 0$ by $\mu_t = \mu \Tker_t$.  Then, $(\mu_t)_{ t \geq 0}$ is the solution (in the
  sense of distribution) of the Fokker-Plank equation:
  \begin{equation*}
    \frac{\partial \mu_t}{\partial t} = \Delta \mu_t \eqsp,
  \end{equation*}
and $\mu_t$ goes to $\mu$ as $t$ goes to $0$ in $(\Pens_2(\rset^d),W_2)$. 
 Let $\nu \in \Pens_2(\rset^d)$ and $\steps >0$.
  Then by \Cref{theo:heat_flow_prop}, for all $\epsilon \in \ooint{0,\steps}$,
  there exists $(\delta_t) \in \Lone(\ooint{\epsilon, \steps})$ such
  that
 \begin{align}
\label{eq:1:lem:ent-grad-flow-step-inequality}  
 &  \wasserstein_2^2(\mu_{\steps},\nu) -   \wasserstein_2^2(\mu_{\epsilon},\nu) = \int_\epsilon ^\steps \delta_s \rmd s \\
\label{eq:2:lem:ent-grad-flow-step-inequality}  
&\delta_s/2 \leq \Hscr(\nu) - \Hscr(\mu_s) \eqsp, \text{ for almost all } s \in \ooint{\epsilon,\steps} \eqsp.
 \end{align}
 In addition by \cite[Particular case 24.3]{VillaniTransport}, $s
 \mapsto \Hscr(\mu_s)$ is non-increasing on $\rset^*_+$ and therefore \eqref{eq:2:lem:ent-grad-flow-step-inequality} becomes 
 \begin{equation*}
   \delta_s/2 \leq \Hscr(\nu) - \Hscr(\mu_{\steps}) \eqsp, \text{ for almost all } s \in \ooint{\epsilon,\steps} \eqsp.
 \end{equation*}
Plugging this bound in \eqref{eq:1:lem:ent-grad-flow-step-inequality} yields that for all $\epsilon \in \rset^*_+$,
\begin{equation*}
  \wasserstein_2^2(\mu_t,\nu) -   \wasserstein_2^2(\mu_{\epsilon},\nu) \leq 2(\steps-\epsilon) \defEns{\Hscr(\nu) - \Hscr(\mu_{\steps})} \eqsp.
\end{equation*}
Taking $\epsilon \to 0$ concludes the proof.
\end{proof}

We now have  all the tools to prove  \Cref{thm:basic-one-step}. 
\begin{proof}[Proof of \Cref{thm:basic-one-step}] 
Let $\mu \in \Pens_2(\rset^d)$ and $\steps \in \rset^*_+$.
By \Cref{lem:conv-potential-change}, we get
\begin{equation*}
 \Escr(\mu \Rker_{\steps}) -\Escr(\mu \Sker_{\steps}) =  \Escr(\mu \Sker_{\steps} \Tkers) -\Escr(\mu \Sker_{\steps})  \leq L d \steps \eqsp.
\end{equation*}
By \Cref{lem:basic-potential-gradient-step} since $\pi \in \Pens_2(\rset^d)$ by \Cref{lem:kl_minimizer}-\ref{item:1:lem:kl_minimizer},
\begin{equation*}
 2\steps \defEns{\Escr(\mu \Sker_{\steps} ) - \Escr( \pi )} \leq (1-m\steps) \wasserstein_2^2(\mu, \nu) - \wasserstein_2^2(\mu \Skers, \nu) \eqsp.
\end{equation*}
By \Cref{lem:ent-grad-flow-step-inequality} and \Cref{lem:kl_minimizer}-\ref{item:1:lem:kl_minimizer},
\begin{align*} 
2 \steps \defEns{ \Hscr( \mu \Rker_{\steps} )-\Hscr( \pi )} &= 
2 \steps \defEns{ \Hscr( (\mu \Skers) \Tker_{\steps} )-\Hscr( \pi )} \\
 & \leq \wasserstein_2^2(\mu \Skers, \pi) - \wasserstein_2^2(\mu \Rkers, \pi) \eqsp.
\end{align*}
Plugging these bounds in \eqref{eq:decomposition_free_energy} concludes the proof.
\end{proof}


Based on inequalities of the form \eqref{eq:ineq_form_stoch_gradient} and using the convexity of $f$,
for all $n \in \nset$, non-asymptotic bounds (in expectation) between
$f(\bar{x}_n)$ and $f(\xstarf)$ can be derived, where $(\bar{x}_k)_{k
  \in \nset}$ is the sequence of averages of $(x_k)_{k \in \nset}$
given for all $n \in \nset$ by $\bar{x}_n = n^{-1} \sum_{k=1}^n
x_k$. Besides, if $f$ is assumed to be strongly convex, a bound on
$\mathbb{E}[\normLigne{x_n-\xstarf}^2]$ can be established.  We will adapt this methodology to get some bounds on the
convergence of sequences of averaged measures defined as follows. Let
$\sequenceks{\step_k}$ and $\sequenceks{\weight_k}$ be two
non-increasing sequences of reals numbers referred to as the sequence
of step sizes and weights respectively. Define for all $n,N \in \nset$, $n \geq 1$,
\begin{equation}
\label{eq:definition_sum_step_weight}
  \Stepa[N,N+n] = \sum_{k=N+1}^{N+n} \stepa[k] \eqsp, \qquad \Weighta[N,N+n] = \sum_{k=N+1}^{N+n} \weighta[k] \eqsp.
\end{equation}
Let
$\mu_0 \in \Pens_2(\rset^d)$ be an initial distribution.
The sequence of probability measures $(\nu^{N}_n)_{n \in \nsets}$ is defined for all $n,N \in \nset$, $n \geq 1$, by
\begin{equation}
  \label{eq:definition_average_measure}
\nu^N_n = \Weight_{N,N+n} ^{-1}\sum_{k=N+1}^{N+n} \weight_{k}\, \mu_0 \Qkers[k] \eqsp, \qquad \Qkers[k] = \Rkers[1] \cdots \Rkers[k] \eqsp, \text{ for $k \in \nsets$} \eqsp,
\end{equation}
where $\Rkers$ is defined by \eqref{eq:def_Rker_euler} and $N$ is a burn-in time. We take in the
following, the convention that $\Qkers[0]$ is the identity operator.


\begin{theorem}
\label{thm:const_step_conv}
Assume \Cref{assum:convexity}($m$) for $m \geq 0$ and \Cref{assum:grad_lip}. Let
$\sequenceks{\step_k}$ and $\sequenceks{\weight_k}$ be two
non-increasing sequences of positive real numbers satisfying  $\step_1 \leq
L^{-1}$, and for all $k\in \nsets$,  $\weighta[k+1](1-m\stepa[k+1])/\stepa[k+1] \leq\weighta[k]/\stepa[k]$. Let
$\mu_0 \in \Pens_2(\rset^d)$ and $N \in \nset$.  Then for all $n \in
\nsets$, it holds:
\begin{multline*}
\KLarg{\nu^N_n}{\pi} + \left.  \weighta[N+n]\wassersteinTarg{\mu_0 \Qkers[N+n]}{ \pi} \middle / 
(2 \stepa[N+n] \Weighta[N,N+n])
\right.
\\ \leq \left. \weighta[N+1](1-m\stepa[N+1]) \wassersteinTarg{\mu_0 \Qkers[N]}{ \pi} \middle/ (2 \stepa[N] \Weighta[N,N+n]) \right.
+ (Ld/\Weighta[N,N+n]) \sum_{k=N+1}^{N+n} \stepa[k] \weighta[k]  \eqsp,
\end{multline*}
where $\nu^{N}_n$ and $\Qkers[N]$ are defined in \eqref{eq:definition_average_measure}.
\end{theorem}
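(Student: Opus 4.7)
The plan is to apply Proposition~\ref{thm:basic-one-step} iteratively along the sequence of distributions $\sequencek{\mu_0 \Qkers[k]}$, rescale appropriately so that a telescoping argument is available, and finally use Jensen's inequality (convexity of the KL divergence in its first argument) to control the KL divergence of the averaged measure $\nu^N_n$.

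First, I would apply Proposition~\ref{thm:basic-one-step} with $\mu = \mu_0 \Qkers[k-1]$ and $\steps = \stepa[k]$ for each $k \in \{N+1, \dots, N+n\}$, so that $\mu \Rkers[k] = \mu_0 \Qkers[k]$. By \Cref{lem:kl_minimizer}\ref{item:2:lem:kl_minimizer}, the left-hand side equals $2 \stepa[k] \KLarg{\mu_0 \Qkers[k]}{\pi}$, giving
\[
2\stepa[k] \KLarg{\mu_0 \Qkers[k]}{\pi} \leq (1 - m\stepa[k]) \wassersteinTarg{\mu_0 \Qkers[k-1]}{\pi} - \wassersteinTarg{\mu_0 \Qkers[k]}{\pi} + 2 L d \stepa[k]^2 \eqsp.
\]
The key scaling step is to multiply this inequality by $\weighta[k]/\stepa[k]$; the left-hand side then becomes $2 \weighta[k] \KLarg{\mu_0 \Qkers[k]}{\pi}$, so that summation will interact nicely with Jensen's inequality. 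Writing $a_k = \weighta[k]/\stepa[k]$ and $b_k = \weighta[k](1-m\stepa[k])/\stepa[k]$, the rescaled inequality reads
\[
2 \weighta[k] \KLarg{\mu_0 \Qkers[k]}{\pi} \leq b_k \wassersteinTarg{\mu_0 \Qkers[k-1]}{\pi} - a_k \wassersteinTarg{\mu_0 \Qkers[k]}{\pi} + 2 L d \stepa[k] \weighta[k] \eqsp.
\]

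Next I would sum these inequalities from $k = N+1$ to $N+n$ and perform an Abel summation on the Wasserstein terms. Writing $w_k = \wassersteinTarg{\mu_0 \Qkers[k]}{\pi}$, the Wasserstein part of the sum becomes
\[
b_{N+1} w_N + \sum_{k=N+1}^{N+n-1}(b_{k+1} - a_k)\, w_k - a_{N+n} w_{N+n} \eqsp.
\]
The hypothesis $\weighta[k+1](1-m\stepa[k+1])/\stepa[k+1] \leq \weighta[k]/\stepa[k]$ is exactly the statement $b_{k+1} \leq a_k$, so the intermediate cross terms have non-positive coefficients and can be dropped. This is the main technical point of the proof: the choice of scaling by $\weighta[k]/\stepa[k]$ is the unique one that makes the hypothesis on $(\weighta[k],\stepa[k])$ coincide exactly with what is needed to telescope.

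To finish, I would lower-bound the aggregated left-hand side using Jensen's inequality: since $\nu \mapsto \KLarg{\nu}{\pi}$ is convex in its first argument and $\nu^N_n = \Weighta[N,N+n]^{-1} \sum_{k=N+1}^{N+n} \weighta[k]\, \mu_0 \Qkers[k]$,
\[
\Weighta[N,N+n] \KLarg{\nu^N_n}{\pi} \leq \sum_{k=N+1}^{N+n} \weighta[k] \KLarg{\mu_0 \Qkers[k]}{\pi} \eqsp.
\]
Combining this with the telescoped bound, dividing both sides by $2 \Weighta[N,N+n]$, and moving the term $-a_{N+n} w_{N+n}/(2\Weighta[N,N+n])$ to the left-hand side yields exactly the inequality in the statement. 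No estimate is genuinely delicate here; the only real content is the combination of the scaling choice with the telescoping inequality, everything else being a direct bookkeeping of \Cref{thm:basic-one-step}, \Cref{lem:kl_minimizer}, and convexity of KL.
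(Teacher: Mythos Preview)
Your proposal is correct and follows essentially the same route as the paper's proof: apply \Cref{thm:basic-one-step} at each step, rescale by $\weighta[k]/\stepa[k]$, sum, drop the intermediate Wasserstein terms using the hypothesis $\weighta[k+1](1-m\stepa[k+1])/\stepa[k+1] \leq \weighta[k]/\stepa[k]$, and invoke convexity of KL for the averaged measure. Note that, exactly as in the paper's own derivation, your argument produces $\stepa[N+1]$ (not $\stepa[N]$) in the denominator of the leading Wasserstein term on the right-hand side; this is consistent with the paper's proof and the discrepancy with the displayed statement appears to be a typo there.
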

\begin{proof}
Using the convexity of Kullback-Leibler divergence (see \cite[Theorem 2.7.2]{cover:thomas:2006} or \cite[Theorem 11]{canerven:harremos:2014})  and \Cref{thm:basic-one-step}, we obtain
\begin{align*}
\KLarg{\nu^N_n}{\pi} &\leq \Weight_{N,N+n} ^{-1}\sum_{k=N+1}^{N+n} \weight_{k} \KLarg{\mu_0 \Qkers[k]}{\pi} \\
& \leq (2 \Weight_{N,N+n}) ^{-1} \left[ \frac{(1-m\stepa[N+1])\weighta[N+1]}{\stepa[N+1]} \wassersteinTarg{\mu_0 \Qkers[N]}{ \pi} - 
\frac{\weighta[N+n]}{\stepa[N+n]} \wassersteinTarg{\mu_0 \Qkers[N+n]}{ \pi} \right. \\
& \qquad \left. + \sum_{k=N+1}^{N+n-1}  \defEns{\frac{(1-m \stepa[k+1])\weighta[k+1]}{\stepa[k+1]}-\frac{\weighta[k]}{\stepa[k]}}  \wassersteinTarg{\mu_0 \Qkers[k]}{ \pi} + 
 \sum_{k=N+1}^{N+n}  Ld \weighta[k] \stepa[k] \right] \eqsp.
\end{align*}
We get the thesis using that $\weighta[k+1](1-m\stepa[k+1])/\stepa[k+1] \leq\weighta[k]/\stepa[k]$ for all $k\in \nsets$.
\end{proof}
\begin{corollary}
  \label{coro:eps_just_convex_ula}
  Assume \Cref{assum:convexity}($0$) and \Cref{assum:grad_lip}. Let
  $\varepsilon >0$ and $\mu_0 \in \Pens_2(\rset^d)$. Let
  \begin{align*}
    \steps_{\varepsilon}  \leq \min\defEns{\varepsilon/(2Ld), L^{-1}}   \eqsp, \qquad \qquad   n_{\varepsilon}  \geq \lceil W_2^2(\mu_0, \pi) \steps_{\varepsilon}^{-1} \varepsilon^{-1} \rceil \eqsp.
  \end{align*}
 Then it holds  $\KLarg{\nu_{n_{\varepsilon}}}{\pi} \leq \varepsilon$
where $\nu_{n_{\varepsilon}} = n_{\varepsilon}^{-1} \sum_{k=1}^{n_{\varepsilon}} \mu_0 \Rker_{\steps_{\varepsilon}}^k$.
\end{corollary}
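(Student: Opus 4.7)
The plan is to apply \Cref{thm:const_step_conv} with constant step size $\stepa[k] = \steps_\varepsilon$, constant weights $\weighta[k] = 1$, and burn-in $N = 0$, and then use the two given conditions on $\steps_\varepsilon$ and $n_\varepsilon$ to absorb each of the resulting terms into $\varepsilon/2$.

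First I would verify the hypotheses of \Cref{thm:const_step_conv}. With $m = 0$ (from \Cref{assum:convexity}$(0)$) and constant $\stepa[k] = \steps_\varepsilon$, $\weighta[k] = 1$, the monotonicity condition $\weighta[k+1](1-m\stepa[k+1])/\stepa[k+1] \leq \weighta[k]/\stepa[k]$ becomes $1/\steps_\varepsilon \leq 1/\steps_\varepsilon$, which holds. The condition $\steps_1 \leq L^{-1}$ follows from $\steps_\varepsilon \leq L^{-1}$. Moreover, observing that with $N = 0$ and $\weighta[k] \equiv 1$ the averaged measure $\nu_n^0$ from \eqref{eq:definition_average_measure} coincides exactly with $\nu_{n_\varepsilon} = n_\varepsilon^{-1}\sum_{k=1}^{n_\varepsilon} \mu_0 \Rkers[\varepsilon]^k$ (using $\Qkers[0] = \Id$ and $\Qkers[k] = \Rkers[\varepsilon]^k$ for constant step), so the objects match.

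Then applying \Cref{thm:const_step_conv} with these choices gives, after simplifying $\Stepa[0,n_\varepsilon] = n_\varepsilon \steps_\varepsilon$ and $\Weighta[0,n_\varepsilon] = n_\varepsilon$, and dropping the nonnegative $W_2^2(\mu_0 \Qkers[n_\varepsilon], \pi)$ term on the left,
\begin{equation*}
  \KLarg{\nu_{n_\varepsilon}}{\pi} \leq \frac{W_2^2(\mu_0, \pi)}{2\steps_\varepsilon n_\varepsilon} + Ld\, \steps_\varepsilon \eqsp.
\end{equation*}

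Finally I would plug in the conditions on $\steps_\varepsilon$ and $n_\varepsilon$: the assumption $\steps_\varepsilon \leq \varepsilon/(2Ld)$ bounds the second term by $\varepsilon/2$, and the assumption $n_\varepsilon \geq \lceil W_2^2(\mu_0,\pi)\, \steps_\varepsilon^{-1} \varepsilon^{-1}\rceil$ bounds the first term by $\varepsilon/2$. Summing yields $\KLarg{\nu_{n_\varepsilon}}{\pi} \leq \varepsilon$. No step is a real obstacle here; the corollary is essentially a direct specialization of \Cref{thm:const_step_conv} to constant step/weight with $m=0$, and the main care needed is just matching the definition of $\nu_n^0$ to the $\nu_{n_\varepsilon}$ in the statement.
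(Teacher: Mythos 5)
Your proof is correct and follows exactly the paper's route: specialize \Cref{thm:const_step_conv} to $m=0$, constant $\steps_k = \steps_\varepsilon$, $\weight_k = 1$, $N=0$, drop the nonnegative Wasserstein term on the left, and then use the definitions of $\steps_\varepsilon$ and $n_\varepsilon$ to bound each of the two remaining terms by $\varepsilon/2$. The only difference is that you spell out the verification of the hypotheses and the matching of $\nu_n^0$ with $\nu_{n_\varepsilon}$, which the paper leaves implicit.
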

\begin{proof}
We apply \Cref{thm:const_step_conv} with $\steps_k = \steps_{\varepsilon}$ and $\weight_k = 1$ for all $k \geq 1$. We obtain
\[
\KLarg{\nu_{n_{\varepsilon}}}{\pi} + \left.  \wassersteinTarg{\mu_0 \Qkers[n_{\varepsilon}]}{ \pi} \middle / 
(2 \steps_{\varepsilon} n_{\varepsilon})
\right. \leq \left. \wassersteinTarg{\mu_0}{ \pi} \middle/ (2 \steps_{\varepsilon} n_{\varepsilon}) \right.
+ (Ld/n_{\varepsilon}) \sum_{k=1}^{n_{\varepsilon}} \steps_{\varepsilon}  \eqsp,
\]
and the proof is concluded by a straightforward calculation using the definition of $\gamma_{\varepsilon}$ and $n_{\varepsilon}$. 
\end{proof}

\begin{corollary}
  \label{coro:ula_non_increas_sz}
  Assume \Cref{assum:convexity}($m$) for $m \geq 0$ and
  \Cref{assum:grad_lip}. Let $ \alpha \in \ooint{0,1}$. Define
  $(\gamma_k)_{k \in \nsets}$ and $(\lambda_k)_{k \in \nsets}$ for all
  $k \in \nsets$ by $\step_k= 
  \fraca{\step_1}{k^{\alpha}}$, $ \lambda_k =
  \fraca{\step_1}{(k+1)^{\alpha}}$, $\gamma_1 \in \ooint{0,L^{-1}}$. Then,
  there exists $C \geq 0$ such that
  for all $n \in \nsets$ we have
  $\KLarg{\nu_n^0}{\pi} \leq C \max(n^{\alpha - 1}, n^{- \alpha})$, if $\alpha \not = 1/2$,  and 
  for $\alpha = 1/2$, we have
  $\KLarg{\nu_n^0}{\pi} \leq C (\ln(n) + 1) n^{-1/2}$, where  $\nu_n^0$ is defined by \eqref{eq:definition_average_measure}. 
\end{corollary}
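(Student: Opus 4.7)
The plan is to apply \Cref{thm:const_step_conv} with $N = 0$ to the specified sequences $\stepa[k] = \steps_1/k^\alpha$ and $\weighta[k] = \steps_1/(k+1)^\alpha$. The conditions $\stepa[1] \leq L^{-1}$ and the monotonicity of both sequences are immediate. The compatibility condition $\weighta[k+1](1-m\stepa[k+1])/\stepa[k+1] \leq \weighta[k]/\stepa[k]$ reduces to $(1-m\stepa[k+1]) \leq [k(k+2)/(k+1)^2]^\alpha$, which I would verify via a short elementary computation using $(1-x)^\alpha \geq 1 - \alpha x$ on $\coint{0,1}$. The delicate point is the case $m = 0$, where $\weighta[k]/\stepa[k] = (k/(k+1))^\alpha$ is in fact strictly increasing in $k$; here one has to retain in the proof of \Cref{thm:const_step_conv} the non-telescoping correction $(2\Weighta[0,n])^{-1}\sum_{j}(\weighta[j+1]/\stepa[j+1] - \weighta[j]/\stepa[j])\wassersteinTarg{\mu_0 \Qkers[j]}{\pi}$, which telescopes to at most $(2\Weighta[0,n])^{-1} \sup_j \wassersteinTarg{\mu_0 \Qkers[j]}{\pi}$, this supremum being finite thanks to the standard second-moment bound on the ULA iterates under \Cref{assum:convexity}$(0)$.

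Once this is absorbed, \Cref{thm:const_step_conv} yields
\begin{equation*}
\KLarg{\nu^0_n}{\pi} \leq \frac{C_0}{\Weighta[0,n]} + \frac{Ld}{\Weighta[0,n]} \sum_{k=1}^{n} \stepa[k] \weighta[k] \eqsp,
\end{equation*}
for a constant $C_0$ depending only on $\wassersteinTarg{\mu_0}{\pi}$, $\steps_1$, $\alpha$ and $m$, and the remainder of the argument reduces to elementary integral comparisons. Since $\alpha \in \ooint{0,1}$, $\Weighta[0,n] \asymp \steps_1 n^{1-\alpha}/(1-\alpha)$, so the initial-distance term contributes $O(n^{\alpha-1})$. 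For the noise term, $\stepa[k]\weighta[k] \asymp \steps_1^2 k^{-2\alpha}$, so $\sum_{k=1}^n k^{-2\alpha}$ is of order $n^{1-2\alpha}$ if $\alpha < 1/2$, of order $\ln n$ if $\alpha = 1/2$, and $O(1)$ if $\alpha > 1/2$. After division by $\Weighta[0,n]$, these produce contributions of order $n^{-\alpha}$, $n^{-1/2}\ln n$, and $n^{\alpha-1}$ respectively.

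Adding the two contributions, the bound becomes $\max(n^{\alpha-1}, n^{-\alpha})$ in the non-critical regime $\alpha \neq 1/2$ (the maximum being attained by $n^{-\alpha}$ for $\alpha < 1/2$ and by $n^{\alpha-1}$ for $\alpha > 1/2$, since $\alpha - 1 \lessgtr -\alpha$ according as $\alpha \lessgtr 1/2$), while for $\alpha = 1/2$ the logarithmic factor dominates and one obtains $(\ln n + 1)n^{-1/2}$, which matches the statement. The main obstacle is the verification of the step-weight compatibility condition in the degenerate case $m = 0$ and the corresponding control of the residual non-telescoping term; once this is handled, the rest of the argument is a routine analysis of partial sums of the form $\sum_k k^{-\beta}$ via integral comparison.
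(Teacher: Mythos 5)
Your observation that the compatibility condition $\weighta[k+1](1-m\stepa[k+1])/\stepa[k+1] \leq \weighta[k]/\stepa[k]$ of \Cref{thm:const_step_conv} fails when $m = 0$ for the stated choice $\weighta[k] = \steps_1/(k+1)^\alpha$ is correct: one has $\weighta[k]/\stepa[k] = (k/(k+1))^\alpha$, which is strictly \emph{increasing}, so the required inequality reads $((k+1)/(k+2))^\alpha \leq (k/(k+1))^\alpha$, equivalent to $(k+1)^2 \leq k(k+2)$, which is false. The paper's own proof does not confront this; it simply produces the bound $\wassersteinTarg{\mu_0}{\pi}/(2\Stepa[0,n]) + (Ld/\Stepa[0,n])\sum_{k=1}^n \steps_k^2$, which is precisely what \Cref{thm:const_step_conv} yields under the \emph{different} choice $\weighta[k] = \steps_k$, for which $\weighta[k]/\stepa[k] \equiv 1$ and the compatibility condition holds with equality. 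The cleanest repair is therefore to take $\lambda_k = \steps_k$ in the statement of the corollary (the asymptotic rates are unchanged); your heavier patch is not needed.

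Your patch as written also has a genuine flaw. You claim that $\sup_j W_2^2(\mu_0 \Qkers[j], \pi)$ is finite ``thanks to the standard second-moment bound on the ULA iterates under \Cref{assum:convexity}$(0)$.'' No such uniform-in-$j$ bound is available under convexity alone. Iterating \Cref{thm:basic-one-step} with $m = 0$ only gives $W_2^2(\mu_0 \Qkers[j], \pi) \leq W_2^2(\mu_0, \pi) + 2Ld\sum_{i=1}^j\steps_i^2$, and for $\alpha \leq 1/2$ the right-hand side diverges as $j \to \plusinfty$. What actually rescues your Abel-summation argument is a \emph{quantitative growth} estimate rather than boundedness: $\max_{j \leq n} W_2^2(\mu_0 \Qkers[j],\pi)$ is $O(1 + n^{1-2\alpha})$ for $\alpha < 1/2$ and $O(1 + \log n)$ for $\alpha = 1/2$, and since the total variation of $k \mapsto \weighta[k]/\stepa[k]$ is $a_n - a_1 < 1$, the residual term is bounded by $\max_{j\le n} W_2^2(\mu_0\Qkers[j],\pi)/(2\Weighta[0,n])$, which after using $\Weighta[0,n] \asymp n^{1-\alpha}$ contributes $O(n^{-\alpha})$ and $O(n^{-1/2}\log n)$ respectively — matching the claimed rates. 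With this replacement of the erroneous finiteness claim, your argument closes; but it is strictly more work than the route the paper intends.
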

\begin{proof}
  The proof is postponed to \Cref{sec:proof-crefc_coro:ula_non_increas_sz}.
\end{proof}

{
In the case where  a warm start is available for the Wasserstein distance,
\ie~$W_2^2(\mu_0,\pi) \leq C$, for some absolute constant $C \geq 0$,
then \Cref{coro:eps_just_convex_ula} implies that the complexity of
ULA to obtain a sample close from $\pi$ in KL with a precision target
$\varepsilon>0$ is of order $d\bigO(\varepsilon^{-2})$. In addition,
by Pinsker inequality, we have for all probability measure $\mu$ on
$(\rset^d,\mathcal{B}(\rset^d))$,
$\tvnorm{\mu-\pi} \leq \{ 2\KL(\mu | \pi) \}^{1/2}$, which implies
that the complexity of ULA for the total variation distance is of order
$d\bigO(\varepsilon^{-4})$. This discussion justifies the bounds we
state in \Cref{tab:comparison_gamma__convex_warm_start}.
}

{
In addition if we have access to $\eta >0$ and $M_{\eta}\geq 0$,
independent of the dimension, such that for all $x \in \rset^d$,
$x \not \in \boule{\xstar}{M_{\eta}}$,
$U(x) - U(\xstar) \geq \eta \norm{x-\xstar}$, $\xstar \in \argmin_{\rset^d} U$,
\Cref{propo:bound_wasser_init_condition} in Appendix~\ref{sec:second-order-moment} shows that for all
$ \int_{\rset^d} \norm[2]{x - \xstar} \rmd \pi(x) \leq 2 \eta^{-2}d
(1+d) + M_{\eta}^2$. Therefore, starting at $\updelta_{\xstar}$, the
overall complexity for the KL is in this case
$d^{3} \bigO(\varepsilon^{-2})$ and $d^{3} \bigO(\varepsilon^{-4})$
for the total variation distance. This discussion justifies the bound we
state in \Cref{tab:comparison_gamma__convex}.
}

We specify the consequences of \Cref{thm:const_step_conv} when $U$ is strongly convex. 
\begin{theorem}
\label{thm:const_step_conv_wasser}
Assume \Cref{assum:convexity}(m) for $m>0$ and \Cref{assum:grad_lip}. Let
$\sequenceks{\step_k}$ be a
non-increasing sequence of positive real numbers, $\step_1 \in\ocint{0,
L^{-1}}$, and
$\mu_0 \in \Pens_2(\rset^d)$.  Then for all $n \in
\nsets$, it holds
\begin{equation*}
\wassersteinTarg{\mu_0 \Qkers[n]}{ \pi} \leq \defEns{\prod_{k=1}^n (1-m\stepa[k])} \wassersteinTarg{\mu_0}{ \pi}
+ 2Ld \sum_{k=1}^n \stepa[k]^2 \prod_{i=k+1}^n(1-m\stepa[i]) \eqsp,
\end{equation*}
where $\Qkers[n]$ is defined in \eqref{eq:definition_average_measure}.
\end{theorem}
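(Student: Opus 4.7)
The plan is to use Proposition~\ref{thm:basic-one-step} as a one–step contraction in $W_2^2$ and then unroll the recursion over $k = 1, \dots, n$. The key observation is that Lemma~\ref{lem:kl_minimizer} tells us $\pi$ is the unique minimizer of $\Fscr$, so for any $\mu \in \Pens_2(\rset^d)$ we have $\Fscr(\mu \Rkers) - \Fscr(\pi) \geq 0$ (when $\Fscr(\mu\Rkers)$ is finite; otherwise the statement is vacuous for that step, but in fact after one noise step $\mu \Rkers$ is absolutely continuous with finite entropy so this is fine). Consequently, the one–step bound in Proposition~\ref{thm:basic-one-step} reduces, when we simply drop the nonnegative left–hand side, to the purely geometric inequality
\begin{equation*}
\wassersteinTarg{\mu \Rkers}{\pi} \;\leq\; (1 - m\steps)\,\wassersteinTarg{\mu}{\pi} + 2 L d\, \steps^2,
\end{equation*}
valid for every $\steps \in \ocintLigne{0, L^{-1}}$ and every $\mu \in \Pens_2(\rset^d)$.

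Now I would apply this inequality with $\mu = \mu_0 \Qkers[k-1]$ and $\steps = \stepa[k]$, so that $\mu \Rkers[k] = \mu_0 \Qkers[k]$. Writing $a_k \defeq \wassersteinTarg{\mu_0 \Qkers[k]}{\pi}$, this yields the scalar recursion
\begin{equation*}
a_k \;\leq\; (1 - m\stepa[k])\, a_{k-1} + 2 L d\, \stepa[k]^2, \qquad k \in \nsets.
\end{equation*}
Since the hypothesis $\stepa[1] \leq L^{-1}$ together with $m \leq L$ (which follows from \Cref{assum:convexity}($m$) and \Cref{assum:grad_lip}) ensures $1 - m\stepa[k] \in \coint{0,1}$ for every $k \geq 1$, so that iterating preserves positivity of the contraction factors.

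A standard unrolling of this affine recursion gives
\begin{equation*}
a_n \;\leq\; \parenthese{\prod_{k=1}^n (1 - m\stepa[k])}\, a_0 + 2 L d \sum_{k=1}^n \stepa[k]^2 \prod_{i=k+1}^n (1 - m\stepa[i]),
\end{equation*}
with the convention $\prod_{i=n+1}^n = 1$. Since $a_0 = \wassersteinTarg{\mu_0}{\pi}$ and $a_n = \wassersteinTarg{\mu_0 \Qkers[n]}{\pi}$, this is exactly the claimed bound. There is no real obstacle beyond bookkeeping: the single step is already provided by Proposition~\ref{thm:basic-one-step}, and the only mild subtlety is justifying that one may discard the $\Fscr$ term, which rests entirely on \Cref{lem:kl_minimizer} together with the fact that $\mu_0 \Qkers[k]$ has finite free energy after at least one Euler–Maruyama step (thanks to the Gaussian convolution in $\Tkers$).
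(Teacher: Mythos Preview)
Your proposal is correct and follows essentially the same route as the paper: apply Proposition~\ref{thm:basic-one-step}, use Lemma~\ref{lem:kl_minimizer} (equivalently, nonnegativity of the KL divergence) to drop the free-energy term on the left, obtain the one-step contraction $\wassersteinTarg{\mu_0 \Qkers[k]}{\pi} \leq (1-m\stepa[k])\wassersteinTarg{\mu_0 \Qkers[k-1]}{\pi} + 2Ld\stepa[k]^2$, and unroll by induction. Your added remarks on finiteness of $\Fscr(\mu_0\Qkers[k])$ and on $1 - m\stepa[k] \in \coint{0,1}$ are correct and a bit more explicit than the paper's two-line argument.
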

\begin{proof}
  Using \Cref{thm:basic-one-step} and since the
  Kullback-Leibler divergence is non-negative, we get for all $k \in \{1,\ldots,n\}$,
\begin{equation*}
\wassersteinTarg{\mu_0 \Qkers[k]}{ \pi} \leq (1-m\stepa[k])\wassersteinTarg{\mu_0 \Qkers[k-1]}{ \pi}  + 2 L d \stepa[k]^2 \eqsp.
\end{equation*}
The proof then follows from a direct induction.
\end{proof}
\begin{corollary}
  \label{cor:const_step_str_conv_wass_analysis}
Assume \Cref{assum:convexity}(m) for $m>0$ and \Cref{assum:grad_lip}. Let $\varepsilon >0$ and $\mu_0 \in \Pens_2(\rset^d)$. Define:
  \begin{equation*}
    \steps_{\varepsilon}  \leq \min\defEns{m\varepsilon/(4Ld), L^{-1}} \eqsp, \qquad \qquad 
    n_{\varepsilon}  \geq \lceil \ln( 2W_2^2(\mu_0, \pi)/ \varepsilon) \steps_{\varepsilon}^{-1} m^{-1} \rceil\eqsp. 
  \end{equation*}
Then we have
$\wassersteinTarg{\mu_0 \Rker^{n_{\varepsilon}}_{\steps_{\vareps}}}{ \pi} \leq \varepsilon
$, where $ \Rker_{\steps_{\vareps}}$ is defined by \eqref{eq:def_Rker_euler}.
\end{corollary}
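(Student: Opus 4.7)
The plan is to apply \Cref{thm:const_step_conv_wasser} with the constant step size $\stepa[k] = \steps_{\varepsilon}$ for all $k \in \{1,\ldots,n_\varepsilon\}$ and then verify that the choices of $\steps_\varepsilon$ and $n_\varepsilon$ split the resulting bound into two pieces each controlled by $\varepsilon/2$.

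First I would specialize \Cref{thm:const_step_conv_wasser} to the constant step-size case. The product $\prod_{k=1}^{n_\varepsilon}(1-m\steps_\varepsilon) = (1-m\steps_\varepsilon)^{n_\varepsilon}$ is bounded above by $\rme^{-m n_\varepsilon \steps_\varepsilon}$ using $1-x \leq \rme^{-x}$. For the second term, the sum collapses as a geometric series:
\begin{equation*}
\sum_{k=1}^{n_\varepsilon} \steps_\varepsilon^2 \prod_{i=k+1}^{n_\varepsilon}(1-m\steps_\varepsilon) = \steps_\varepsilon^2 \sum_{j=0}^{n_\varepsilon-1}(1-m\steps_\varepsilon)^{j} \leq \steps_\varepsilon^2 / (m\steps_\varepsilon) = \steps_\varepsilon / m \eqsp.
\end{equation*}
Combining these two estimates yields
\begin{equation*}
\wassersteinTarg{\mu_0 \Rker_{\steps_\varepsilon}^{n_\varepsilon}}{\pi} \leq \rme^{-m n_\varepsilon \steps_\varepsilon} \wassersteinTarg{\mu_0}{\pi} + 2Ld \steps_\varepsilon / m \eqsp.
\end{equation*}

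Next I would verify that each of the two terms on the right-hand side is bounded by $\varepsilon/2$. The bias term $2Ld\steps_\varepsilon/m \leq \varepsilon/2$ is immediate from the assumption $\steps_\varepsilon \leq m\varepsilon/(4Ld)$. The exponential term $\rme^{-m n_\varepsilon \steps_\varepsilon}\wassersteinTarg{\mu_0}{\pi} \leq \varepsilon/2$ holds as soon as $m n_\varepsilon \steps_\varepsilon \geq \ln(2\wassersteinTarg{\mu_0}{\pi}/\varepsilon)$, which is exactly the lower bound imposed on $n_\varepsilon$. Summing the two halves gives the desired conclusion.

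There is no real obstacle here; the only thing to keep in mind is that the condition $\steps_\varepsilon \leq L^{-1}$ is needed in order to legitimately invoke \Cref{thm:const_step_conv_wasser}, and it is included in the definition of $\steps_\varepsilon$. The argument is a direct translation of the classical convergence analysis for constant step-size (strongly convex) gradient descent, transferred to the Wasserstein space via the one-step inequality of \Cref{thm:basic-one-step}.
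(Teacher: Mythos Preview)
Your proof is correct and follows essentially the same approach as the paper: both apply \Cref{thm:const_step_conv_wasser} with constant step size, bound the geometric sum by $\steps_\varepsilon/m$, use $1-t\leq \rme^{-t}$ on the contraction factor, and split the resulting bound into two halves of $\varepsilon/2$ via the definitions of $\steps_\varepsilon$ and $n_\varepsilon$.
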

\begin{proof}
By \Cref{thm:const_step_conv_wasser}, we have
\begin{equation*}
\wassersteinTarg{\mu_0 \Qkers[n_{\varepsilon}]}{\pi} \leq  \left( 1-m\steps_{\varepsilon} \right)^{n_{\varepsilon}} \wassersteinTarg{\mu_0}{ \pi}
+ 2Ld \sum_{k=1}^{n_{\varepsilon}} \steps_{\varepsilon}^2 (1-m\steps_{\varepsilon})^{n_{\varepsilon} - k} \eqsp.
\end{equation*}
On one hand, by definition of $\gamma_{\varepsilon}$, we get $2Ld\sum_{k=1}^{n_{\varepsilon}} \steps_{\varepsilon}^2 (1-m\steps_{\varepsilon})^{n_{\varepsilon} - k}  \leq 2Ld\steps_{\varepsilon}/m \leq \varepsilon/2$. On the other hand, using that for all $t \in \rset_+$, $1-t \leq \exp(-t)$ and the definition of $n_{\varepsilon}$, we obtain $\left( 1-m\steps_{\varepsilon} \right)^{n_{\varepsilon}} \wassersteinTarg{\mu_0}{ \pi} \leq \exp(-m \steps_{\varepsilon} n_{\varepsilon}) \wassersteinTarg{\mu_0}{ \pi} \leq \varepsilon/2$.
Then the thesis of the corollary follows  directly from the above inequalities.
\end{proof}

{
Note that the bound in the right hand side of
\Cref{thm:const_step_conv_wasser} is tighter than the previous bound given in
\cite[Theorem 1]{dalalyan:karagulyan:2017} (for constant step-size)
and \cite[Theorem 5]{durmus2016high} (for both constant and
non-increasing step-sizes). Indeed \cite[Theorem
1]{dalalyan:karagulyan:2017} shows that, in the constant step-size
setting $\gamma_k = \gamma$, for all
$k \in \nset$,
\begin{equation*}
  W_2(\mu_0 \Qkers[k], \pi) \leq (1-m\gamma)^k W_2(\mu_0,\pi) + 1.65 (L/m)(\step d)^{1/2} \eqsp. 
\end{equation*}
On the other hand, the inequality $(t+s)^{1/2}\leq t^{1/2}+s^{1/2}$ for $t,s \geq 0$ and  \Cref{thm:const_step_conv_wasser} imply that for all $k \in \nset$,
\begin{equation}
  \label{eq:discu_gaussian_ula}
  W_2(\mu_0 \Qkers[k], \pi) \leq (1-m\gamma)^{k/2} W_2(\mu_0,\pi) + \{2 \step d L/m\}^{1/2} \eqsp. 
\end{equation}
Thus, the dependency on the condition number $L/m$ is improved. This
bound is in agreement for the case where $\pi$ is the zero-mean
$d$-dimensional Gaussian distribution with covariance matrix $\Sigma$. In that case, all the iterates $(X_k)_{k \in \nsets}$ defined by
\eqref{eq:definition_em} for $\step >0$, starting from
$x \in \rset^d$, follows a Gaussian distribution with mean
$(\Id-\step \Sigma)^{k}x$ and covariance matrix
$2 \step \sum_{i=0}^{k-1}(1-\step \Sigma)^{2i}$. Since the Wasserstein
distance between $d$-dimensional Gaussian distributions can be
explicitly computed, see \cite{givens:shortt:1984}, denoting by $L$
and $m$ the largest and smallest eigenvalues of $\Sigma$
respectively, we have by an explicit calculation for
$\step \in \ocint{0,L^{-1}}$,
\begin{equation*}
    W_2(\mu_0 \Qkers[k], \pi) \leq  (1-m\gamma)^k W_2(\mu_0,\pi) + (d/ m )^{1/2}\defEns{(1-\step L/2)^{-1/2}-1} \eqsp.  
  \end{equation*}
  Since for $t \in \ccint{0,1/2}$, $(1-t)^{-1/2} -1-t \leq 0$, we get
\begin{equation*}
    W_2(\mu_0 \Qkers[k], \pi) \leq   (1-m\gamma)^k W_2(\mu_0,\pi) + 2^{-1} \step (d/m )^{1/2} \defEns{(1-\step L)^{-1/2}-1} \eqsp.  
  \end{equation*}
  Using that $\step \leq L^{-1}$, we get that the second term in the
  right hand side is bounded by $(d L \step/m)^{1/2}$, which is precisely
  the order we get from \eqref{eq:discu_gaussian_ula}. 
}

{ Finally, if $(\gamma_k)_{k \in \nsets}$ is given for all
  $k \in \nsets$, by $\gamma_k=\gamma_1/k^{\alpha}$, for
  $\alpha \in \ooint{0,1}$, then using \cite[Lemma 7]{durmus2016high}
  and the same calculation of \cite[Section
  6.1]{durmus:moulines:2015:supplement}, we get that there exists
  $C \geq 0$ such that for all $n \in \nsets$,
  $ W_2(\mu_0 \Qkers[n], \pi) \leq C n^{-\alpha/2}$. }

{
Based on \Cref{thm:const_step_conv_wasser}, we can improve \Cref{coro:eps_just_convex_ula} in the case where $U$ is strongly convex using an appropriate burn-in time.  
}
\begin{corollary}
\label{cor:const_step_str_conv_kl_analysis}
  Assume \Cref{assum:convexity}(m) for $m>0$ and \Cref{assum:grad_lip}. Let $\varepsilon >0$, $\mu_0 \in \Pens_2(\rset^d)$ and
  \begin{align*}
  \steps_{\varepsilon} & \leq \min\defEns{m\varepsilon/(4Ld), L^{-1}}   \eqsp, \qquad \quad  &    \tilde{\steps}_{\varepsilon} & \leq \min\defEns{\varepsilon/2Ld, L^{-1}} \eqsp, \\
    N_{\varepsilon} & \geq \lceil \ln( 2W_2^2(\mu_0, \pi)/ \varepsilon) (\steps_{\varepsilon}m)^{-1} \rceil \eqsp, \qquad \qquad & 
    n_{\varepsilon} & \geq \lceil \tilde{\steps_{\varepsilon}}^{-1} \rceil \eqsp.
  \end{align*}
  Let $(\step_k)_{k \in \nset}$ defined by  $\steps_k = \steps_{\varepsilon}$ for $ k \in \{1,\ldots,  N_{\varepsilon}\}$ and $ \steps_k = \tilde{\steps_{\varepsilon}}$ for $k > N_{\varepsilon}$.  Then we have $ \KLarg{\nu^{N_{\varepsilon}}_{n_{\varepsilon}}}{\pi} \leq \varepsilon$
where $\nu_{n_{\varepsilon}}^{N_{\varepsilon}} = n_{\varepsilon}^{-1} \sum_{k=1}^{n_{\varepsilon}} \mu_0 \Rker_{\steps_{\varepsilon}}^{N_{\varepsilon}}  \Rker_{\tilde{\steps}_{\varepsilon}}^k$.
\end{corollary}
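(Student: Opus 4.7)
The plan is as follows. The two-phase step-size schedule splits the algorithm into a burn-in of $N_\varepsilon$ steps at size $\steps_\varepsilon$, designed to bring the iterate law close to $\pi$ in Wasserstein distance, followed by an averaging phase of $n_\varepsilon$ steps at the larger step-size $\tstep_\varepsilon$ tuned to deliver the Kullback--Leibler precision. The intermediate distribution $\tmu_\varepsilon = \mu_0 \Rker_{\steps_\varepsilon}^{N_\varepsilon}$ will play the role of a warm start for the second phase, and I would chain \Cref{cor:const_step_str_conv_wass_analysis} (for the burn-in) with \Cref{coro:eps_just_convex_ula} (for the average), the output of the first feeding the initial Wasserstein datum of the second.

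For the burn-in phase, apply \Cref{cor:const_step_str_conv_wass_analysis} with step size $\steps_\varepsilon$ and $N_\varepsilon$ iterations starting from $\mu_0$. The prescribed bounds on $\steps_\varepsilon$ and $N_\varepsilon$ match the hypotheses of that corollary verbatim, immediately yielding $\wassersteinTarg{\tmu_\varepsilon}{\pi} \leq \varepsilon$.

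For the averaging phase, the semigroup property gives
$$\nu^{N_\varepsilon}_{n_\varepsilon} = n_\varepsilon^{-1} \sum_{k=1}^{n_\varepsilon} \tmu_\varepsilon \Rker_{\tstep_\varepsilon}^k \eqsp,$$
so $\nu^{N_\varepsilon}_{n_\varepsilon}$ coincides with the averaged iterates at constant step size $\tstep_\varepsilon$ issued from $\tmu_\varepsilon$. I would then apply \Cref{coro:eps_just_convex_ula} in the $m=0$ regime (permitted since \Cref{assum:convexity}$(m)$ with $m>0$ implies \Cref{assum:convexity}$(0)$) with initial distribution $\tmu_\varepsilon$, step size $\tstep_\varepsilon$, and $n_\varepsilon$ iterations. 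The step-size bound $\tstep_\varepsilon \leq \min\defEnsLigne{\varepsilon/(2Ld), L^{-1}}$ holds by construction, while the iteration-count condition $n_\varepsilon \geq \lceil \wassersteinTarg{\tmu_\varepsilon}{\pi}\, \tstep_\varepsilon^{-1} \varepsilon^{-1} \rceil$ reduces, using the first-phase bound $\wassersteinTarg{\tmu_\varepsilon}{\pi} \leq \varepsilon$, to $n_\varepsilon \geq \lceil \tstep_\varepsilon^{-1} \rceil$, which matches the chosen lower bound. The conclusion $\KLarg{\nu^{N_\varepsilon}_{n_\varepsilon}}{\pi} \leq \varepsilon$ then follows.

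No step is genuinely hard: the two-phase schedule is designed precisely to decouple the initial Wasserstein control from the KL target, so that \Cref{cor:const_step_str_conv_wass_analysis} and \Cref{coro:eps_just_convex_ula} plug into each other. The only subtlety worth checking is the monotonicity hypothesis $\weight_{k+1}(1-m\step_{k+1})/\step_{k+1} \leq \weight_k/\step_k$ underlying \Cref{thm:const_step_conv}; in the averaging phase with $\weight_k = 1$, constant $\step_k = \tstep_\varepsilon$ and $m=0$, it collapses to $1/\tstep_\varepsilon \leq 1/\tstep_\varepsilon$ and is trivially satisfied.
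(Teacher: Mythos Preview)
Your proposal is correct and follows essentially the same approach as the paper: use \Cref{cor:const_step_str_conv_wass_analysis} to guarantee $\wassersteinTarg{\mu_0 \Rker_{\steps_\varepsilon}^{N_\varepsilon}}{\pi}\leq\varepsilon$ after the burn-in, then feed this as the initial Wasserstein bound into the constant-step averaging analysis for the second phase. The only cosmetic difference is that the paper invokes \Cref{thm:const_step_conv} directly in the second phase (dropping the $(1-m\tstep_\varepsilon)$ factor on the right-hand side), while you route through \Cref{coro:eps_just_convex_ula} with $m=0$; since \Cref{coro:eps_just_convex_ula} is itself just \Cref{thm:const_step_conv} specialized to constant weights and step sizes, the two arguments are equivalent.
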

\begin{proof}
Using \Cref{cor:const_step_str_conv_wass_analysis}, we have $\wassersteinTarg{\mu_0 \Qkers[N_{\varepsilon}]}{ \pi} \leq \varepsilon$. 
Now applying \Cref{thm:const_step_conv} we get:
\begin{align*}
\KLarg{\nu^{N_{\varepsilon}}_{n_{\varepsilon}}}{\pi}  \leq \left. \wassersteinTarg{\mu_{N_{\varepsilon}}}{ \pi} \middle/ (2 \tilde{\steps_{\varepsilon}} n_{\varepsilon}) \right.
+ (Ld/n_{\varepsilon}\tilde{\steps_{\varepsilon}}) \sum_{k=N_{\varepsilon} + 1}^{N_{\varepsilon} + n_{\varepsilon}} (\tilde{\steps_{\varepsilon}})^2  \leq \varepsilon / (2 \tilde{\steps_{\varepsilon}} n_{\varepsilon})+ Ld \tilde{\steps_{\varepsilon}} \leq \varepsilon
\end{align*}
\end{proof}

{ By \cite[Proposition 1]{durmus2016high}, we have
  $\int_{\rset^d} \norm[2]{x-\xstar} \rmd \pi(x) \leq d/m$, where $\xstar = \argmin_{\rset^d} U$. Therefore
  we have that in the constant step size setting,
  $\step_k = \step \in \ocintLigne{0,L^{-1}}$ for all $k \in \nsets$,
  \Cref{cor:const_step_str_conv_wass_analysis} implies that a
  sufficient number of iterations to have
  $W_2(\updelta_{\xstar} \Qkers[n], \pi) \leq \varepsilon$ is of order
  $\bigO(\varepsilon^{-2}d)$. Then
  \Cref{cor:const_step_str_conv_kl_analysis} implies that a sufficient
  number of iterations to get $\KLarg{\nu^N_n}{\pi} \leq \varepsilon$,
  for $\varepsilon >0$, is of order $\bigO(\varepsilon^{-1}d)$. By
  Pinsker inequality, we obtain that a sufficient number of iterations
  to get $\tvnorm{\nu^N_n-\pi} \leq \varepsilon$, for
  $\varepsilon >0$, is of order $d\bigO(\varepsilon^{-2})$.  }

For a sufficiently small constant step size $\step$, ULA produces a Markov Chain with a stationary measure
$\pi_{\step}$. In general this measure is different from the measure
of interest $\pi$. Based on our previous results, we establish computable
bounds on the distance between $\pi$ and $\pi_{\step}$.
\begin{theorem}
\label{thm:pi_pi_gamma_ULA}
  Assume \Cref{assum:convexity}$(m)$ for $m \geq 0$ and \Cref{assum:grad_lip}.
Let $\steps \in \ocint{0,L^{-1}}$. Then there exists a measure $\pi_{\step}$, such that $\pi_{\step} \Rker_{\steps} = \pi_{\step}$
where $\Rker_{\steps}$ is defined by \eqref{eq:def_Rker_euler}. In addition,  we have
\[
\KLarg{\pi_{\step}}{\pi} \leq Ld\step, \qquad \qquad \tvnorm{\nu^N_n-\pi} \leq \sqrt{2Ld\step}
\]
Furthermore, if $m > 0$ we also have $\wassersteinTarg{\pi_{\step}}{ \pi} \leq \fraca{2Ld \step}{m}$.
\end{theorem}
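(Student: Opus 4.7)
The plan is to break the proof into three pieces: existence of an invariant measure $\pi_\steps$ for $\Rker_\steps$, the Kullback-Leibler bound, and the total variation and Wasserstein corollaries of the KL bound.

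First I would establish existence (and uniqueness) of $\pi_\steps$ via a Wasserstein contraction argument. Applying \Cref{lem:basic-potential-gradient-step} with $\nu=\updelta_y$ and taking $\mu=\updelta_x$ (together with \Cref{lem:ent-grad-flow-step-inequality} to absorb the noise step, which preserves $W_2$ since one can couple the Gaussian increments synchronously), one obtains $W_2(\updelta_x \Rker_\steps,\updelta_y \Rker_\steps)\le \sqrt{1-m\steps}\,\|x-y\|$; integrating against an optimal coupling of arbitrary $\mu,\nu\in\Pens_2(\rset^d)$ gives $W_2(\mu\Rker_\steps,\nu\Rker_\steps)\le \sqrt{1-m\steps}\,W_2(\mu,\nu)$. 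Since $(\Pens_2(\rset^d),W_2)$ is complete, the Banach fixed point theorem produces a unique $\pi_\steps\in\Pens_2(\rset^d)$ with $\pi_\steps \Rker_\steps=\pi_\steps$.

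Next I would apply \Cref{thm:basic-one-step} with $\mu=\pi_\steps$. Using $\pi_\steps \Rker_\steps=\pi_\steps$ and $\Fscr(\pi_\steps)-\Fscr(\pi)=\KLarg{\pi_\steps}{\pi}$ from \Cref{lem:kl_minimizer}, the terms $(1-m\steps)W_2^2(\pi_\steps,\pi)$ and $W_2^2(\pi_\steps\Rker_\steps,\pi)=W_2^2(\pi_\steps,\pi)$ combine into $-m\steps W_2^2(\pi_\steps,\pi)$, yielding
\begin{equation*}
2\steps\,\KLarg{\pi_\steps}{\pi}\;\le\;-m\steps\, W_2^2(\pi_\steps,\pi)+2\steps^2 Ld.
\end{equation*}
Dropping the non-positive first term on the right gives $\KLarg{\pi_\steps}{\pi}\le Ld\steps$. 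Pinsker's inequality then yields $\tvnorm{\pi_\steps-\pi}\le\sqrt{2\,\KLarg{\pi_\steps}{\pi}}\le\sqrt{2Ld\steps}$. Finally, when $m>0$, using $\KLarg{\pi_\steps}{\pi}\ge 0$ and rearranging the same inequality gives $W_2^2(\pi_\steps,\pi)\le 2Ld\steps/m$.

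The main obstacle is the existence step: one has to be careful to package the contractivity of the drift half-step $\Skers$ (from \Cref{lem:basic-potential-gradient-step}) together with the non-expansiveness of the noise half-step $\Tkers$ under the synchronous Gaussian coupling, so as to produce a genuine strict contraction of $\Rker_\steps=\Skers\Tkers$ in $(\Pens_2(\rset^d),W_2)$. Once that is in place, the KL, TV and $W_2$ bounds are essentially a one-line consequence of plugging the fixed-point identity into \Cref{thm:basic-one-step} and invoking Pinsker's inequality.
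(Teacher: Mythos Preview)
Your derivation of the KL, total variation, and Wasserstein bounds from \Cref{thm:basic-one-step} by plugging in $\mu=\pi_\steps$ and using $\pi_\steps\Rker_\steps=\pi_\steps$ is exactly what the paper does, so that part is fine.

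The genuine gap is in the existence step. Your Banach fixed point argument requires a \emph{strict} contraction of $\Rker_\steps$ in $W_2$, which you obtain (at best) with factor $\sqrt{1-m\steps}$. But the theorem is stated for $m\geq 0$, and when $m=0$ this factor equals $1$: you only get non-expansiveness, not a contraction, and the fixed point theorem does not apply. So your argument establishes existence of $\pi_\steps$ only in the strongly convex case, leaving the convex case $m=0$ uncovered. The paper handles existence uniformly in $m\geq 0$ by a different route: it cites a geometric Foster--Lyapunov drift condition for $\Rker_\steps$ (valid for $\steps\leq L^{-1}$ under \Cref{assum:convexity}($0$) and \Cref{assum:grad_lip}), checks that $\Rker_\steps$ is $\Leb$-irreducible and weak Feller so that compact sets are small, and then invokes Meyn--Tweedie ergodic theory to obtain a unique invariant measure.

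A smaller issue: \Cref{lem:basic-potential-gradient-step} compares $W_2^2(\mu\Skers,\nu)$ to $W_2^2(\mu,\nu)$, not $W_2^2(\mu\Skers,\nu\Skers)$ to $W_2^2(\mu,\nu)$, so it does not by itself give the $W_2$-contractivity of $\Skers$ you claim. For $m>0$ the contraction of $x\mapsto x-\steps\nabla U(x)$ is of course standard from $m$-strong convexity and $L$-smoothness, but it needs a direct argument rather than that lemma. Also, \Cref{lem:ent-grad-flow-step-inequality} is an entropy estimate and plays no role in showing $\Tkers$ is $W_2$-non-expansive; the synchronous Gaussian coupling you mention is the correct (and sufficient) reason.
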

\begin{proof}
  Under \Cref{assum:convexity} and \Cref{assum:grad_lip},
\cite[Proposition 13]{durmus:moulines:2017} shows that $R_{\gamma}$
satisfies a geometric Foster-Lyapunov drift condition for
$\gamma \leq L^{-1}$. In addition, it is easy to see that $R_{\gamma}$
is $\Leb$-irreducible and weak Feller and therefore by \cite[Theorem 6.0.1 together with Theorem 5.5.7 ]{meyn:tweedie:2009}, all compact sets are small. Then, by \cite[Theorem 16.0.1]{meyn:tweedie:2009},
$R_{\gamma}$ has a unique invariant distribution $\pi_{\gamma}$.

Second, taking $\mu = \pi_{\step}$ in \Cref{thm:basic-one-step} we obtain:
\begin{equation}
2\steps \KLarg{\pi_{\steps}\Rker_{\steps}}{\pi} \leq (1-m\steps)\wasserstein_2^2(\pi_{\step}, \pi) - \wasserstein_2^2(\pi_{\steps} \Rker_{\steps} , \pi) + 2\steps^2 L d \eqsp,
\end{equation}
and because $\pi_{\steps} \Rker_{\steps} = \pi_{\steps}$, the above implies $2 \KLarg{\pi_{\steps}}{\pi} + m\wasserstein_2^2(\pi_{\steps}, \pi) \leq 2Ld\steps$.
Since both the KL-divergence and Wasserstein distance are positive, the desired bounds in KL and $\wasserstein_2^2$ follow. The bound in total variation follows from the bound in KL-divergence and Pinsker inequality.
\end{proof}


\section{Extensions of ULA}
\label{sec:expl-bounds-extens}

In this section, two extensions of ULA are presented and
analyzed. These two algorithms can be applied to non-continuously differentiable 
convex potential $U: \rset^d \to \rset$ and therefore
\Cref{assum:grad_lip} is not assumed anymore.  In addition, for the two new
algorithms we present, only ~\iid~unbiased estimates of
(sub-)gradients of $U$ are necessary as in Stochastic Gradient
Langevin Dynamics (SGLD) \cite{welling:teh:2011}. The main difference
in these two approaches is that one relies on the sub-gradient
of $U$ 
while 
the other is based on proximal operators which are tools commonly used in non-smooth optimization. However, theoretical
results that we can show for these two algorithms, hold for different sets of
conditions.

\subsection{Stochastic Sub-Gradient Langevin Dynamics}
\label{sec:stoch-sub-grad}

Note that if
$U$ is convex and \lsc~then for any point $x \in \rset^d$, its
sub-differential $\partial U(x)$ defined by
\begin{equation}
\label{eq:definition_partial_U}
  \partial U(x) = \ensemble{v \in  \rset^d}{U(y) \geq U(x) + \ps{v}{y-x} \text{ for all $y \in \rset^d$}} \eqsp,
\end{equation}
is non empty, see \cite[Proposition 8.12, Theorem
8.13]{rockafellar:wets:1998}. For all $x \in \rset^d$, any elements of
$\partial U(x)$ is referred to as a sub-gradient of $U$ at $x$.
Consider the following condition on $U$ which assumes that we have
access to unbiased estimates of sub-gradients of $U$ at any point $x \in \rset^d$.
\begin{assumption}
  \label{assum:stochastic_subgradient}
\begin{sf}
  \begin{enumerate}[label=(\roman*)]
  \item   \label{assum:stochastic_subgradient_i} The potential $U$ is  $\ClyapU$-Lipschitz, \ie~for
    all $x,y \in \rset^d$, $\abs{U(x) -U(y) } \leq \ClyapU\norm{x-y}$.
  \item \label{assum:stochastic_subgradient_ii} There exists a measurable space $(\msz,\mcz)$, a probability
  measure $\eta$ on $(\msz,\mcz)$ and a measurable function $\gradst : \rset^d \times \msz \to \rset^d$  for all $x \in \rset^d$,
  \begin{equation*}
    \int_{\msz} \gradst(x, z) \rmd \eta(z) \in \partial U(x) \eqsp.
  \end{equation*}
  \end{enumerate}
\end{sf}
\end{assumption}
Note that under
\Cref{assum:stochastic_subgradient}-\ref{assum:stochastic_subgradient_i},
for all $x \in \rset^d$ and  $v \in \partial U(x)$, 
\begin{equation}
  \label{eq:stochastic_subgradient_i}
\norm{v} \leq \ClyapU \eqsp.
\end{equation}

Let $\sequenceks{Z_k}$ be a sequence of
\iid~random variables distributed according to $\eta$,
$\sequenceks{\step_k}$ be a sequence of non-increasing step sizes and
$\bX_0$ distributed according to $\mu_0 \in
\Pens_2(\rset^d)$. Stochastic Sub-Gradient Langevin Dynamics (SSGLD)
defines the sequence of random variables $\sequencek{\bX_k}$ starting
at $\bX_0$ for $n \geq 0$ by
\begin{equation}
\label{eq:definition_SSGLD}
  \bX_{n+1} = \bX_n - \stepa[n+1] \gradst(\bX_n,Z_{n+1})  + \sqrt{2 \stepa[n+2]} G_{n+1} \eqsp,
\end{equation}
where $\sequenceks{G_k}$ is a sequence of \iid~$d$-dimensional standard Gaussian random variables, independent of $\sequenceks{Z_k}$, see \Cref{algo:SSDLG}.  Consequently
this method defines a new sequence of Markov kernels
$\sequenceks{\bRkera[k]}$ given for all $\step,\tstep >0$, $x \in \rset^d$ and $\eventA \in \Borel(\rset^d)$ by
\begin{equation}
\label{eq:definition_rker_ssgld}
  \bRker_{\step,\tstep}(x,\eventA)  =  (4 \uppi \tstep)^{-d/2} \int_{\eventA \times \msz} \exp\parenthese{-\norm[2]{y-x+\step \gradst(x, z)}/(4 \tstep)} \rmd \eta(z) \rmd y \eqsp. 
\end{equation}

\begin{algorithm}[H]
  \KwData{initial distribution $\mu_0 \in \Pens_2(\rset^d)$, non-increasing sequence $(\gamma_k)_{k \geq 1}$, $U,\Theta, \eta$ satisfying \Cref{assum:stochastic_subgradient}}
 \KwResult{$(\bX_k)_{k \in \nset}$}
 \Begin{
   Draw $\bX_0 \sim \mu_0$ \;
   \For{$k \geq 0$ }{ 
   Draw $G_{k+1} \sim \mathcal{N}(0, \Id)$ and $Z_{k+1} \sim \eta$ \;
   Set $\bX_{k+1} = \bX_k - \stepa[k+1] \gradst(\bX_k,Z_{k+1})  + \sqrt{2 \stepa[k+2]} G_{k+1}$ 
    } 
    }
 \caption{SSGLD}
\label{algo:SSDLG}
\end{algorithm}

Let $\sequenceks{\step_k}$ and $\sequenceks{\weight_k}$ be two non-increasing sequences of reals
numbers and $\mu_0 \in \Pens_2(\rset^d)$ be an initial
distribution. The weighted averaged distribution associated with
\eqref{eq:definition_SSGLD} $(\bnu^{N}_n)_{n \in \nset}$ is defined
for all $N, n \in \nset$, $n \geq 1$ by
\begin{equation}
  \label{eq:definition_average_measure}
\bnu^N_n = \Weight_{N,N+n} ^{-1}\sum_{k=N+1}^{N+n} \weight_{k}\, \mu_0 \bQkers[k] \eqsp, \qquad \bQkers[k] = \bRker_{\stepa[1],\stepa[2]} \cdots \bRkers[k] \eqsp, \text{ for $k \in \nsets$} \eqsp,
\end{equation}
where 
$N$ is a burn-in time and $\Weight_{N,N+n}$ is defined in \eqref{eq:definition_sum_step_weight}. We take in the
following the convention that $\bQkers[0]$ is the identity operator.

Under \Cref{assum:stochastic_subgradient}, define for all $\mu \in
\Pens_2(\rset^d)$,
\begin{equation}
  \label{eq:definition_var_grad_sto}
\vargrad_{\gradst}(\mu) =  \int_{\rset^d \times \msz} \norm[2]{\gradst(x,z)- \int_{\msz} \gradst(x,\tilde{z}) \rmd \eta(\tilde{z})} \rmd \eta(z) \rmd \mu(x) = \expe{\norm[2]{\gradst(\bX_0,Z_1) - v}} \eqsp,
\end{equation}
where $\bX_0,Z_1$ are  independent random variables with distribution $\mu$ and $\eta_1$ respectively and $v \in \partial U(X_0)$ almost surely.
In addition, consider $\bSker_{\step}$, the Markov kernel on $(\rset^d,\Borel(\rset^d))$
defined for all $x \in \rset^d$ and $\eventA \in \Borel(\rset^d)$ by
\begin{equation}
  \label{eq:definition_sker_ssgld}
  \bSker_{\step}(x,\eventA) =  \int_{\msz} \1_{\eventA}\parenthese{x-\steps \gradst(x, z)} \rmd \eta(z) \eqsp.
\end{equation}

\begin{theorem}
\label{thm:step_conv_ss}
Assume \Cref{assum:convexity}($0$) and
\Cref{assum:stochastic_subgradient}. Let $\sequenceks{\step_k}$ and
$\sequenceks{\weight_k}$ be two non-increasing sequences of positive
real numbers satisfying for all $k\in \nsets$,
$\weighta[k+1]/\stepa[k+2]
\leq\weighta[k]/\stepa[k+1]$. Let $\mu_0 \in \Pens_2(\rset^d)$ and $N
\in \nset$.  Then for all $n \in \nsets$, it holds
\begin{multline*}
\KLarg{\bnu^N_n}{\pi} \leq \left. \weighta[N+1] \wassersteinTarg{\mu_0 \bQkers[N] \bSker_{\stepa[N+1]}}{ \pi} \middle/ (2 \stepa[N+2] \Weighta[N,N+n]) \right.\\
+ (2\Weighta[N,N+n])^{-1}  \sum_{k=N+1}^{N+n} \defEns{\stepa[k+1] \weighta[k] \parenthese{\ClyapU^2 + \vargrad_{\gradst}(\mu_0 \bQkera[k])}} \eqsp,
\end{multline*}
where $\bnu^N_n$ and $\bQkers[N]$ are defined in \eqref{eq:definition_average_measure}.
\end{theorem}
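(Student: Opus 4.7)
The plan is to imitate the proof of \Cref{thm:const_step_conv}: establish a one-step descent inequality on $\Fscr$, telescope it after weighting, and invoke the convexity of KL divergence together with \Cref{lem:kl_minimizer}. The two new features compared to ULA are the non-smoothness of $U$, which invalidates the Taylor-type bound of \Cref{lem:conv-potential-change}, and the time-shift between the subgradient step size $\stepa[k+1]$ and the Gaussian-noise step size $\stepa[k+2]$, which dictates how the intermediate measures must be indexed in the telescoping.

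First I would factor $\bRker_{\step,\tstep} = \bSker_{\step} \Tker_{\tstep}$ and introduce the half-step iterate $\bmu_k \eqdef \mu_0 \bQkera[k-1] \bSkera[k]$, the state after the $k$-th subgradient update but before the $k$-th noise injection, so that $\mu_0 \bQkera[k] = \bmu_k \Tkera[k+1]$ and $\bmu_{k+1} = \mu_0 \bQkera[k] \bSkera[k+1]$. The central intermediate claim is: for every $\nu \in \Pens_2(\rset^d)$ and every $\tstep > 0$,
\begin{equation*}
2\tstep\defEns{\Fscr(\nu \Tker_{\tstep}) - \Fscr(\pi)} \leq \wassersteinTarg{\nu}{\pi} - \wassersteinTarg{\nu \Tker_{\tstep} \bSker_{\tstep}}{\pi} + \tstep^2 \defEns{\ClyapU^2 + \vargrad_{\gradst}(\nu \Tker_{\tstep})}.
\end{equation*}
Splitting $\Fscr = \Hscr + \Escr$, the entropy contribution $2\tstep\defEns{\Hscr(\nu\Tker_{\tstep}) - \Hscr(\pi)} \leq \wassersteinTarg{\nu}{\pi} - \wassersteinTarg{\nu\Tker_{\tstep}}{\pi}$ is immediate from \Cref{lem:ent-grad-flow-step-inequality}. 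For the potential-energy part, set $\tnu = \nu\Tker_{\tstep}$, take an optimal $\wasserstein_2$-coupling $(X,Y)$ of $\tnu$ and $\pi$ with an independent $Z \sim \eta$, and let $v(X) = \int_{\msz} \gradst(X,z)\,\rmd\eta(z) \in \partial U(X)$, which satisfies $\norm{v(X)} \leq \ClyapU$ by \eqref{eq:stochastic_subgradient_i}. The subgradient inequality $U(X) - U(Y) \leq \ps{v(X)}{X-Y}$ together with the expansion
\begin{equation*}
2\tstep\ps{v(X)}{X-Y} = \norm[2]{X-Y} - \PE_Z\parentheseDeux{\norm[2]{X-\tstep\gradst(X,Z)-Y}} + \tstep^2 \PE_Z\parentheseDeux{\norm[2]{\gradst(X,Z)}},
\end{equation*}
the bias-variance identity $\PE_Z\parentheseDeux{\norm[2]{\gradst(X,Z)}} = \norm[2]{v(X)} + \PE_Z\parentheseDeux{\norm[2]{\gradst(X,Z)-v(X)}}$, and the observation that $(X-\tstep\gradst(X,Z), Y)$ is a coupling of $\tnu\bSker_{\tstep}$ and $\pi$, deliver the energy bound upon integration.

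Applying this one-step inequality with $\nu = \bmu_k$ and $\tstep = \stepa[k+1]$, multiplying both sides by $\weighta[k]/(2\stepa[k+1])$ and summing over $k \in \{N+1,\ldots,N+n\}$, an Abel summation in which the hypothesis $\weighta[k+1]/\stepa[k+2] \leq \weighta[k]/\stepa[k+1]$ forces non-positive increments makes the intermediate Wasserstein terms drop out, leaving only $\weighta[N+1] \wassersteinTarg{\mu_0 \bQkera[N] \bSkera[N+1]}{\pi}/(2\stepa[N+2])$. Identifying $\Fscr(\mu_0 \bQkera[k]) - \Fscr(\pi) = \KLarg{\mu_0 \bQkera[k]}{\pi}$ via \Cref{lem:kl_minimizer} and applying convexity of the KL divergence to the mixture $\bnu_n^N$ then yields the announced bound. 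The main technical hurdle is the one-step energy inequality: without differentiability of $U$ the argument of \Cref{lem:basic-potential-gradient-step} is unavailable, so everything must be driven by Lipschitz continuity and convexity, and the bias-variance decomposition has to repackage the residual as exactly $\ClyapU^2 + \vargrad_{\gradst}$; the somewhat unusual monotonicity condition on $\weighta[k]/\stepa[k+1]$ is precisely what is needed to close the shifted-index Abel summation produced by the time-lagged pairing of $\bSker_{\gamma_{k+1}}$ and $\Tker_{\gamma_{k+2}}$.
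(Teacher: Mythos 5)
Your proposal is correct and is essentially the paper's argument: you recover \Cref{lem:basic-one-step_ssgld} via the same subgradient inequality, bias-variance identity, and optimal-coupling comparison, your ``central intermediate claim'' is \Cref{propo:basic-one-step_ssgld} with the substitution $\nu = \mu\bSker_{\steps}$, and the telescoping with the shifted index $\weighta[k]/\stepa[k+1]$ plus KL convexity is exactly how the paper concludes.
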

\begin{proof}
The proof is postponed to \Cref{sec:proof-crefthm:st}..
\end{proof}

Note that in the bound given by \Cref{thm:step_conv_ss}, we need to
control the ergodic average of the variance of the stochastic gradient
estimates. When \Cref{assum:stochastic_subgradient} is satisfied, a
possible assumption is that $x \mapsto \vargrad(\updelta_x)$ is
uniformly bounded. This assumption will be satisfied for example when
the potential $U$ is a sum of Lipschitz continuous functions.

\begin{corollary}
  \label{coro:ssgld_fixed_one}
Assume \Cref{assum:convexity}($0$) and
\Cref{assum:stochastic_subgradient}. Assume that $\sup_{x \in \rset^d} \vargrad_{\gradst}(\updelta_x) \leq D^2 < \infty$. Let $\sequenceks{\step_k}$ and $\sequenceks{\weight_k}$ given for all $k \in \nsets$ by  $\weight_k = \steps_k = \steps >0$. Let $\mu_0 \in \Pens_2(\rset^d)$. Then for any $N \in \nset, n \in \nsets$ we have
\[
\KLarg{\bnu^N_n}{\pi} \leq \left. \wassersteinTarg{\mu_0 \bQkers[N] \bSker_{\step}}{ \pi} \middle/ (2 n \steps) \right.
+ (\steps/2) \left(\ClyapU^2 + D^2\right) \eqsp.
\]
Furthermore, let $\varepsilon > 0$ and
\begin{equation*}
    \steps_{\varepsilon}  \leq \varepsilon/(M^2 + D^2)  \eqsp, \qquad \qquad  n_{\varepsilon}  \geq \lceil W_2^2(\mu_0 \bSker_{\steps}, \pi) (\steps_{\varepsilon} \varepsilon)^{-1} \rceil \eqsp.
  \end{equation*}
Then for $\steps = \steps_{\varepsilon}$ we have $\KLarg{\bnu^0_{n_\varepsilon}}{\pi} \leq \varepsilon$.
\end{corollary}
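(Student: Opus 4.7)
The plan is to apply Theorem~\ref{thm:step_conv_ss} directly with the constant choice $\step_k = \weight_k = \step$ for all $k \in \nsets$ and reduce everything to elementary arithmetic.

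First I verify the monotonicity hypothesis $\weight_{k+1}/\step_{k+2} \leq \weight_k/\step_{k+1}$ required by the theorem; with the constant choice both sides equal $1$, so it holds with equality. Next I observe that the uniform bound on $\vargrad_\gradst$ transfers from Dirac masses to arbitrary measures: from the definition \eqref{eq:definition_var_grad_sto}, for any $\mu \in \Pens_2(\rset^d)$,
\begin{equation*}
\vargrad_\gradst(\mu) \;=\; \int_{\rset^d} \vargrad_\gradst(\updelta_x)\,\rmd \mu(x) \;\leq\; D^2 \eqsp,
\end{equation*}
which I apply with $\mu = \mu_0 \bQkera[k]$ for each $k$ in the sum appearing in Theorem~\ref{thm:step_conv_ss}.

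I then substitute the constants into the bound of Theorem~\ref{thm:step_conv_ss}. Since $\Weighta[N,N+n] = n\step$, the Wasserstein term reduces to $W_2^2(\mu_0 \bQkers[N] \bSker_\step, \pi)/(2n\step)$, and the sum collapses via $\sum_{k=N+1}^{N+n} \step \cdot \step \cdot (M^2 + D^2) = n\step^2(M^2+D^2)$, producing the factor $(\step/2)(M^2+D^2)$ after division by $2n\step$. This yields the first displayed inequality of the corollary.

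For the second claim, I instantiate the first inequality at $N=0$ (so $\bQkers[0]$ is the identity and $\bnu^0_{n_\vareps}$ is the average over the first $n_\vareps$ iterates) with $\step = \step_\vareps$. The hypothesis $\step_\vareps \leq \vareps/(M^2+D^2)$ bounds the variance term by $\vareps/2$, and the hypothesis $n_\vareps \geq \lceil W_2^2(\mu_0 \bSker_{\step_\vareps}, \pi) (\step_\vareps \vareps)^{-1}\rceil$ bounds the Wasserstein term by $\vareps/2$; summing gives $\KLarg{\bnu^0_{n_\vareps}}{\pi} \leq \vareps$. There is no substantial obstacle: all real work is done inside Theorem~\ref{thm:step_conv_ss}, and this corollary is its specialization to constant step sizes under a uniform variance bound.
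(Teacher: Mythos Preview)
Your proposal is correct and follows exactly the same approach as the paper: apply Theorem~\ref{thm:step_conv_ss} with constant step sizes and weights, then use the definitions of $\steps_\varepsilon$ and $n_\varepsilon$. The paper's own proof is a single line to this effect; you have simply written out the arithmetic (including the useful observation that $\vargrad_\gradst(\mu) = \int \vargrad_\gradst(\updelta_x)\,\rmd\mu(x) \le D^2$), which is entirely appropriate.
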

\begin{proof}
The first inequality is a direct consequence of \Cref{thm:step_conv_ss}. The bound for $\KLarg{\bnu^0_{n_\varepsilon}}{\pi}$ follows directly from this inequality and definitions of $\steps_{\varepsilon}$ and $n_{\varepsilon}$.
\end{proof}

{ In the case where a warm start is available for the Wasserstein distance,
  \ie~$W_2^2(\mu_0,\pi) \leq C$, for some absolute constant
  $C \geq 0$, then \Cref{coro:ssgld_fixed_one} implies that the
  complexity of SSGLD to obtain a sample close from $\pi$ in KL with a
  precision target $\varepsilon>0$ is of order
  $(M^2+D^2)\bigO(\varepsilon^{-2})$. Therefore, this complexity bound
  depends on the dimension only trough $M$ and $D^2$ contrary to ULA. In addition,  Pinsker inequality implies
  that the complexity of SSGLD for the total variation distance is of
  order $(M^2+D^2)\bigO(\varepsilon^{-4})$.  }

{
In addition if we have access to $\eta >0$ and $M_{\eta}\geq 0$,
independent of the dimension, such that for all $x \in \rset^d$,
$x \not \in \boule{\xstar}{M_{\eta}}$,
$U(x) - U(\xstar) \geq \eta \norm{x-\xstar}$, where $\xstar \in \argmin_{\rset^d} U$,
\Cref{propo:bound_wasser_init_condition} and \Cref{assum:stochastic_subgradient}-\ref{assum:stochastic_subgradient_i} imply that starting at $\updelta_{\xstar}$, the
overall complexity of SSGLD for the KL is in this case
$(\eta^{-2}d^{2}+M_\eta^2 + M^2)(M^2+D^2) \bigO(\varepsilon^{-2})$ and $(\eta^{-2} d^{2}+M_\eta^2+M^2)(M^2+D^2) \bigO(\varepsilon^{-4})$
for the total variation distance.
}

{If $(\gamma_k)_{k \in \nsets}$ and $(\lambda_k)_{k \in \nsets}$
  are given for all $k \in \nsets$ by
  $\gamma_k = \lambda_k = \gamma_1/k^{-\alpha}$, with $\alpha \in \ooint{0,1}$, then by the same reasoning as in
  the proof of \Cref{coro:ula_non_increas_sz}, we obtain that there
  exists $C \geq 0$ such that for all $n \in \nsets$, we have
  $\KLarg{\bnu_n^0}{\pi} \leq C \max(n^{\alpha - 1}, n^{- \alpha})$, if $\alpha \not = 1/2$,  and 
  for $\alpha = 1/2$, we have
  $\KLarg{\bnu_n^0}{\pi} \leq C (\ln(n) + 1) n^{-1/2}$.
}

 We can have a better control on  the variance terms using the
following conditions on $\Theta$.
\begin{assumption}
  \label{assum:cocoercitivity_sto_grad}
\begin{sf}
  There exists $\Lt \geq 0$ such that for $\eta$-almost every $z
  \in \msz$, $x \mapsto \gradst(x,z)$ is $1/\Lt$-cocoercive, \ie~for all $x \in \rset^d$,
  \begin{equation*}
    \ps{\gradst(x,z) - \gradst(y,z)}{x-y} \geq (1/\tL) \norm[2]{\gradst(x,z) - \gradst(y,z)}\eqsp.
  \end{equation*}
\end{sf}
\end{assumption}

{
This assumption is for example satisfied if $\eta$-almost every $z$,
$x \mapsto \Theta(x,z)$ is the gradient of a continuously
differentiable convex function with Lipschitz gradient, see \cite[Thereom 2.1.5]{nesterov:2004} and \cite{Zhu_Mar_1995}. }

\begin{proposition}
  \label{propo:bound_variance_sto_grad}
  Assume \Cref{assum:stochastic_subgradient} and \Cref{assum:cocoercitivity_sto_grad}. 
Then we have for all $x \in \rset^d$ and $\gamma,\tilde{\gamma} >0$, $\step \leq \tilde{L}^{-1}$
\begin{equation*}
2\gamma(\tilde{L}^{-1}-\gamma)\vargrad_{\gradst}(\updelta_x) \leq \norm{x-\xstar}^2 - \int_{\rset^d} \norm[2]{y-\xstar}\bar{R}_{\gamma,\tgamma}(x,\rmd y) + 2 \gamma^2 \vargrad_{\gradst}(\updelta_{\xstar})+  2 \tgamma d \eqsp,
\end{equation*}
where $\vargrad_{\gradst}$ is defined by \eqref{eq:definition_var_grad_sto}.
\end{proposition}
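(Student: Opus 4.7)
The plan is to first peel off the Gaussian component of $\bar{R}_{\gamma,\tgamma}$. Writing $Y \sim \bar{R}_{\gamma,\tgamma}(x,\cdot)$ as $Y = x - \gamma\,\gradst(x,Z) + \sqrt{2\tgamma}\,G$ with independent $Z\sim\eta$ and $G\sim\mathcal{N}(0,\Id)$, independence of $Z$ and $G$ yields
\[
\int_{\rset^d}\norm[2]{y-\xstar}\,\bar{R}_{\gamma,\tgamma}(x,\rmd y) = \PE[\norm[2]{x-\gamma\gradst(x,Z)-\xstar}] + 2\tgamma d,
\]
so the $+2\tgamma d$ term on the right-hand side of the target cancels, and it suffices to control the deterministic sub-step.

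Second, I argue pointwise in $z\in\msz$. Expanding the square gives
\[
\norm[2]{x - \gamma\gradst(x,z) - \xstar} = \norm[2]{x-\xstar} - 2\gamma\ps{\gradst(x,z)}{x-\xstar} + \gamma^2\norm[2]{\gradst(x,z)}.
\]
I would use the cocoercivity of \Cref{assum:cocoercitivity_sto_grad} to bound $-2\gamma\ps{\gradst(x,z)}{x-\xstar}$ by $-2\gamma\tL^{-1}\norm[2]{\gradst(x,z)-\gradst(\xstar,z)} - 2\gamma\ps{\gradst(\xstar,z)}{x-\xstar}$, and Young's inequality with parameter $1$ to bound $\gamma^2\norm[2]{\gradst(x,z)}$ by $2\gamma^2\norm[2]{\gradst(x,z)-\gradst(\xstar,z)} + 2\gamma^2\norm[2]{\gradst(\xstar,z)}$. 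Adding these, the combined coefficient of $\norm[2]{\gradst(x,z)-\gradst(\xstar,z)}$ is exactly $-2\gamma(\tL^{-1}-\gamma)$, non-positive under $\gamma\leq\tL^{-1}$; it is precisely this sign requirement that dictates the Young parameter $1$.

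Integrating against $\eta(\rmd z)$, the linear cross-term averages to $-2\gamma\ps{v(\xstar)}{x-\xstar}$ where $v(\xstar) = \int\gradst(\xstar,z)\,\rmd\eta(z)\in\partial U(\xstar)$. Since cocoercivity makes $v$ Lipschitz, hence a continuous selection of $\partial U$, one has $v=\nabla U$ and the minimality of $\xstar$ gives $v(\xstar)=0$; consequently this linear term vanishes and $\PE[\norm[2]{\gradst(\xstar,Z)}]=\vargrad_{\gradst}(\updelta_{\xstar})$. Restoring the Gaussian contribution yields an intermediate bound of the form
\[
2\gamma(\tL^{-1}-\gamma)\,\PE[\norm[2]{\gradst(x,Z)-\gradst(\xstar,Z)}] \leq \norm[2]{x-\xstar} - \int_{\rset^d}\norm[2]{y-\xstar}\,\bar{R}_{\gamma,\tgamma}(x,\rmd y) + 2\gamma^2\vargrad_{\gradst}(\updelta_{\xstar}) + 2\tgamma d.
\]

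The main obstacle I expect is the final transition from $\PE[\norm[2]{\gradst(x,Z)-\gradst(\xstar,Z)}]$ on the left to the desired $\vargrad_{\gradst}(\updelta_x)$. Combining the variance decomposition $\PE[\norm[2]{\gradst(x,Z)-\gradst(\xstar,Z)}] = \norm[2]{v(x)} + \PE[\norm[2]{\epsilon_x-\epsilon_{\xstar}}]$, with $\epsilon_y = \gradst(y,Z) - v(y)$, together with a Young-type inequality that controls $\vargrad_{\gradst}(\updelta_x)=\PE[\norm[2]{\epsilon_x}]$ in terms of $\PE[\norm[2]{\epsilon_x-\epsilon_{\xstar}}]$ and $\vargrad_{\gradst}(\updelta_{\xstar})$, one absorbs the excess into the $2\gamma^2\vargrad_{\gradst}(\updelta_{\xstar})$ contribution on the right using the slack afforded by $\gamma\leq\tL^{-1}$, thereby yielding the stated bound.
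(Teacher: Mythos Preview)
Your argument tracks the paper's proof almost exactly: write $Y=x-\gamma\gradst(x,Z)+\sqrt{2\tgamma}\,G$, separate the Gaussian to produce the $2\tgamma d$, expand the remaining square, apply cocoercivity to the inner-product term, and bound $\gamma^2\norm[2]{\gradst(x,z)}$ via $\norm[2]{a}\le 2\norm[2]{a-b}+2\norm[2]{b}$ with $b=\gradst(\xstar,z)$, so that the coefficient in front of $\norm[2]{\gradst(x,z)-\gradst(\xstar,z)}$ collapses to $-2\gamma(\tL^{-1}-\gamma)$. Your justification that the linear remainder vanishes because $v(\xstar)=0$ is exactly the mechanism the paper uses (it just writes the final line $\vargrad_{\gradst}(\updelta_{\xstar})=\PE[\norm[2]{\gradst(\xstar,Z)}]$ without spelling this out). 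Up to the intermediate inequality you display, there is no daylight between your proof and the paper's.

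Where you diverge is only at the very last step, and you make it harder than the paper does. Having reached the same intermediate bound, the paper finishes in one line by simply observing
\[
\vargrad_{\gradst}(\updelta_x)\le \PE\!\left[\norm[2]{\gradst(x,Z)-\gradst(\xstar,Z)}\right],\qquad
\vargrad_{\gradst}(\updelta_{\xstar})=\PE\!\left[\norm[2]{\gradst(\xstar,Z)}\right],
\]
and substituting. Your proposed detour through a second Young inequality on $\norm[2]{\epsilon_x}\le 2\norm[2]{\epsilon_x-\epsilon_{\xstar}}+2\norm[2]{\epsilon_{\xstar}}$ does not actually close as written: the factor $2$ would halve the prefactor on the left to $\gamma(\tL^{-1}-\gamma)$ and generate an additional $4\gamma(\tL^{-1}-\gamma)\vargrad_{\gradst}(\updelta_{\xstar})$ on the right, which is \emph{not} dominated by $2\gamma^2\vargrad_{\gradst}(\updelta_{\xstar})$ when $\gamma<\tL^{-1}$---so there is no ``slack afforded by $\gamma\le\tL^{-1}$'' to exploit here. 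Replace that paragraph by the paper's direct comparison and the two proofs coincide.
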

\begin{proof}
  Consider $ \bX_1 = x - \gamma \gradst(x,\gradrv_1) + \sqrt{2 \tgamma}
  G_1$, where $\gradrv_1$ and $G_1$ are two independent random
  variables, $\gradrv_1$ has distribution $\eta$ and $G_1$ is a
  standard Gaussian random variables.  Then using
  \Cref{assum:cocoercitivity_sto_grad}, we have
\begin{align*}
  \expe{\norm[2]{\bX_1 - \xstar}} &= \expe{\norm{x - \gamma \gradst(x,\gradrv_1) - \xstar}} + 2 \tgamma d \\
&=  \norm[2]{x-\xstar}+  \expe{\gamma^2 \norm[2]{\gradst(x,\gradrv_1)}-2\gamma\ps{\gradst(x,\gradrv_1)}{x-\xstar} } + 2\tgamma d \\
& \leq \norm[2]{x-\xstar} -2\gamma(\tilde{L}^{-1}-\gamma)\expe{\norm[2]{\gradst(x,\gradrv_1) - \gradst(\xstar,\gradrv_1)}}\\
&\phantom{aaaaaaaaaaaaaaaaaaaaaaaaaaaa} +2 \gamma^2\expe{\norm[2]{\gradst(\xstar,\gradrv_1)}}+  2 \tgamma d  \eqsp.
\end{align*}
The proof is completed upon noting that $\vargrad_{\gradst}(\updelta_x) \leq \expeLigne{\norm[2]{\gradst(x,\gradrv_1) - \gradst(\xstar,\gradrv_1)}}$ and $\vargrad_{\gradst}(\updelta_{\xstar}) = \expe{\norm[2]{\gradst(\xstar,\gradrv_1)}}$
\end{proof}

Combining \Cref{thm:step_conv_ss} and \Cref{propo:bound_variance_sto_grad}, we get the following result. 
\begin{corollary}
\label{coro:fixed_step_conv_ss}
Assume \Cref{assum:convexity}($0$)-\Cref{assum:stochastic_subgradient} and \Cref{assum:cocoercitivity_sto_grad}. Let $\sequenceks{\step_k}$ and
$\sequenceks{\weight_k}$ defined for all $k\in \nsets$ by $\step_k=\weight_k = \gamma \in \oointLigne{0,\tilde{L}^{-1}}$. Let $\mu_0 \in \Pens_2(\rset^d)$.  Then for all  $N
\in \nset$ and $n \in \nsets$, we have
\begin{multline*}
  \KLarg{\bnu^N_n}{\pi} \leq \left.  \wassersteinTarg{\mu_0 \bRker_{\step,\step}^N \bSker_{\step}}{ \pi} \middle/ (2\step n) \right.
\\+ \step M^2/2 + (2(\tilde{L}^{-1}-\gamma))^{-1}\defEns{(2n)^{-1}\int_{\rset^d} \norm[2]{x-\xstar} \rmd \mu_0 \bRker_{\step,\step}^{N+1}(x) + \step^2 \vargrad_{\gradst}(\updelta_{\xstar})+  \step d} \eqsp.
\end{multline*}
Furthermore, let $\varepsilon > 0$ and
\begin{align*}
    \steps_{\varepsilon}  \leq  \min\parentheseDeux{ \left. \varepsilon\middle /\defEns{2M^2 + 4\tilde{L}d} \right. , \sqrt{ \varepsilon  \left(4 \tilde{L} \vargrad_{\gradst}(\updelta_{\xstar})  \right)^{-1}}, (2\tilde{L})^{-1}}   \eqsp, \\   n_{\varepsilon}  \geq 2\max\defEns{ \ceil{ W_2^2(\mu_0 \bSker_{\steps_{\varepsilon}}, \pi) (\steps_{\varepsilon} \varepsilon)^{-1} }, \ceil{\tilde{L}\varepsilon^{-1} \int_{\rset^d} \norm[2]{x-\xstar} \rmd \mu_0 \bRker_{\steps_{\varepsilon},\steps_{\varepsilon}}(x) }} \eqsp.
  \end{align*}
Then for  $\step = \steps_{\varepsilon}$, then we have $\KLarg{\bnu^0_{n_\vareps}}{\pi} \leq \varepsilon$.
\end{corollary}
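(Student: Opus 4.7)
My plan is to combine the two previous results directly. The starting point is \Cref{thm:step_conv_ss} specialized to the constant choice $\step_k=\weight_k=\step\in\ooint{0,\tilde{L}^{-1}}$: since $\Weighta[N,N+n]=n\step$ and $\stepa[k+1]\weighta[k]=\step^2$, the general bound collapses to
\begin{equation*}
\KLarg{\bnu^N_n}{\pi}\leq \frac{\wassersteinTarg{\mu_0\bRker_{\step,\step}^N\bSker_{\step}}{\pi}}{2\step n}+\frac{\step M^2}{2}+\frac{\step}{2n}\sum_{k=N+1}^{N+n}\vargrad_{\gradst}(\mu_0\bRker_{\step,\step}^k)\eqsp.
\end{equation*}
It then remains to control the ergodic-variance sum on the right, and this is exactly what \Cref{propo:bound_variance_sto_grad} delivers once one notices, from the defining integral \eqref{eq:definition_var_grad_sto}, that $\vargrad_{\gradst}(\mu)=\int_{\rset^d}\vargrad_{\gradst}(\updelta_x)\rmd\mu(x)$.

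The key step is a telescoping argument. Write $\mu_k=\mu_0\bRker_{\step,\step}^k$. Applying \Cref{propo:bound_variance_sto_grad} pointwise with $\tgamma=\step$ and integrating against $\mu_k$ gives
\begin{equation*}
2\step(\tilde{L}^{-1}-\step)\vargrad_{\gradst}(\mu_k)\leq \int_{\rset^d}\norm[2]{x-\xstar}\rmd\mu_k(x)-\int_{\rset^d}\norm[2]{y-\xstar}\rmd\mu_{k+1}(y)+2\step^2\vargrad_{\gradst}(\updelta_{\xstar})+2\step d\eqsp.
\end{equation*}
Summing for $k=N+1,\dots,N+n$, the distance moments telescope, the non-positive boundary term $-\int\norm[2]{y-\xstar}\rmd\mu_{N+n+1}$ is dropped, and we obtain
\begin{equation*}
\sum_{k=N+1}^{N+n}\vargrad_{\gradst}(\mu_k)\leq \frac{1}{2\step(\tilde{L}^{-1}-\step)}\left\{\int_{\rset^d}\norm[2]{x-\xstar}\rmd\mu_{N+1}(x)+2n\step^2\vargrad_{\gradst}(\updelta_{\xstar})+2n\step d\right\}\eqsp.
\end{equation*}
Substituting into the bound from the first paragraph cancels the factor $\step/(2n)$ against the factor $1/(2n\step(\tilde{L}^{-1}-\step))$ in a clean way and yields the first claimed inequality.

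For the quantitative statement, I would specialize to $N=0$ and split the resulting right-hand side into four pieces, each to be bounded by $\varepsilon/4$. The constraint $\step_{\varepsilon}\leq(2\tilde{L})^{-1}$ ensures $1/(2(\tilde{L}^{-1}-\step_{\varepsilon}))\leq \tilde{L}$, which turns the prefactor of the bracket into a manageable constant; the constraint $\step_{\varepsilon}\leq\varepsilon/(2M^2+4\tilde{L}d)$ handles simultaneously the terms $\step_{\varepsilon}M^2/2$ and $\tilde{L}\step_{\varepsilon}d$; the constraint $\step_{\varepsilon}^2\leq\varepsilon/(4\tilde{L}\vargrad_{\gradst}(\updelta_{\xstar}))$ handles $\tilde{L}\step_{\varepsilon}^2\vargrad_{\gradst}(\updelta_{\xstar})$; and the two lower bounds on $n_{\varepsilon}$ are exactly what is needed to make the Wasserstein initial-condition term and the $\tilde{L}\int\norm[2]{x-\xstar}\rmd\mu_0\bRker_{\step_{\varepsilon},\step_{\varepsilon}}/(2n_{\varepsilon})$ term each at most $\varepsilon/4$.

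The only non-routine obstacle is the telescoping step, which requires picking $\tgamma=\step$ in \Cref{propo:bound_variance_sto_grad} so that the measure advanced by the drift-and-noise kernel on the right-hand side matches the iterate $\mu_{k+1}$ produced by SSGLD; the remaining work is bookkeeping of indices (the shift between $\bSker_{\step}$ appearing inside the Wasserstein term and the kernel iterates $\bRker_{\step,\step}^k$) and verification of the four calibration inequalities.
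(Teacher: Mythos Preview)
Your proposal is correct and follows essentially the same approach as the paper's proof: specialize \Cref{thm:step_conv_ss} to constant $\step_k=\weight_k=\step$, integrate \Cref{propo:bound_variance_sto_grad} (with $\tgamma=\step$) against $\mu_k=\mu_0\bRker_{\step,\step}^k$ and telescope, then substitute back; for the second part, use $\step_{\varepsilon}\leq(2\tilde{L})^{-1}$ to get $(2(\tilde{L}^{-1}-\step_{\varepsilon}))^{-1}\leq\tilde{L}$ and split into four pieces each bounded by $\varepsilon/4$. The only cosmetic difference is that you make explicit the linearity $\vargrad_{\gradst}(\mu)=\int\vargrad_{\gradst}(\updelta_x)\rmd\mu(x)$ and the need to pick $\tgamma=\step$ so that the kernel on the right of \Cref{propo:bound_variance_sto_grad} matches the SSGLD iteration, which the paper leaves implicit.
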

\begin{proof}
The proof is postponed to \Cref{sec:proof-coro_fixed_step_conv_ss}.
\end{proof}

{
Note that compared to \Cref{coro:ssgld_fixed_one}, the dependence on
the variance of the stochastic sub-gradients in the bound on
$n_{\varepsilon}$, given in \Cref{coro:fixed_step_conv_ss}, is less
significant since $n_{\varepsilon}$ scales as
$(\vargrad_{\gradst}(\updelta_{\xstar}))^{1/2}$ and not as 
$\sup_{x \in \rset^d} \vargrad_{\gradst}(\updelta_{x})$. However, the
dependency on the dimension deteriorates a little. 
}

\subsection{Stochastic Proximal Gradient Langevin Dynamics}
\label{sec:stoch-prox-grad}
In this section, we propose and analyze an other algorithm to handle
non-smooth target distribution using stochastic gradient estimates and
proximal operators. For $m \geq 0$, consider the following assumptions
on the gradient.

\begin{assumption}[$m$]
  \label{assum:stochastic_gradient_proximal}
\begin{sf}
There exists $U_1: \rset^d \to \rset$ and $U_2 : \rset^d \to \rset$ such that $U=U_1+U_2$ and satisfying the following assumptions:
\begin{enumerate}
\item $U_1$ satisfies \Cref{assum:convexity}($m$) and \Cref{assum:grad_lip}. In addition, 
there exists a measurable space $(\tmsz,\tmcz)$, a probability
measure $\teta_1$ on $(\tmsz,\tmcz)$ and a measurable function $\tgradst_1 : \rset^d \times \msz \to \rset^d$ such that for all $x \in \rset^d$,
  \begin{equation*}
    \int_{\tmsz} \tgradst_1(x, \tz) \rmd \teta_1(\tz) = \nabla U_1(x) \eqsp.
  \end{equation*}
\item $U_2$ satisfies \Cref{assum:convexity}($0$) and is $M_2$-Lipschitz.
\end{enumerate}
\end{sf}
\end{assumption}

Under \Cref{assum:stochastic_gradient_proximal}, consider the proximal
operator associated with $U_2$ with parameter $\step >0$ (see~\cite[Chapter 1 Section G]{rockafellar:wets:1998}), defined for all $x \in \rset^d$ by 
\begin{equation*}
  \prox_{U_2}^\step(x) = \argmin_{y \in \rset^d} \defEns{U_2(y) + (2 \step)^{-1}\norm[2]{x-y}} \eqsp.
\end{equation*}

Let $\sequenceks{\tZ_k}$ be a sequence of
\iid~random variables distributed according to $\eta_1$,
$\sequenceks{\step_k}$ be a sequence of non-increasing step sizes and
$\tX_0$ distributed according to $\mu_0 \in
\Pens_2(\rset^d)$. Stochastic Proximal Gradient Langevin Dynamics (SPGLD)
defines the sequence of random variables $\sequencen{\tX_n}$ starting
at $\tX_0$ for $n \geq 0$ by
\begin{equation}
\label{eq:definition_SPGLD}
  \tX_{n+1} = \prox_{\stepa[n+1]}^{U_2}(\tX_n) - \stepa[n+2]  \tgradst_1\{\prox_{\stepa[n+1]}^{U_2}(\tX_n),\tZ_{n+1}\} + \sqrt{2 \stepa[n+2]} G_{n+1} \eqsp,
\end{equation}
where $\sequenceks{G_k}$ is a sequence of \iid~$d$-dimensional standard Gaussian random variables, independent of $\sequenceks{Z_k}$.  The recursion
\eqref{eq:definition_SPGLD} is associated with the family of Markov
kernels $\sequenceks{\tRkera[k]}$ given for all $\step,\tstep >0$,
$x \in \rset^d$ and $\eventA \in \Borel(\rset^d)$ by
\begin{multline}
\label{eq:definition_rker_spgld}
  \tRker_{\step,\tstep}(x,\eventA) \\ = (4 \uppi \tstep)^{-d/2}  \int_{\msa \times \msz} \exp\parenthese{-\left. \norm[2]{y-\prox_{\step}^{U_2}(x)+\tstep  \tgradst_1\{\prox_{\step}^{U_2}(x), z\}}\middle/(4 \tstep) \right.} \rmd \eta_1(z) \rmd y \eqsp. 
\end{multline}

Note that for all $\step,\tstep >0$, $\tRker_{\step,\tstep}$ can be decomposed as the product $ \tS^2_{\step} \tS^1_{\tstep}\Tker_{\tstep}$ where $\Tker_{\tstep}$ is defined by \eqref{eq:def_Sker_Tker} and for all $x \in \rset^d$ and $\eventA \in \Borel(\rset^d)$ 
\begin{equation}
  \label{eq:definition_sker_spgld}
    \tS^1_{\tstep}(x,\eventA)  =   \int_{ \msz}\1_{\msa}(x-\tstep  \tgradst_1(x, z))  \rmd \eta_1(z) \eqsp, \qquad     \tS^2_{\step}(x,\eventA)  =  \updelta_{\prox_{\step}^{U_2}(x)}(\eventA) \eqsp.
\end{equation}

\begin{algorithm}[H]
  \KwData{initial distribution $\mu_0 \in \Pens_2(\rset^d)$, non-increasing sequence $(\gamma_k)_{k \geq 1}$, $U = U_1 + U_2,\tilde{\Theta}_1, \eta_1$ satisfying \Cref{assum:stochastic_gradient_proximal}}
 \KwResult{$(\tX_k)_{k \in \nset}$}
 \Begin{
   Draw $\tX_0 \sim \mu_0$\; 
   \For{$k \geq 1$ }{ 
   Draw $G_{k+1} \sim \mathcal{N}(0, \Id)$ and $\tZ_{k+1} \sim \eta_1$ \;
   Set $\tX_{k+1} = \prox_{\stepa[k+1]}^{U_2}(\tX_k) - \stepa[k+2]  \tgradst_1(\prox_{\stepa[k+1]}^{U_2}(\tX_k),\tZ_{k+1}) + \sqrt{2 \stepa[k+2]} G_{k} $ 
    } 
    }
 \caption{SPGLD}
\label{algo:SPGLD}
\end{algorithm}

Let$\sequenceks{\step_k}$ and  $\sequenceks{\weight_k}$ be two non-increasing sequences of reals
numbers and $\mu_0 \in \Pens_2(\rset^d)$ be an initial
distribution. The weighted averaged distribution associated with
\eqref{eq:definition_SPGLD} $(\tnu^{N}_n)_{n \in \nset}$ is defined
for all $N, n \in \nset$, $n \geq 1$ by
\begin{equation}
  \label{eq:definition_average_measure_spgld}
\tnu^N_n = \Weight_{N,N+n} ^{-1}\sum_{k=N+1}^{N+n} \weight_{k}\, \mu_0 \tQkers[k] \eqsp, \qquad \tQkers[k] = \tRker_{\stepa[1],\stepa[2]} \cdots \tRkers[k] \eqsp, \text{ for $k \in \nsets$} \eqsp,
\end{equation}
where 
$N$ is a burn-in time and $\Weight_{N,N+n}$ is defined in \eqref{eq:definition_sum_step_weight}. We take in the
following the convention that $\tQkers[0]$ is the identity operator.

Under \Cref{assum:stochastic_subgradient}, define for all $\mu \in
\Pens_2(\rset^d)$,
\begin{multline}
  \label{eq:definition_var_grad_sto_spgl}
\vargrad_{1}(\mu) =  \int_{\rset^d \times \msz} \norm[2]{ \tgradst_1(x,z)- \int_{\msz}  \tgradst_1(x,\tilde{z}) \rmd \eta_1(\tilde{z})} \rmd \eta(z) \rmd \mu(x) \\= \expe{\norm[2]{ \tgradst_1(\tX_0,Z_1) - \nabla U_1(\tX_0)}} \eqsp,
\end{multline}
where $\tX_0,\tZ_1$ are  independent random variables with distribution $\mu$ and $\eta_1$ respectively.

\begin{theorem}
\label{thm:step_conv_sp}
Assume \Cref{assum:stochastic_gradient_proximal}$(m)$, for $m \geq 0$. Let $\sequenceks{\step_k}$ and
$\sequenceks{\weight_k}$ be two non-increasing sequences of positive
real numbers satisfying $\step_1 \in\ocint{0,
L^{-1}}$, and for all $k\in \nsets$,
$\weighta[k+1]/\stepa[k+2]
\leq\weighta[k]/\stepa[k+1]$. Let $\mu_0 \in \Pens_2(\rset^d)$ and $N
\in \nset$.  Then for all $n \in \nsets$, we have
\begin{multline*}
\KLarg{\tnu^N_n}{\pi} \leq \left. \weighta[N+1] \wassersteinTarg{\mu_0 \tQkers[N] \tSker_{\stepa[N+1]}^2}{ \pi} \middle/ (2 \stepa[N+2] \Weighta[N,N+n]) \right.\\
+(2 \Weighta[N,N+n])^{-1} \sum_{k=N+1}^{N+n}   \weighta[k] \stepa[k+1]\{2Ld + (1+\stepa[k+1] L) \vargrad_1(\mu_0 \Qkers[k-1] \tS^2_{\step_{k}}) + 2 M_2^2\} \eqsp.
\end{multline*}

\end{theorem}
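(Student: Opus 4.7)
The plan is to mirror the strategy of \Cref{thm:basic-one-step}--\Cref{thm:const_step_conv}: establish a one-step $\Fscr$-descent inequality for the kernel $\tRker_{\step,\tstep}$ expressed as a Wasserstein telescoping, then sum it against the weights $(\weight_k)$ and use convexity of the Kullback--Leibler divergence together with \Cref{lem:kl_minimizer} to pass from $\Fscr(\tnu^N_n)-\Fscr(\pi)$ to $\KLarg{\tnu^N_n}{\pi}$. For $k\geq 1$, introducing the intermediate distribution $\mu = \mu_0\tQkers[k-1]\tS^2_{\step_k}$ and writing $\tstep = \step_{k+1}$, so that $\mu_0\tQkers[k] = \mu\,\tS^1_{\tstep}\,\Tker_{\tstep}$, I aim to prove the per-step bound
\begin{equation*}
2\tstep\bigl\{\Fscr(\mu_0\tQkers[k]) - \Fscr(\pi)\bigr\}
\leq \wassersteinTarg{\mu}{\pi} - \wassersteinTarg{\mu_0\tQkers[k]\tS^2_{\tstep}}{\pi}
+ \tstep^{2}\bigl\{2Ld + (1+\tstep L)\,\vargrad_1(\mu) + 2M_2^{2}\bigr\}.
\end{equation*}

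To obtain this, decompose $\Fscr = \Escr_1 + \Escr_2 + \Hscr$ with $\Escr_i(\mu)=\int U_i\,\rmd\mu$ and bound the four relevant pieces. First, $\Escr_1(\mu\tS^1_{\tstep}\Tker_{\tstep}) - \Escr_1(\mu\tS^1_{\tstep}) \leq Ld\tstep$ follows from \Cref{lem:conv-potential-change} applied to $U_1$. Second, the stochastic-gradient inequality for $U_1$ is obtained by Taylor expanding $U_1(x-\tstep\tgradst_1(x,\tz))$ using $L$-smoothness, taking conditional expectation in $\tz$ (using unbiasedness and the identity $\PE[\norm[2]{\tgradst_1(x,\tZ)}\mid x] = \norm[2]{\nabla U_1(x)} + \vargrad_1(\updelta_x)$), and then coupling with $\pi$; for $\tstep\leq L^{-1}$ the coefficient of $\norm[2]{\nabla U_1(x)}$ is non-positive and can be dropped, yielding
\begin{equation*}
2\tstep\{\Escr_1(\mu\tS^1_{\tstep})-\Escr_1(\pi)\}\leq \wassersteinTarg{\mu}{\pi}-\wassersteinTarg{\mu\tS^1_{\tstep}}{\pi}+\tstep^{2}(1+\tstep L)\vargrad_1(\mu).
\end{equation*}
Third, \Cref{lem:ent-grad-flow-step-inequality} gives $2\tstep\{\Hscr(\mu\tS^1_{\tstep}\Tker_\tstep)-\Hscr(\pi)\}\leq \wassersteinTarg{\mu\tS^1_\tstep}{\pi}-\wassersteinTarg{\mu\tS^1_\tstep\Tker_\tstep}{\pi}$. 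Fourth, the non-smooth piece $\Escr_2$ is handled by inserting an extra proximal step with parameter $\tstep$: writing $\mu' = \mu\tS^1_\tstep\Tker_\tstep$,
\begin{equation*}
\Escr_2(\mu')-\Escr_2(\pi) = \bigl[\Escr_2(\mu')-\Escr_2(\mu'\tS^2_\tstep)\bigr] + \bigl[\Escr_2(\mu'\tS^2_\tstep)-\Escr_2(\pi)\bigr];
\end{equation*}
the first bracket is bounded by $\tstep M_2^{2}$ via the $M_2$-Lipschitz property of $U_2$ together with $\norm{x-\prox_{\tstep}^{U_2}(x)}\leq \tstep M_2$ (which uses that every subgradient of $U_2$ has norm at most $M_2$), while the second bracket is bounded using the subgradient characterization $\prox_{\tstep}^{U_2}(x) = x-\tstep v$, $v\in\partial U_2(\prox_{\tstep}^{U_2}(x))$, convexity of $U_2$ and the identity $2\ps{x-\prox_\tstep^{U_2}(x)}{\prox_\tstep^{U_2}(x)-y} = \norm[2]{x-y}-\norm[2]{x-\prox_\tstep^{U_2}(x)}-\norm[2]{\prox_\tstep^{U_2}(x)-y}$, to produce $2\tstep\{\Escr_2(\mu'\tS^2_\tstep)-\Escr_2(\pi)\}\leq \wassersteinTarg{\mu'}{\pi}-\wassersteinTarg{\mu'\tS^2_\tstep}{\pi}$. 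Adding the four inequalities yields the per-step bound, with the intermediate Wasserstein terms at $\mu\tS^1_\tstep$ and $\mu\tS^1_\tstep\Tker_\tstep$ cancelling pairwise.

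It remains to sum. Using convexity of KL (as in the proof of \Cref{thm:const_step_conv}), \Cref{lem:kl_minimizer} to identify $\KLarg{\cdot}{\pi}$ with $\Fscr(\cdot)-\Fscr(\pi)$, dividing the per-step bound by $2\tstep=2\step_{k+1}$, multiplying by $\weight_k$ and summing over $k=N+1,\dots,N+n$ gives a telescoping sum in $\wassersteinTarg{\mu_0\tQkers[k]\tS^2_{\step_{k+1}}}{\pi}$. The hypothesis $\weight_{k+1}/\step_{k+2}\leq \weight_k/\step_{k+1}$ ensures that the coefficients of all intermediate Wasserstein terms in the telescoping are non-positive, so they may be discarded; only the boundary term at $k=N$, namely $(\weight_{N+1}/\step_{N+2})\wassersteinTarg{\mu_0\tQkers[N]\tS^2_{\step_{N+1}}}{\pi}$, survives and, after division by $\Weight_{N,N+n}$, produces the first summand of the claimed bound, while the remaining error terms give exactly the required $\sum_k \weight_k\step_{k+1}\{2Ld+(1+\step_{k+1}L)\vargrad_1(\mu_0\tQkers[k-1]\tS^2_{\step_k})+2M_2^{2}\}$ contribution.

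The principal difficulty is the control of $\Escr_2$: the Lipschitz-only assumption on $U_2$ rules out the Taylor argument used for $U_1$, and a direct $W_1$-bound with the coupling of $\mu\tS^1_\tstep\Tker_\tstep$ and $\mu\tS^1_\tstep\Tker_\tstep\tS^2_\tstep$ would only give an additive $M_2\sqrt{\tstep d}$ type cost, which is too weak. The non-trivial idea is to split $\Escr_2(\mu')-\Escr_2(\pi)$ by inserting the proximal step $\tS^2_\tstep$, since only the difference $\Escr_2(\mu')-\Escr_2(\mu'\tS^2_\tstep)$ must be treated by the crude Lipschitz estimate (yielding the quadratic-in-$\tstep$ term $\tstep^2 M_2^{2}$), while the rest becomes a clean Wasserstein telescoping via the prox subgradient identity. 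This same insertion is precisely what makes the telescoping close up with the boundary state $\mu_0\tQkers[N]\tS^2_{\step_{N+1}}$ that appears in the statement.
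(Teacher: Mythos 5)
Your proof is correct and takes essentially the same route as the paper. The per-step inequality you derive is precisely the paper's \Cref{lem:bound_decompo_spgld}, obtained by the same four-way split of $\Fscr$ (potential-energy Gaussian step via \Cref{lem:conv-potential-change}, stochastic-gradient step for $U_1$ via an $L$-smoothness plus $m$-convexity expansion and unbiasedness, entropy step via \Cref{lem:ent-grad-flow-step-inequality}, and proximal step for $U_2$), and your two-bracket treatment of $\Escr_2$ is exactly the content of the paper's \Cref{lem:proximal-step}: the bracket $\Escr_2(\mu')-\Escr_2(\mu'\tS^2_{\tstep})$ is absorbed into $\tstep M_2^2$ via the Lipschitz/subgradient bound $\normLigne{x-\prox_{\tstep}^{U_2}(x)}\leq \tstep M_2$, while the bracket $\Escr_2(\mu'\tS^2_{\tstep})-\Escr_2(\pi)$ becomes a Wasserstein telescoping through the prox subgradient identity; the summation over $k$ with weights $\weight_k/(2\step_{k+1})$, the non-increase hypothesis on $\weight_k/\step_{k+1}$, and the convexity of the KL divergence then close the argument exactly as in the paper's proof of \Cref{thm:step_conv_ss}/\Cref{thm:step_conv_sp}. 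Your remark that a direct $W_1$-Lipschitz bound on the proximal step would only give an $M_2\sqrt{\tstep d}$ error, and that inserting $\tS^2_{\tstep}$ to make the telescoping close at the state $\mu_0\tQkers[k]\tS^2_{\step_{k+1}}$ is the essential trick, correctly identifies what makes the proof work.
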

\begin{proof}
The proof is postponed to \Cref{sec:proof-crefthm:st-1}.
\end{proof}

\begin{corollary}
  \label{coro:spgld_fixed_one}
  Assume \Cref{assum:stochastic_gradient_proximal}$(m)$, for $m \geq 0$.  Assume that $\sup_{x \in \rset^d} \vargrad_{1}(\updelta_x) \leq D^2 < \infty$. Let $\sequenceks{\step_k}$ and $\sequenceks{\weight_k}$ given for all $k \in \nsets$ by  $\weight_k = \steps_k = \steps \in \ocint{0, L^{-1}}$. Let $\mu_0 \in \Pens_2(\rset^d)$. Then for any $N \in \nset, n \in \nsets$ we have
\[
\KLarg{\tnu^N_n}{\pi} \leq \left. \wassersteinTarg{\mu_0 \bQkers[N] \bSker_{\step}}{ \pi} \middle/ (2 n \steps) \right.
+  \steps \left(Ld + M_2^2 + D^2\right) \eqsp,
\]
Furthermore, let $\varepsilon > 0$ and
\begin{equation*}
    \steps_{\varepsilon}  \leq \min\defEns{\varepsilon/(2(Ld + M_2^2 + D^2)), L^{-1}}  \eqsp, \qquad    n_{\varepsilon}  \geq \lceil W_2^2(\mu_0 \tS^2_{\stepa[1]}, \pi) (\steps_{\varepsilon} \varepsilon)^{-1} \rceil \eqsp.
  \end{equation*}
Then we have $\KLarg{\tnu^0_{n_{\varepsilon}}}{\pi} \leq \varepsilon$.
\end{corollary}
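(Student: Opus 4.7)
The plan is to specialize Theorem~\ref{thm:step_conv_sp} to the constant step-size / constant weight setting and then uniformly bound the variance term using the hypothesis $\sup_x \vargrad_1(\updelta_x) \leq D^2$.

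Setting $\stepa[k] = \weighta[k] = \step$ for all $k \in \nsets$, the monotonicity condition $\weighta[k+1]/\stepa[k+2] \leq \weighta[k]/\stepa[k+1]$ is trivially satisfied with equality, so Theorem~\ref{thm:step_conv_sp} applies. One computes $\Weighta[N,N+n] = n \step$ and $\weighta[N+1]/(2\stepa[N+2]\Weighta[N,N+n]) = 1/(2 n \step)$, which yields the claimed prefactor of the Wasserstein term. Each summand in the second term of that theorem then equals $\step^2\{2Ld + (1+\step L)\vargrad_1(\mu_0 \tQkers[k-1] \tS^2_{\step}) + 2 M_2^2\}$.

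The key observation is that $\vargrad_1$ is linear in its measure argument: directly from \eqref{eq:definition_var_grad_sto_spgl}, $\vargrad_1(\mu) = \int_{\rset^d} \vargrad_1(\updelta_x) \rmd \mu(x)$, so the uniform hypothesis $\sup_{x \in \rset^d} \vargrad_1(\updelta_x) \leq D^2$ immediately gives $\vargrad_1(\nu) \leq D^2$ for every probability measure $\nu$. Combining this with $\step L \leq 1$ (hence $1 + \step L \leq 2$), each summand is bounded by $2 \step^2 (Ld + D^2 + M_2^2)$. Summing $n$ such terms and dividing by $2 \Weighta[N,N+n] = 2 n \step$ yields $\step(Ld + M_2^2 + D^2)$, which establishes the first displayed inequality.

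For the second part, take $N=0$ so that $\tQkers[0]$ is the identity and the Wasserstein term reduces to $\wassersteinTarg{\mu_0 \tS^2_{\step}}{\pi}$. Applying the first inequality with $\step = \steps_\varepsilon$ and $n = n_\varepsilon$, the choice $\steps_\varepsilon \leq \varepsilon/\{2(Ld + M_2^2 + D^2)\}$ bounds the second term by $\varepsilon/2$, while $n_\varepsilon \geq \lceil W_2^2(\mu_0 \tS^2_{\steps_\varepsilon}, \pi)/(\steps_\varepsilon \varepsilon) \rceil$ bounds the first term by $\varepsilon/2$; adding gives the desired $\varepsilon$ bound. I do not anticipate any substantive obstacle; the only subtle point is the linearity of $\vargrad_1$ in its measure argument, but this is immediate from its definition as the integral of a function of $x$ alone.
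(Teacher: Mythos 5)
Your proof is correct and follows exactly the intended route: specialize \Cref{thm:step_conv_sp} to constant step and weight, compute the weight sums, bound $1+\steps L\leq 2$, and use the linearity of $\vargrad_1$ in its measure argument (which you correctly identify from \eqref{eq:definition_var_grad_sto_spgl}) to replace $\vargrad_1(\mu_0\tQkers[k-1]\tS^2_{\steps})$ by $D^2$ uniformly in $k$. One small remark on the paper's own statement rather than your argument: the Wasserstein term in the displayed bound should be written with the SPGLD kernels $\tQkers[N]\tSker^2_{\steps}$ as in \Cref{thm:step_conv_sp}, not the barred SSGLD kernels $\bQkers[N]\bSker_{\steps}$; your proof correctly uses the tilde operators, matching the choice of $n_\varepsilon$ later in the corollary.
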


{ In the case where a warm start is available for the Wasserstein distance,
  \ie~$W_2^2(\mu_0,\pi) \leq C$, for some absolute constant
  $C \geq 0$, then \Cref{coro:spgld_fixed_one} implies that the
  complexity of SPGLD to obtain a sample close from $\pi$ in KL with a
  precision target $\varepsilon>0$ is of order
  $(d+M^2_2+D^2)\bigO(\varepsilon^{-2})$. Therefore, this complexity bound
  depends on the dimension only trough $M_2$ and $D^2$ contrary to ULA. In addition,  Pinsker inequality implies
  that the complexity of SPGLD for the total variation distance is of
  order $(d+M^2_2+D^2)\bigO(\varepsilon^{-4})$.  }

{
In addition if we have access to $\eta >0$ and $M_{\eta}\geq 0$,
independent of the dimension, such that for all $x \in \rset^d$,
$x \not \in \boule{\xstar}{M_{\eta}}$,
$U(x) - U(\xstar) \geq \eta \norm{x-\xstar}$,
\Cref{propo:bound_wasser_init_condition} and \Cref{assum:stochastic_subgradient}-\ref{assum:stochastic_subgradient_i} imply that starting at $\updelta_{\xstar}$, the
overall complexity of SSGLD for the KL is in this case
$(\eta^{-2}d^{2}+M_\eta^2 + M^2)(d+M^2_2+D^2) \bigO(\varepsilon^{-2})$ and $(\eta^{-2} d^{2}+M_\eta^2+M^2)(d+M^2_2+D^2) \bigO(\varepsilon^{-4})$
for the total variation distance.
}

{If $(\gamma_k)_{k \in \nsets}$ and $(\lambda_k)_{k \in \nsets}$
  are given for all $k \in \nsets$ by
  $\gamma_k = \lambda_k = \gamma_1/k^{-\alpha}$,
  $\gamma_1 \in \ocint{0,L^{-1}}$. Then by the same reasoning as in
  the proof of \Cref{coro:ula_non_increas_sz}, we obtain that there
  exists $C \geq 0$ such that for all $n \in \nsets$, we have
  $\KLarg{\bnu_n^0}{\pi} \leq C \max(n^{\alpha - 1}, n^{- \alpha})$, if $\alpha \not = 1/2$,  and 
  for $\alpha = 1/2$, we have
  $\KLarg{\bnu_n^0}{\pi} \leq C (\ln(n) + 1) n^{-1/2}$.
}

If $\sup_{x \in \rset^d} \vargrad_1(\updelta_x) < \plusinfty$ does not hold, we can control the variance of stochastic gradient estimates using
\Cref{assum:cocoercitivity_sto_grad} again based on this following result.

\begin{proposition}
  \label{propo:bound_variance_sto_grad_spgld}
  Assume \Cref{assum:stochastic_gradient_proximal}  and  $\tilde{\Theta}_1$ satisfies \Cref{assum:cocoercitivity_sto_grad}. 
Then we have for all $x \in \rset^d$ and $\gamma \in \ocintLigne{0, \tilde{L}^{-1}}$
\begin{equation*}
2\gamma(\tilde{L}^{-1}-\gamma)\vargrad_{1}(\updelta_x) \leq  \norm[2]{x-\xstar}-\int_{\rset^d} \norm[2]{y-\xstar}(\tS^{1}_{\step}  T_{\step} \tS^{2}_{\step}) (x,\rmd y) +2 \gamma^2 \vargrad_1(\updelta_{\xstar}) + 2 \gamma d \eqsp,
\end{equation*}
where $\tS^{1}_{\step}, \tS^2_\step$ and  $\vargrad_1$ are defined by \eqref{eq:definition_sker_spgld}-\eqref{eq:definition_var_grad_sto_spgl} respectively. 
\end{proposition}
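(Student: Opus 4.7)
The plan is to mirror the proof of \Cref{propo:bound_variance_sto_grad}, inserting two new ingredients to accommodate the proximal step for $U_2$. First, I realize a sample from $(\tS^{1}_{\step} T_{\step} \tS^{2}_{\step})(x, \cdot)$ via the explicit one-step recursion $\tY = \prox_{\gamma}^{U_2}(x - \gamma \tgradst_1(x, \tZ_1) + \sqrt{2\gamma}\, G_1)$, where $\tZ_1 \sim \eta_1$ and $G_1$ is a standard Gaussian in $\rset^d$ independent of $\tZ_1$. Second, I exploit that $\xstar$ is a fixed point of the deterministic proximal-gradient map: since $\xstar$ minimizes $U = U_1+U_2$, the optimality condition yields $-\nabla U_1(\xstar) \in \partial U_2(\xstar)$, and the resolvent characterization of $\prox_{\gamma}^{U_2}$ translates this into $\xstar = \prox_{\gamma}^{U_2}(\xstar - \gamma \nabla U_1(\xstar))$.

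Combining this identity with the $1$-Lipschitz property of $\prox_{\gamma}^{U_2}$ will produce the pointwise bound $\|\tY - \xstar\|^2 \leq \|(x - \xstar) - \gamma(\tgradst_1(x, \tZ_1) - \nabla U_1(\xstar)) + \sqrt{2\gamma}\, G_1\|^2$, effectively removing the prox from the analysis. Taking expectation, independence of $G_1$ from $\tZ_1$ together with $\PE[G_1] = 0$ kills the cross term, while $\PE[\|G_1\|^2] = d$ adds a clean $2\gamma d$; this gives $\PE[\|\tY - \xstar\|^2] \leq \PE[\|(x - \xstar) - \gamma(\tgradst_1(x, \tZ_1) - \nabla U_1(\xstar))\|^2] + 2\gamma d$.

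From here the argument is formally identical to the SSGLD proof, with $\nabla U_1(\xstar)$ now playing the role that the subgradient $0 \in \partial U(\xstar)$ played in \Cref{propo:bound_variance_sto_grad}. I add and subtract $\tgradst_1(\xstar, \tZ_1)$ inside the squared norm; the resulting cross term with $x - \xstar$ involving $\tgradst_1(\xstar, \tZ_1) - \nabla U_1(\xstar)$ has zero expectation since $\PE[\tgradst_1(\xstar, \tZ_1)] = \nabla U_1(\xstar)$; the inequality $(a+b)^2 \leq 2a^2 + 2b^2$ lets me peel off a $2\gamma^2 \vargrad_1(\updelta_{\xstar})$ contribution from the quadratic term; and \Cref{assum:cocoercitivity_sto_grad} bounds the remaining inner product to produce the factor $-2\gamma(\tL^{-1} - \gamma)$ in front of $\PE[\|\tgradst_1(x, \tZ_1) - \tgradst_1(\xstar, \tZ_1)\|^2]$. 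Bounding $\vargrad_1(\updelta_x) \leq \PE[\|\tgradst_1(x, \tZ_1) - \tgradst_1(\xstar, \tZ_1)\|^2]$ as at the end of \Cref{propo:bound_variance_sto_grad} and rearranging yields the claim. The main obstacle is stating the proximal fixed-point identity at $\xstar$ correctly---in particular, selecting the specific subgradient $-\nabla U_1(\xstar) \in \partial U_2(\xstar)$ guaranteed by optimality---since once that is in place every subsequent step follows the SSGLD template verbatim and no piece of the algebra needs to interact with $U_2$.
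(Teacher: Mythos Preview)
Your proposal is essentially identical to the paper's proof: the paper likewise constructs $\tX_1=\prox_{U_2}^{\gamma}\{x-\gamma\tgradst_1(x,Z_1)+\sqrt{2\gamma}\,G_1\}$, invokes the fixed-point identity $\xstar=\prox_{U_2}^{\gamma}(\xstar-\gamma\nabla U_1(\xstar))$ together with non-expansiveness of the proximal map to strip off the prox, and then carries out the same expansion, cocoercivity bound, and variance comparison as in \Cref{propo:bound_variance_sto_grad}. The only cosmetic difference is that the paper cites \cite{bauschke:combettes:2011} for the fixed-point and non-expansiveness facts whereas you derive the former directly from the optimality condition $-\nabla U_1(\xstar)\in\partial U_2(\xstar)$.
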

\begin{proof}
  Let $\gamma >0$, $x \in \rset^d$ and consider
  $ \tX_1 = \prox_{U_2}^{\step}\defEns{x-\step \tilde{\Theta}_1(x,\gradrv_1) +
    \sqrt{2\step} G_1}$, where $\gradrv_1$ and $G_1$ are two
  independent random variables, $\gradrv_1$ has distribution $\eta_1$
  and $G_1$ is a standard Gaussian random variable, so that $\tX_1$ has distribution $\tS^1_\step T_{\step} \tS^2_\step (x,\cdot)$.  First by
  \cite[Theorem 26.2(vii)]{bauschke:combettes:2011}, we have that
  $\xstar = \prox_{U_2}^{\step}(\xstar - \step \nabla U_1(\xstar))$
  and by \cite[Proposition 12.27]{bauschke:combettes:2011}, the
  proximal is non-expansive, for all $x,y \in \rset^d$,
  $\normLigne{\prox^{\step}_{U_2}(x) -\prox^{\step}_{U_2}(y)} \leq
  \norm{x-y}$. Using these two results and the fact that $\tilde{\Theta}_1$ satisfies
  \Cref{assum:cocoercitivity_sto_grad}, we have
\begin{align*}
 & \expe{\norm[2]{\tX_1 - \xstar}} = \expe{\norm{ \prox_{U_2}^{\step}\defEns{x-\step \tilde{\Theta}_1(x,\gradrv_1) +
    \sqrt{2\step} G_1} -\prox_{U_2}^{\step}\{\xstar - \step \nabla U_1(\xstar)\} }^2} \\
& \qquad \leq \expe{\norm{ \left(x-\step \tilde{\Theta}_1(x,\gradrv_1) +
    \sqrt{2\step} G_1\right) - \left( \xstar - \step \nabla U_1(\xstar)\right) }^2} \\
&  \qquad \leq  \norm[2]{x-\xstar}\\
& \qquad \qquad +  \expe{2\gamma \ps{x-\xstar }{\nabla U_1(\xstar) - \tilde{\Theta}_1(x,\gradrv_1)} + \gamma^2 \norm[2]{\nabla U_1(\xstar) - \tilde{\Theta}_1(x,\gradrv_1)}} + 2 \gamma d \\
& \qquad \leq \norm[2]{x-\xstar} - 2 \gamma(\tilde{L}^{-1}-\gamma) \expe{\norm[2]{ \tilde{\Theta}_1(x,\gradrv_1) -  \tilde{\Theta}_1(\xstar,\gradrv_1)}}\\
&\phantom{aaaaaaaaaaaaaaaaaaaaaaaaaaaa} +2 \gamma^2\expe{\norm[2]{ \tgradst_1(\xstar,\gradrv_1) - \nabla U_1(\xstar)}}+  2 \gamma d  \eqsp.
\end{align*}
The proof is completed upon noting that $\vargrad_1(\updelta_x) \leq \expeLigne{\norm[2]{\gradst_1(x,\gradrv_1) - \gradst_1(\xstar,\gradrv_1)}}$.
\end{proof}

Combining \Cref{thm:step_conv_sp} and \Cref{propo:bound_variance_sto_grad_spgld}, we get the following result. 
\begin{corollary}
\label{coro:fixed_step_conv_sp}
Assume \Cref{assum:stochastic_gradient_proximal}($m$) for $m \geq 0$ and that $\tilde{\Theta}_1$ satisfies \Cref{assum:cocoercitivity_sto_grad}. Let $\sequenceks{\step_k}$ and
$\sequenceks{\weight_k}$ be two non-increasing sequences of positive
real numbers given for all $k\in \nsets$ by  $\step_k=\weight_k = \gamma \in \ocintLigne{0, L^{-1}}$ ,  $\steps < \tilde{L}^{-1}$. Let $\mu_0 \in \Pens_2(\rset^d)$ and $N
\in \nset$.  Then for all $n \in \nsets$, it holds
\begin{multline*}
\KLarg{\tnu^N_n}{\pi} \leq \left.  \wassersteinTarg{\mu_0 \tQkers[N] \tSker_{\stepa[N+1]}^{2}}{ \pi} \middle/ (2\step n) \right.
+ \step (Ld+M_2^2) \\ + (1 + \steps L)(2(\tilde{L}^{-1}-\gamma))^{-1}\defEns{(2n)^{-1}\int_{\rset^d} \norm[2]{x-\xstar} \rmd \mu_0 \tQkers[N] \tS^2_{\steps} (y) +  \step^2 \vargrad_1(\updelta_{\xstar})+  \step d} \eqsp.
\end{multline*}

Furthermore, for $\varepsilon > 0$, consider step-size and a number of iterations satisfying:
{
\begin{align*}
    \steps_{\varepsilon}  \leq \min\parentheseDeux{\left. \varepsilon \middle/\defEns{4M_2^2 + 4Ld + 8\tilde{L}d \right. }, \sqrt{\varepsilon / \left( 8\tilde{L}\vargrad_1(\updelta_{\xstar}) \right)}, L^{-1}, (2\tilde{L})^{-1} }   \eqsp, \\   n_{\varepsilon}  \geq 2\max\defEns{ \ceil{ W_2^2(\mu_0\tSker_{\step_{\varepsilon}}^{2}, \pi) (\steps_{\varepsilon} \varepsilon)^{-1} }, \ceil{2\tilde{L}\varepsilon^{-1} \int_{\rset^d}  \norm[2]{x-\xstar} \rmd \mu_0\tS^2_{\steps}(y)}  } \eqsp.
  \end{align*}
}
Then, we have $\KLarg{\tnu^0_{n_{\varepsilon}}}{\pi} \leq \varepsilon$.
\end{corollary}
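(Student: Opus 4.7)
My plan is to specialize \Cref{thm:step_conv_sp} to the constant-step setting $N = 0$, $\gamma_k = \lambda_k = \gamma$ and then control the averaged variance $\sum_{k=1}^n \vargrad_1(\mu_0 \tQkers[k-1] \tS^2_\gamma)$ by a telescoping argument based on \Cref{propo:bound_variance_sto_grad_spgld}. With $\Weighta[0,n] = n\gamma$, a direct application of \Cref{thm:step_conv_sp} rearranges to
\begin{equation*}
\KLarg{\tnu^0_n}{\pi} \leq \frac{\wassersteinTarg{\mu_0 \tS^2_\gamma}{\pi}}{2n\gamma} + \gamma(Ld + M_2^2) + \frac{\gamma(1+\gamma L)}{2n} \sum_{k=1}^n \vargrad_1(\mu_0 \tQkers[k-1] \tS^2_\gamma) \eqsp.
\end{equation*}

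The main step is to observe that, writing $a_k = \int_{\rset^d} \norm[2]{y-\xstar} \rmd(\mu_0 \tQkers[k] \tS^2_\gamma)(y)$, integrating the pointwise inequality of \Cref{propo:bound_variance_sto_grad_spgld} against $\mu_0 \tQkers[k-1] \tS^2_\gamma$ produces
\begin{equation*}
2\gamma(\tilde{L}^{-1}-\gamma)\, \vargrad_1(\mu_0 \tQkers[k-1] \tS^2_\gamma) \leq a_{k-1} - a_k + 2\gamma^2 \vargrad_1(\updelta_{\xstar}) + 2\gamma d \eqsp.
\end{equation*}
Telescoping works because of the identity $(\mu_0 \tQkers[k-1] \tS^2_\gamma)\, \tS^1_\gamma T_\gamma \tS^2_\gamma = \mu_0 \tQkers[k] \tS^2_\gamma$, which is immediate from $\tRker_\gamma = \tS^2_\gamma \tS^1_\gamma T_\gamma$. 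Summing from $k=1$ to $n$, dropping the non-negative $a_n$, and inserting the resulting bound into the inequality above yields the first displayed estimate of the corollary.

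For the complexity statement, I would use $\gamma_\varepsilon \leq (2\tilde{L})^{-1}$ and $\gamma_\varepsilon \leq L^{-1}$ to bound the prefactor $(1+\gamma_\varepsilon L)/(2(\tilde{L}^{-1}-\gamma_\varepsilon))$ by $2\tilde{L}$. Then I would split the budget $\varepsilon$ into four equal pieces: the first constraint on $\gamma_\varepsilon$ handles $\gamma_\varepsilon(Ld + M_2^2) + 2\tilde{L}\gamma_\varepsilon d \leq \varepsilon/4$; the square-root constraint on $\gamma_\varepsilon$ handles $2\tilde{L}\gamma_\varepsilon^2 \vargrad_1(\updelta_{\xstar}) \leq \varepsilon/4$; and the two parts of the lower bound on $n_\varepsilon$ handle the Wasserstein term $\wassersteinTarg{\mu_0 \tS^2_{\gamma_\varepsilon}}{\pi}/(2n_\varepsilon \gamma_\varepsilon) \leq \varepsilon/4$ and the initial-moment term $\tilde{L} a_0/n_\varepsilon \leq \varepsilon/4$ respectively. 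Summing yields $\KLarg{\tnu^0_{n_\varepsilon}}{\pi} \leq \varepsilon$.

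The only delicate point is verifying the kernel-composition identity used in the telescoping; everything else is careful bookkeeping of constants, following the same template as the proof of \Cref{coro:fixed_step_conv_ss}.
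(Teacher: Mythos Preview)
Your proposal is correct and follows essentially the same approach as the paper: apply \Cref{thm:step_conv_sp} with constant step size, integrate \Cref{propo:bound_variance_sto_grad_spgld} against $\mu_0 \tQkers[k-1]\tS^2_\gamma$ and telescope using the kernel identity $\tS^2_\gamma \tS^1_\gamma T_\gamma = \tRker_{\gamma,\gamma}$, then split $\varepsilon$ into four pieces exactly as you describe. The only cosmetic difference is that the paper keeps a general burn-in $N$ throughout the first displayed bound, whereas you write the argument for $N=0$; the general-$N$ version follows identically by replacing $\mu_0$ with $\mu_0\tQkers[N]$.
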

\begin{proof}
The proof of the corollary is a direct consequence of \Cref{thm:step_conv_sp} and \Cref{propo:bound_variance_sto_grad_spgld}, and is postponed to \Cref{sec:proof-cor:fixed_step_conv_sp}.
\end{proof}

Note that the dependency on the variance of the stochastic gradients is improved compared to the bound given by \Cref{coro:spgld_fixed_one}.  
We specify once again the result of \Cref{thm:step_conv_sp} for strongly convex potential.

\begin{theorem}
\label{thm:step_conv_sp_wasser}
Assume \Cref{assum:stochastic_gradient_proximal}$(m)$, for $m >
0$. 
Let $\sequenceks{\step_k}$ be a non-increasing sequences of positive
real numbers satisfying for all $k \in \nsets$, $\step_k \in \ocint{0, L^{-1}}$. Let
$\mu_0 \in \Pens_2(\rset^d)$.  Then for all
$n \in \nsets$, it holds
\begin{multline*}
W_2^2(\mu_0  \tQkers[n] \tSker_{\stepa[n+1]}^2, \pi) \leq \defEns{\prod_{k=1}^n(1-m\stepa[k+1])} W_2^2(\mu_0  \tSker_{\stepa[1]}^2, \pi)\\
+ \sum_{k=1}^{n}  \stepa[k+1]^2  \defEns{\prod_{i=k+2}^{n+1}(1-m\stepa[i])} \{2Ld + (1+\stepa[k+1] L) \vargrad_1(\mu_0 \tQkers[k-1] \tS^2_{\step_{k}}) + 2 M_2^2\} \eqsp.
\end{multline*}
\end{theorem}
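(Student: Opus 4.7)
The plan is to mirror the proof of \Cref{thm:const_step_conv_wasser}: extract a one-step Wasserstein recursion from the per-step free-energy inequality that underlies \Cref{thm:step_conv_sp}, and then induct. Set $\nu_k = \mu_0 \tQkers[k] \tS^2_{\step_{k+1}}$, so that $\nu_0 = \mu_0 \tS^2_{\step_1}$; using the decomposition $\tRker_{\step_k,\step_{k+1}} = \tS^2_{\step_k}\tS^1_{\step_{k+1}}\Tker_{\step_{k+1}}$ from \eqref{eq:definition_sker_spgld}, one immediately checks that $\nu_k = \nu_{k-1}\tS^1_{\step_{k+1}}\Tker_{\step_{k+1}}\tS^2_{\step_{k+1}}$, which is the natural iteration to control.

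The main step is to establish a one-step inequality analogous to \Cref{thm:basic-one-step}: for every $\mu \in \Pens_2(\rset^d)$ and $\step \in \ocint{0,L^{-1}}$,
\begin{multline*}
2\step\{\Fscr(\mu\tS^1_\step\Tker_\step\tS^2_\step) - \Fscr(\pi)\} \leq (1-m\step)W_2^2(\mu,\pi) - W_2^2(\mu\tS^1_\step\Tker_\step\tS^2_\step,\pi) \\ + \step^2\{2Ld + (1+\step L)\vargrad_1(\mu) + 2M_2^2\}.
\end{multline*}
This is precisely the per-step free-energy bound whose convexity-in-$\mu$ telescoping yields \Cref{thm:step_conv_sp}. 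Its proof decomposes $\Fscr = \Escr + \Hscr$ and handles the three factors of $\tS^1_\step\Tker_\step\tS^2_\step$ in turn: the stochastic-gradient step $\tS^1_\step$ uses the $L$-gradient Lipschitz and $m$-strong-convexity of $U_1$ via a variant of \Cref{lem:basic-potential-gradient-step}, producing the contraction factor $(1-m\step)$ and, through expanding $\norm[2]{a+b} = \norm[2]{a} + 2\ps{a}{b} + \norm[2]{b}$ around $\tilde{\Theta}_1(\cdot,z) = \nabla U_1(\cdot) + (\tilde{\Theta}_1(\cdot,z) - \nabla U_1(\cdot))$, the variance correction $(1+\step L)\vargrad_1(\mu)$; the Gaussian step $\Tker_\step$ uses \Cref{lem:conv-potential-change} applied to $U_1$ for the $2Ld$ term and \Cref{lem:ent-grad-flow-step-inequality} to absorb the change in $\Hscr$; and the proximal step $\tS^2_\step$ uses the non-expansiveness of $\prox^\step_{U_2}$ together with the fixed-point identity $x = \prox^\step_{U_2}(x + \step v)$ for any $v \in \partial U_2(x)$, combined with the $M_2$-Lipschitz bound $\norm{v} \leq M_2$, to produce the $2M_2^2$ contribution on the target side of the coupling.

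Since \Cref{lem:kl_minimizer} gives $\Fscr(\cdot) - \Fscr(\pi) = \KL(\cdot|\pi) \geq 0$, the left-hand side can be dropped, and applying the one-step inequality to $\mu = \nu_{k-1}$ gives
\[
W_2^2(\nu_k,\pi) \leq (1-m\step_{k+1})W_2^2(\nu_{k-1},\pi) + \step_{k+1}^2\{2Ld + (1+\step_{k+1}L)\vargrad_1(\nu_{k-1}) + 2M_2^2\}.
\]
A straightforward induction on $k$, together with the re-indexing $i \mapsto i-1$ in the telescoping product, then produces the stated bound with $\prod_{i=k+2}^{n+1}(1-m\step_i)$. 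I expect the main technical obstacle to lie in the proximal step $\tS^2_\step$: unlike the smooth-gradient case of ULA, one cannot synchronously couple both paths through $\prox^\step_{U_2}$ in a naive way and still preserve the $\pi$-marginal, so the coupling must exploit $x = \prox^\step_{U_2}(x + \step v)$ with $v \in \partial U_2(x)$ on the target side; carefully separating the contribution of $U_2$ (bounded via its $M_2$-Lipschitz norm) from that of $U_1$ (bounded via its $L$-gradient Lipschitz constant) is what lets one land on the clean constants $2Ld$ and $2M_2^2$ displayed in the theorem without spurious cross terms.
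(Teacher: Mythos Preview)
Your plan is correct and coincides with the paper's proof: the one-step inequality you state is precisely \Cref{lem:bound_decompo_spgld} (with $\mu$ playing the role of $\mu_0\tS^2_\step$), and dropping the nonnegative KL term followed by induction is exactly what the paper does in \Cref{sec:proof-crefthm:st-2}.

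One small mismatch worth flagging: your sketch of how the proximal piece is handled differs from the paper. The paper does \emph{not} use the fixed-point identity $x=\prox_{U_2}^\step(x+\step v)$ on the $\pi$-side of the coupling; instead \Cref{lem:proximal-step} bounds $2\step\{\Escr_2(\mu)-\Escr_2(\nu)\}$ directly from the subgradient characterization $\step^{-1}(x-\prox_{U_2}^\step(x))\in\partial U_2(\prox_{U_2}^\step(x))$ together with the $M_2$-Lipschitz bound, yielding $W_2^2(\mu,\nu)-W_2^2(\mu\tS^2_\step,\nu)+2\step^2 M_2^2$. The fixed-point/non-expansiveness route you describe gives the wrong sign when combined with the subgradient inequality for $U_2$ at $Y\sim\pi$, so as written it would not close; but since the correct one-step bound is available via \Cref{lem:proximal-step}, this does not affect the overall validity of your argument.
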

\begin{proof}
The proof is postponed to \Cref{sec:proof-crefthm:st-2}.
\end{proof}

\begin{corollary}
Assume \Cref{assum:stochastic_gradient_proximal}$(m)$, for $m >
0$. Assume that $\sup_{x \in \rset^d} \vargrad_1(\updelta_x) \leq D^2 < \infty$.
 Let $\varepsilon > 0 $, $\mu_0 \in \Pens_2(\rset^d)$, and
\[
    \steps_{\varepsilon}  \leq \min\defEns{m\varepsilon/(4(Ld + D^2 + M_2^2)), L^{-1}} \eqsp, \qquad 
    n_{\varepsilon}  \geq \lceil \ln( 2W_2^2(\mu_0 \tS^2_{\step_{\vareps}}, \pi)/ (\varepsilon \steps_{\varepsilon} m)^{-1} \rceil\eqsp. 
\]
Then $W_2^2(\mu_0  \tRker^{n_{\varepsilon}}_{\gamma_{\varepsilon}, \gamma_{\varepsilon}} \tSker_{\step_{\varepsilon}}^2, \pi) \leq \varepsilon$, where $\tRker_{\gamma, \gamma}$ and $\tSker^2_{\gamma}$ are defined by \eqref{eq:definition_rker_spgld} and \eqref{eq:definition_sker_spgld} respectively. 
\end{corollary}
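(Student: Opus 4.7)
The plan is to specialize \Cref{thm:step_conv_sp_wasser} to the constant step-size setting $\steps_k = \steps_\varepsilon$ and then follow the same pattern as \Cref{cor:const_step_str_conv_wass_analysis}. Under the uniform variance bound $\sup_{x \in \rset^d} \vargrad_1(\updelta_x) \leq D^2$, the per-iteration error constant in the theorem simplifies to
\begin{equation*}
C \defeq 2Ld + (1+\steps_\varepsilon L) D^2 + 2M_2^2 \eqsp,
\end{equation*}
which, using $\steps_\varepsilon \leq L^{-1}$, is bounded by $2Ld + 2D^2 + 2M_2^2 = 2(Ld + D^2 + M_2^2)$.

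Applying \Cref{thm:step_conv_sp_wasser} with constant step size therefore yields
\begin{equation*}
W_2^2(\mu_0 \tQkers[n_\varepsilon] \tSker_{\steps_\varepsilon}^2, \pi) \leq (1-m\steps_\varepsilon)^{n_\varepsilon} W_2^2(\mu_0 \tSker_{\steps_\varepsilon}^2, \pi) + C \steps_\varepsilon^2 \sum_{k=1}^{n_\varepsilon}(1-m\steps_\varepsilon)^{n_\varepsilon - k} \eqsp.
\end{equation*}
I would then bound the geometric sum by $1/(m\steps_\varepsilon)$, giving a variance term $\leq C\steps_\varepsilon/m \leq 2(Ld+D^2+M_2^2)\steps_\varepsilon/m$. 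The definition $\steps_\varepsilon \leq m\varepsilon/\{4(Ld+D^2+M_2^2)\}$ ensures this is at most $\varepsilon/2$.

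For the initial condition term, use $1-t \leq \rme^{-t}$ to get $(1-m\steps_\varepsilon)^{n_\varepsilon} \leq \exp(-m\steps_\varepsilon n_\varepsilon)$. The choice $n_\varepsilon \geq \lceil \ln(2 W_2^2(\mu_0 \tSker^2_{\steps_\varepsilon},\pi)/\varepsilon)(m\steps_\varepsilon)^{-1}\rceil$ then forces the first term to be at most $\varepsilon/2$. Summing gives the required bound $\varepsilon$.

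The proof is essentially routine once \Cref{thm:step_conv_sp_wasser} is in hand, and there is no real obstacle. The only point requiring minor care is recognizing that $\mu_0 \tQkers[n_\varepsilon] = \mu_0 \tRker_{\steps_\varepsilon,\steps_\varepsilon}^{n_\varepsilon}$ in the constant step-size regime, so that the statement matches the kernel notation in the corollary, and verifying that $\vargrad_1(\mu_0 \tQkers[k-1] \tS^2_{\steps_\varepsilon}) \leq D^2$ for all intermediate measures by the uniform hypothesis on $\vargrad_1(\updelta_x)$ together with convexity of $\vargrad_1$ in its argument (since it is an average of $\vargrad_1(\updelta_x)$ against the pushforward measure).
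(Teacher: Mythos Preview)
Your proof is correct and follows exactly the route the paper takes: the paper's own proof is the two-line remark that $\steps_\varepsilon \leq L^{-1}$ gives $(1+\steps_\varepsilon L)\vargrad_1(\cdot) \leq 2D^2$, after which ``\Cref{thm:step_conv_sp_wasser} then concludes the proof''. You have simply spelled out the geometric-sum and exponential bounds that the paper leaves to the reader, mirroring the computation in \Cref{cor:const_step_str_conv_wass_analysis}. One tiny wording point: $\vargrad_1(\mu) = \int \vargrad_1(\updelta_x)\,\rmd\mu(x)$ is linear (not merely convex) in $\mu$, so the uniform pointwise bound transfers immediately.
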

\begin{proof}
Since $\step_{\varepsilon} \leq L^{-1}$, we have $(1 + \step_{\varepsilon}L) \vargrad_1(\mu_0  \tRker^{k}_{\gamma_{\varepsilon}}\tS^2_{\step}) \leq 2D^2$ for all $k \geq 1$. Using \Cref{thm:step_conv_sp_wasser} then concludes the proof.
\end{proof}

Note that the bounds given by \Cref{thm:step_conv_sp_wasser} are
tighter the one given by \cite[Theorem 3]{dalalyan:karagulyan:2017}
which  shows under \Cref{assum:stochastic_gradient_proximal} with $U_2=0$ and $\sup_{x \in \rset^d} \vargrad_1(\updelta_x) \leq D^2$ that
\begin{equation*}
  W_2(\mu_0 \tRker_{\gamma,\gamma},\pi) \leq (1-m h) W_2(\mu_0,\pi) + 1.65(L/m)( \gamma d)^{1/2} + D^2(\gamma d)^{1/2} / (1.65 L + Dm) \eqsp.
\end{equation*}
Indeed, for constant step-size
$\gamma_k = \gamma \in \ocintLigne{0,L^{-1}}$ for all $k \in \nsets$,
\Cref{thm:step_conv_sp_wasser} implies with the same assumptions that
\begin{equation*}
  W_2(\mu_0 \tRker_{\gamma,\gamma},\pi) \leq (1-m h)^{1/2} W_2(\mu_0,\pi) + (2Ld \gamma/m)^{1/2} +  ((1+\gamma)\gamma/m)^{1/2} D \eqsp.
\end{equation*}
As for ULA, the dependency on the condition number $L/m$ is improved.

In the strongly convex case, we can improve the dependency on the variance of the stochastic gradient under the following condition.
\begin{assumption}
  \label{ass:cocoer_two_st_convex}
\begin{sf}
  There exist $\tilde{L}_1,\tilde{m}_1>0$ such that for all for $\eta$-almost every $z \in \mathsf{Z}$, for all $x,y \in \rset^d$, we have
  \begin{equation*}
    \ps{\tgradst_1(x,z) - \tgradst_1(y,z)}{x-y} \geq \tilde{m}_1 \norm[2]{x-y} + (1/\tL_1) \norm[2]{\tgradst_1(x,z) - \tgradst_1(y,z)}\eqsp.
  \end{equation*}
\end{sf}
\end{assumption}
The condition \Cref{ass:cocoer_two_st_convex} is for example satisfied
if $\eta$-almost surely, $x \mapsto \tgradst_1(x,z)$ is strongly
convex, see \cite[Theorem 2.1.12]{nesterov:2004}.

\begin{proposition}
  \label{propo:coco_st_convex}
  Assume \Cref{assum:stochastic_gradient_proximal}$(m)$ for $m >0$ and \Cref{ass:cocoer_two_st_convex}. Then for all $\step >0$ we have
\begin{multline*}
2\gamma(\tilde{L}^{-1}_1-\gamma)\vargrad_{1}(\updelta_x) \leq  (1-\tm_1\gamma)\norm[2]{x-\xstar}-\int_{\rset^d} \norm[2]{y-\xstar}(\tS^{1}_{\step}  T_{\step} \tS^{2}_{\step}) (x,\rmd y) +2 \gamma^2 \vargrad_1(\updelta_{\xstar}) + 2 \gamma d \eqsp,
\end{multline*}
  where $\tS^{1}_{\step}, \tS^2_\step$ and  $\vargrad_1$ are defined by \eqref{eq:definition_sker_spgld}-\eqref{eq:definition_var_grad_sto_spgl} respectively. 
\end{proposition}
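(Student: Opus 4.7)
The plan is to adapt the proof of \Cref{propo:bound_variance_sto_grad_spgld} almost verbatim, substituting the stronger \Cref{ass:cocoer_two_st_convex} for \Cref{assum:cocoercitivity_sto_grad}. The extra summand $\tm_1 \norm[2]{x-y}$ produced by the strong monotonicity part of \Cref{ass:cocoer_two_st_convex} is what will deliver the improved contraction factor $(1-\tm_1\gamma)$ in front of $\norm[2]{x-\xstar}$ in the final inequality.

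Concretely, I would introduce $\tX_1 = \prox_{U_2}^{\step}\{x - \step \tgradst_1(x, \gradrv_1) + \sqrt{2\step} G_1\}$, with $\gradrv_1 \sim \eta_1$ independent of $G_1 \sim \mathcal{N}(0, \Id)$, so that $\tX_1$ has law $(\tS^{1}_{\step} T_{\step} \tS^{2}_{\step})(x, \cdot)$. Using $\xstar = \prox_{U_2}^{\step}(\xstar - \step \nabla U_1(\xstar))$ (which follows from $0 \in \partial U(\xstar)$ via \cite[Theorem 26.2(vii)]{bauschke:combettes:2011}) together with non-expansiveness of $\prox_{U_2}^{\step}$, and then taking expectation while exploiting $G_1 \perp \gradrv_1$ and $\expe{\norm[2]{G_1}} = d$, one obtains
\[
\expe{\norm[2]{\tX_1 - \xstar}} \leq \norm[2]{x-\xstar} - 2\gamma \expe{\ps{x-\xstar}{\tgradst_1(x,\gradrv_1) - \nabla U_1(\xstar)}} + \gamma^2 \expe{\norm[2]{\tgradst_1(x,\gradrv_1) - \nabla U_1(\xstar)}} + 2\gamma d.
\]

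Next I would process the two $\gradrv_1$-dependent terms. Since $\expe{\tgradst_1(\xstar, \gradrv_1)} = \nabla U_1(\xstar)$, the inner product reduces to $\expe{\ps{x-\xstar}{\tgradst_1(x,\gradrv_1) - \tgradst_1(\xstar,\gradrv_1)}}$, and \Cref{ass:cocoer_two_st_convex} applied $\eta_1$-pointwise then lower bounds it by $\tm_1 \norm[2]{x-\xstar} + (1/\tL_1) \expe{\norm[2]{\tgradst_1(x,\gradrv_1) - \tgradst_1(\xstar,\gradrv_1)}}$. For the quadratic term I would apply $\norm[2]{a+b} \leq 2\norm[2]{a} + 2\norm[2]{b}$ with $a = \tgradst_1(x,\gradrv_1) - \tgradst_1(\xstar,\gradrv_1)$ and $b = \tgradst_1(\xstar,\gradrv_1) - \nabla U_1(\xstar)$, producing the $2\gamma^2 \vargrad_1(\updelta_{\xstar})$ contribution from the bias piece.

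Combining these ingredients and bounding $\vargrad_1(\updelta_x) \leq \expe{\norm[2]{a}}$ (as in the proof of \Cref{propo:bound_variance_sto_grad_spgld}) then rearranges to the claimed inequality. I do not expect a real obstacle here: the only delicate point is coefficient bookkeeping, where the computation actually delivers the slightly sharper $(1-2\tm_1\gamma)\norm[2]{x-\xstar}$ contraction, from which the stated $(1-\tm_1\gamma)\norm[2]{x-\xstar}$ bound follows immediately since $\tm_1, \gamma \geq 0$.
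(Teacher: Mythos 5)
Your proposal reproduces the paper's proof essentially verbatim: same construction of $\tX_1$ via the proximal composition, same invocation of the fixed-point identity $\xstar = \prox_{U_2}^{\step}(\xstar - \step \nabla U_1(\xstar))$ and non-expansiveness of $\prox_{U_2}^{\step}$, same expansion of the square, same pointwise application of \Cref{ass:cocoer_two_st_convex} to the cross term and Young's inequality on the quadratic term, and the same closing step $\vargrad_1(\updelta_x) \leq \expe{\norm[2]{\tgradst_1(x,\gradrv_1) - \tgradst_1(\xstar,\gradrv_1)}}$. Your observation that the careful bookkeeping actually yields the sharper contraction $(1-2\tm_1\gamma)\norm[2]{x-\xstar}$, from which the stated $(1-\tm_1\gamma)\norm[2]{x-\xstar}$ follows, is correct and is a minor slack also present in the paper's own final display.
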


\begin{proof}
  The proof is similar to the proof of \Cref{propo:bound_variance_sto_grad_spgld}. It is postponed to \Cref{sec:proof-crefpr}.
\end{proof}
\begin{corollary}
  \label{cor:spgula_str_conv_wass}
  Assume \Cref{assum:stochastic_gradient_proximal}$(m)$, for $m > 0$
  and that $\tilde{\Theta}_1$ satisfies
  \Cref{ass:cocoer_two_st_convex}.  Let $(\gamma_k)_{k \in \nsets}$
  defined for all $k \in \nsets$ by
  $\gamma_k = \step \in \ocintLigne{0,L^{-1} \wedge
    (2\tilde{L}_1)^{-1}}$. Let $\mu_0 \in \Pens_2(\rset^d)$.
  Define $\tm = \min(m,\tm_1)$ and
\begin{equation}\label{eq:coro_delta_1}
  \begin{aligned}
  \Delta_1 & = 2(Ld+M_2)/m + \{2 \tilde{L}_1(1+\gamma L) /\tm\}  d  \\
  \Delta_2 &= \{2 \tilde{L}_1(1+\gamma L) /\tm\} \vargrad_1(\updelta_{\xstar})\\
\Delta_3 & = \gamma \tilde{L}_1(1+\gamma L)  \defEns{\int_{\rset^d}\norm[2]{x-\xstar} \rmd \mu_0  \tS^2_{\step_{\vareps}}(x)} \eqsp.
\end{aligned}
\end{equation}
Then for all
$n \in \nsets$, it holds
\begin{multline}
  \label{cor:spgula_str_conv_wass_eq_1}
W_2^2(\mu_0  \tRker_{\gamma,\gamma}^n \tSker_{\step}^2, \pi) \leq  (1-m\step)^n W_2^2(\mu_0  \tSker_{\step}^2, \pi)+ (1-\tm\gamma)^{n} \Delta_3+\steps \Delta_1 + \steps^2 \Delta_2     \eqsp,
\end{multline}
where $\tRker_{\gamma, \gamma}$ and $\tSker^2_{\gamma}$ are defined by \eqref{eq:definition_rker_spgld} and \eqref{eq:definition_sker_spgld}.

Therefore, for $\varepsilon > 0 $ and 
\begin{align*}
    \steps_{\varepsilon} & \leq \min\defEns{\varepsilon/(4\Delta_1), [\varepsilon /(4\Delta_2)]^{1/2}, L^{-1}, (2\tilde{L}_1)^{-1}} \eqsp \\ 
    n_{\varepsilon}  & \geq  \max\defEns{\lceil \ln( 4W_2^2(\mu_0 \tS^2_{\step_{\varepsilon}}, \pi)/ \varepsilon) (\steps_{\varepsilon} m)^{-1} \rceil , \lceil \ln(4 \Delta_3 / \varepsilon) (\steps_{\varepsilon}\tm)^{-1} \rceil}\eqsp,
\end{align*}
it holds $
W_2^2(\mu_0 \tRker_{\gamma_{\vareps},\gamma_{\vareps}}^{n_{\varepsilon}} \tSker_{\step_{\varepsilon}}^2, \pi) \leq \varepsilon$.
\end{corollary}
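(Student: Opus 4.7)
The plan is to specialize Theorem~\ref{thm:step_conv_sp_wasser} to constant step-size $\gamma_k = \gamma$ and then eliminate the variance residuals using Proposition~\ref{propo:coco_st_convex}. Theorem~\ref{thm:step_conv_sp_wasser} immediately gives
\[
W_2^2(\mu_0\tRker_{\gamma,\gamma}^n \tS^2_\gamma, \pi) \leq (1-m\gamma)^n W_2^2(\mu_0\tS^2_\gamma,\pi) + \sum_{k=1}^n \gamma^2 (1-m\gamma)^{n-k}\bigl\{2Ld + 2M_2^2 + (1+\gamma L)\vargrad_1(\mu_0\tRker_{\gamma,\gamma}^{k-1}\tS^2_\gamma)\bigr\}.
\]
So the $(1-m\gamma)^n W_2^2(\mu_0\tS^2_\gamma,\pi)$ piece of \eqref{cor:spgula_str_conv_wass_eq_1} is already in hand, and the deterministic $2Ld+2M_2^2$ terms contribute, via $\sum_{k=1}^n \gamma^2(1-m\gamma)^{n-k} \leq \gamma/m$, the first part of $\gamma\Delta_1$. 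The remaining work is to bound $\sum_{k=1}^n \gamma^2 (1-m\gamma)^{n-k}(1+\gamma L) \vargrad_1(\mu_0 \tRker_{\gamma,\gamma}^{k-1} \tS^2_\gamma)$.

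Writing $a_k = \int \norm[2]{x-\xstar}\rmd(\mu_0\tRker_{\gamma,\gamma}^{k-1}\tS^2_\gamma)(x)$ and integrating Proposition~\ref{propo:coco_st_convex} against $\mu_0\tRker_{\gamma,\gamma}^{k-1}\tS^2_\gamma$, the hypothesis $\gamma \leq (2\tilde{L}_1)^{-1}$ (so $\tilde{L}_1^{-1}-\gamma\geq(2\tilde{L}_1)^{-1}$) gives
\[
\vargrad_1(\mu_0\tRker_{\gamma,\gamma}^{k-1}\tS^2_\gamma) \leq (\tilde{L}_1/\gamma)\bigl\{(1-\tm_1\gamma) a_{k-1} - a_k + 2\gamma^2\vargrad_1(\updelta_\xstar) + 2\gamma d\bigr\}.
\]
Plugging this in and separating pieces, the $2\gamma^2\vargrad_1(\updelta_\xstar)$ and $2\gamma d$ residuals are summed by geometric series to produce the $\gamma^2\Delta_2$ contribution and the second piece of $\gamma\Delta_1$ (with the denominator $\tm$ arising upon taking the worst of the two rates), while the telescoping-like contribution
\[
T := \tilde{L}_1(1+\gamma L)\sum_{k=1}^n \gamma(1-m\gamma)^{n-k}\{(1-\tm_1\gamma) a_{k-1}-a_k\}
\]
must be bounded by $(1-\tm\gamma)^n \Delta_3$.

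For $T$ I would apply Abel summation to get $T = \tilde{L}_1(1+\gamma L)\gamma\bigl[(1-\tm_1\gamma)(1-m\gamma)^{n-1} a_0 + (m-\tm_1)\gamma\sum_{k=1}^{n-1}(1-m\gamma)^{n-1-k} a_k - a_n\bigr]$ and split cases. If $m \leq \tm_1$ (so $\tm=m$), the middle sum has nonpositive coefficient and is dropped, yielding $T \leq (1-\tm\gamma)^n\Delta_3$ up to the factor $1/(1-\tm\gamma)$ absorbed into $\Delta_3$. If $m > \tm_1$ (so $\tm=\tm_1$), I iterate Proposition~\ref{propo:coco_st_convex} to obtain the a priori bound $a_k\leq (1-\tm_1\gamma)^k a_0 + 2(\gamma\vargrad_1(\updelta_\xstar)+d)/\tm_1$; the key identity is then $(m-\tm_1)\gamma\sum_{k=1}^{n-1}(1-m\gamma)^{n-1-k}(1-\tm_1\gamma)^k \leq (1-\tm_1\gamma)^n$, which follows because the geometric ratio is $r=(1-\tm_1\gamma)/(1-m\gamma)$ and $(r-1)(1-m\gamma) = (m-\tm_1)\gamma$, producing a clean telescopic cancellation; the residual a priori constants are absorbed into $\gamma\Delta_1$ and $\gamma^2\Delta_2$. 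Combining all pieces yields \eqref{cor:spgula_str_conv_wass_eq_1}. The explicit $\varepsilon$-bound then follows routinely by choosing $\gamma_\varepsilon$ small enough that $\gamma_\varepsilon\Delta_1, \gamma_\varepsilon^2\Delta_2 \leq \varepsilon/4$ and $n_\varepsilon$ large enough (via $1-t\leq \rme^{-t}$) that both $(1-m\gamma_\varepsilon)^{n_\varepsilon}W_2^2(\mu_0\tS^2_{\gamma_\varepsilon},\pi) \leq \varepsilon/4$ and $(1-\tm\gamma_\varepsilon)^{n_\varepsilon}\Delta_3 \leq \varepsilon/4$. The main obstacle is precisely the Abel-summation step in the $\tm_1 < m$ branch; bookkeeping the $\gamma$-order residuals into the correct $\Delta_i$ is mechanical but tedious.
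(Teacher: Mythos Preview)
Your approach is correct but more laborious than necessary. The paper avoids the Abel-summation case split entirely by a single relaxation performed \emph{before} invoking Proposition~\ref{propo:coco_st_convex}: since $\tm = \min(m,\tm_1) \leq m$, one has $(1-m\gamma)^{n-k} \leq (1-\tm\gamma)^{n-k}$, so in the variance-weighted sum coming out of Theorem~\ref{thm:step_conv_sp_wasser} the paper simply replaces the weights $(1-m\gamma)^{n-k}$ by $(1-\tm\gamma)^{n-k}$. After this, Proposition~\ref{propo:coco_st_convex} produces the factor $(1-\tm_1\gamma)$ on the positive $a_{k-1}$ term; since also $\tm \leq \tm_1$, one has $(1-\tm_1\gamma) \leq (1-\tm\gamma)$, and the two sums
\[
\sum_{k}(1-\tm\gamma)^{n-k+1} a_{k-1} \;-\; \sum_{k}(1-\tm\gamma)^{n-k} a_{k}
\]
telescope exactly, leaving only $(1-\tm\gamma)^{n} a_0$, which is $\Delta_3(1-\tm\gamma)^n$ up to the factor $\gamma\tilde{L}_1(1+\gamma L)$. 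No case distinction on the sign of $m-\tm_1$, no a priori iteration of Proposition~\ref{propo:coco_st_convex}, and no geometric-ratio identity are needed.

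What each route buys: your Abel-summation argument keeps the sharper rate $(1-m\gamma)^{n-k}$ on the non-variance terms and only loses to $(1-\tm\gamma)$ where forced, but the gain is immaterial for the stated bound (whose constants $\Delta_i$ already carry the denominator $\tm$), and the price is the two-branch bookkeeping you flag as the main obstacle. The paper's single upfront relaxation sacrifices nothing in the final statement and makes the proof a two-line telescoping computation.
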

\begin{proof}
 The proof of the corollary is postponed to Section \ref{sec:proof-cor:spgld}.
\end{proof}

\begin{corollary}
    Assume \Cref{assum:stochastic_gradient_proximal}$(m)$, for $m > 0$
  and that $\tilde{\Theta}_1$ satisfies
  \Cref{ass:cocoer_two_st_convex}.  Define $\tm = \min(m,\tm_1)$.
  Let $\varepsilon >0$, $\mu_0 \in \Pens_2(\rset^d)$ and
  \begin{align*}
       \steps_{\varepsilon} & \leq \min\defEns{\varepsilon/(4\Delta_1), [\varepsilon /(4\Delta_2)]^{1/2}, L^{-1}, (2\tilde{L}_1)^{-1}} \eqsp, \\ 
    N_{\varepsilon}  & \geq \max\defEns{\lceil \ln( 4W_2^2(\mu_0 \tS^2_{\step_{\varepsilon}}, \pi)/ \varepsilon) (\steps_{\varepsilon} m)^{-1} \rceil , \lceil \ln(4 \Delta_3 / \varepsilon) (\steps_{\varepsilon}\tm)^{-1} \rceil} \\
    \tstep_{\varepsilon}&  \leq \min\parentheseDeux{\left. \varepsilon \middle/\defEns{4M_2^2 + 4Ld + 8\tilde{L}d \right. }, \sqrt{\varepsilon / \left( 8\tilde{L}\vargrad_1(\updelta_{\xstar}) \right)}, L^{-1}, (2\tilde{L})^{-1} }   \eqsp, \\
    n_{\varepsilon} & \geq 2\max\defEns{ \ceil{\steps_{\varepsilon} ^{-1} }, \ceil{2\tilde{L}\varepsilon^{-1} \int_{\rset^d}  \norm[2]{x-\xstar} \rmd \mu_0\tRker_{\steps_{\varepsilon},\steps_{\varepsilon}}^{N_{\varepsilon}}\tS^2_{\steps}(y) } } \eqsp,
  \end{align*}
  where $\Delta_1,\Delta_2,\Delta_3$ are defined in \eqref{eq:coro_delta_1} and $\tRker_{\gamma, \gamma}$ and $\tSker^2_{\gamma}$ are defined by \eqref{eq:definition_rker_spgld} and \eqref{eq:definition_sker_spgld}.
  Let $(\step_k)_{k \in \nset}$ defined by  $\steps_k = \steps_{\varepsilon}$ for $ k \in \{1,\ldots,  N_{\varepsilon}\}$ and $ \steps_k = \tilde{\steps_{\varepsilon}}$ for $k > N_{\varepsilon}$.  Then we have $ \KLarg{\tnu^{N_{\varepsilon}}_{n_{\varepsilon}}}{\pi} \leq \varepsilon$
where $\tnu_{n_{\varepsilon}}^{N_{\varepsilon}} = n_{\varepsilon}^{-1} \sum_{k=1}^{n_{\varepsilon}} \mu_0 \tRker_{\steps_{\varepsilon},\steps_{\varepsilon}}^{N_{\varepsilon}}  \tRker_{\tilde{\steps}_{\varepsilon},\tilde{\steps}_{\varepsilon}}^k$.

\end{corollary}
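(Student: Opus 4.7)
The plan is to mirror the burn-in then average strategy of \Cref{cor:const_step_str_conv_kl_analysis} used for ULA, combining a Wasserstein contraction from \Cref{cor:spgula_str_conv_wass} during the first phase with the KL bound of \Cref{coro:fixed_step_conv_sp} during the averaging phase. Let $\mu' = \mu_0 \tRker_{\steps_\varepsilon,\steps_\varepsilon}^{N_\varepsilon}$ denote the law at the end of the burn-in. By construction, the measure $\tnu^{N_\varepsilon}_{n_\varepsilon}$ in the statement equals $n_\varepsilon^{-1}\sum_{k=1}^{n_\varepsilon} \mu' \tRker_{\tstep_\varepsilon,\tstep_\varepsilon}^k$, which is exactly the averaged measure $\tnu^0_{n_\varepsilon}$ of \Cref{coro:fixed_step_conv_sp} applied with initial distribution $\mu'$ and constant step-size $\tstep_\varepsilon$.

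For the first phase, I would apply \Cref{cor:spgula_str_conv_wass} with $\gamma = \steps_\varepsilon$ and $n = N_\varepsilon$. The bound \eqref{cor:spgula_str_conv_wass_eq_1} splits the error into three contributions: the two geometric contraction terms $(1-m\steps_\varepsilon)^{N_\varepsilon}W_2^2(\mu_0\tS^2_{\steps_\varepsilon},\pi)$ and $(1-\tm\steps_\varepsilon)^{N_\varepsilon}\Delta_3$, plus the asymptotic bias $\steps_\varepsilon\Delta_1+\steps_\varepsilon^2\Delta_2$. The stated upper bound on $\steps_\varepsilon$ forces each of $\steps_\varepsilon\Delta_1$ and $\steps_\varepsilon^2\Delta_2$ to be at most $\varepsilon/4$, while the two lower bounds defining $N_\varepsilon$ (one involving the rate $m$, the other involving $\tm$) force each geometric term to be at most $\varepsilon/4$, so that $W_2^2(\mu' \tS^2_{\steps_\varepsilon},\pi) \leq \varepsilon$.

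For the second phase, I would apply \Cref{coro:fixed_step_conv_sp} to the chain with initial law $\mu'$ and constant step-size $\tstep_\varepsilon$: the stated upper bound on $\tstep_\varepsilon$ matches its step-size condition, and the lower bound on $n_\varepsilon$ dominates its iteration-count requirement. The main obstacle is that \Cref{coro:fixed_step_conv_sp} takes $W_2^2(\mu' \tS^2_{\tstep_\varepsilon},\pi)$ as input, whereas the first phase delivers $W_2^2(\mu' \tS^2_{\steps_\varepsilon},\pi) \leq \varepsilon$. To bridge these, I would use the triangle inequality together with the estimate $\|x-\prox_{U_2}^\gamma(x)\|\leq \gamma M_2$, which follows from the optimality condition for the proximal operator and the $M_2$-Lipschitzness of $U_2$ supplied by \Cref{assum:stochastic_gradient_proximal}; this yields $W_2(\mu' \tS^2_{\tstep_\varepsilon},\mu' \tS^2_{\steps_\varepsilon}) \leq (\steps_\varepsilon+\tstep_\varepsilon)M_2$, which is of the same order as $\sqrt{\varepsilon}$ by the step-size conditions and is therefore absorbed into the constants. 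An analogous argument controls the second-moment term $\int \|x-\xstar\|^2\,\rmd(\mu' \tS^2_{\tstep_\varepsilon})$ appearing in the condition on $n_\varepsilon$. Plugging everything into \Cref{coro:fixed_step_conv_sp} then yields $\KLarg{\tnu^{N_\varepsilon}_{n_\varepsilon}}{\pi} \leq \varepsilon$, as desired.
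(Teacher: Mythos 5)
Your proposal takes essentially the same two-phase approach as the paper's own proof: use \Cref{cor:spgula_str_conv_wass} for the burn-in phase with step-size $\steps_\varepsilon$ to bring the chain within squared Wasserstein distance $\varepsilon$ of $\pi$, then treat the post-burn-in law as a new initial distribution and apply the constant-step averaging bound of \Cref{coro:fixed_step_conv_sp} with step $\tstep_\varepsilon$. The paper's proof is simply a two-line version of this plan, and your proposal reproduces it.

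You go one step further by flagging a genuine technicality that the paper's terse proof glosses over: the burn-in via \Cref{cor:spgula_str_conv_wass} controls $\wassersteinTarg{\mu'\tSker^2_{\steps_\varepsilon}}{\pi}$, while \Cref{coro:fixed_step_conv_sp} applied with step $\tstep_\varepsilon$ reads $\wassersteinTarg{\mu'\tSker^2_{\tstep_\varepsilon}}{\pi}$ as its input. Your bridge via $\norm{x-\prox^{\gamma}_{U_2}(x)}\le\gamma M_2$ is exactly the inequality used inside the proof of \Cref{lem:proximal-step}, so it is available, and the extra Wasserstein discrepancy $(\steps_\varepsilon+\tstep_\varepsilon)M_2$ is indeed of lower order than the dominant $\sqrt\varepsilon$. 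However, as you note, this only gives the claimed bound up to an adjustment of the numerical constants; the statement's definitions of $\steps_\varepsilon$, $N_\varepsilon$ do not explicitly leave room for the additional cross term, so a genuinely rigorous proof would shave the thresholds slightly (e.g.\ aim for $\varepsilon/2$ at the end of the burn-in). An alternative way to remove the mismatch entirely is to run the burn-in via the non-increasing-step version \Cref{thm:step_conv_sp_wasser} rather than the constant-step \Cref{cor:spgula_str_conv_wass}, since its left-hand side $\wassersteinTarg{\mu_0\tQkers[n]\tSker^2_{\stepa[n+1]}}{\pi}$ is already phrased in terms of $\stepa[N_\varepsilon+1]=\tstep_\varepsilon$ — though that introduces its own small discrepancy with the kernel $\tRker^{N_\varepsilon}_{\steps_\varepsilon,\steps_\varepsilon}$ appearing in the definition of $\tnu^{N_\varepsilon}_{n_\varepsilon}$. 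Either way, your argument is correct in substance and supplies more detail than the paper does.
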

\begin{proof}
 \Cref{cor:spgula_str_conv_wass} implies that after the burn in phase of $N_{\varepsilon}$ steps with step-size $\step_{\varepsilon}$, we  have $W_2^2(\mu_0  \tQkers[N_{\varepsilon}] \tSker_{\step_{\varepsilon}}^2, \pi) \leq \varepsilon$. Then, since we can treat $\mu_0  \tQkers[N_{\varepsilon}]$ as a new starting measure, \Cref{coro:fixed_step_conv_sp} concludes the proof.
\end{proof}


\section{Numerical experiments}

\label{sec:numer-exper}
In this section, we experiment SPGLD and SSGLD on a Bayesian logistic
regression problem, see \eg~\cite{holmes:held:2006},
\cite{gramacy:polson:2012} and \cite{park:hastie:2007}. Consider
\iid~observations $(X_i,Y_i)_{i \in \{1,\ldots,N\}}$, where
$(Y_i)_{i \in \{1,\ldots,N\}}$ are binary response variables and
$(X_i)_{i \in \{1,\ldots,N\}}$ are $d$-dimensional covariance
variables. For all $i \in \{1,\ldots,N\}$, $Y_i$ is assumed to be a
Bernoulli random variable with parameter
$\Phi(\beta^{\transpose} X_i)$ where $\beta$ is the parameter of
interest and for all $u\in \rset$, $\Phi(u) =
\rme^{u}/(1+\rme^{u})$. We choose as prior distributions (see \cite{genkin2007large} and \cite{li:lin:2010}) a
$d$-dimensional Laplace distribution and a combination of the Laplace
distribution and the Gaussian distribution, with density with respect
to the Lebesgue measure given respectively for all $\beta \in \rset^d$
by
  \begin{equation*}
    \mathrm{p}_{1}(\beta) \propto \exp\parenthese{-a_1\sum_{i=1}^d \absLigne{\beta_i} } \eqsp, \, \qquad   \mathrm{p}_{1,2}(\beta) \propto \exp\parenthese{-a_1\sum_{i=1}^d \absLigne{\beta_i} -a_2 \sum_{i=1}^d \beta_i^2} \eqsp,
  \end{equation*}
  where $a_1$ is set to $1$ in the case of $\mathrm{p}_{1}$ and
  $a_1=0.9,\eqsp a_2=0.1$ in the case of $\mathrm{p}_{1,2}$. We obtain
  then two different a posteriori distributions $\mathrm{p}_1(\cdot |(X,Y)_{i \in \{1,\ldots,N\}})$ and $\mathrm{p}_{1,2}(\cdot |(X,Y)_{i \in \{1,\ldots,N\}})$ with potentials given, respectively, by 
\[
\beta \mapsto \sum_{n=1}^N\ell_n(\beta) + a_1\sum_{i=1}^d \absLigne{\beta_i} \eqsp, \qquad \beta \mapsto \sum_{n=1}^N\ell_n(\beta) +a_2 \sum_{i=1}^d \beta_i^2 +a_1\sum_{i=1}^d \absLigne{\beta_i}\eqsp.
\]
 where 
\[
 \ell_n(\beta) = -Y_n\beta^{\transpose} X_n+\log[1+\exp(\beta^{\transpose} X_n)]\eqsp.
\]

 \begin{figure}[!h]
\begin{subfigure}{0.32\textwidth}
\includegraphics[width=0.9\linewidth, height=4cm]{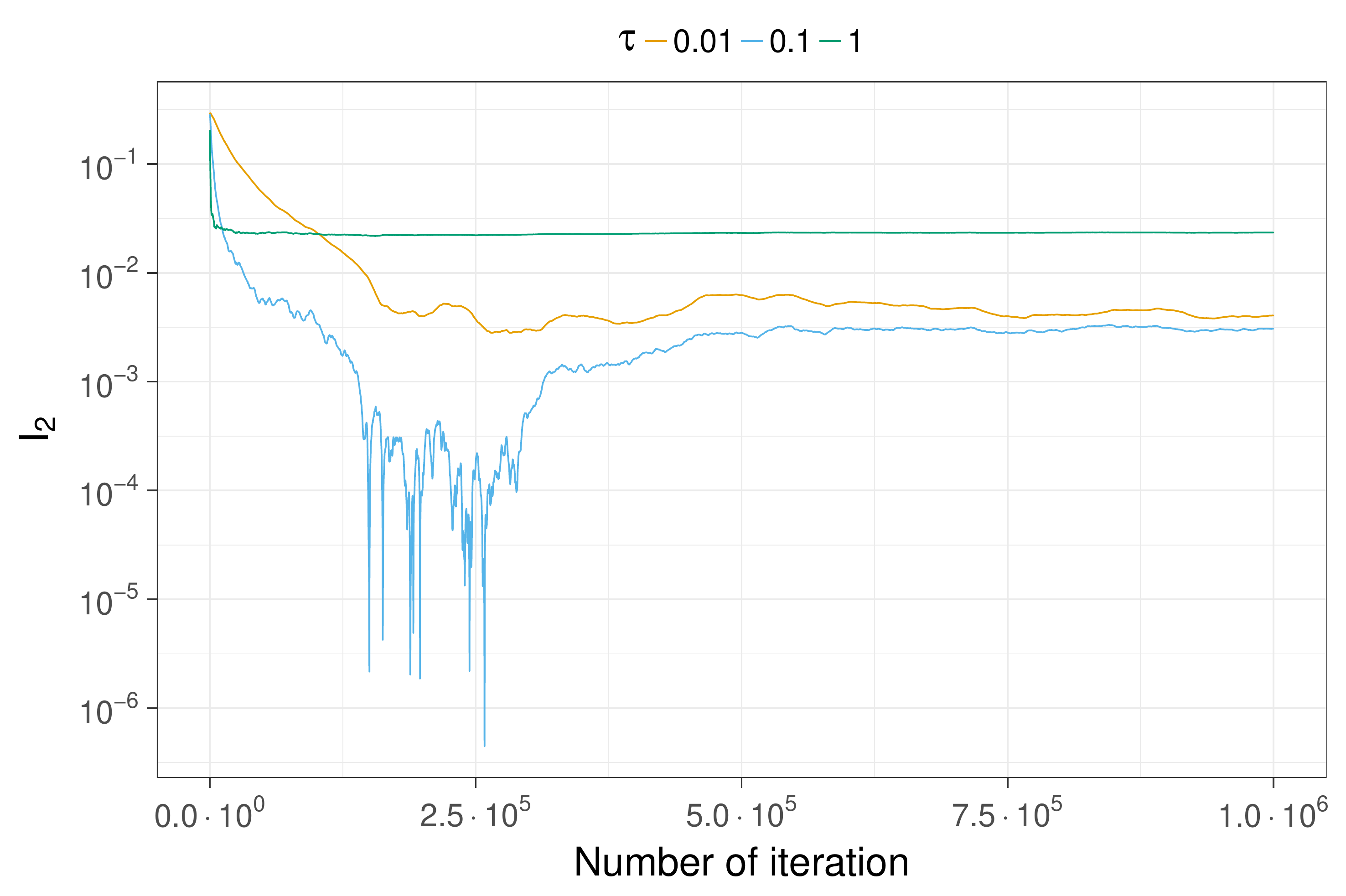}
\caption{ }
\end{subfigure}
\begin{subfigure}{0.32\textwidth}
\includegraphics[width=0.9\linewidth, height=4cm]{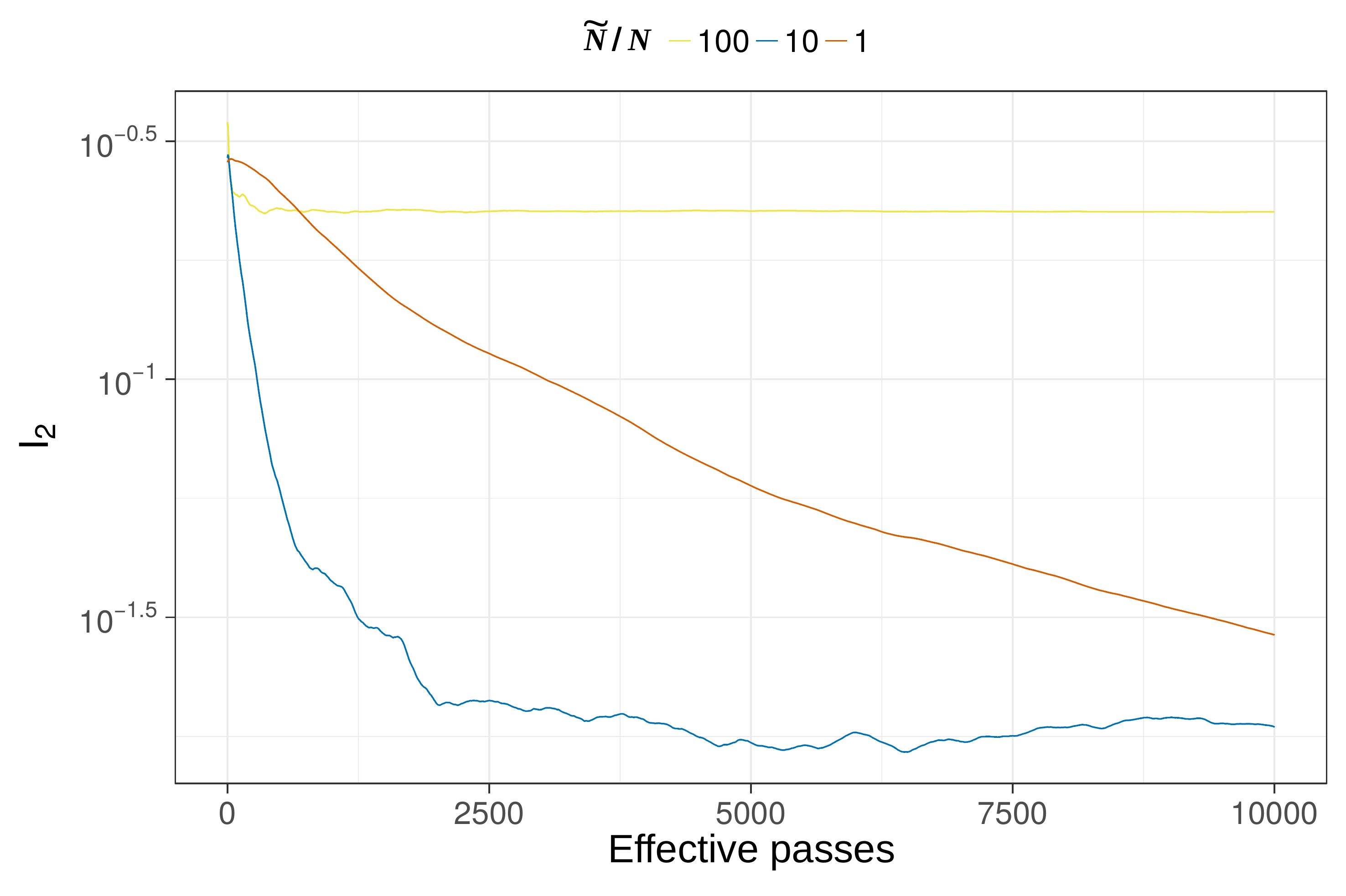}
\caption{ }
\end{subfigure}
\begin{subfigure}{0.32\textwidth}
\includegraphics[width=0.9\linewidth, height =4cm]{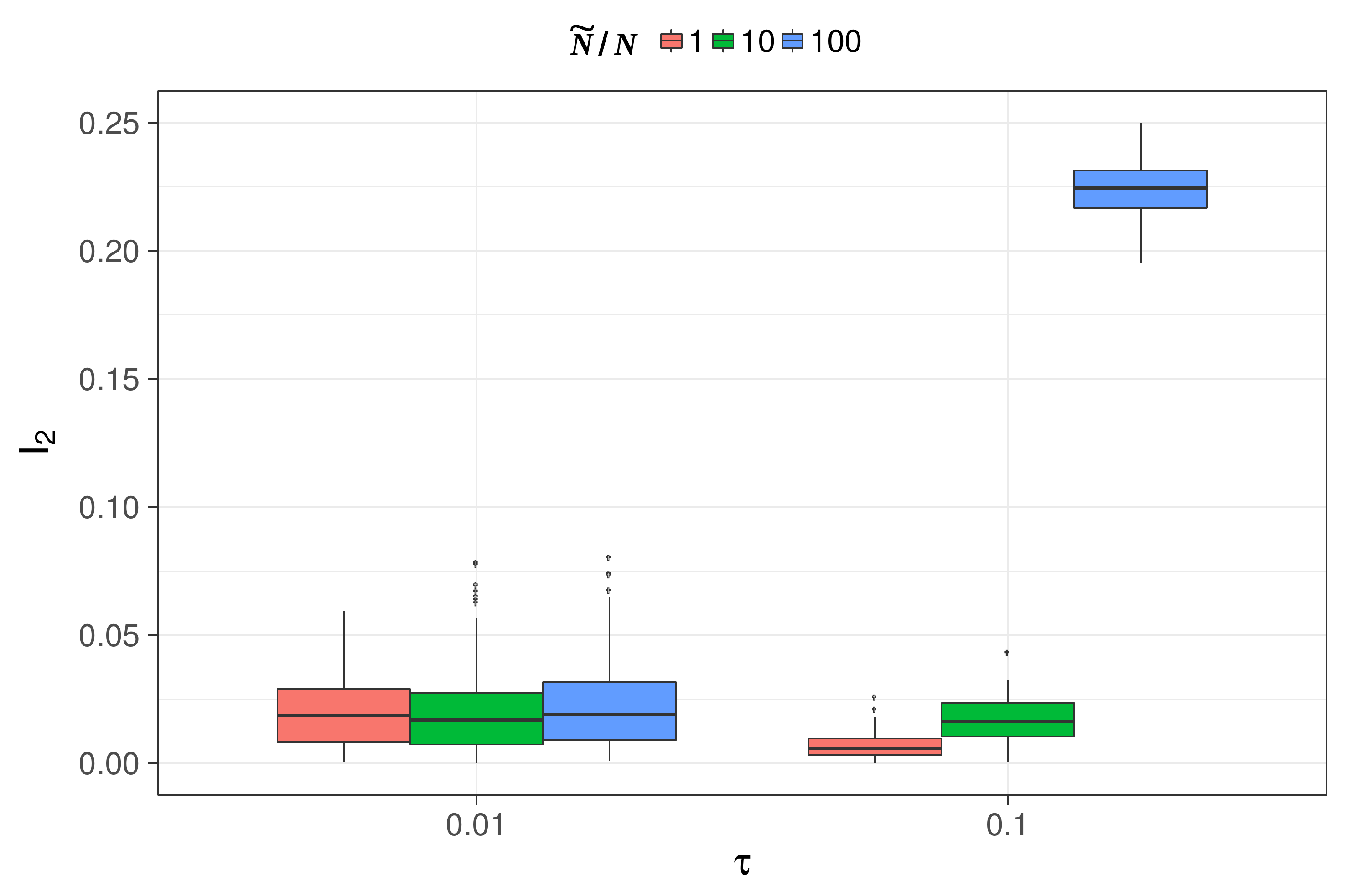}
\caption{ }
\end{subfigure}
 
\begin{subfigure}{0.32\textwidth}
\includegraphics[width=0.9\linewidth, height=4cm]{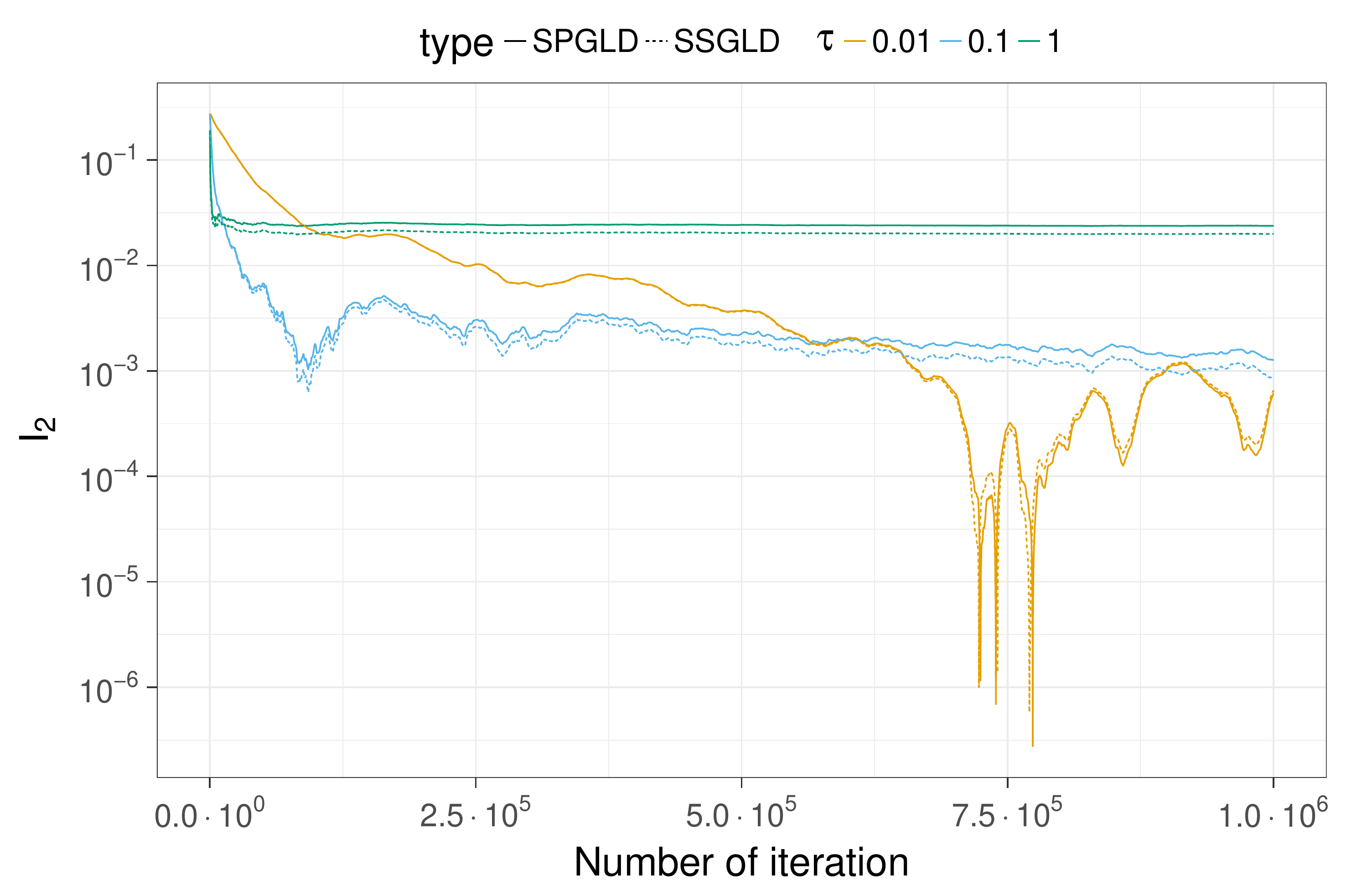}
\caption{ }
\end{subfigure}
\begin{subfigure}{0.32\textwidth}
\includegraphics[width=0.9\linewidth, height=4cm]{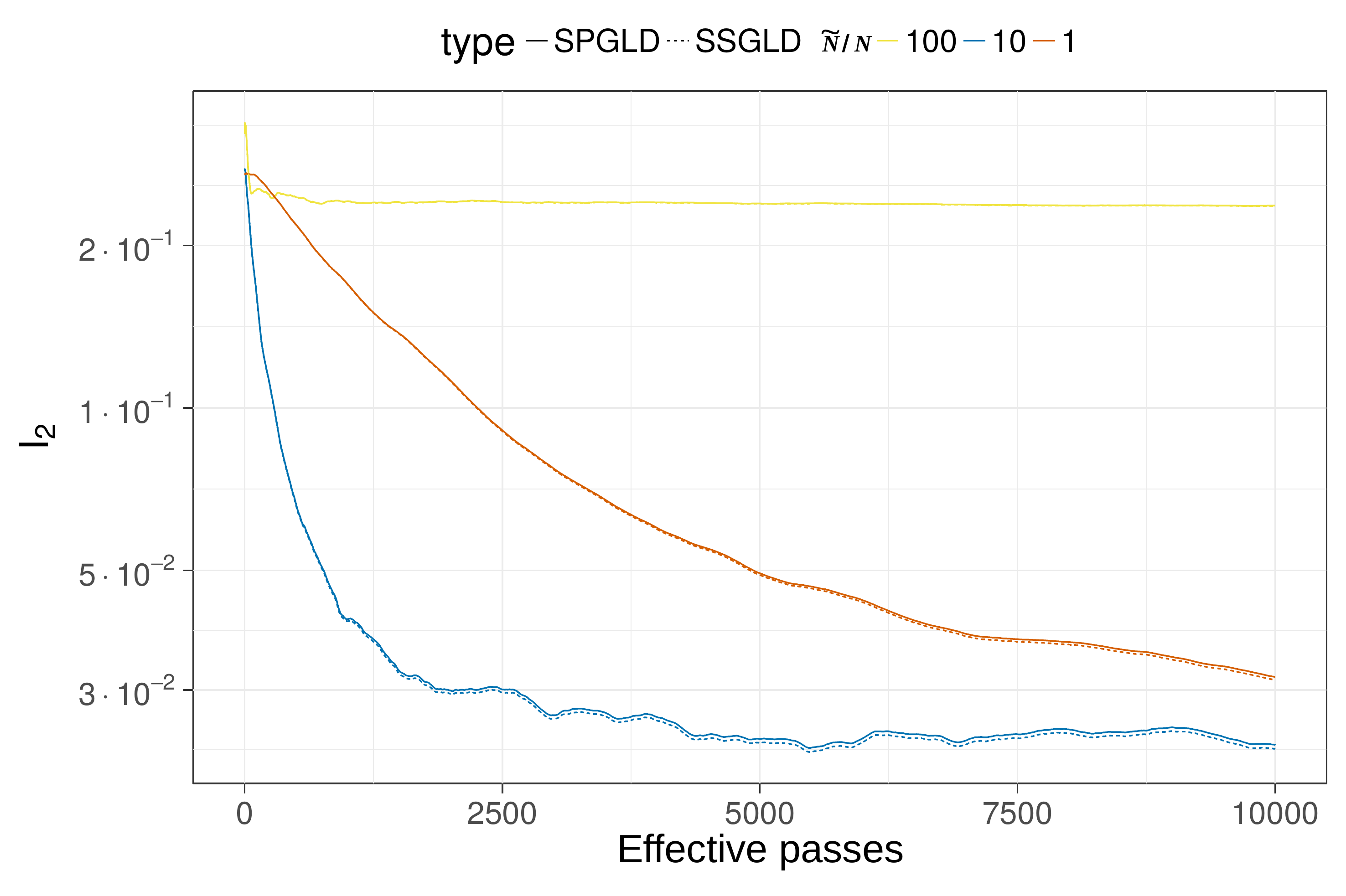}
\caption{ }
\end{subfigure}
\begin{subfigure}{0.32\textwidth}
\includegraphics[width=0.9\linewidth, height =4cm]{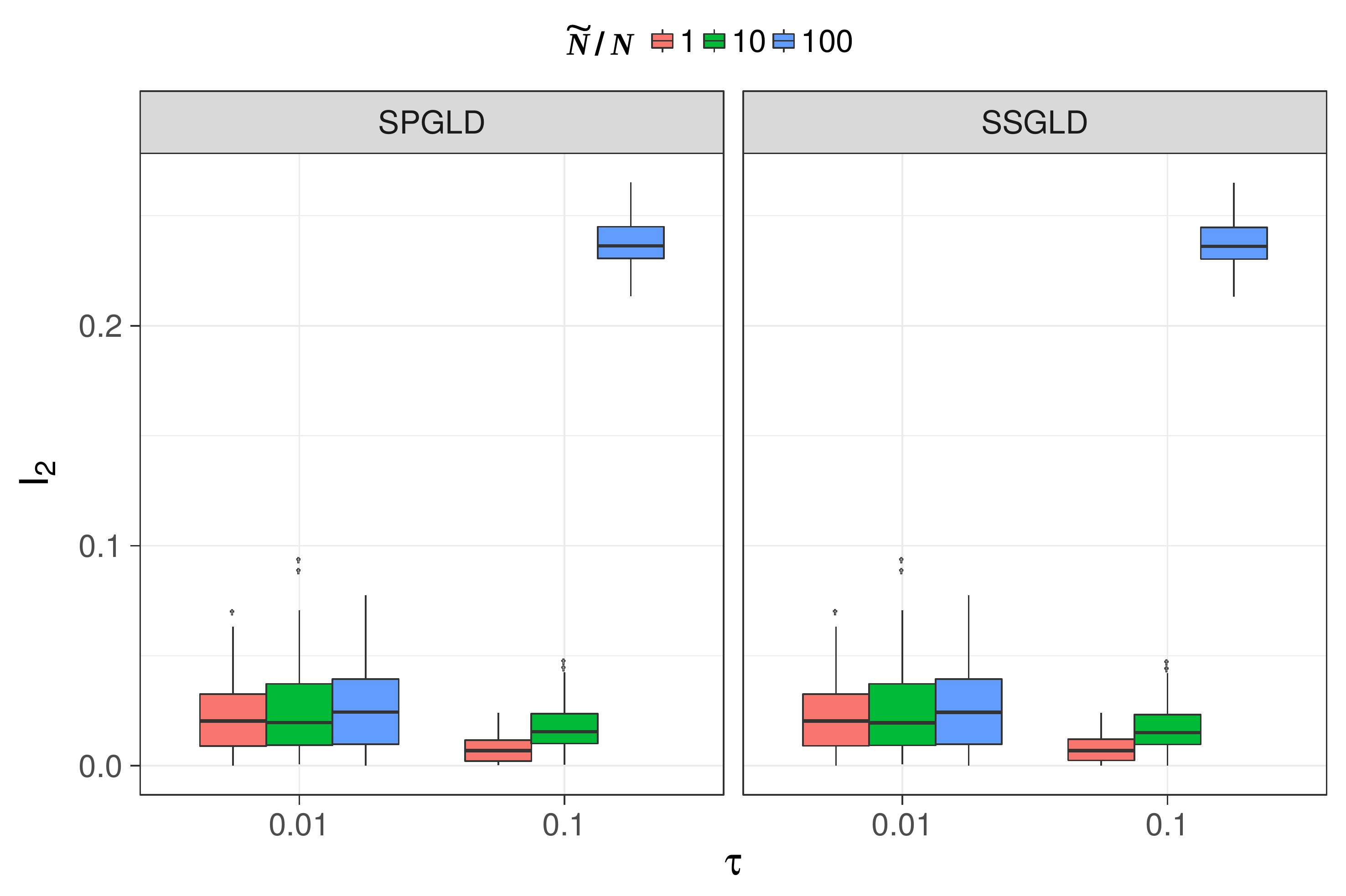}
\caption{}
\end{subfigure}

 \caption{Mean absolute error of estimator of $I_2$ for Australian Credit Approval dataset: (top row) results for $\mathrm{p}_{1,2}(\cdot |(X,Y)_{i \in \{1,\ldots,N\}})$; 
 (a) convergence of SPGLD for $\tilde{N}=1$ , 
(b) convergence of SPGLD in terms of effective passes for $\tau=0.1$, (c) boxplot of SPGLD for full runs;
(bottom row) results for $\mathrm{p}_1(\cdot |(X,Y)_{i \in \{1,\ldots,N\}})$; (d) convergence of SPGLD and SSGLD for $\tilde{N}=N$ , 
(e) convergence of SPGLD and SSGLD in terms of effective passes for $\tau=0.1$, (f) boxplot of SPGLD and SSGLD for full run}
\label{fig:austra}
\end{figure}

 \begin{figure}[!h]
\begin{subfigure}{0.32\textwidth}
\includegraphics[width=0.9\linewidth, height=4cm]{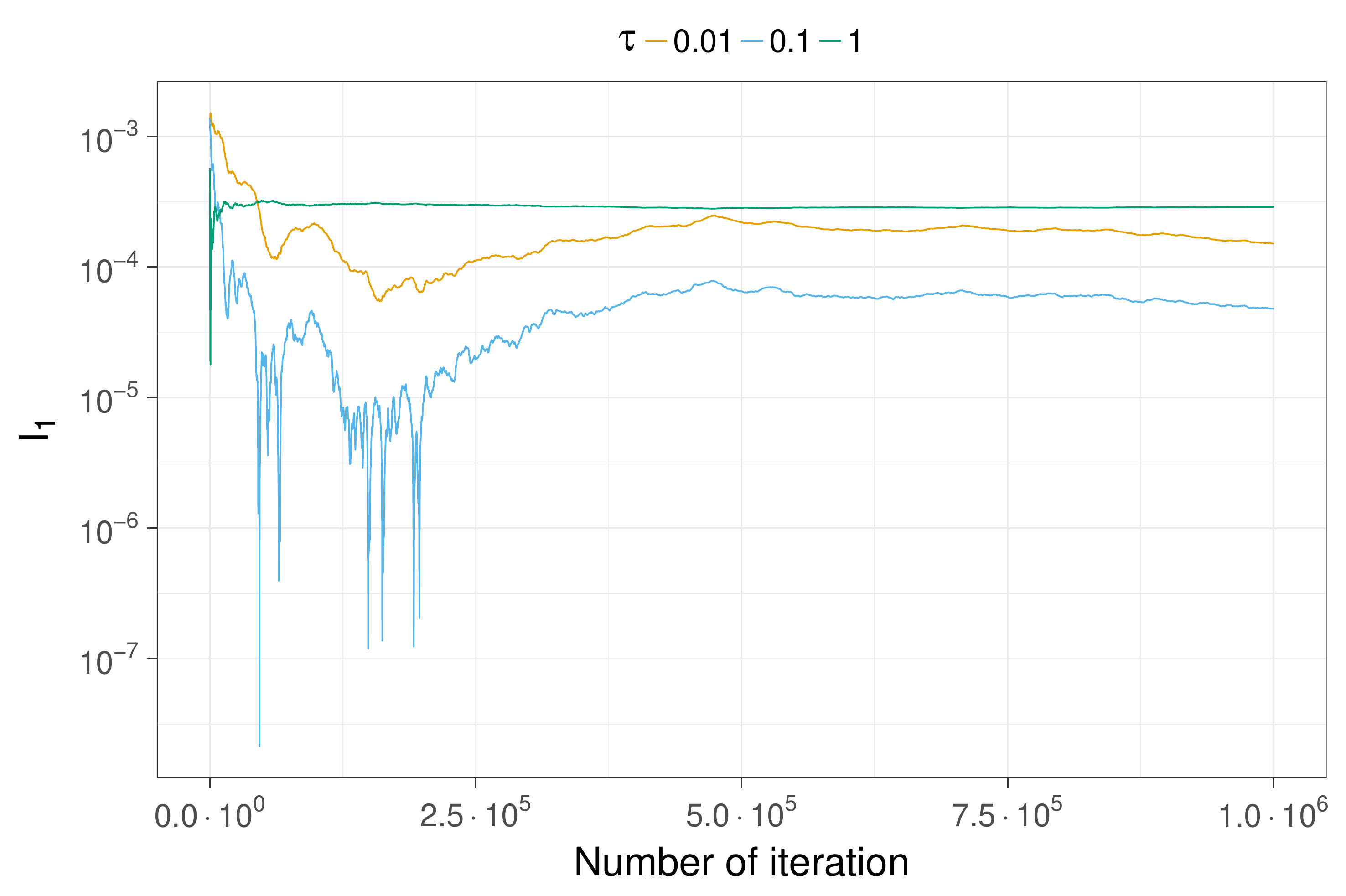}
\caption{ }
\end{subfigure}
\begin{subfigure}{0.32\textwidth}
\includegraphics[width=0.9\linewidth, height=4cm]{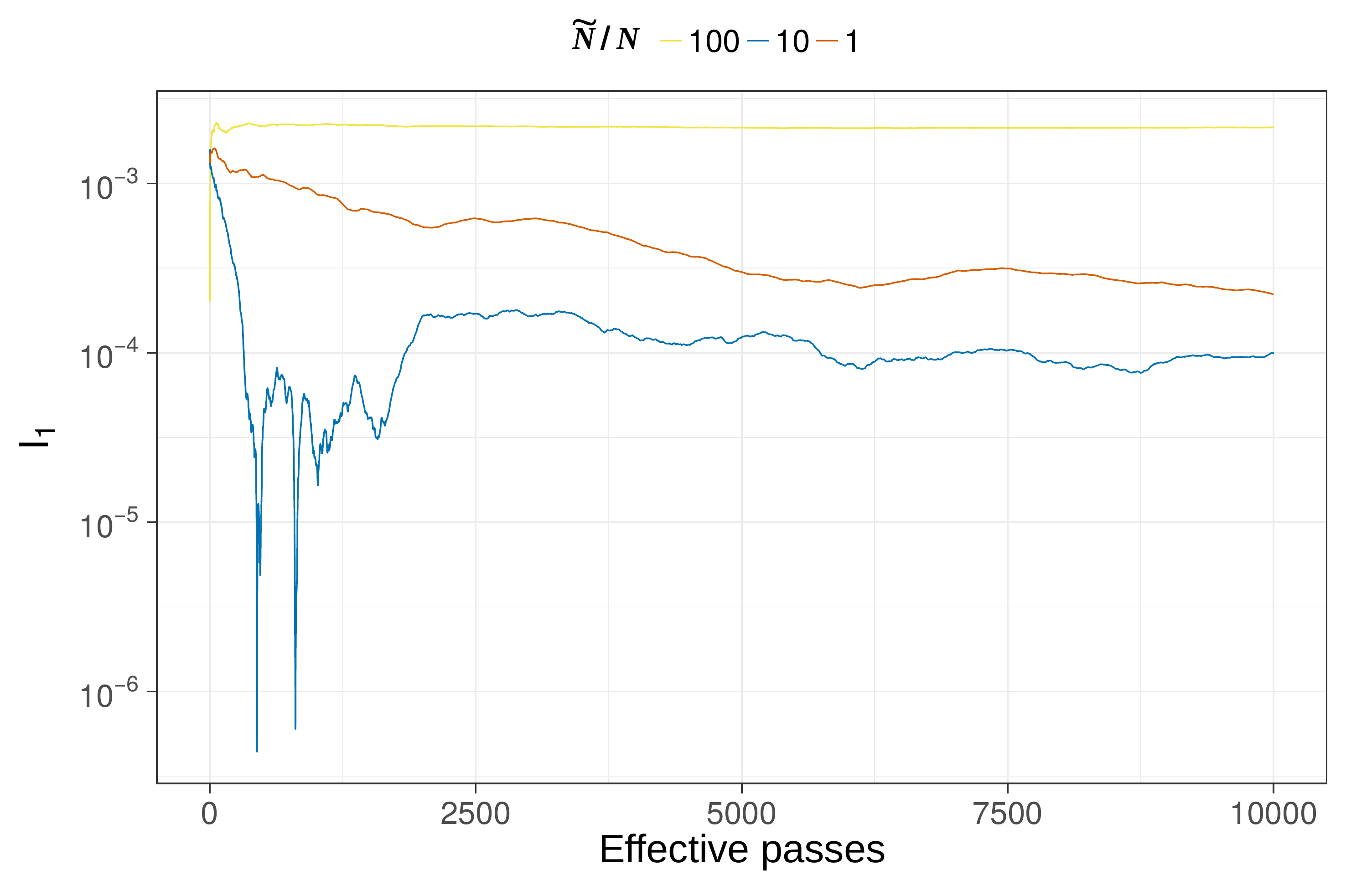}
\caption{ }
\end{subfigure}
\begin{subfigure}{0.32\textwidth}
\includegraphics[width=0.9\linewidth, height =4cm]{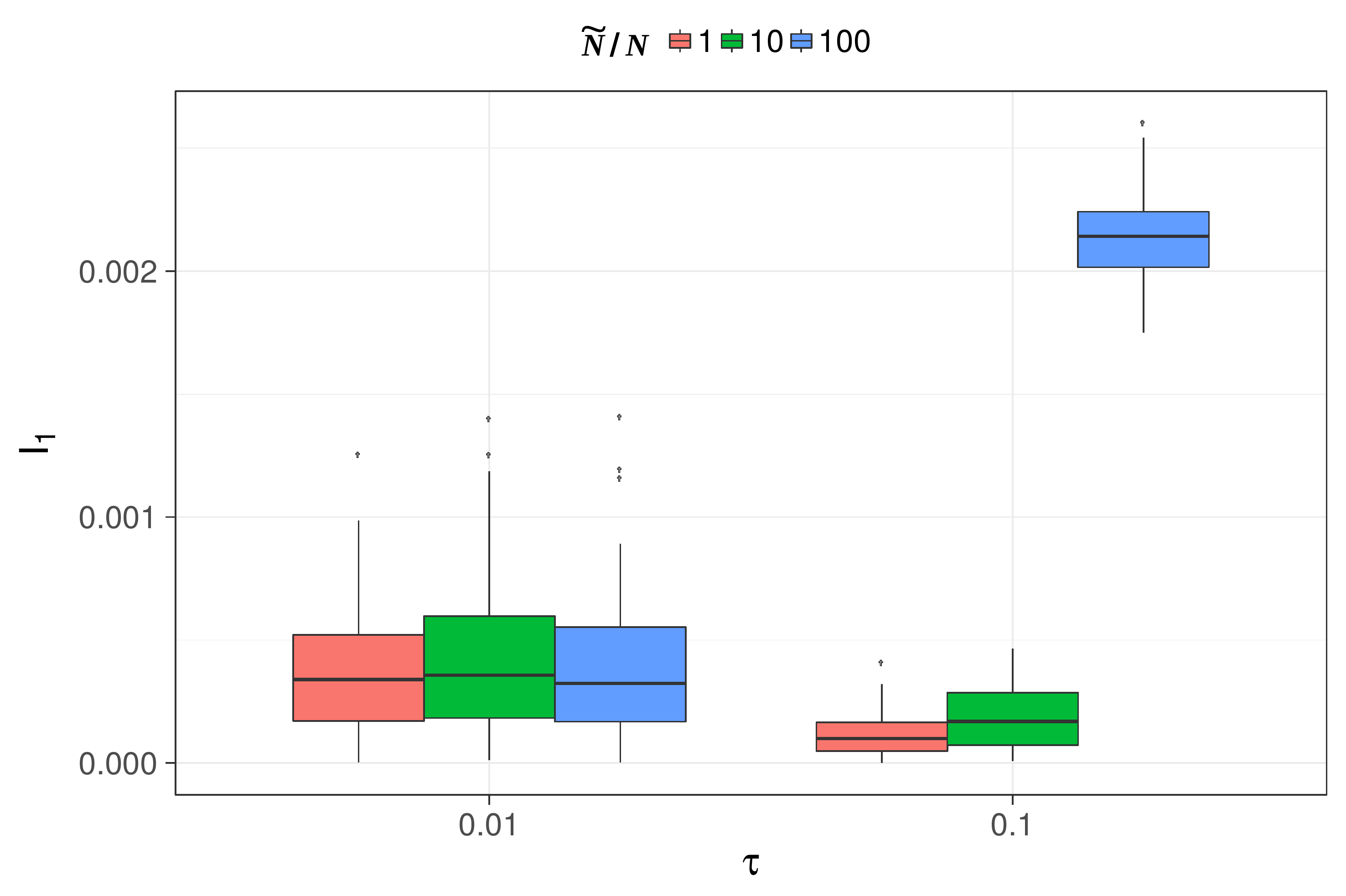}
\caption{ }
\end{subfigure}
 
\begin{subfigure}{0.32\textwidth}
\includegraphics[width=0.9\linewidth, height=4cm]{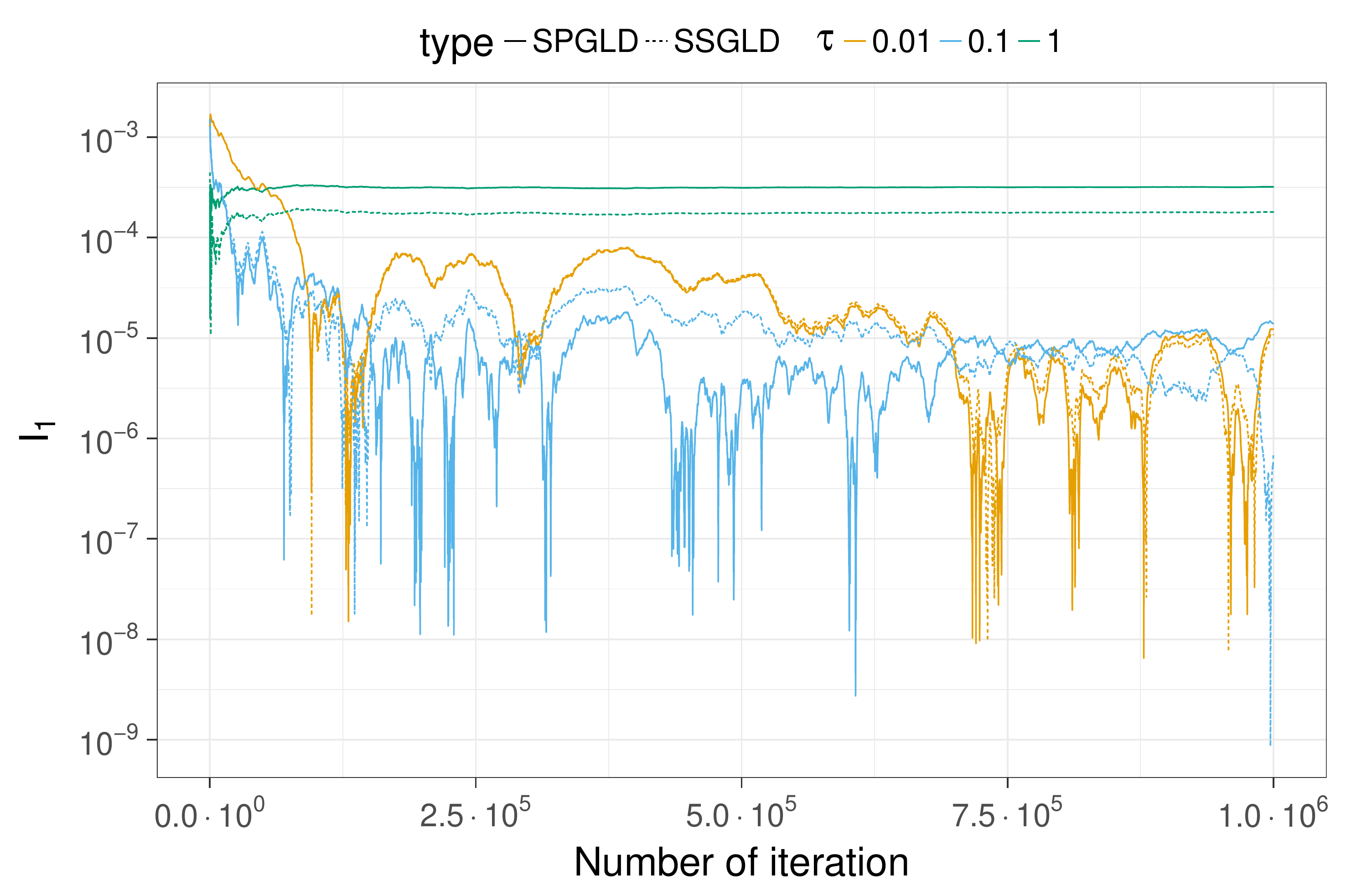}
\caption{ }
\end{subfigure}
\begin{subfigure}{0.32\textwidth}
\includegraphics[width=0.9\linewidth, height=4cm]{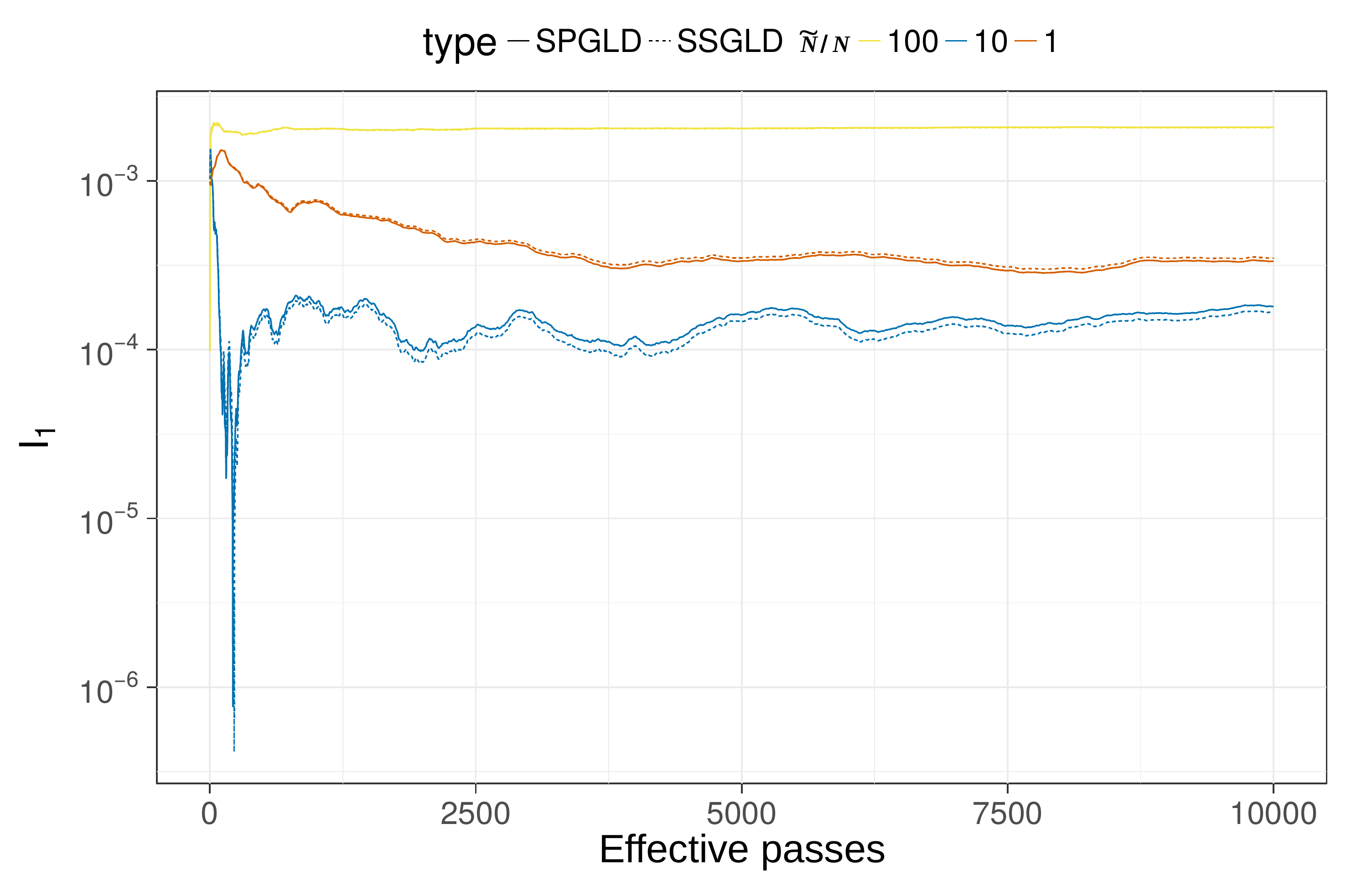}
\caption{ }
\end{subfigure}
\begin{subfigure}{0.32\textwidth}
\includegraphics[width=0.9\linewidth, height =4cm]{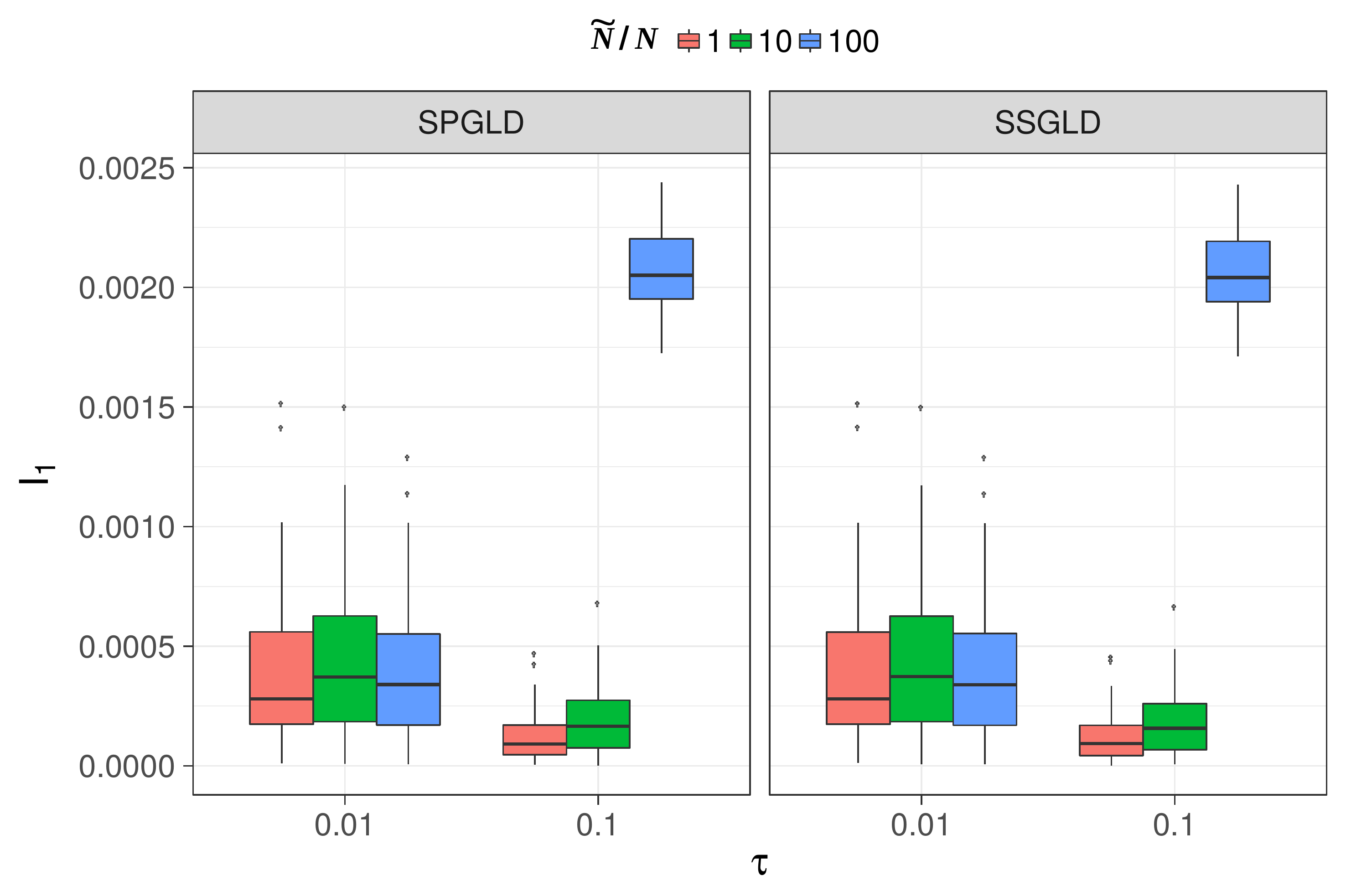}
\caption{}
\end{subfigure}

 \caption{Mean absolute error of estimator of $I_1$ for Australian Credit Approval dataset: (top row) results for $\mathrm{p}_{1,2}(\cdot |(X,Y)_{i \in \{1,\ldots,N\}})$; 
 (a) convergence of SPGLD for $\tilde{N}=1$ , 
(b) convergence of SPGLD in terms of effective passes for $\tau=0.1$, (c) boxplot of SPGLD for full runs;
(bottom row) results for $\mathrm{p}_1(\cdot |(X,Y)_{i \in \{1,\ldots,N\}})$; (d) convergence of SPGLD and SSGLD for $\tilde{N}=N$ , 
(e) convergence of SPGLD and SSGLD in terms of effective passes for $\tau=0.1$, (f) boxplot of SPGLD and SSGLD for full run}
\label{fig:austra_theta}
\end{figure}

 \begin{figure}[!h]
\begin{subfigure}{0.32\textwidth}
\includegraphics[width=0.9\linewidth, height=4cm]{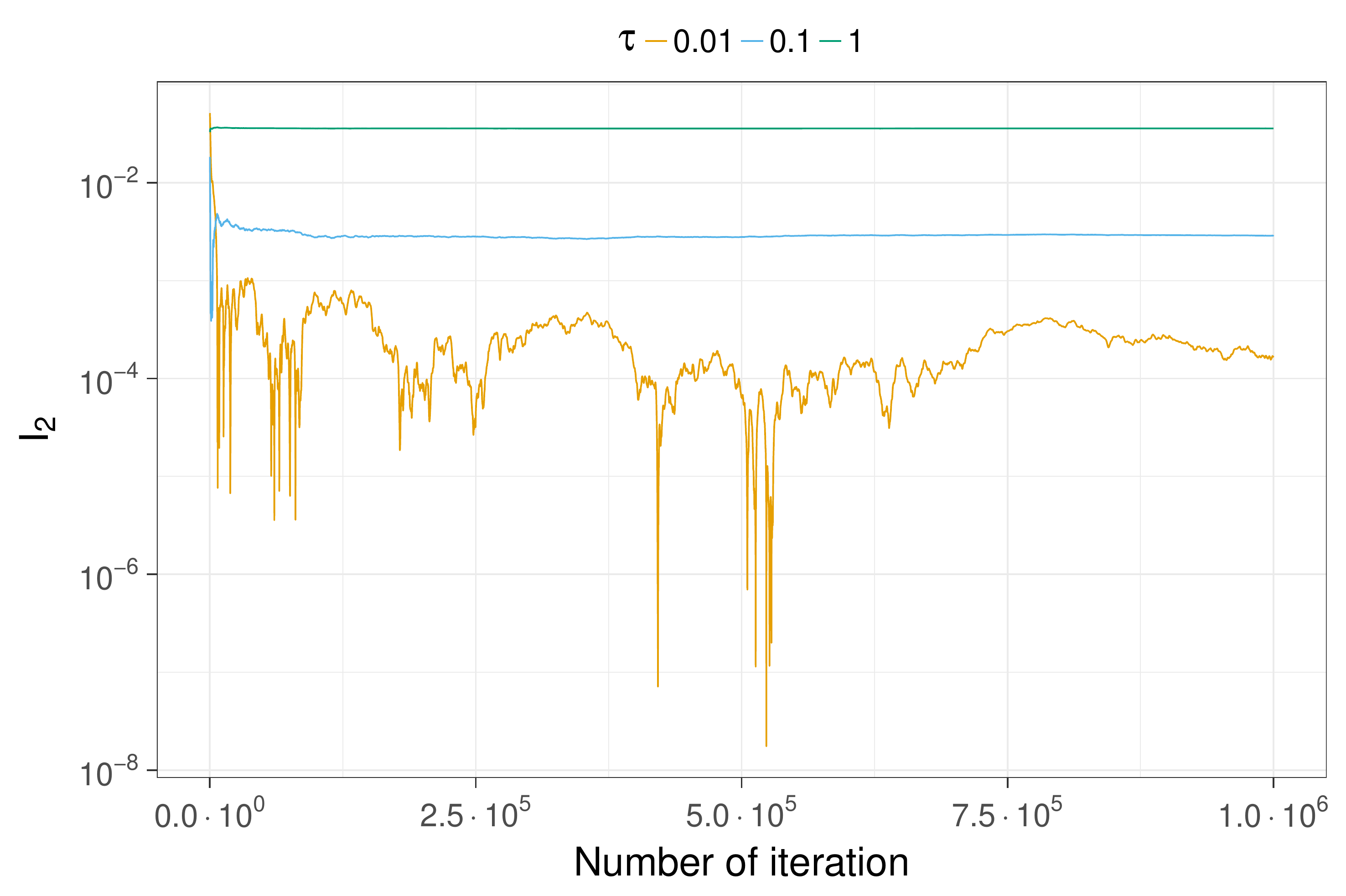}
\caption{ }
\end{subfigure}
\begin{subfigure}{0.32\textwidth}
\includegraphics[width=0.9\linewidth, height=4cm]{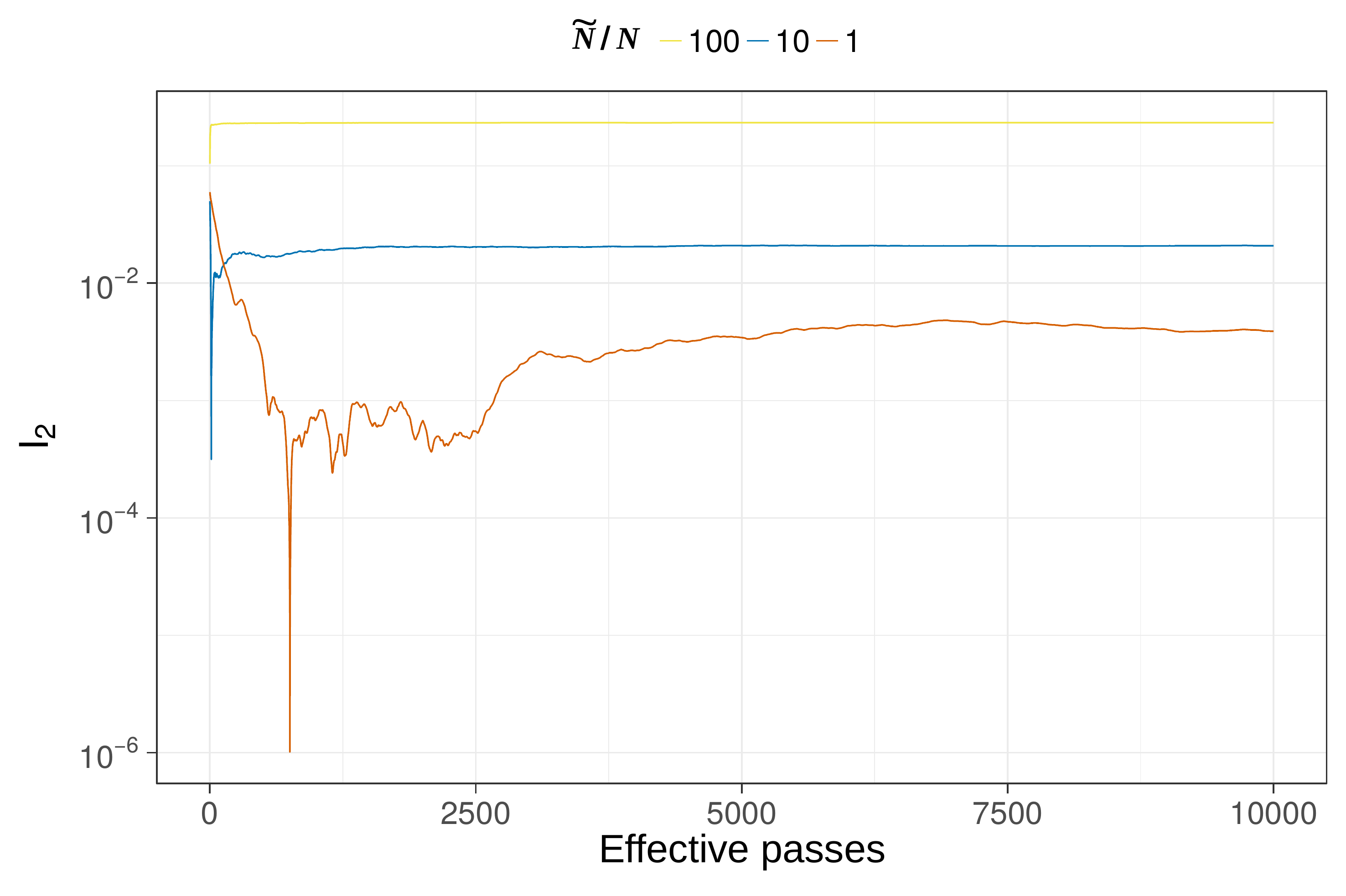}
\caption{ }
\end{subfigure}
\begin{subfigure}{0.32\textwidth}
\includegraphics[width=0.9\linewidth, height =4cm]{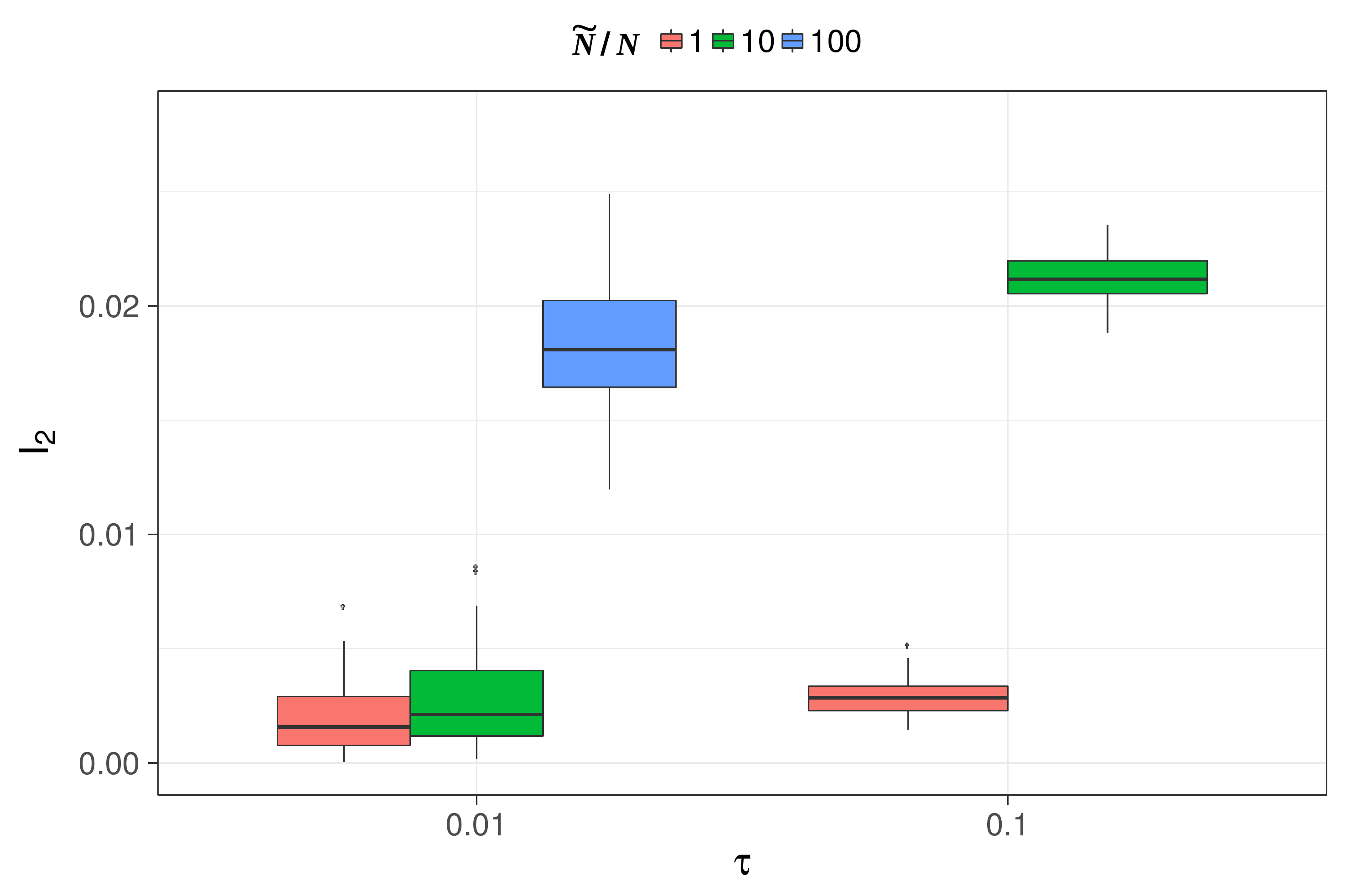}
\caption{ }
\end{subfigure}
 
\begin{subfigure}{0.32\textwidth}
\includegraphics[width=0.9\linewidth, height=4cm]{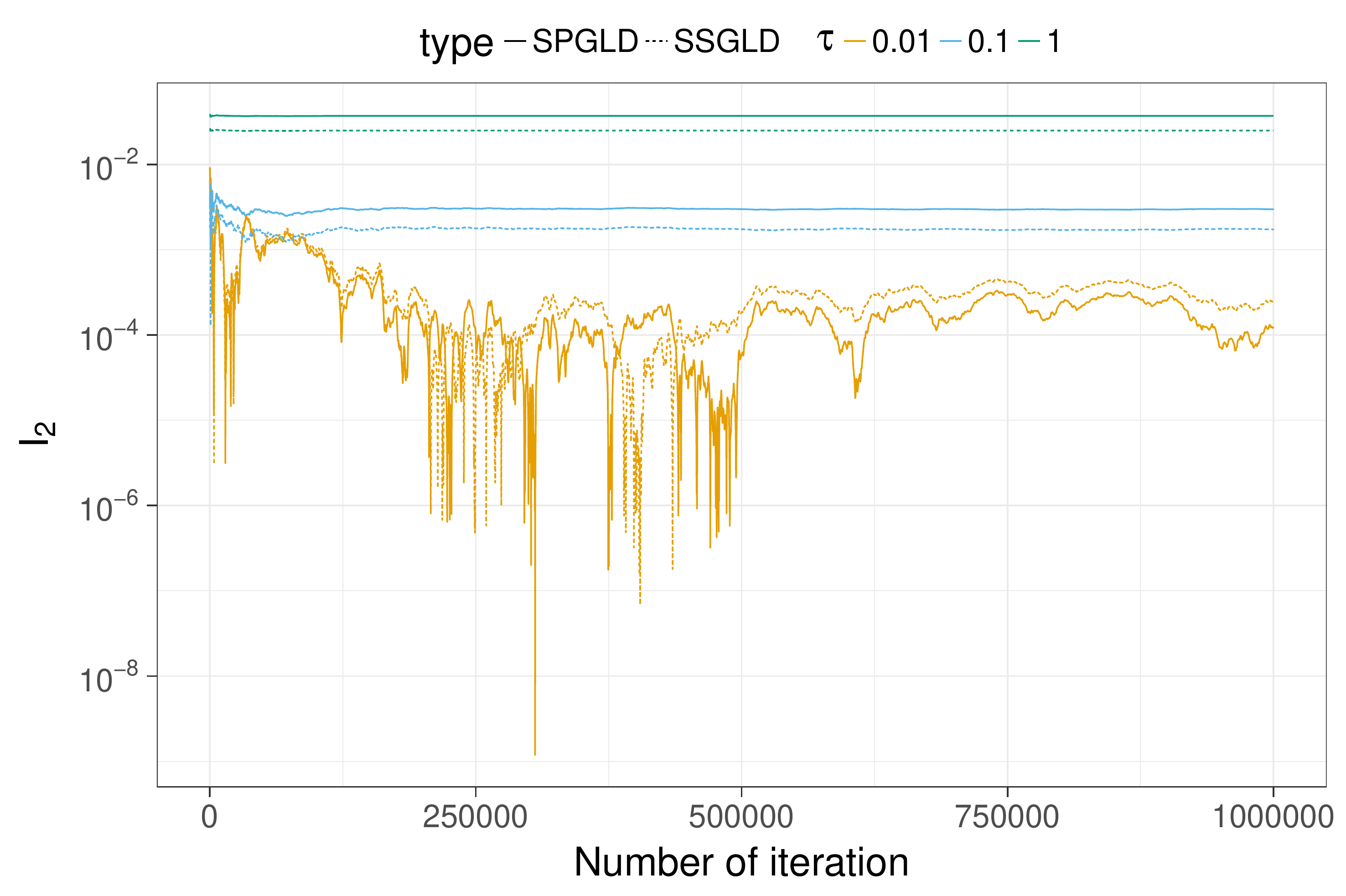}
\caption{ }
\end{subfigure}
\begin{subfigure}{0.32\textwidth}
\includegraphics[width=0.9\linewidth, height=4cm]{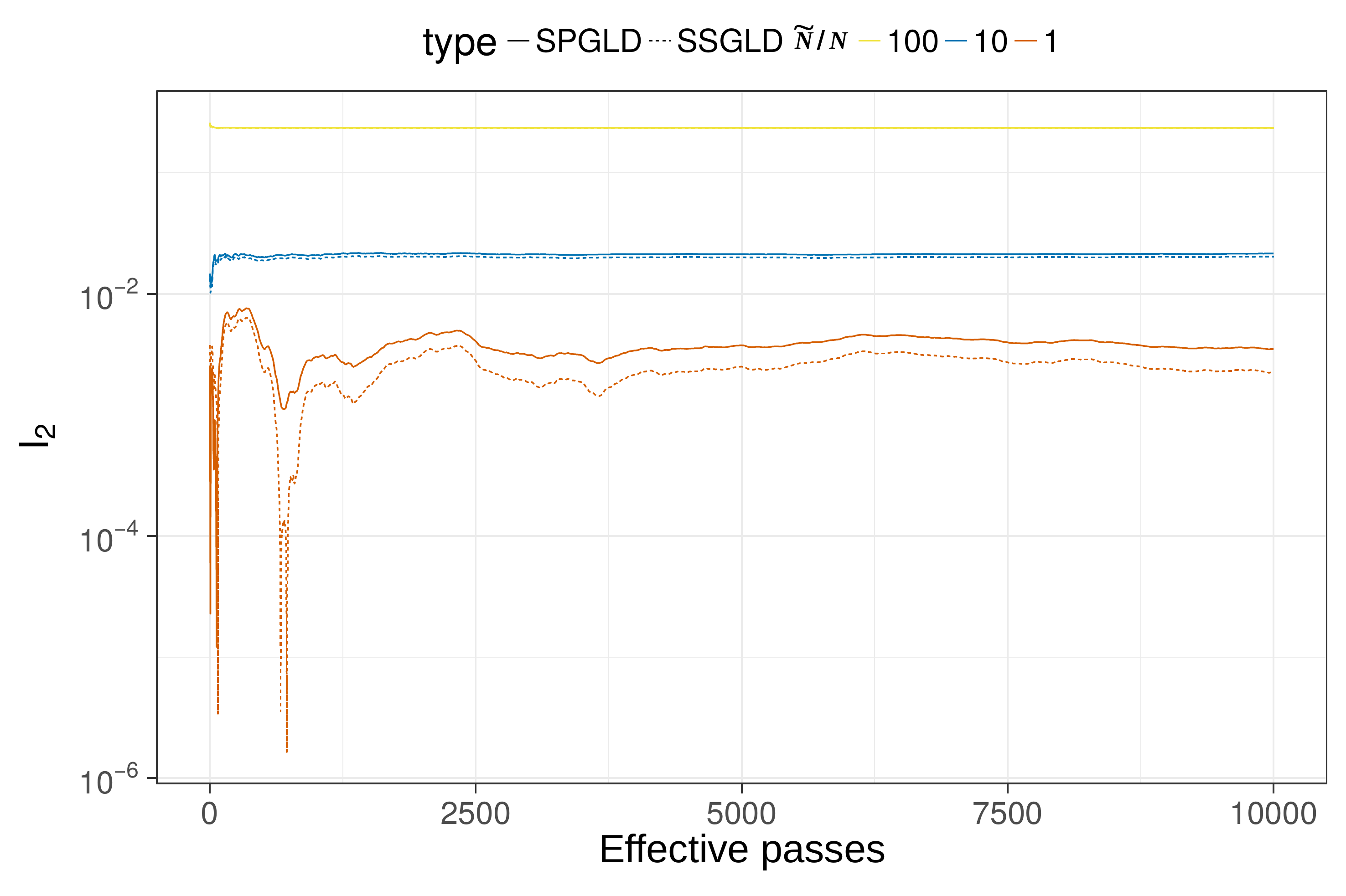}
\caption{ }
\end{subfigure}
\begin{subfigure}{0.32\textwidth}
\includegraphics[width=0.9\linewidth, height =4cm]{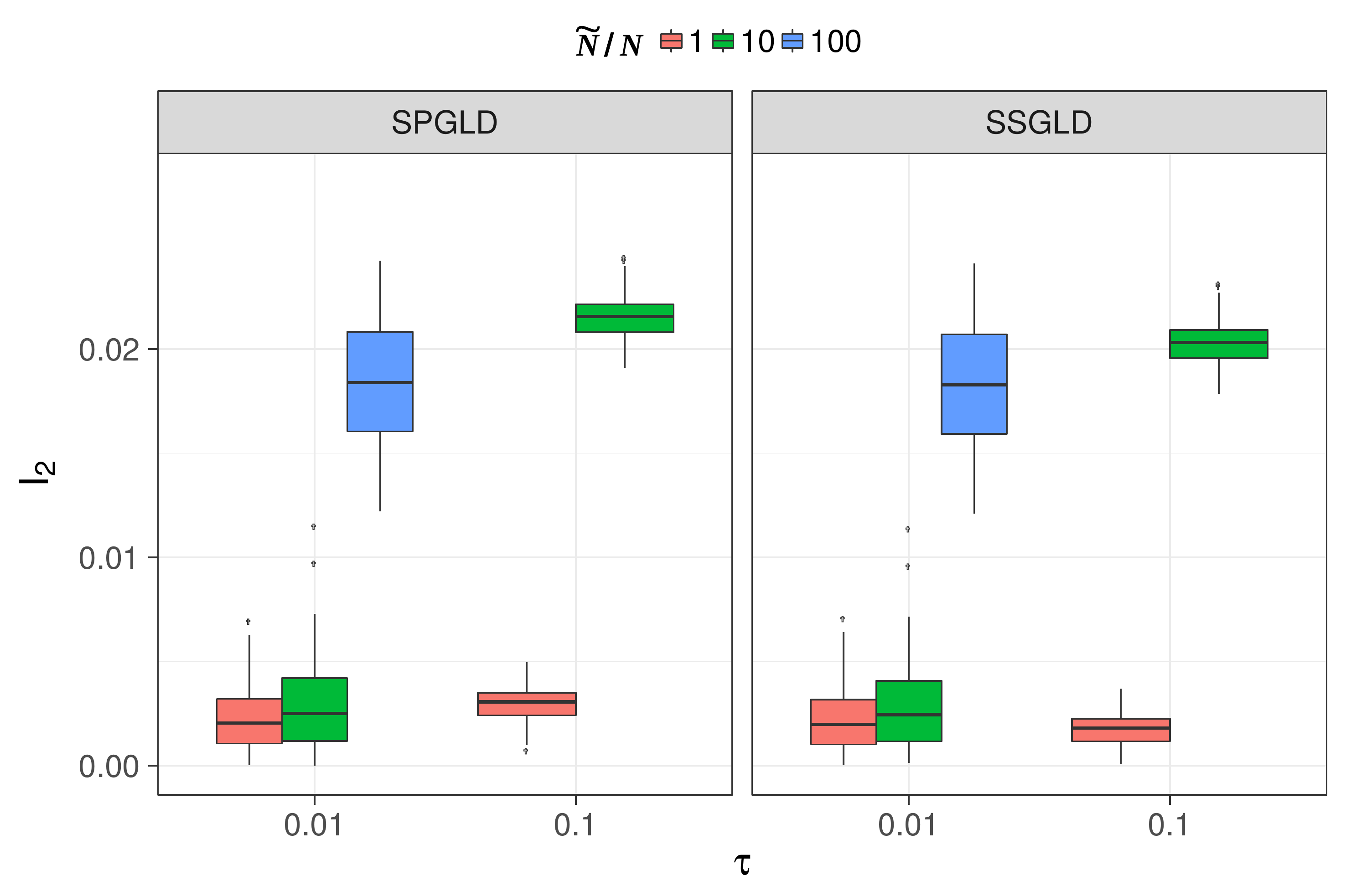}
\caption{}
\end{subfigure}

 \caption{Mean absolute error of estimator of $I_2$ for Heart disease dataset: (top row) results for  $\mathrm{p}_{1,2}(\cdot |(X,Y)_{i \in \{1,\ldots,N\}})$; (a) convergence of SPGLD for $\tilde{N}=N$ , 
(b) convergence of SPGLD in terms of effective passes for $\tau=0.1$, (c) boxplot of SPGLD for full run;
(bottom row) results for $\mathrm{p}_1(\cdot |(X,Y)_{i \in \{1,\ldots,N\}})$; (d) convergence of SPGLD and SSGLD for $\tilde{N}=N$ , 
(e) convergence of SPGLD and SSGLD in terms of effective passes for $\tau=0.1$, (f) boxplot of SPGLD and SSGLD for full run}
\label{fig:heart}
\end{figure}

 \begin{figure}[!h]
\begin{subfigure}{0.32\textwidth}
\includegraphics[width=0.9\linewidth, height=4cm]{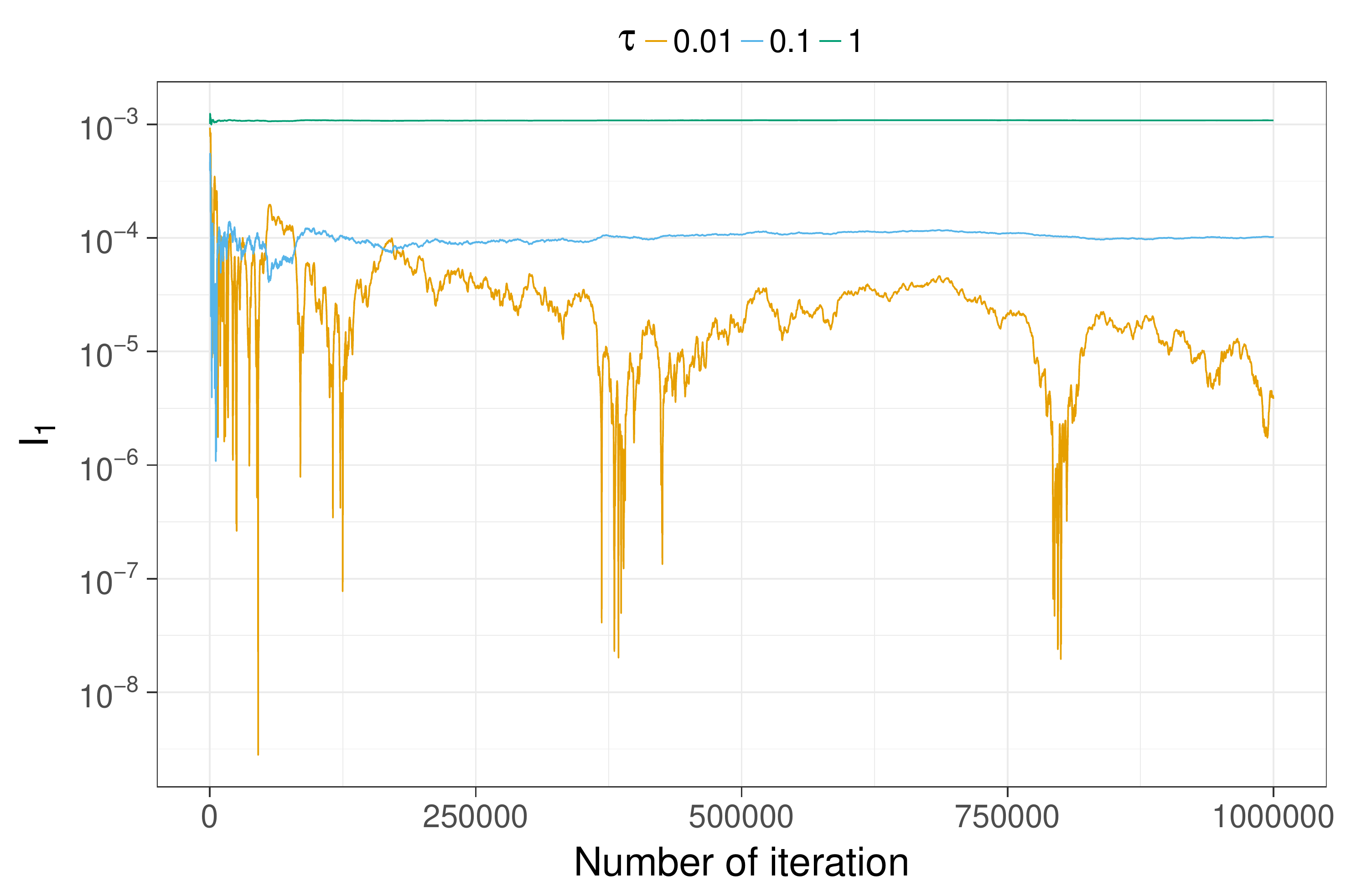}
\caption{ }
\end{subfigure}
\begin{subfigure}{0.32\textwidth}
\includegraphics[width=0.9\linewidth, height=4cm]{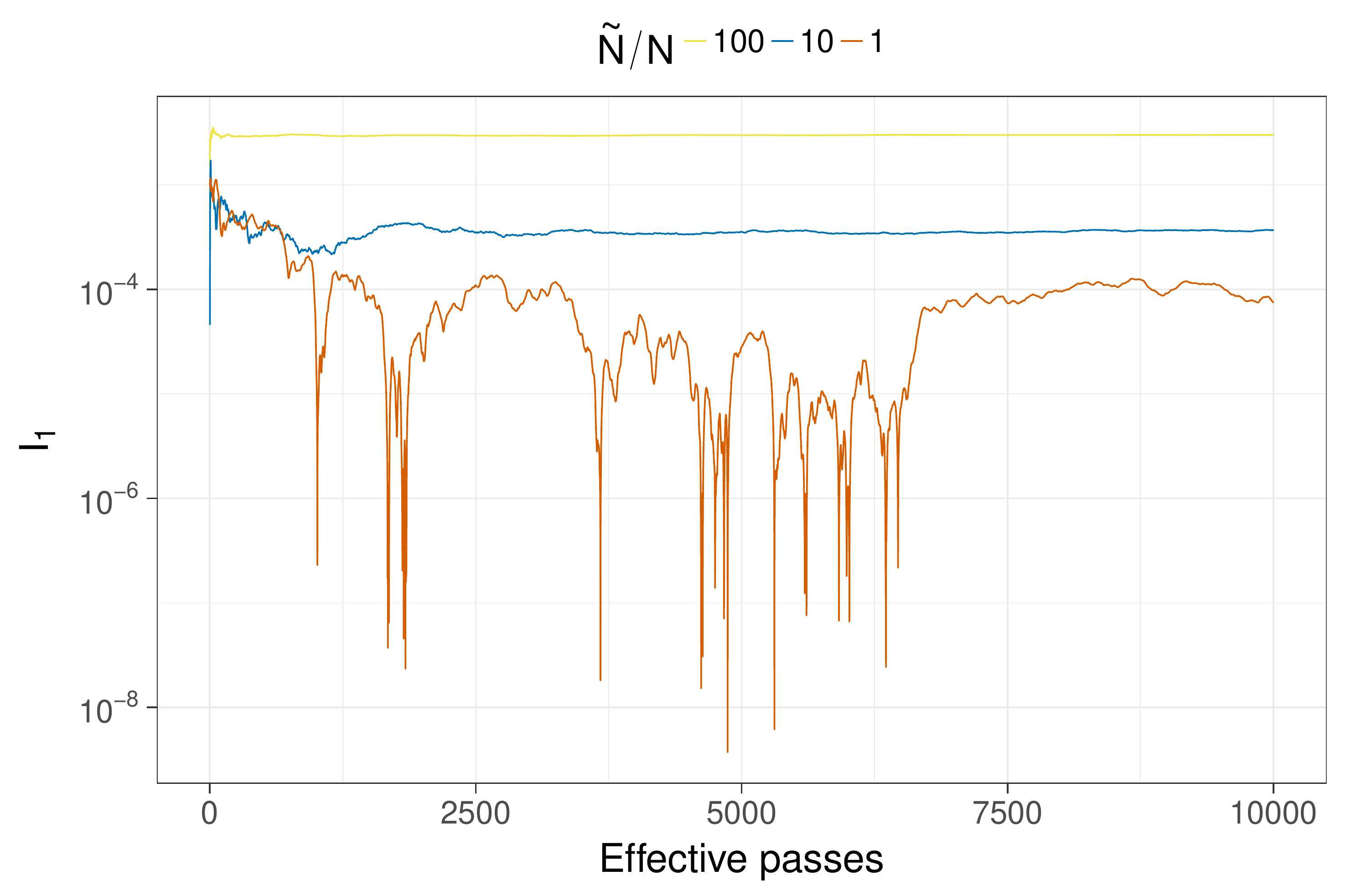}
\caption{ }
\end{subfigure}
\begin{subfigure}{0.32\textwidth}
\includegraphics[width=0.9\linewidth, height =4cm]{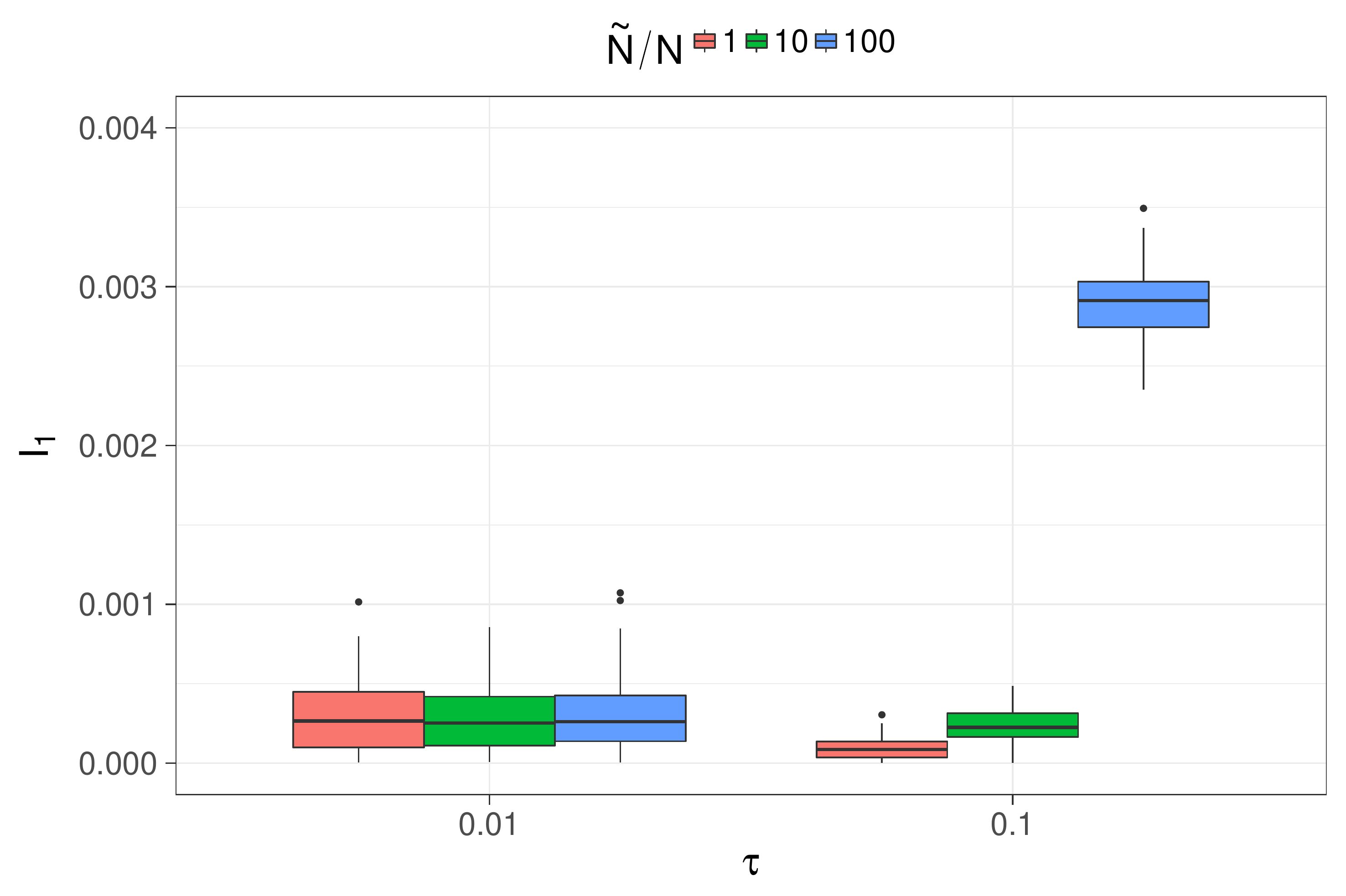}
\caption{ }
\end{subfigure}
 
\begin{subfigure}{0.32\textwidth}
\includegraphics[width=0.9\linewidth, height=4cm]{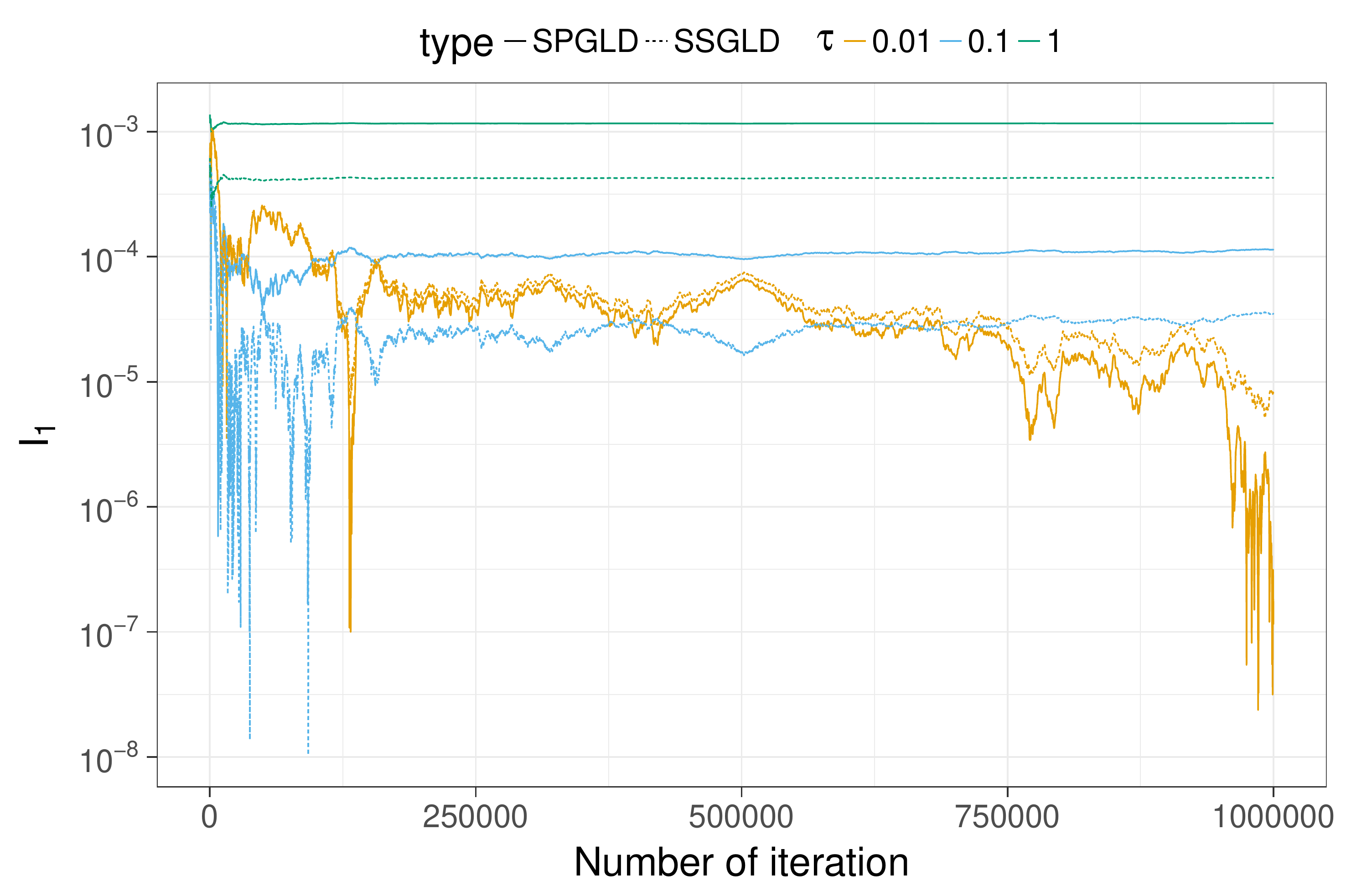}
\caption{ }
\end{subfigure}
\begin{subfigure}{0.32\textwidth}
\includegraphics[width=0.9\linewidth, height=4cm]{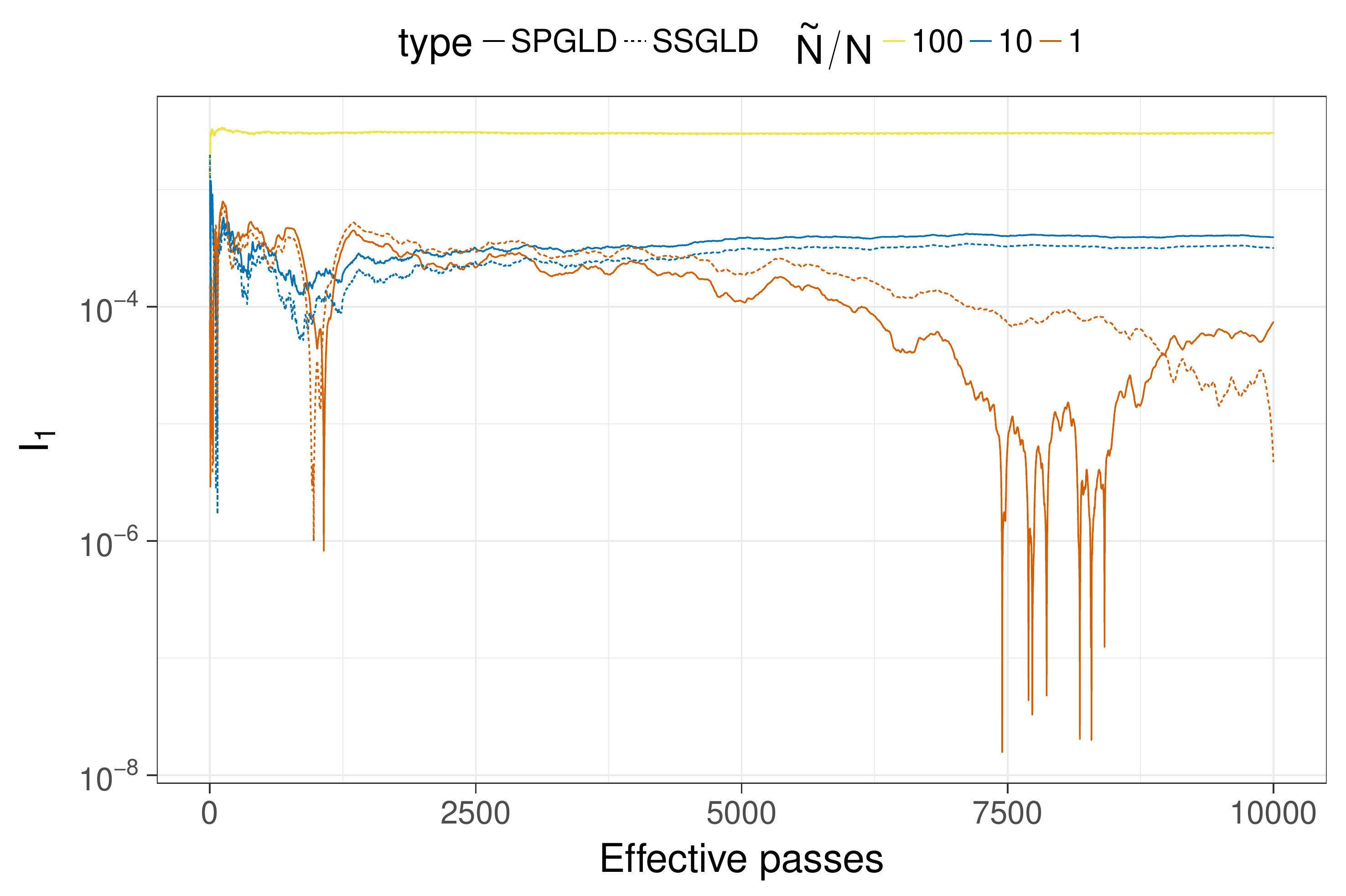}
\caption{ }
\end{subfigure}
\begin{subfigure}{0.32\textwidth}
\includegraphics[width=0.9\linewidth, height =4cm]{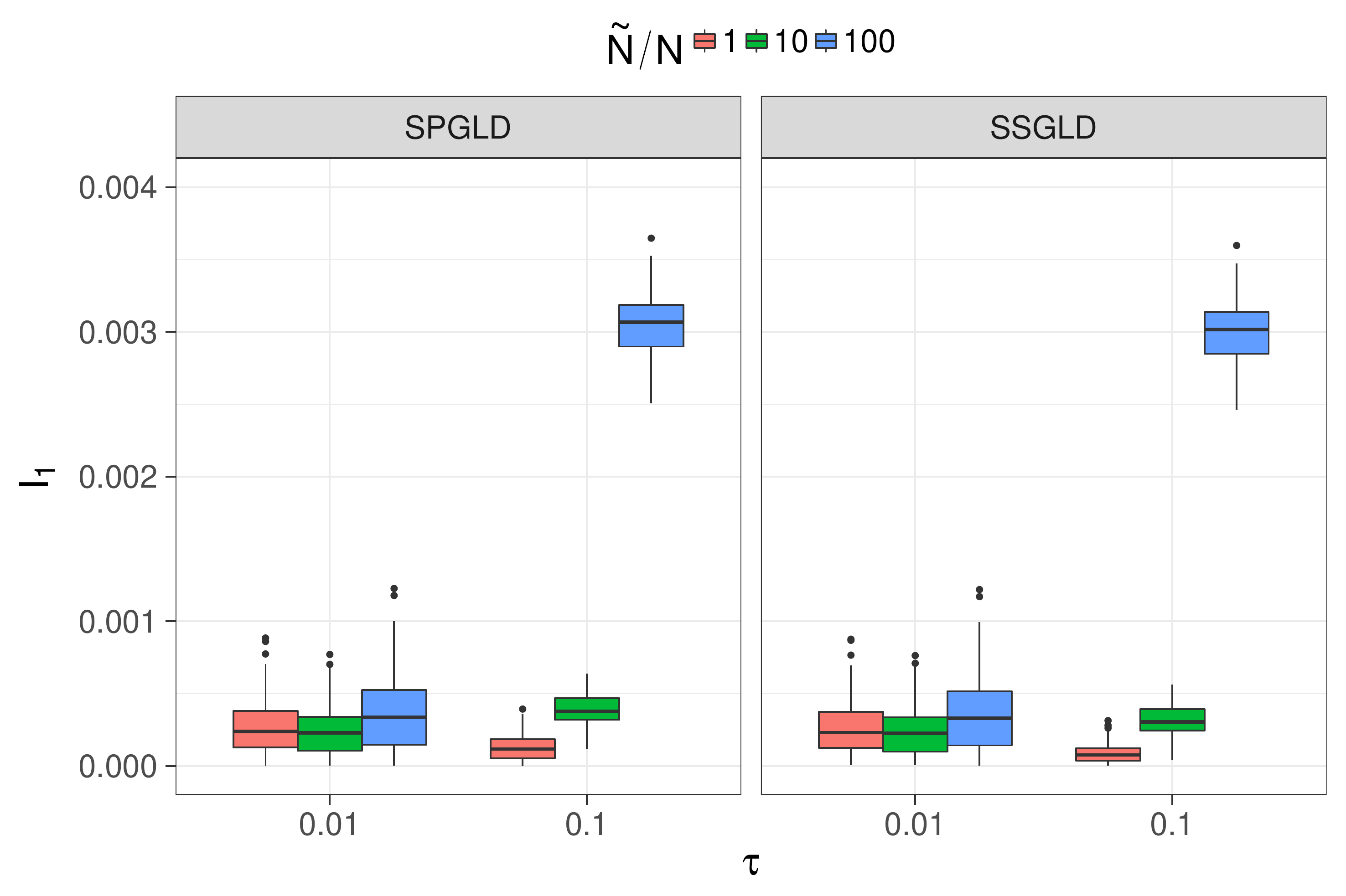}
\caption{}
\end{subfigure}

 \caption{Mean absolute error of estimator of $I_1$ for Heart disease dataset: (top row) results for  $\mathrm{p}_{1,2}(\cdot |(X,Y)_{i \in \{1,\ldots,N\}})$; (a) convergence of SPGLD for $\tilde{N}=N$ , 
(b) convergence of SPGLD in terms of effective passes for $\tau=0.1$, (c) boxplot of SPGLD for full run;
(bottom row) results for $\mathrm{p}_1(\cdot |(X,Y)_{i \in \{1,\ldots,N\}})$; (d) convergence of SPGLD and SSGLD for $\tilde{N}=N$ , 
(e) convergence of SPGLD and SSGLD in terms of effective passes for $\tau=0.1$, (f) boxplot of SPGLD and SSGLD for full run}
\label{fig:heart_theta}
\end{figure}

We consider three data sets from UCI repository \cite{Dua:2017} Heart
disease dataset ($N=270$, $d=14$), Australian Credit Approval dataset
($N=690$, $d=34$) and Musk dataset ($N=476$, $d=166$).   We approximate 
$\mathrm{p}_1(\cdot |(X,Y)_{i \in \{1,\ldots,N\}})$  using SPGLD and
SSGLD, since the associated potential is Lipschitz, whereas regarding
$\mathrm{p}_{1,2}(\cdot |(X,Y)_{i \in \{1,\ldots,N\}})$ we only apply
SPGLD. 

SPGLD is performed using the following stochastic gradient
\begin{equation*}
\tilde\Theta_1(\beta,Z) = (N/\tilde{N}) \sum_{n\in Z} \nabla \ell_n(\beta) + a_2 \beta \eqsp,
\end{equation*}
where $a_2$ is set to $0$ in the case of
$\mathrm{p}_1(\cdot |(X,Y)_{i \in \{1,\ldots,N\}})$ and $Z$ is a
uniformly distributed random subset of $\{1,\ldots,N\}$ with cardinal
$\tilde{N} \in \{1,\ldots,N\}$. In addition, the proximal operator
associated with $\beta \mapsto a_1 \sum_{i=1}^d \abs{\beta_i}$ is
given for all $\beta \in \rset^d$ and $\gamma >0$ by (see
\eg~\cite{parikh:boyd:2013})
\begin{equation*}
  (\prox_{a_1,\ell_1}^{\gamma}(\beta))_i = \signop (\beta_i)
  \max(\abs{\beta_i}-a_1 \gamma,0) \eqsp, \, \text{ for $i \in \{1,\ldots,d\}$}\eqsp.
\end{equation*}
SSGLD is performed using the following stochastic subgradient
\begin{equation*}
\Theta(\beta,Z) = (N/\tilde{N}) \sum_{n\in Z} \nabla \ell_n(\beta) + a_1 \sum_{i=1}^d \signop(\beta_i) \bfe_i \eqsp,
\end{equation*}
where $(\bfe_i)_{i \in \{1,\ldots,d\}}$ denotes the canonical basis and $Z$ is a 
uniformly distributed random subset of $\{1,\ldots,N\}$ with cardinal
$\tilde{N} \in \{1,\ldots,N\}$.

Based on the results of SPGLD and SSGLD, we estimate the posterior
mean $I_1$ and $I_2$ of the test functions $\beta \mapsto \beta_1$ and
$\beta \mapsto (1/d) \sum_{i=1}^d \beta_i^2$. For our experiments, we
use constant stepsizes $\gamma$ of the form $\tau(L+m)^{-1}$ with
$\tau = 0.01, 0.1, 1$ and for stochastic (sub) gradient we use
$\tilde{N}=N,\lfloor N/10\rfloor,\lfloor N/100\rfloor$.  For all datasets and all settings of $\tau$,
$\tilde{N}$ we run $100$ replications of SPGLD (SSGLD), where each run
was of length $10^6$.  For each set of parameters we estimate
$I_1,I_2$ and we compute the absolute errors, where the true value
were obtained by prox-MALA (see \cite{pereyra:2015}) with $10^7$ iterations and stepsize
corresponding to optimal acceptance ratio $\approx 0.5$, see
\cite{RobRos1998}. The results for $I_2$ are presented on \Cref{fig:austra},
\Cref{fig:heart} and \Cref{fig:musk} for Australian Credit Approval
dataset, Heart disease dataset and Musk data respectively.
The results for $I_1$ are presented on \Cref{fig:austra_theta},
\Cref{fig:heart_theta} and \Cref{fig:musk_theta} for Australian Credit Approval
dataset, Heart disease dataset and Musk data respectively.
We note that in the all cases, bias decreases but convergence
becomes slower with decreasing $\gamma$. When we look for stochastic
(sub)gradient then the bias of estimators and also their variance increase
when we decrease $\tilde{N}$.  However if we look for effective passes, i.e. number
of iteration is scaled with the cost of computing gradients,
we observe
that convergence is faster with reasonably small $\tilde N$. If we compare SSGLD
with SPGLD we see that in almost all cases, except Musk dataset, SSGLD leads to slightly smaller bias.
For the Musk dataset differences between SSGLD and SPGLD are negligible and we do not present the results for SPGLD.
In the presented experiments, all results agrees with our theoretical
findings and suggest that SPGLD or SSGLD could be an alternative for
other MCMC methods.

%
%
%
  
 \begin{figure}[!h]
\begin{subfigure}{0.32\textwidth}
\includegraphics[width=0.9\linewidth, height=4cm]{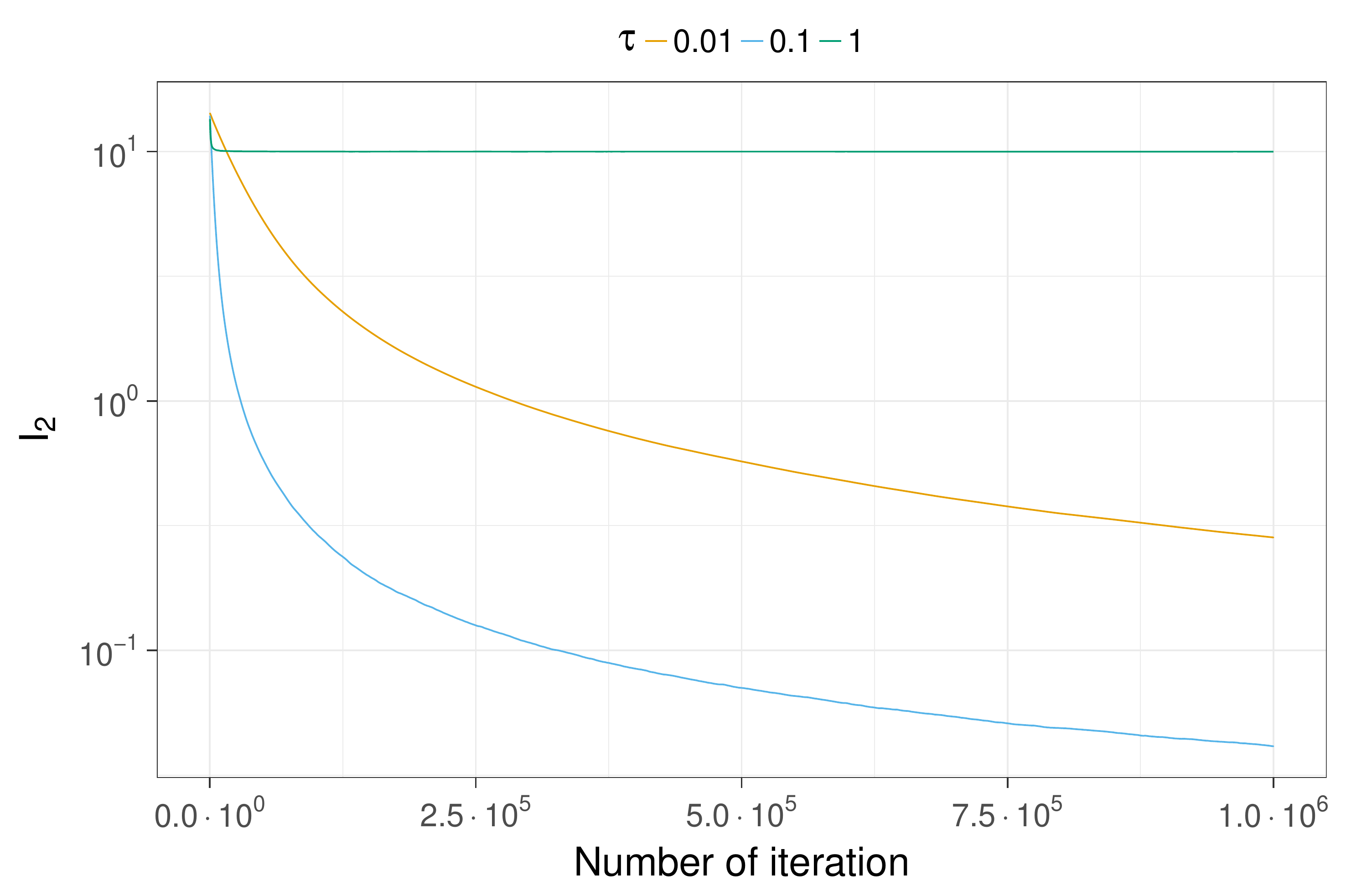}
\caption{ }
\end{subfigure}
\begin{subfigure}{0.32\textwidth}
\includegraphics[width=0.9\linewidth, height=4cm]{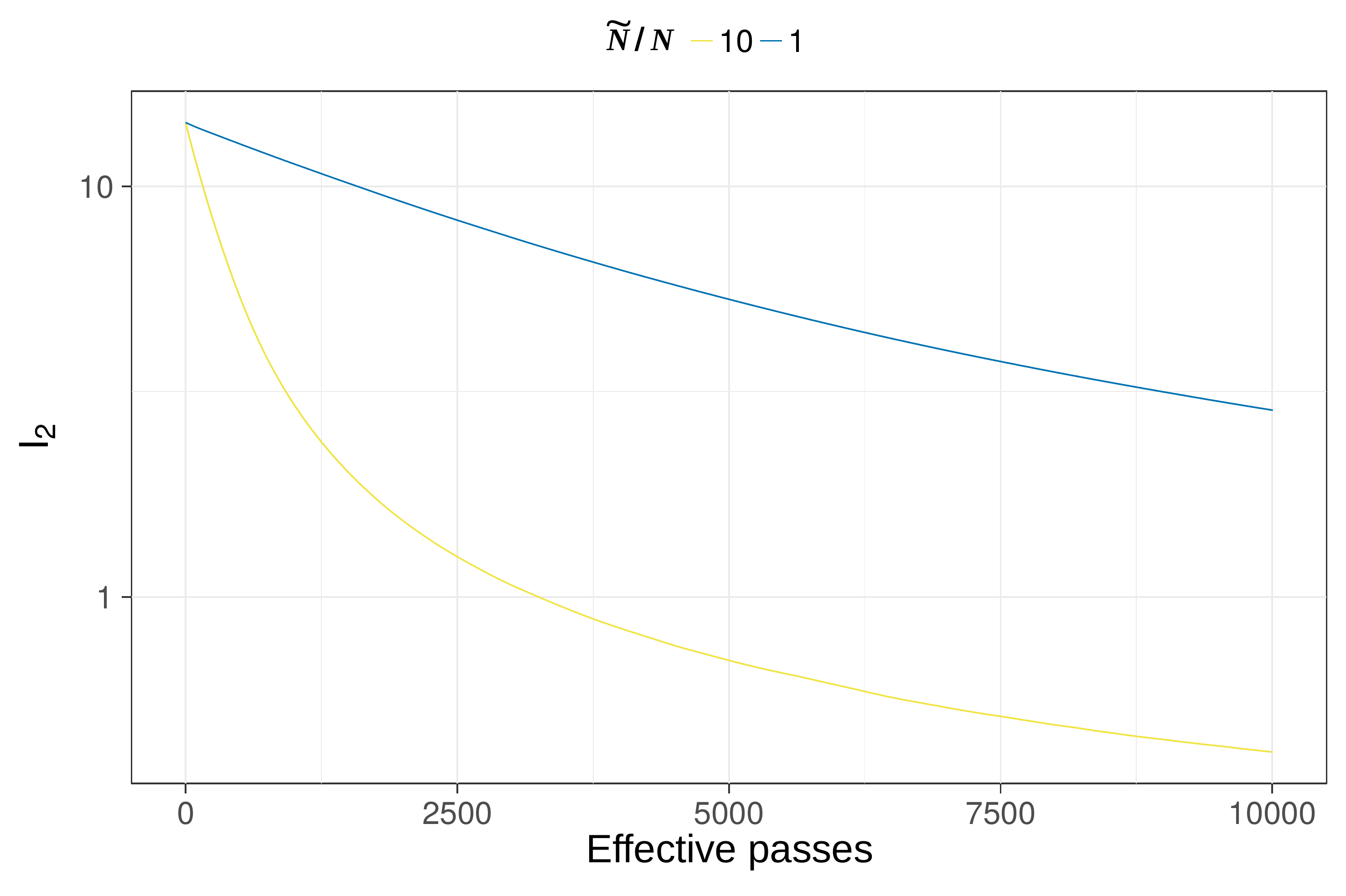}
\caption{ }
\end{subfigure}
\begin{subfigure}{0.32\textwidth}
\includegraphics[width=0.9\linewidth, height =4cm]{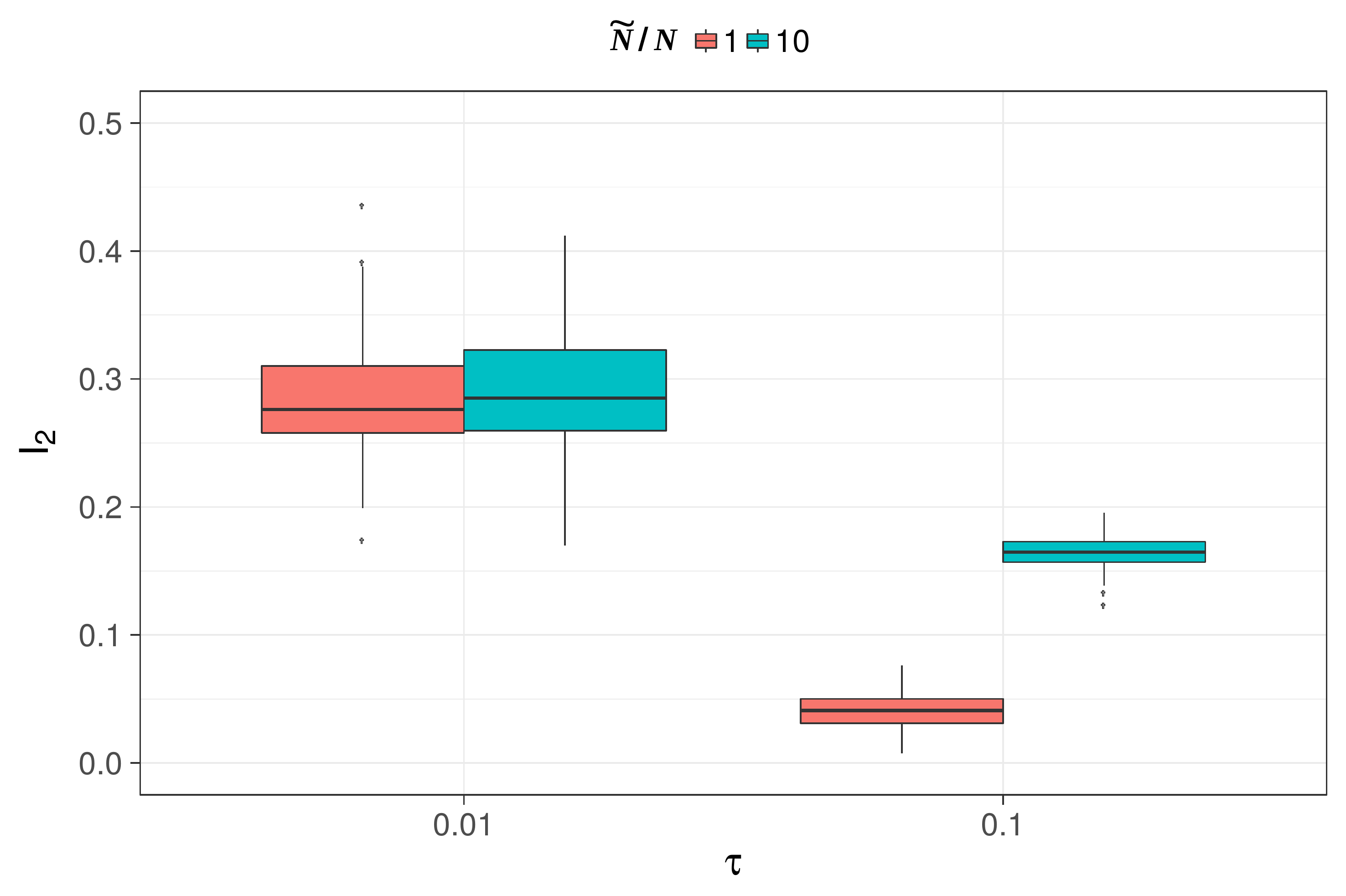}
\caption{ }
\end{subfigure}
 
\begin{subfigure}{0.32\textwidth}
\includegraphics[width=0.9\linewidth, height=4cm]{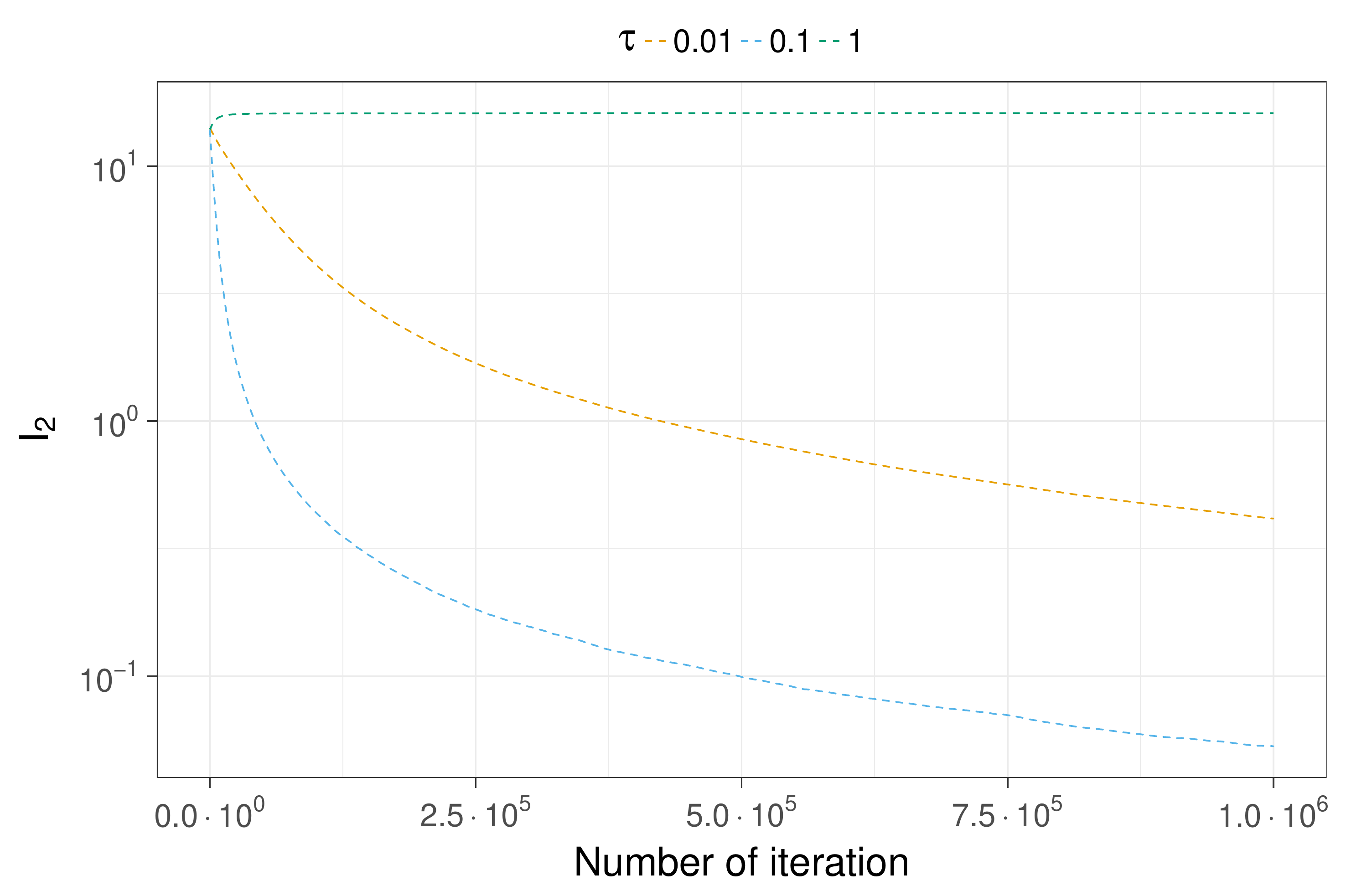}
\caption{ }
\end{subfigure}
\begin{subfigure}{0.32\textwidth}
\includegraphics[width=0.9\linewidth, height=4cm]{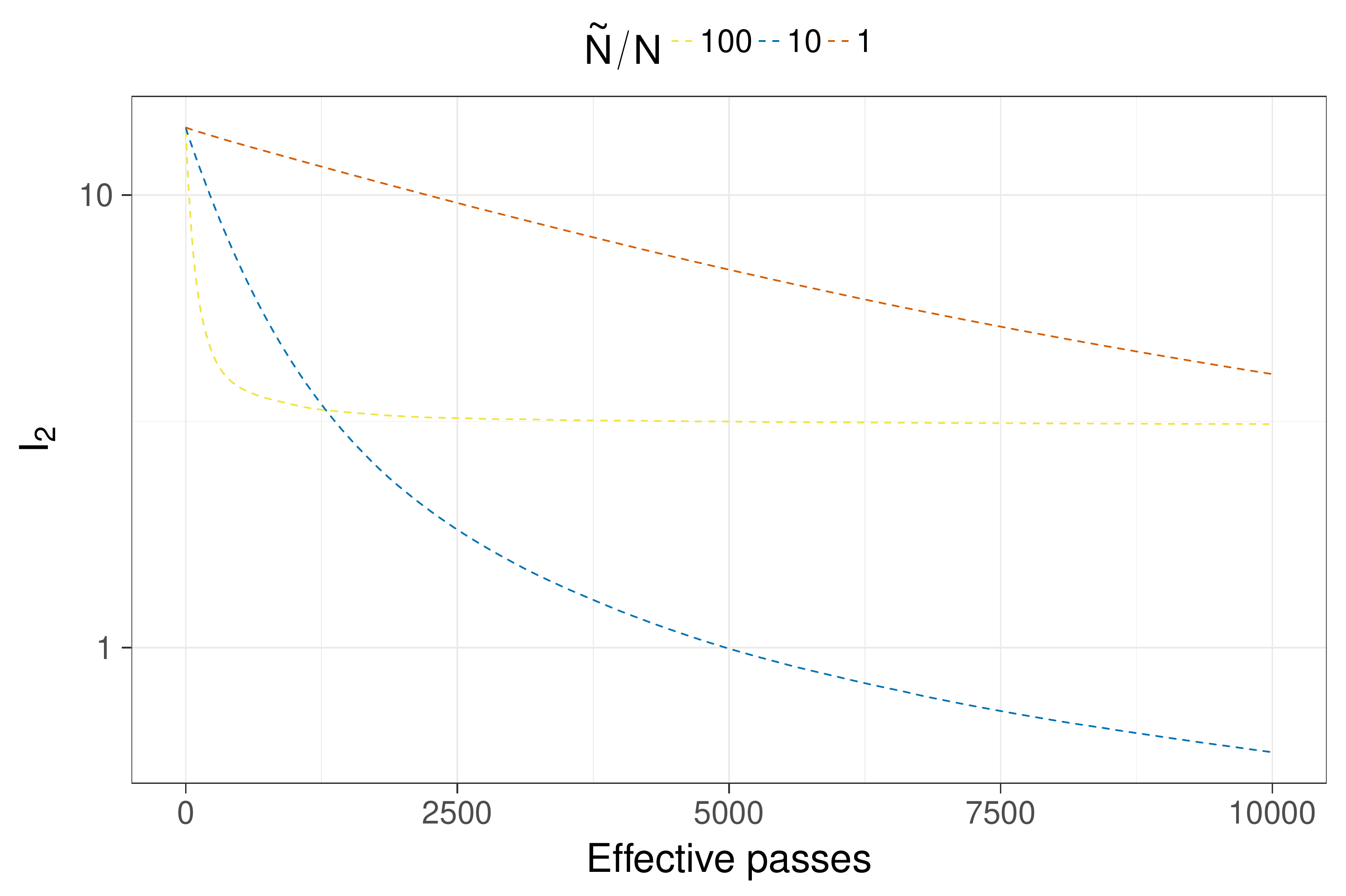}
\caption{ }
\end{subfigure}
\begin{subfigure}{0.32\textwidth}
\includegraphics[width=0.9\linewidth, height =4cm]{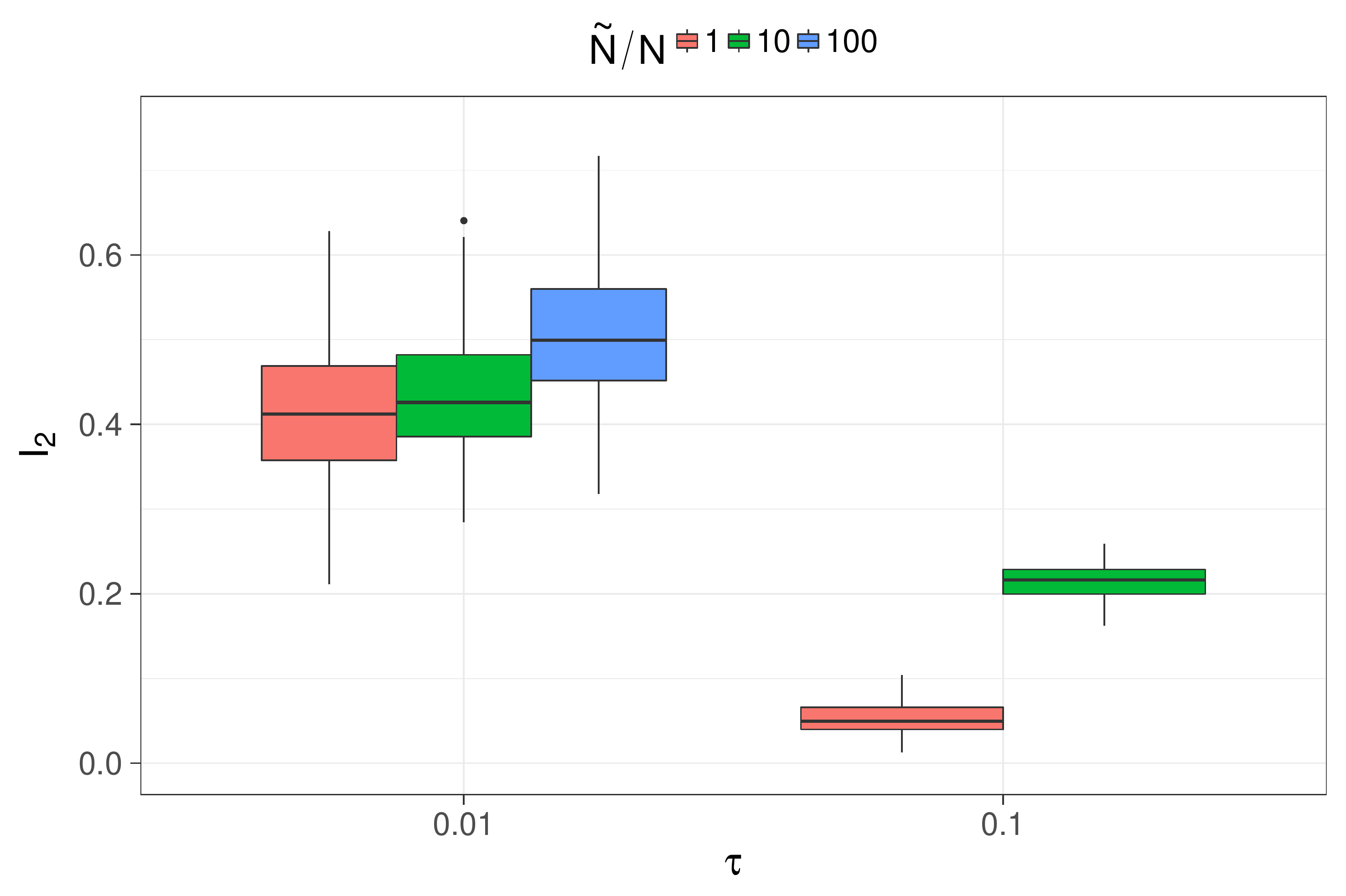}
\caption{}
\end{subfigure}

 \caption{Mean absolute error of estimator of $I_2$ for Musk dataset: (top row) results for  $\mathrm{p}_{1-2}$ prior; (a) convergence of SPGLD for $\tilde{N}=1$ , 
(b) convergence of SPGLD in terms of effective passes for $\tau=0.1$, (c) boxplot of SPGLD for full run;
(bottom row) results for  $\mathrm{p}_{1}$ prior; (d) convergence of  SSGLD for $\tilde{N}=N$ , 
(e) convergence of  SSGLD in terms of effective passes for $\tau=0.1$, (f) boxplot of SSGLD for full run}
\label{fig:musk}
\end{figure}

 \begin{figure}[!h]
\begin{subfigure}{0.32\textwidth}
\includegraphics[width=0.9\linewidth, height=4cm]{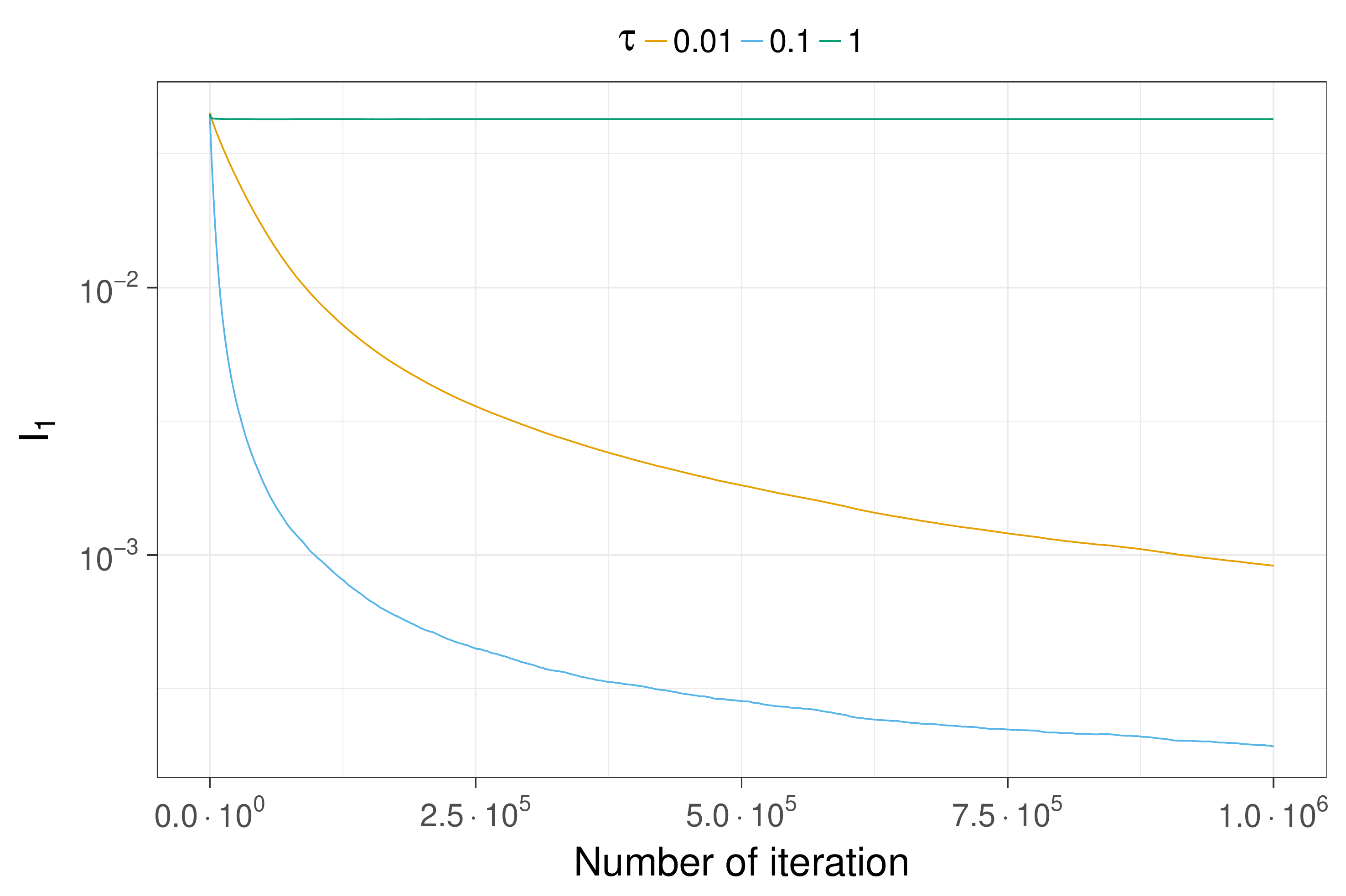}
\caption{ }
\end{subfigure}
\begin{subfigure}{0.32\textwidth}
\includegraphics[width=0.9\linewidth, height=4cm]{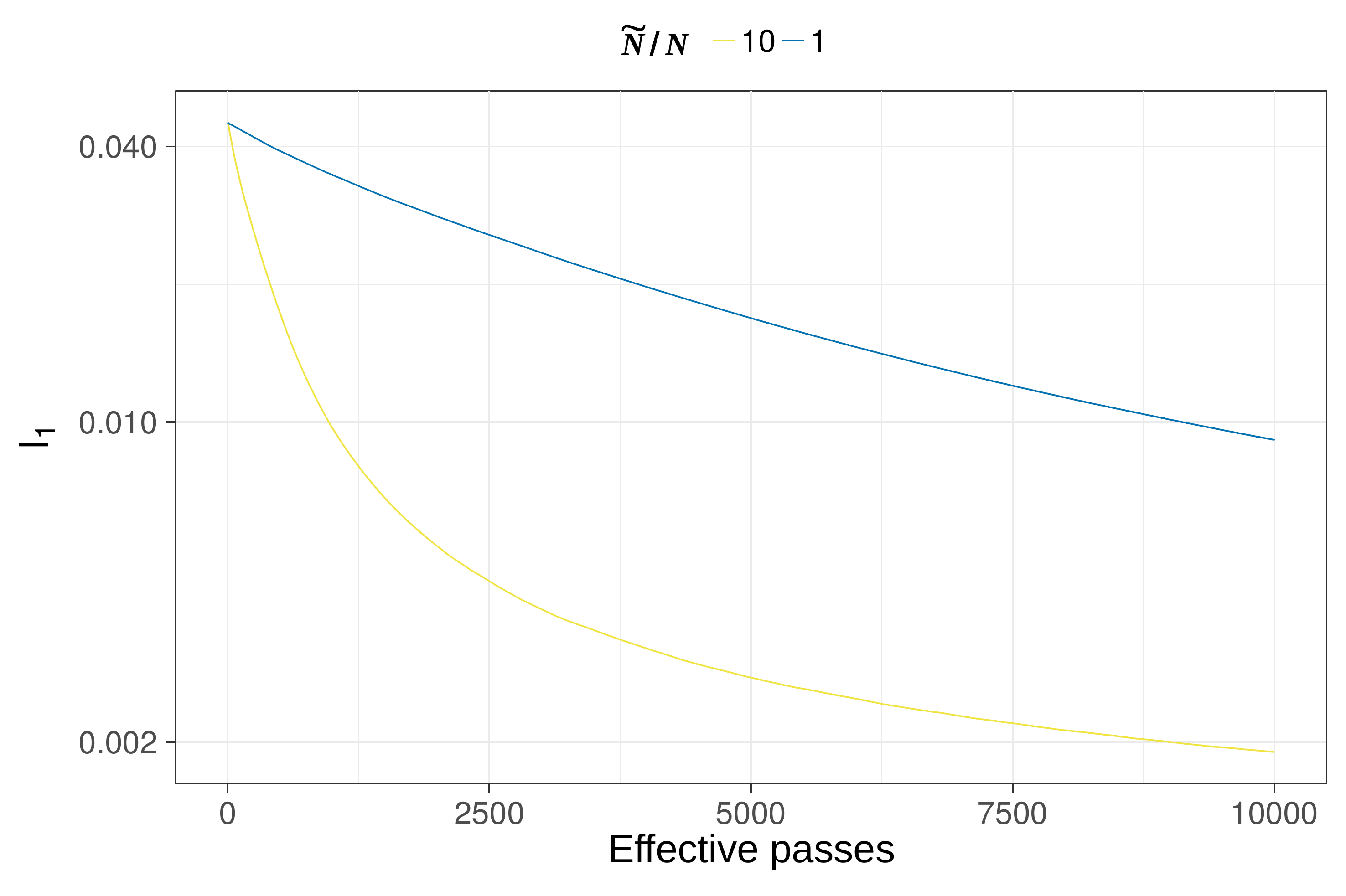}
\caption{ }
\end{subfigure}
\begin{subfigure}{0.32\textwidth}
\includegraphics[width=0.9\linewidth, height =4cm]{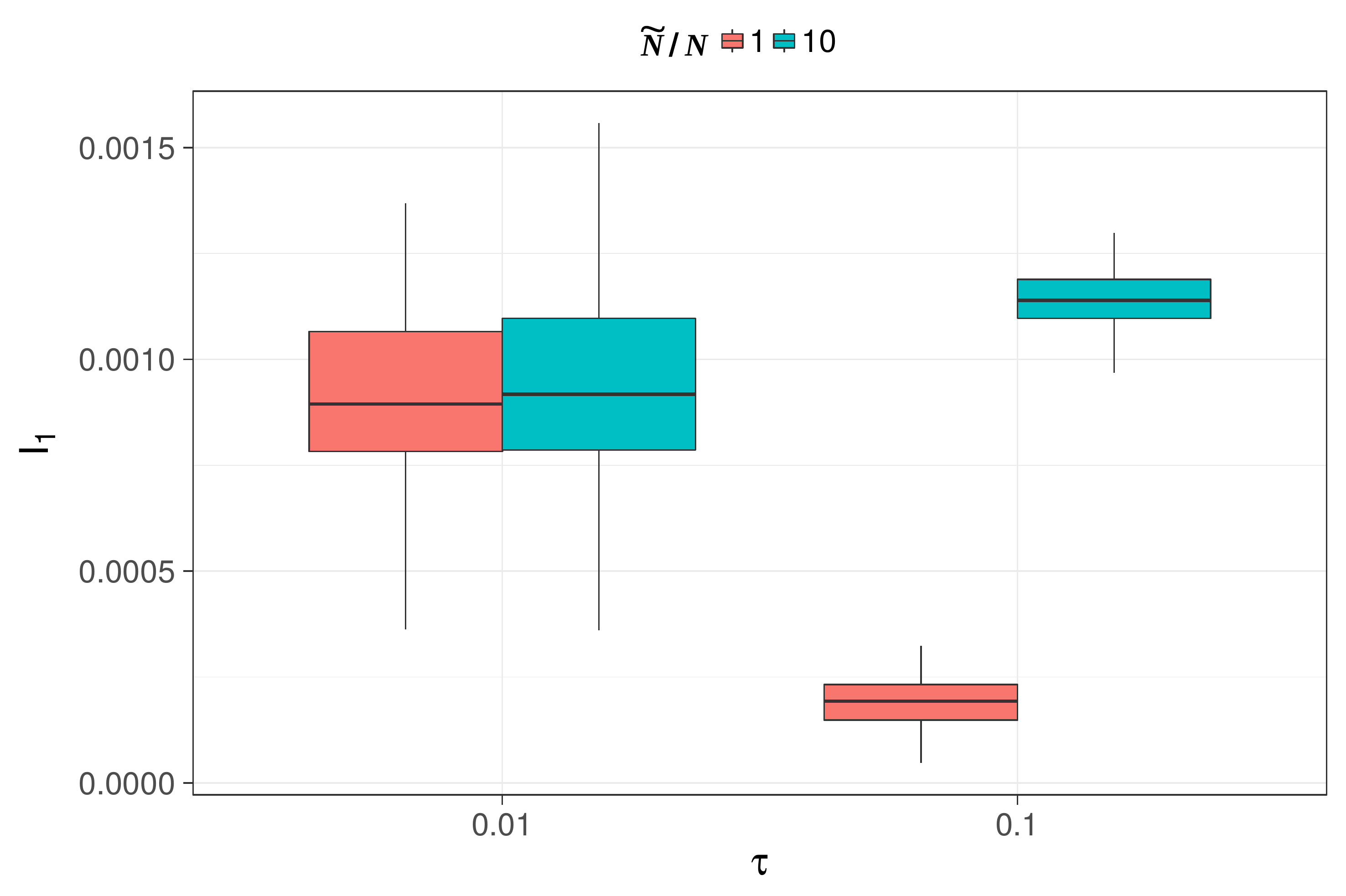}
\caption{ }
\end{subfigure}
 
\begin{subfigure}{0.32\textwidth}
\includegraphics[width=0.9\linewidth, height=4cm]{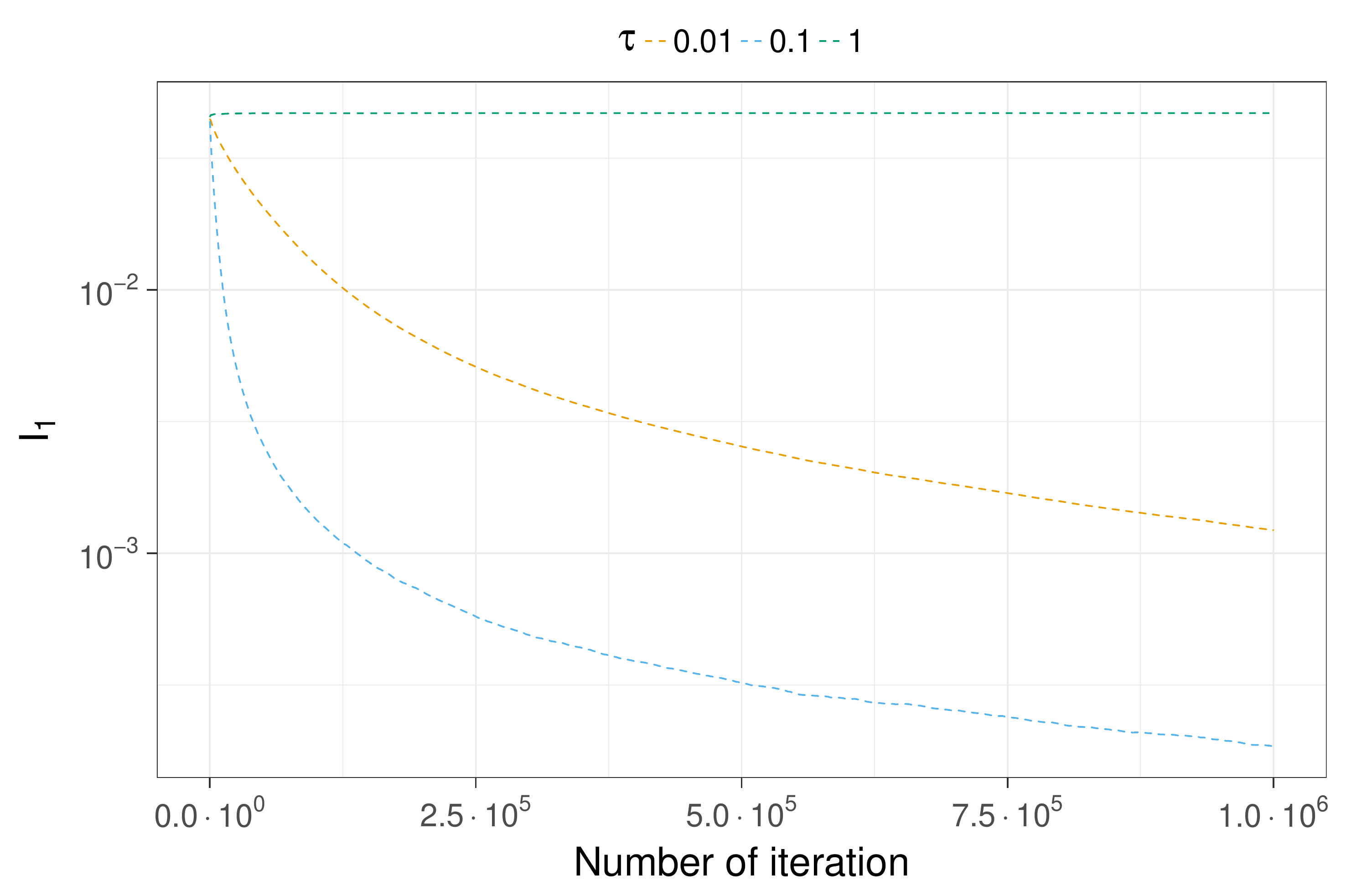}
\caption{ }
\end{subfigure}
\begin{subfigure}{0.32\textwidth}
\includegraphics[width=0.9\linewidth, height=4cm]{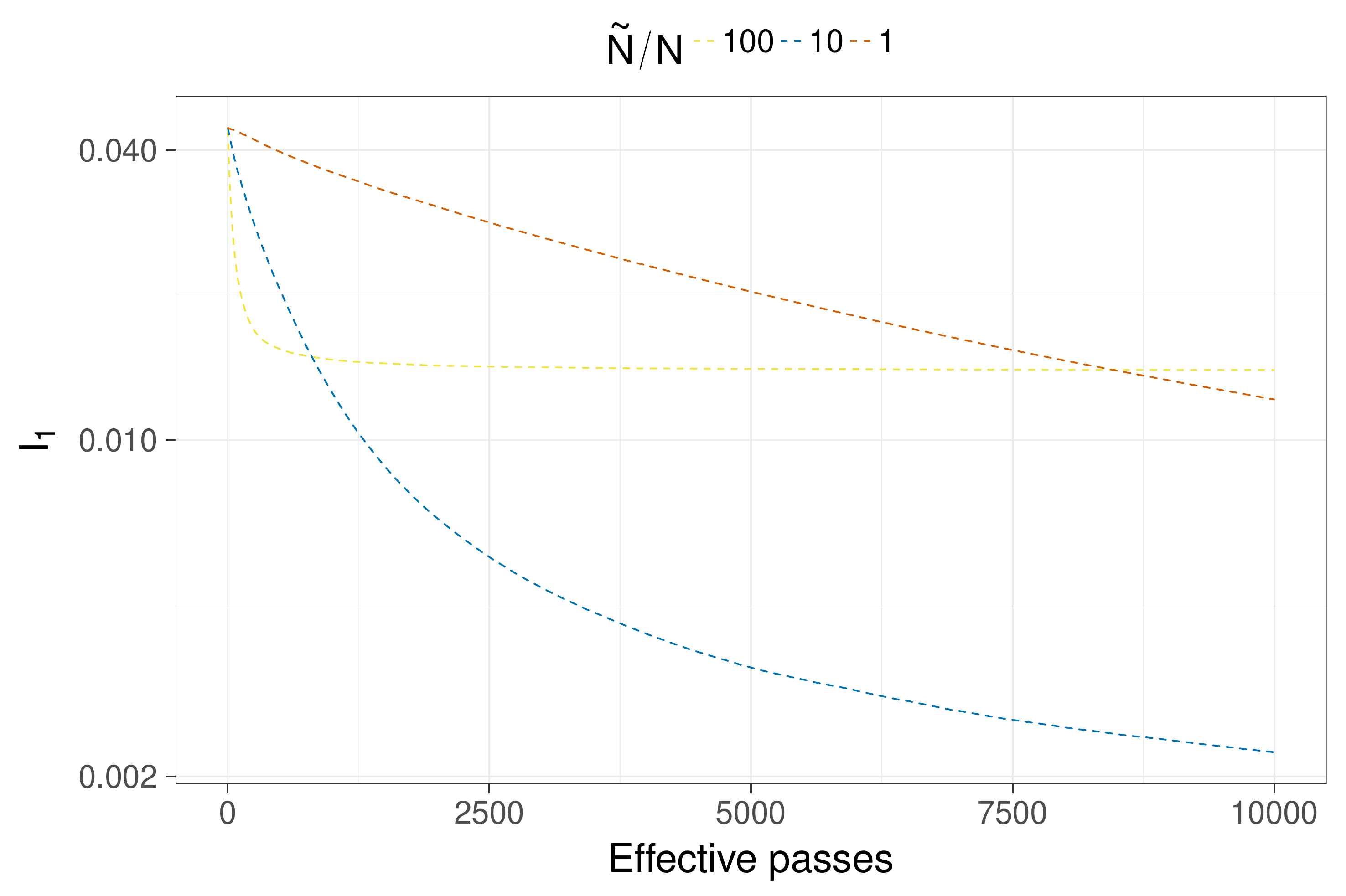}
\caption{ }
\end{subfigure}
\begin{subfigure}{0.32\textwidth}
\includegraphics[width=0.9\linewidth, height =4cm]{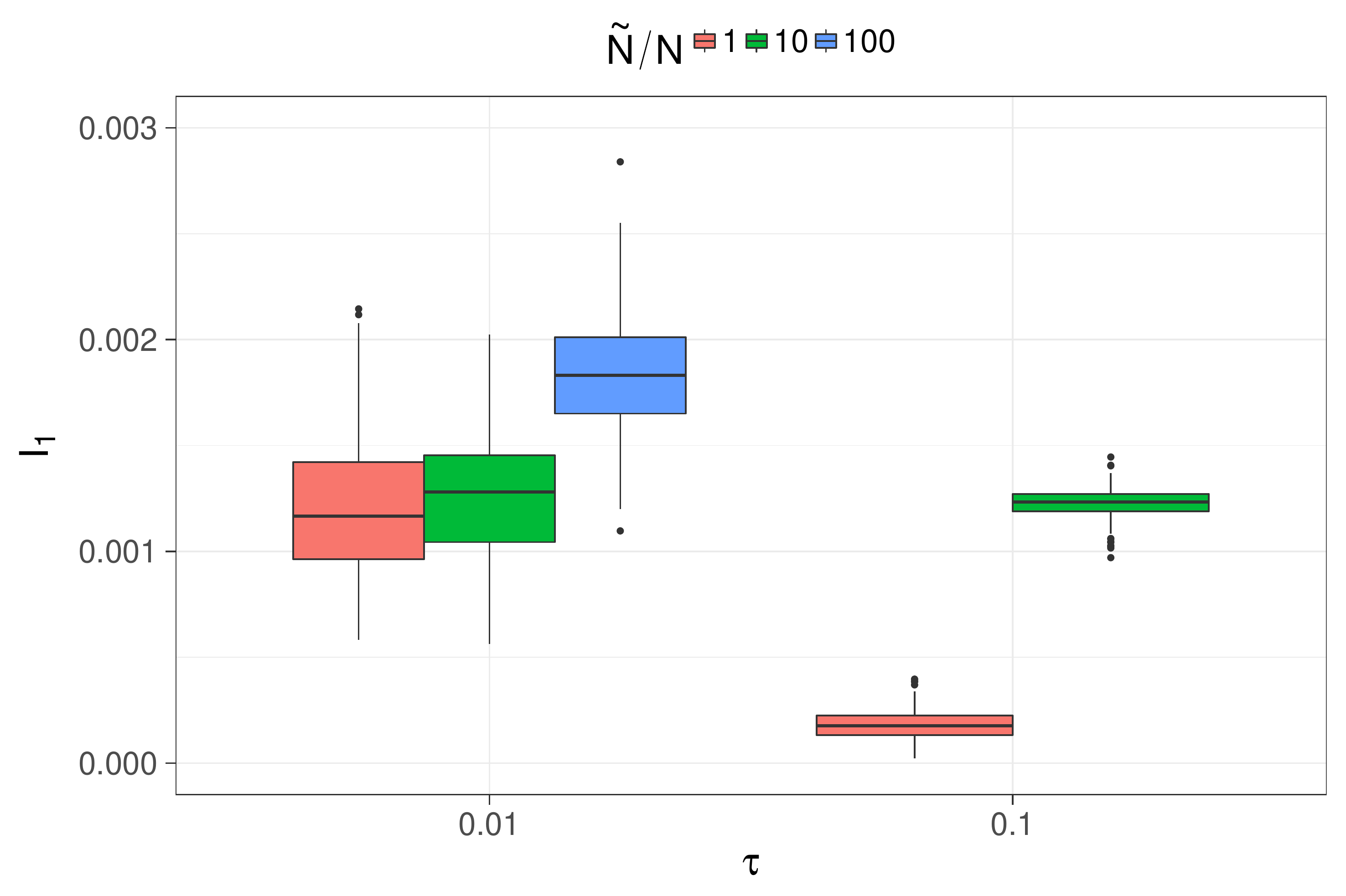}
\caption{}
\end{subfigure}

 \caption{Mean absolute error of estimator of $I_1$ for Musk dataset: (top row) results for  $\mathrm{p}_{1-2}$ prior; (a) convergence of SPGLD for $\tilde{N}=1$ , 
(b) convergence of SPGLD in terms of effective passes for $\tau=0.1$, (c) boxplot of SPGLD for full run;
(bottom row) results for  $\mathrm{p}_{1}$ prior; (d) convergence of SSGLD for $\tilde{N}=N$ , 
(e) convergence of  SSGLD in terms of effective passes for $\tau=0.1$, (f) boxplot of  SSGLD for full run}
\label{fig:musk_theta}
\end{figure}


\section{Discussion}
In this paper, we presented a novel interpretation of the Unadjusted Langevin Algorithm as a first order optimization algorithm, and a new technique of proving nonasymptotic bounds for ULA, based on the proof techniques known from convex optimization. Our proof technique gives simpler proofs of some of the previously known non-asymptotic results for ULA. It can be also used to prove non-asymptotic bound that were previously unknown. Specifically, to the best of the authors knowledge, we provide the first non-asymptotic results for Stochastic Gradient ULA in the non-strongly convex case, as well as the first non-asymptotic results in the non-smooth non-strongly convex case. Furthermore, our technique extends effortlessly to the stochastic non-smooth case, and to the best of the authors knowledge we provide the first nonasymptotic analysis of that case.

Furthermore our new perspective on the Unadjusted Langevin Algorithm, provides a starting point for further research into connections between Langevin Monte Carlo and Optimization. Specifically, we believe that a very promising direction for further research is translating well known efficient optimization algorithms into efficient sampling algorithms and proving non-asymptotic bounds for those more efficient algorithms.

\section{Postponed proofs}
\label{sec:postponed-proofs}

\subsection{Proof of of \Cref{lem:kl_minimizer}} 
\label{sec:proof-crefl}
\ref{item:1:lem:kl_minimizer}
Since $\rme^{-U}$ is integrable with respect to the Lebesgue measure, under \Cref{assum:convexity}$(m)$ for $ m \geq 0$, by \cite[Lemma 2.2.1]{brazitikos:giannopoulos:valettas:al:2014}, there exists $C_1,C_2 > 0$ such that for all $x \in \rset^d$, $ U(x) \geq C_1 \norm{x} -C_2$. This inequality and \Cref{assum:grad_lip} implies that $\pi \in \Pens_2(\rset^d)$. In addition, since the function $x \mapsto U(x)\rme^{-U(x)/2}$ is bounded on $[-C_2, \plusinfty)$, we have for all $x \in\rset^d$,
\[
\left| \left( U(x) \rme^{-U(x)/2} \right) \rme^{-U(x)/2} \right| \leq  C_3 \rme^{-U(x)/2}
\]
for some constant $C_3$. From this, and $U(x) \geq C_1 \norm{x} - C_2$ we conclude that $\Escr(\pi) < \plusinfty$.
Using the same reasoning, we have $\Hscr(\pi) < \plusinfty$ which finishes the proof of the first part.


\noindent
\ref{item:2:lem:kl_minimizer}
If $\mu$ does not admit a density with respect to Lebesgue measure, then
both sides of \eqref{eq:lem:kl_minimizer} are $+\infty$.  Second if
$\mu$ admits a density still denoted by $\mu$ with respect to the
Lebesgue measure, we have by \eqref{eq:def_free_energy}:

\begin{equation*}
\Fscr(\mu) - \Fscr(\pi) 
 = \KLarg{\mu}{\pi} + \int_{\rset^d} \defEns{\mu(x) - \pi(x)} \defEns{U(x) + \log(\pi(x)) } \rmd x = \KLarg{\mu}{\pi}  \eqsp.
\end{equation*}

\subsection{Proof of \Cref{coro:ula_non_increas_sz}}
\label{sec:proof-crefc_coro:ula_non_increas_sz}
  Using \Cref{thm:const_step_conv} we first get
\begin{equation} \label{eq:cor_dec_step}
\KLarg{\nu_n}{\pi} \leq \left. \wassersteinTarg{\mu_0}{ \pi} \middle/ (2\Stepa[0,n]) \right.
+ (Ld/\Stepa[0,n]) \sum_{k=1}^{n} \stepa[k]^2  \eqsp. 
\end{equation}
Note that  using a simple integral test, we have $\Stepa[0,n]  \geq C_1 n^{1 - \alpha}$
for some constant $C_1 \geq 0$. On the other hand, for some constant $C_2 \geq 0$  we  have  $\sum_{k=1}^n \steps_k^2 \leq C_2 (1+n^{1-2\alpha})$ if $\alpha \not = 1/2$, and 
 $\sum_{k=1}^n \steps_k^2 \leq C_2(1+\log(n))$ if $\alpha =1/2$.  Combining all these inequalities in \eqref{eq:cor_dec_step} concludes the proof.

\subsection{Proofs of \Cref{sec:stoch-sub-grad}}
\label{sec:proofs-crefthm:st}

Note that for all $\step,\tstep >0$, $\bRker_{\step,\tstep}$ can be decomposed as $\bSker_{\step} \Tker_{\tstep} $  where $\Tker_{\tstep}$ is defined in \eqref{eq:def_Sker_Tker}
and $\bSker_{\step}$ is given by \eqref{eq:definition_sker_ssgld}.
Then similarly to the proof of \Cref{thm:const_step_conv}, we first
give a preliminary bound on $ \Fscr(\mu \bRker_{\steps,\tstep}) - \Fscr( \pi
)$ for $\mu \in \Pens_2(\rset^d)$ and $\step,\tstep >0$ as in \Cref{thm:basic-one-step}.
\begin{lemma}
  \label{lem:basic-one-step_ssgld}
Assume \Cref{assum:convexity}($0$) and \Cref{assum:stochastic_subgradient}. For all $\steps >0 $ and $\mu \in \Pens_2(\rset^d)$, 
\[
2\steps\defEns{\Escr(\mu ) - \Escr(\pi)} \leq  \wasserstein_2^2(\mu, \pi) - \wasserstein_2^2(\mu \bSkera, \pi) + \step^2\defEns{\ClyapU^2 + \vargrad_{\gradst}(\mu)} \eqsp,
\]
where $\Escr$ and $\Tkers$ are defined in
\eqref{eq:def_potential_energy} and \eqref{eq:def_Sker_Tker}
respectively, $\vargrad_{\gradst}(\mu)$ in \eqref{eq:definition_var_grad_sto} and  $\bSkera$ in \eqref{eq:definition_sker_ssgld}.
 \end{lemma}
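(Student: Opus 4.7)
The plan is to adapt the proof of \Cref{lem:basic-potential-gradient-step} to the non-smooth convex setting. The descent-type bound used there relied crucially on \Cref{assum:grad_lip} through the inequality \eqref{eq:1:lem:basic-potential-gradient-step}, which is no longer available. Instead, the role of the descent lemma will be played by the subgradient inequality for the convex function $U$, together with a bias--variance decomposition of $\gradst$ to handle the quadratic error term that would otherwise come with a factor $L$.

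First, let $(X,Y)$ be an optimal coupling between $\mu$ and $\pi$ for $\wasserstein_2$, and let $Z \sim \eta$ be independent of $(X,Y)$. Set $\tilde{X} = X - \steps \gradst(X,Z)$, so that by \eqref{eq:definition_sker_ssgld} the random variable $\tilde{X}$ is distributed according to $\mu \bSker_{\steps}$ and $(\tilde X, Y)$ is a (suboptimal) coupling of $\mu \bSker_{\steps}$ and $\pi$. Expanding,
\[
\norm[2]{\tilde X - Y} = \norm[2]{X-Y} - 2\steps \ps{\gradst(X,Z)}{X-Y} + \steps^2 \norm[2]{\gradst(X,Z)} \eqsp.
\]

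Next, set $v(X) = \int_{\msz} \gradst(X,z) \rmd \eta(z)$. By \Cref{assum:stochastic_subgradient}\ref{assum:stochastic_subgradient_ii}, $v(X) \in \partial U(X)$ almost surely, and by \eqref{eq:stochastic_subgradient_i}, $\norm{v(X)} \leq \ClyapU$. Conditioning on $X$ and using the independence of $Z$ from $(X,Y)$, the cross term satisfies $\PE[\ps{\gradst(X,Z)}{X-Y} \mid X,Y] = \ps{v(X)}{X-Y}$. Applying \Cref{assum:convexity}$(0)$ via the subgradient inequality \eqref{eq:definition_partial_U}, this is bounded below by $U(X) - U(Y)$, so
\[
\PE[\ps{\gradst(X,Z)}{X-Y}] \geq \Escr(\mu) - \Escr(\pi) \eqsp.
\]
For the quadratic term, the conditional bias--variance decomposition
\[
\PE[\norm[2]{\gradst(X,Z)} \mid X] = \norm[2]{v(X)} + \PE[\norm[2]{\gradst(X,Z) - v(X)} \mid X]
\]
combined with $\norm{v(X)} \leq \ClyapU$ and the definition \eqref{eq:definition_var_grad_sto} of $\vargrad_{\gradst}(\mu)$ gives, after integrating, $\PE[\norm[2]{\gradst(X,Z)}] \leq \ClyapU^2 + \vargrad_{\gradst}(\mu)$.

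Combining these three estimates and using that $\wasserstein_2^2(\mu \bSker_{\steps}, \pi) \leq \PE[\norm[2]{\tilde X - Y}]$ while $\wasserstein_2^2(\mu,\pi) = \PE[\norm[2]{X-Y}]$, one obtains
\[
\wasserstein_2^2(\mu \bSker_{\steps}, \pi) \leq \wasserstein_2^2(\mu,\pi) - 2\steps \defEns{\Escr(\mu) - \Escr(\pi)} + \steps^2 \defEns{\ClyapU^2 + \vargrad_{\gradst}(\mu)} \eqsp,
\]
which rearranges to the claimed inequality. There is no real obstacle here: the only point of care is the variance bookkeeping, ensuring that the deterministic part of $\norm[2]{\gradst(X,Z)}$ is controlled by the Lipschitz constant $\ClyapU$ (via the bound on subgradients) while the stochastic fluctuation is absorbed into $\vargrad_{\gradst}(\mu)$.
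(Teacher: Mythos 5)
Your proof is correct and follows essentially the same route as the paper's: both expand $\norm[2]{X-\steps\gradst(X,Z)-Y}$ along an optimal coupling, use that $\expe{\gradst(X,Z)} \in \partial U(X)$ together with the subgradient inequality for the cross term (the centered fluctuation integrating to zero by independence of $Z$), and apply the bias--variance identity with the subgradient bound $\norm{v(X)}\leq \ClyapU$ for the quadratic term.
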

\begin{proof}
Let $Z$ be a random variable with distribution $\eta$, $\steps >0$ and $\mu \in \Pens_2(\rset^d)$. For all $x,y \in \rset^d$, we have using the definition of $\partial U(x)$ \eqref{eq:definition_partial_U} and \Cref{assum:stochastic_subgradient}-\ref{assum:stochastic_subgradient_ii}
\begin{align*}
&  \norm[2]{y-x+\stepa \gradst(x,Z)}  = \norm[2]{y-x} + 2 \stepa \ps{\gradst(x,Z)}{y-x} + \stepa^2\norm[2]{\gradst(x,Z)} \\
& \qquad \qquad \leq \norm[2]{y-x} -2\stepa\defEns{U(x)-U(y)} + 2 \stepa \ps{\gradst(x,Z)- \expe{\gradst(x,Z)}}{y-x} +\stepa^2\norm[2]{\gradst(x,Z)} \eqsp.
\end{align*}
Let $(X,Y)$ be an optimal coupling between $\mu$ and $\pi$ independent of $Z$. Then by
\Cref{assum:stochastic_subgradient}-\ref{assum:stochastic_subgradient_ii}
and rearranging the terms in the previous inequality, we obtain
\begin{equation*}
2\steps\defEns{  \Escr(\mu)-\Escr(\nu)} \leq \wasserstein_2^2(\mu,\pi) - \expe{  \norm[2]{Y-X+\stepa \gradst(X,Z)}} + \stepa^2\expe{\norm[2]{\gradst(X,Z)}} \eqsp.
\end{equation*}
The proof is concluded upon noting that $\wasserstein_2^2(\mu \bSkera, \pi) \leq \expeLigne{  \norm[2]{Y-X+\stepa \gradst(X,Z)}}$ and $\expeLigne{\norm[2]{\gradst(X,Z)}} \leq \ClyapU^2 + \vargrad_{\gradst}(\mu)$.
\end{proof}
\begin{proposition}
  \label{propo:basic-one-step_ssgld}
Assume \Cref{assum:convexity}($0$) and \Cref{assum:stochastic_subgradient}. For all $\steps,\tstep >0 $ and $\mu \in \Pens_2(\rset^d)$, 
\begin{equation*}
2\tstep \defEns{\Fscr(\mu \bRker_{\step,\tstep}) - \Fscr(\pi)} \leq  \defEns{\wassersteinTarg{\mu \bSker_{\step}}{\pi} - \wassersteinTarg{\mu \bRker_{\step,\tstep} \bSker_{\tstep}}{\pi}} 
+ \tstep^2  \defEns{\ClyapU^2 + \vargrad_{\gradst}(\mu \bRker_{\step,\tstep})} \eqsp.
\end{equation*}
where $\Fscr$ is defined in
\eqref{eq:def_potential_energy}, $\vargrad_{\gradst}(\mu)$ in \eqref{eq:definition_var_grad_sto}, $\bRker_{\step,\tstep}$ and  $\bSker_{\step}$ in \eqref{eq:definition_rker_ssgld} in  \eqref{eq:definition_sker_ssgld} respectively.
 \end{proposition}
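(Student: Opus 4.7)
The plan is to mirror the proof of \Cref{thm:basic-one-step} but using the factorization $\bRker_{\step,\tstep} = \bSker_{\step} \Tker_{\tstep}$ mentioned just before the proposition statement, and swapping the roles of gradient‐step/heat‐step since the variance of the subgradient needs to be absorbed at the $\Escr$ step rather than at a separate potential–change lemma. Specifically, for any $\mu \in \Pens_2(\rset^d)$ I decompose
\begin{equation*}
\Fscr(\mu \bRker_{\step,\tstep}) - \Fscr(\pi) = \bigl[\Hscr(\mu \bRker_{\step,\tstep}) - \Hscr(\pi)\bigr] + \bigl[\Escr(\mu \bRker_{\step,\tstep}) - \Escr(\pi)\bigr]\eqsp,
\end{equation*}
which is well posed since $\Fscr(\pi)$ is finite by \Cref{lem:kl_minimizer}\ref{item:1:lem:kl_minimizer}.

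For the entropy piece, I write $\mu \bRker_{\step,\tstep} = (\mu \bSker_{\step})\Tker_{\tstep}$ and apply \Cref{lem:ent-grad-flow-step-inequality} with initial measure $\mu \bSker_{\step}$, comparison measure $\pi$, and time step $\tstep$. This gives
\begin{equation*}
2\tstep\bigl\{\Hscr(\mu \bRker_{\step,\tstep}) - \Hscr(\pi)\bigr\} \leq \wassersteinTarg{\mu \bSker_{\step}}{\pi} - \wassersteinTarg{\mu \bRker_{\step,\tstep}}{\pi}\eqsp.
\end{equation*}
For the potential piece, I apply the just‐proved \Cref{lem:basic-one-step_ssgld} to the probability measure $\mu \bRker_{\step,\tstep} \in \Pens_2(\rset^d)$ with step size $\tstep$, which yields
\begin{equation*}
2\tstep\bigl\{\Escr(\mu \bRker_{\step,\tstep}) - \Escr(\pi)\bigr\} \leq \wassersteinTarg{\mu \bRker_{\step,\tstep}}{\pi} - \wassersteinTarg{\mu \bRker_{\step,\tstep}\bSker_{\tstep}}{\pi} + \tstep^2\bigl\{\ClyapU^2 + \vargrad_{\gradst}(\mu \bRker_{\step,\tstep})\bigr\}\eqsp.
\end{equation*}

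Summing these two inequalities, the term $\wassersteinTarg{\mu \bRker_{\step,\tstep}}{\pi}$ appears with opposite signs and cancels, delivering exactly the desired bound. The only subtlety is making sure $\mu \bRker_{\step,\tstep}$ belongs to $\Pens_2(\rset^d)$ so that \Cref{lem:basic-one-step_ssgld} applies; this follows from \Cref{assum:stochastic_subgradient}\ref{assum:stochastic_subgradient_i} (which gives $\Vert \gradst(x,z)\Vert \leq \ClyapU$ by \eqref{eq:stochastic_subgradient_i}) together with the Gaussian convolution in $\Tker_{\tstep}$, both of which preserve finiteness of the second moment. No step here is a genuine obstacle — the work was done in \Cref{lem:ent-grad-flow-step-inequality} and \Cref{lem:basic-one-step_ssgld}; the proposition is essentially a bookkeeping step combining them through the $\bSker\Tker$ splitting.
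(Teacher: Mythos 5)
Your proof matches the paper's exactly: you apply \Cref{lem:ent-grad-flow-step-inequality} with initial measure $\mu\bSker_{\step}$ and step $\tstep$ to bound the entropy piece, apply \Cref{lem:basic-one-step_ssgld} to $\mu\bRker_{\step,\tstep}$ with step $\tstep$ to bound the potential piece, and sum so that $\wassersteinTarg{\mu\bRker_{\step,\tstep}}{\pi}$ cancels. The only addition is the (correct, and harmless) explicit check that $\mu\bRker_{\step,\tstep}\in\Pens_2(\rset^d)$, which the paper leaves implicit.
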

 \begin{proof}
Note that by \Cref{lem:basic-one-step_ssgld}, we have 
\begin{equation}
\label{eq:thm:step_conv_ss_1}
2\tstep \defEns{ \Escr(\mu \bRker_{\step,\tstep}) - \Escr(\pi)} \leq \wassersteinTarg{\mu \bRker_{\step,\tstep}}{\pi} - \wassersteinTarg{\mu \bRker_{\step,\tstep} \bSker_{\tstep}}{\pi}  + \tstep^2 \defEns{\ClyapU^2 + \vargrad_{\gradst}(\mu \bRker_{\step,\tstep})} \eqsp.
\end{equation}
In addition by \Cref{lem:ent-grad-flow-step-inequality}, it holds
\begin{equation*}
2 \tstep \defEns{  \Hscr(\mu\bRker_{\step,\tstep}) - \Hscr(\pi)}  \leq  \wassersteinTarg{\mu \bSker_{\step}}{\pi} - \wassersteinTarg{\mu \bRker_{\step,\tstep} }{\pi} \eqsp.
\end{equation*}
The proof then follows from combining this inequality with \eqref{eq:thm:step_conv_ss_1}.
 \end{proof}
\subsubsection{Proof of \Cref{thm:step_conv_ss}}
\label{sec:proof-crefthm:st}
By \Cref{propo:basic-one-step_ssgld}, for all $k \in \nsets$, we have
\begin{multline*}
\Fscr(\mu \bQkera[k]) - \Fscr(\pi) \leq (2\stepa[k+1])^{-1} \defEns{\wassersteinTarg{\mu \bQkera[k-1] \bSkera[k]}{\pi} - \wassersteinTarg{\mu \bQkera[k] \bSkera[k+1]}{\pi}} \\
+ (\stepa[k+1]/2)  \defEns{\ClyapU^2 + \vargrad_{\gradst}(\mu \bQkera[k])} \eqsp.
\end{multline*}
Similarly to the proof of \Cref{thm:const_step_conv} using the convexity of Kullback-Leibler divergence and the condition
that  $\sequenceks{\weighta[k]/\stepa[k+1]}$ is
non-increasing concludes the proof.

\subsubsection{Proof of \Cref{coro:fixed_step_conv_ss}}
\label{sec:proof-coro_fixed_step_conv_ss}

On the one hand, using \Cref{thm:step_conv_ss}, we get:
\[
\KLarg{\tnu^N_n}{\pi} \leq  \left. (2 \steps n)^{-1} \wassersteinTarg{\mu_0 \bQkers[N] \bSker_{\steps}}{ \pi}
+ \steps M^2/2 + (\steps/(2n)) \sum_{k=N+1}^{N+n}  \vargrad_{\gradst} (\mu_0 \bQkers[k]  ) \right. \eqsp.
\]
On the other hand, using \Cref{propo:bound_variance_sto_grad}, we obtain:
\begin{multline*}
2\steps(\tilde{L}^{-1}-\steps)\left(\sum_{k=N+1}^{N+n} \vargrad_{\gradst}(\mu_0 \bQkers[k] ) \right)
\leq  
\int_{\rset^d} \norm{x-\xstar}^2 \rmd \mu_0 \bQkers[N+1](x)\\  - \int_{\rset^d} \norm[2]{x-\xstar} \rmd \mu_0\bQkers[N+n+1] + 2n \steps^2 \vargrad_{\gradst}(\updelta_{\xstar})+  2n \steps d \eqsp.
\end{multline*}
Combining the two inequalities above finishes the proof of the first part of \Cref{coro:fixed_step_conv_ss}.  For the second part, first observe that since $\steps_{\varepsilon} \leq (2\tilde{L})^{-1}$ we have $(2(\tilde{L}^{-1} - \steps))^{-1} \leq \tilde{L}$. Furthermore, from the definition of $\steps_{\varepsilon}$ we have $\steps_{\varepsilon} ( \frac{M^2}{2} + \tilde{L} d) \leq \varepsilon / 4$, as well as $\steps_{\varepsilon}^2 \tilde{L} \vargrad_{\gradst}(\updelta_{\xstar}) \leq \varepsilon / 4$. On the other hand, from the definition of $n_{\varepsilon}$ we have $ W_2^2(\mu_0 \bSker_{\steps_{\varepsilon}}, \pi) / (2 \steps{\varepsilon} n_{\varepsilon}) \leq \varepsilon / 4$ as well as $\tilde{L}(2n_{\varepsilon})^{-1}  \int_{\rset^d} \norm[2]{x-\xstar} \rmd \mu_0 \bRker_{\steps_{\varepsilon},\steps_{\varepsilon}} (x)          \leq \varepsilon / 4$. Combining those four bounds together finishes the proof.

\subsection{Proof of \Cref{sec:stoch-prox-grad}}
\label{sec:proof-crefthm:spgld}
We proceed for the proof of \Cref{thm:step_conv_sp} similarly to the
one of \Cref{thm:const_step_conv}, by decomposing
$\Fscr(\mu \tRker_{\step,\tstep}) - \Fscr(\pi) = \Escr(\mu \tRker_{\step,\tstep}) -
\Escr(\pi) + \Hscr(\mu \tRker_{\step,\tstep})- \Hscr(\pi)$, for
$\mu \in \Pens_2(\rset^d)$ and $\step,\tstep >0$. The main difference is that
we now need to handle carefully the proximal step in the first term of
the decomposition. To this end, we decompose the potential energy
functional according to the decomposition of $U$,
$\Escr = \Escr_1 + \Escr_2$ where for all $\mu \in \Pens_2(\rset^d)$,
\begin{equation}
  \label{eq:def_escr_1_2}
  \Escr_1(\mu) = \int_{\rset^d} U_1 \rmd \mu(x) \eqsp,  \, \qquad \qquad  \Escr_2(\mu) = \int_{\rset^d} U_2 \rmd \mu(x) \eqsp,
\end{equation}
and consider 
\begin{multline}
  \label{eq:decomp_sgpld}
  \Fscr(\mu \tRker_{\step,\tstep}) - \Fscr(\pi) = \Escr_1(\mu \tRker_{\step,\tstep}) - \Escr_1(\mu \tS_{\step}^2\tS_{\tstep}^1)\\ +\Escr_1(\mu \tS_{\step}^2\tS_{\tstep}^1) -\Escr_1(\pi)  + \Escr_2(\mu \tRker_{\step,\tstep})- \Escr_2(\pi) + \Hscr(\mu \tRker_{\step,\tstep})- \Hscr(\pi) \eqsp.
\end{multline}
The first and last terms in the right hand side will be controlled using \Cref{lem:conv-potential-change} and \Cref{lem:ent-grad-flow-step-inequality}. In the next lemmas, we bound the other terms separately. 
\begin{lemma}
  \label{lem:basic-potential-gradient-step_spgld}
Assume \Cref{assum:stochastic_gradient_proximal}$(m)$, for $m \geq 0$. 
For all $\mu,\nu \in \Pens_2(\rset^d)$ and $\steps \in \ocintLigne{0,L^{-1}}$, 
\begin{multline*}
2 \step\{  \Escr_1(\mu \tS_{\step}^1) - \Escr_1(\nu) \}\leq (1-m \step) W_2^2(\mu,\nu) - W_2^2(\mu \tS_{\step}^1,\nu)  \\ -\steps^2(1-\step L)\int_{\rset^d} \norm[2]{\nabla U_1(x)} \rmd \mu(x)+\steps^2(1+\steps L) \vargrad_1(\mu) \eqsp,
\end{multline*}
where $  \Escr_1,  \tS_{\step}^1 $ is defined by \eqref{eq:def_escr_1_2}-\eqref{eq:definition_sker_spgld} and $\vargrad_1(\mu)$ by \eqref{eq:definition_var_grad_sto_spgl}. 
\end{lemma}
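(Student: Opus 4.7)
My plan is to adapt the proof of \Cref{lem:basic-potential-gradient-step} to the stochastic setting, using that under \Cref{assum:stochastic_gradient_proximal}$(m)$ the smooth component $U_1$ satisfies both \Cref{assum:convexity}$(m)$ and \Cref{assum:grad_lip}, and that $\tgradst_1(x,\cdot)$ is an unbiased estimator of $\nabla U_1(x)$ with variance $\vargrad_1(\updelta_x)$.

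First I would apply the descent inequality \eqref{eq:1:lem:basic-potential-gradient-step} to $U_1$ between $x$ and $x - \step\tgradst_1(x,z)$, which gives
\[
U_1(x - \step\tgradst_1(x,z)) - U_1(x) \leq -\step\ps{\nabla U_1(x)}{\tgradst_1(x,z)} + (L\step^2/2)\norm[2]{\tgradst_1(x,z)} \eqsp,
\]
and combine it with the $m$-convexity bound $U_1(x) - U_1(y) \leq \ps{\nabla U_1(x)}{x-y} - (m/2)\norm[2]{x-y}$ so as to control $U_1(x - \step\tgradst_1(x,z)) - U_1(y)$ for arbitrary $y \in \rset^d$. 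Integrating over $z \sim \eta_1$ and invoking unbiasedness together with the bias-variance identity
\[
\int_{\tmsz} \norm[2]{\tgradst_1(x,z)} \rmd\eta_1(z) = \norm[2]{\nabla U_1(x)} + \vargrad_1(\updelta_x)
\]
yields a pointwise inequality in which $\tgradst_1$ no longer appears explicitly.

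Next I would multiply by $2\step$ and eliminate the remaining cross term $2\step\ps{\nabla U_1(x)}{x-y}$ by means of the expansion
\[
\int_{\tmsz} \norm[2]{x - \step\tgradst_1(x,z) - y} \rmd\eta_1(z) = \norm[2]{x-y} - 2\step\ps{\nabla U_1(x)}{x-y} + \step^2 \bigl(\norm[2]{\nabla U_1(x)} + \vargrad_1(\updelta_x)\bigr) \eqsp.
\]
Collecting the coefficients of $\norm[2]{\nabla U_1(x)}$ and $\vargrad_1(\updelta_x)$ produces exactly $-\step^2(1 - \step L)$ and $+\step^2(1 + \step L)$, respectively. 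At this point I would take an optimal coupling $(X,Y)$ of $\mu$ and $\nu$ independent of $Z \sim \eta_1$, integrate the pointwise inequality, and invoke $\wassersteinTarg{\mu \tS^1_\step}{\nu} \leq \expe{\norm[2]{X - \step\tgradst_1(X,Z) - Y}}$ along with $\int U_1 \rmd(\mu \tS^1_\step) = \Escr_1(\mu \tS^1_\step)$ to conclude.

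The main subtlety is the careful bookkeeping of the variance: $\vargrad_1(\updelta_x)$ appears with coefficient $L\step^3/2 \cdot 2 = L\step^3$ from the Lipschitz-induced $\norm[2]{\tgradst_1}$ term, and with coefficient $\step^2$ from the squared displacement expansion; these two positive contributions must be added rather than cancelled in order to obtain the tight prefactor $\step^2(1+\step L)$ in front of $\vargrad_1(\mu)$ after averaging over $\mu$.
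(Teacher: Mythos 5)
Your proposal is correct and takes essentially the same route as the paper. You apply the Lipschitz descent inequality and $m$-convexity to bound $U_1(x-\step\tgradst_1(x,z))-U_1(y)$, multiply by $2\step$, and expand the squared displacement $\norm[2]{x-\step\tgradst_1(x,z)-y}$ to eliminate the cross term, then take expectations over $z$ and an optimal coupling of $\mu$ and $\nu$; the paper does exactly the same thing, the only cosmetic difference being that it keeps a pointwise inequality in $(x,y,z)$ and integrates over $z$ only at the end, whereas you average over $z$ immediately after the pointwise steps. Your coefficient bookkeeping ($-\step^2(1-\step L)$ on $\norm[2]{\nabla U_1}$ and $\step^2(1+\step L)$ on $\vargrad_1$) matches the paper's.
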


\begin{proof}
Let $\mu,\nu\in \Pens_2(\rset^d)$ and $\step >0$.  Since $U_1$ satisfies \Cref{assum:grad_lip} by \cite[Lemma 1.2.3]{nesterov:2004}, for all $x,\tilde{x}\in \rset^d$, we have
$  | U_1(\tilde{x}) - U_1(x)-  \ps{\nabla U_1(x)}{\tilde{x}-x}| \leq (L/2) \norm[2]{\tilde{x}-x}$.
  Using that $U_1$ is $m$-strongly convex by \Cref{assum:stochastic_gradient_proximal}$(m)$, for all $x,y \in \rset^d$, $z \in \msz$, we get 
\begin{align*}
&  U_1(x-\steps\tgradst_1(x,z)) - U_1(y) =   U_1(x-\steps\tgradst_1(x,z))-U_1(x)+U_1(x) - U_1(y)  \\
&\qquad \leq  -\steps \ps{\nabla U_1(x)}{\tgradst_1(x,z)}+(L\steps^2/2) \norm[2]{\tgradst_1(x,z)} +\ps{\nabla U_1(x)}{x-y} - (m/2) \norm[2]{y-x} \eqsp.
\end{align*}
Then multiplying both sides by $\steps$, we obtain  
\begin{multline}
  \label{eq:2:lem:basic-potential-gradient-step_spgld}
2 \steps\defEns{   U_1(x-\steps\tgradst_1(x,z)) - U_1(y)} \leq (1-m\steps)\norm[2]{x-y} - \norm[2]{x-\steps \tgradst_1(x,z) -y}  \\
-2 \steps^2 \ps{\nabla U_1(x)}{\tgradst_1(x,z)}+\steps^2(1+\steps L)\norm[2]{\tgradst_1(x,z)} + 2 \steps\ps{\nabla U_1(x)-\tgradst_1(x,z)}{x-y} \eqsp.
\end{multline}
Let now $(X,Y)$ be an optimal coupling between $\mu$ and $\nu$ and $Z$ with distribution $\eta$ independent of $(X,Y)$. Note that \Cref{assum:stochastic_gradient_proximal} implies that $\mathbb{E}[\tgradst_1(X,Z) | (X,Y)] = \nabla U_1(X)$.
Then by definition and \eqref{eq:2:lem:basic-potential-gradient-step_spgld}, we get 
\begin{align*}
  2\steps\defEns{\Escr(\mu \tS^1_\step) - \Escr(\nu)} &\leq (1-m\steps)\wasserstein_2^2(\mu, \nu) - \expe{\norm[2]{X-\steps \tgradst_1(X) -Y}}\\
                                                      & \qquad  -2\steps^2\expe{\norm[2]{\nabla U_1(X)}}+\steps^2(1+\steps L)\expe{\norm[2]{\tgradst_1(X)}}\\
  &\leq (1-m\steps)\wasserstein_2^2(\mu, \nu) - \expe{\norm[2]{X-\steps \tgradst_1(X) -Y}}\\
                                                      & \qquad  -\steps^2(1-\step L)\expe{\norm[2]{\nabla U_1(X)}}+\steps^2(1+\steps L) \vargrad_1(\mu) \eqsp.
\end{align*}
Using that $\wasserstein^2_2(\mu \tS^1_\step,\nu) \leq \expeLigne{\normLigne[2]{X-\steps \tgradst_1(X) -Y}}$ concludes the proof.

\end{proof}

\begin{lemma} \label{lem:proximal-step}
  Assume \Cref{assum:stochastic_gradient_proximal}$(m)$ for $m \geq 0$.
  For all $\mu,\nu\in \Pens_2(\rset^d)$ and $\step >0$, we have
  \begin{equation*}
    2 \step\defEns{\Escr_2(\mu) - \Escr_2(\nu)} \leq W_2^2(\mu, \nu) - W_2^2(\mu \tS^2_\step , \nu) + 2\steps^2 M^2_2 \eqsp,
  \end{equation*}
  where $\Escr_2,\tS^2_\step$ are defined by \eqref{eq:def_escr_1_2} and \eqref{eq:definition_sker_spgld} respectively.
\end{lemma}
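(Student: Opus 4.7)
The strategy is parallel to the proof of \Cref{lem:basic-potential-gradient-step}, with the gradient descent step replaced by the proximal step. The main difference is that $U_2$ is not smooth, so I will work with the characterization of $\prox^{U_2}_\step$ via subgradients. The key observation is that the Lipschitz property of $U_2$ controls the displacement $x \mapsto \prox^{U_2}_\step(x)$ and hence the gap between $\Escr_2(\mu)$ and $\Escr_2(\mu \tS^2_\step)$.

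First, fix $x\in \rset^d$ and write $y_x = \prox^{U_2}_\step(x)$. The first-order optimality condition for the strongly convex problem defining $\prox^{U_2}_\step$ reads $(x-y_x)/\step \in \partial U_2(y_x)$. Since $U_2$ is $M_2$-Lipschitz, every element of $\partial U_2(y_x)$ has norm at most $M_2$, so $\|x-y_x\| \leq \step M_2$. Using convexity of $U_2$ together with $(x-y_x)/\step \in \partial U_2(y_x)$, for every $y \in \rset^d$ one has $U_2(y_x) - U_2(y) \leq \step^{-1}\ps{x-y_x}{y_x - y}$, and multiplying by $2\step$ and applying the polarization identity $2\ps{a}{b} = \|a+b\|^2 - \|a\|^2 - \|b\|^2$ with $a = x-y_x$, $b = y_x-y$ yields
\begin{equation*}
2\step\{U_2(y_x) - U_2(y)\} \leq \|x-y\|^2 - \|y_x - y\|^2 - \|x-y_x\|^2.
\end{equation*}

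Next, let $(X,Y)$ be an optimal coupling of $\mu$ and $\nu$ for $W_2$. Since $(\prox^{U_2}_\step(X), Y)$ is a (not necessarily optimal) coupling of $\mu \tS^2_\step$ and $\nu$, integrating the display above gives
\begin{equation*}
2\step\{\Escr_2(\mu \tS^2_\step) - \Escr_2(\nu)\} \leq W_2^2(\mu,\nu) - W_2^2(\mu \tS^2_\step, \nu) - \int_{\rset^d} \|x - \prox^{U_2}_\step(x)\|^2 \rmd \mu(x).
\end{equation*}

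Finally, to relate $\Escr_2(\mu)$ to $\Escr_2(\mu \tS^2_\step)$, use the Lipschitz property and the bound $\|x - \prox^{U_2}_\step(x)\| \leq \step M_2$ from the first step:
\begin{equation*}
\Escr_2(\mu) - \Escr_2(\mu \tS^2_\step) \leq M_2 \int_{\rset^d} \|x - \prox^{U_2}_\step(x)\| \rmd \mu(x) \leq \step M_2^2.
\end{equation*}
Adding $2\step$ times this bound to the previous inequality and dropping the non-positive term $-\int \|x - \prox^{U_2}_\step(x)\|^2 \rmd \mu(x)$ yields the stated inequality. The only subtle point is making sure the polarization identity is applied with the right sign so that the term $-\|x-y_x\|^2$ enters, allowing us to absorb the Lipschitz correction within the desired $2\step^2 M_2^2$.
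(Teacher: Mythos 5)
Your proof is correct and follows essentially the same route as the paper: decompose $U_2(x)-U_2(y)$ through $\prox^{\step}_{U_2}(x)$, use the subdifferential characterization plus the polarization identity for the second piece, and use Lipschitzness together with the displacement bound $\|x-\prox^{\step}_{U_2}(x)\|\le\step M_2$ for the first. The only cosmetic difference is that you integrate the two pieces separately and combine afterwards, whereas the paper combines pointwise and then integrates once; both arrive at the same bound.
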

\begin{proof}
Let $\mu,\nu\in \Pens_2(\rset^d)$ and $\step >0$.  First we bound for any $x,y \in \rset^d$, $U_2(x) - U_2(y)$ using the decomposition $U_2(x)-U_2(\prox_{U_2}^{\step}(x))+U_2(\prox_{U_2}^{\step}(x)) - U_2(y)$.
 For any $x,y \in \rset^d$, we have using that
  $\step^{-1}(x-\prox_{U_2}^{\step}(x)) \in \partial U_2(\prox_{U_2}^{\step}(x))$ (see \cite[Chapter 1 Section G]{rockafellar:wets:1998}), where
  $\partial U_2$ is the sub differential of $U_2$ defined by
  \eqref{eq:definition_partial_U},
  \begin{equation*}
   U_2(\prox_{U_2}^{\step}(x))- U_2(y) \leq \step^{-1} \ps{x-\prox_{U_2}^{\step}(x)}{\prox_{U_2}^{\step}(x)-y}\eqsp.
  \end{equation*}
  Since $\normLigne[2]{x-y}  = \normLigne[2]{x-\prox_{U_2}^{\step}(x)} + \normLigne[2]{\prox_{U_2}^{\step}(x)-y} + 2 \psLigne{x-\prox_{U_2}^{\step}(x)}{\prox_{U_2}^{\step}(x)-y}$, we get for all $x,y \in \rset^d$,
  \begin{equation}
    \label{eq:proximal-step_eq_1}
    U_2(\prox_{U_2}^{\step}(x))-U_2(y) \leq (2\step)^{-1}(\normLigne[2]{x-y} - \normLigne[2]{\prox_{U_2}^{\step}(x)-y}) \eqsp.
  \end{equation}

  Second, since $U_2$ is $M_2$-Lipschitz, we get for any
  $x \in \rset^d$,
  $\absLigne{U_2(x)-U_2(\prox_{U_2}^{\step}(x)) }\leq M_2
  \normLigne{x-\prox_{U_2}^{\step}(x)}$. Then using that
  $\step^{-1}(x-\prox_{U_2}^{\step}(x)) \in \partial
  U_2(\prox_{U_2}^{\step}(x))$, and for any
  $v \in \partial U_2(\prox_{U_2}^{\step}(x))$, since $U_2$ is $M_2$-Lipschiz,
  $\norm{v} \leq M_2$, we obtain $\absLigne{U_2(x)-U_2(\prox_{U_2}^{\step}(x))} \leq \step M_2^2$. Combining this result and \eqref{eq:proximal-step_eq_1} yields for any $x,y \in \rset^d$
  \begin{equation*}
   2 \step\defEns{ U_2(x)-U_2(y)} \leq \normLigne[2]{x-y} - \normLigne[2]{\prox_{U_2}^{\step}(x)-y} + 2\step^2M_2^2 \eqsp.
  \end{equation*}
Let $(X,Y)$ be an optimal coupling for $\mu$ and $\nu$. The proof then follows from using the inequality above for $(X,Y)$, taking the expectation and because $W_2^2(\mu \tS_\step^1,\nu) \leq \normLigne[2]{\prox_{U_2}^{\step}(X)-Y}$. 
\end{proof}

\begin{lemma}
  \label{lem:bound_decompo_spgld}
  Assume \Cref{assum:stochastic_gradient_proximal}$(m)$, for $m \geq 0$. 
  For all $\mu_0\in \Pens_2(\rset^d)$ and $\steps, \tstep  \in \ocint{0,L^{-1}}$,
  \begin{multline*}
    2 \tstep\defEnsLigne{  \Fscr(\mu_0 \tRker_{\step,\tstep}) - \Fscr(\pi)} \leq (1-m \tstep)W_2^2(\mu_0\tS^2_\step,\pi) - W_2^2(\mu_0 \tRker_{\step,\tstep}\tS^2_{\tstep},\pi) \\+ \tstep^2\{ 2Ld + (1+\tstep L) \vargrad_1(\mu_0 \tS^2_{\step}) + 2 M_2^2 \}
    \eqsp,
   \end{multline*}
   where  $\Fscr$, $\tRker_{\step,\tstep}$ and $\tS^2_\step$ are  defined by   \eqref{eq:def_free_energy}-\eqref{eq:definition_rker_spgld}-\eqref{eq:definition_sker_spgld} respectively. 
\end{lemma}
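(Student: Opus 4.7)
The plan is to use the decomposition \eqref{eq:decomp_sgpld} of $\Fscr(\mu_0 \tRker_{\step,\tstep}) - \Fscr(\pi)$ into four pieces and bound each one by a previously established lemma, with the telescoping of Wasserstein terms yielding the stated inequality. Since $\tRker_{\step,\tstep} = \tS^2_\step \tS^1_\tstep T_\tstep$, the intermediate measures $\mu_0 \tS^2_\step$, $\mu_0 \tS^2_\step \tS^1_\tstep$, and $\mu_0 \tRker_{\step,\tstep}$ will appear naturally, and the bounds will be designed so that the internal Wasserstein terms cancel pairwise.

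First, for the $\Escr_1$ difference coming from the heat step, I would apply \Cref{lem:conv-potential-change} to the potential $U_1$ (which is $L$-gradient Lipschitz by \Cref{assum:stochastic_gradient_proximal}$(m)$), starting from the measure $\mu_0 \tS^2_\step \tS^1_\tstep$, to obtain $\Escr_1(\mu_0 \tRker_{\step,\tstep}) - \Escr_1(\mu_0 \tS^2_\step \tS^1_\tstep) \leq L d \tstep$. Next, I would apply \Cref{lem:basic-potential-gradient-step_spgld} with $\mu = \mu_0 \tS^2_\step$ and $\nu = \pi$, step size $\tstep \leq L^{-1}$, which gives
\[
2\tstep\{\Escr_1(\mu_0 \tS^2_\step \tS^1_\tstep) - \Escr_1(\pi)\} \leq (1-m\tstep) W_2^2(\mu_0\tS^2_\step,\pi) - W_2^2(\mu_0 \tS^2_\step \tS^1_\tstep,\pi) + \tstep^2(1+\tstep L)\vargrad_1(\mu_0 \tS^2_\step),
\]
where I drop the nonpositive term $-\tstep^2(1-\tstep L)\int \norm[2]{\nabla U_1}\rmd(\mu_0\tS^2_\step)$ using $\tstep \leq L^{-1}$. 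For the $\Escr_2$ piece I would invoke \Cref{lem:proximal-step} with $\mu = \mu_0 \tRker_{\step,\tstep}$ and $\nu = \pi$ at step size $\tstep$, yielding
\[
2\tstep\{\Escr_2(\mu_0 \tRker_{\step,\tstep}) - \Escr_2(\pi)\} \leq W_2^2(\mu_0 \tRker_{\step,\tstep},\pi) - W_2^2(\mu_0 \tRker_{\step,\tstep}\tS^2_\tstep,\pi) + 2\tstep^2 M_2^2.
\]
Finally, for the entropy piece, \Cref{lem:ent-grad-flow-step-inequality} applied to $\mu = \mu_0 \tS^2_\step \tS^1_\tstep$, $\nu=\pi$ gives
\[
2\tstep\{\Hscr(\mu_0 \tRker_{\step,\tstep}) - \Hscr(\pi)\} \leq W_2^2(\mu_0 \tS^2_\step \tS^1_\tstep,\pi) - W_2^2(\mu_0 \tRker_{\step,\tstep},\pi),
\]
where I use $\mu_0 \tS^2_\step \tS^1_\tstep T_\tstep = \mu_0 \tRker_{\step,\tstep}$ and the fact that $\Hscr(\pi) < \infty$ by \Cref{lem:kl_minimizer}.

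Summing these four inequalities after multiplying the heat-flow bound by $2\tstep$, the Wasserstein term $W_2^2(\mu_0 \tS^2_\step \tS^1_\tstep,\pi)$ appears with opposite signs from the $\Escr_1$ bound and the $\Hscr$ bound and cancels, as does $W_2^2(\mu_0 \tRker_{\step,\tstep},\pi)$ from the $\Escr_2$ and $\Hscr$ bounds. What remains is exactly $(1-m\tstep) W_2^2(\mu_0\tS^2_\step,\pi) - W_2^2(\mu_0 \tRker_{\step,\tstep}\tS^2_\tstep,\pi) + \tstep^2\{2Ld + (1+\tstep L)\vargrad_1(\mu_0 \tS^2_\step) + 2M_2^2\}$, which is the claimed inequality. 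The only subtlety worth checking is consistency of the burn-in measures appearing in the Wasserstein terms: the proximal step $\tS^2_\tstep$ at the very end (right after the heat kernel) is the one that feeds the next iterate, and it is crucial to insert it through \Cref{lem:proximal-step} rather than evaluating $\Escr_2$ at $\mu_0\tRker_{\step,\tstep}$ directly, since this is what creates the telescoping structure used later in \Cref{thm:step_conv_sp}. No step requires new estimates; the proof is essentially bookkeeping, with the main care being the ordering of kernels.
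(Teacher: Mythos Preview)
Your proof is correct and follows essentially the same approach as the paper: you use the decomposition \eqref{eq:decomp_sgpld} and bound the four pieces via \Cref{lem:conv-potential-change}, \Cref{lem:basic-potential-gradient-step_spgld}, \Cref{lem:proximal-step}, and \Cref{lem:ent-grad-flow-step-inequality}, then observe the telescoping of the intermediate Wasserstein terms. The paper's argument is identical in structure and in the choice of lemmas applied.
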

\begin{proof}
Let $\mu_0 \in \Pens_2(\rset^d)$ and $\step,\tstep \in \ocint{0,L^{-1}}$. By \Cref{lem:conv-potential-change} and since $\tRker_{\step,\tstep} = \tS^2_{\step} \tS^1_{\tstep} T_{\tstep}$, we have
\begin{equation}
  \label{eq:lem:bound_decompo_spgld_1}
  \Escr_1(\mu_0 \tRker_{\step,\tstep}) - \Escr_1(\mu_0 \tS_\step^2\tS_{\tstep}^1) \leq 2 Ld\tstep \eqsp.
\end{equation}
By \Cref{lem:basic-potential-gradient-step_spgld} since $\tstep \leq 1/L$,
\begin{multline}
  \label{eq:lem:bound_decompo_spgld_2}
 2\tstep\defEnsLigne{ \Escr_1(\mu_0 \tS_\step^2\tS_{\tstep}^1) -\Escr_1(\pi)} \leq (1-\tstep m) W_2^2(\mu_0 \tS_\step^2, \pi)-  W_2^2(\mu_0 \tS_\step^2\tS_{\tstep}^1, \pi) \\+ \tstep^2(1+\tstep L) \vargrad_1(\mu_0 \tS_\step^2) \eqsp.
\end{multline}
By \Cref{lem:proximal-step}, we have
\begin{equation}
  \label{eq:lem:bound_decompo_spgld_3}
2  \tstep\defEnsLigne{  \Escr_2(\mu_0 \tRker_{\step,\tstep})- \Escr_2(\pi)} \leq  W_2^2(\mu_0 \tRker_\step, \pi) - W_2^2(\mu_0 \tRker_{\step,\tstep} \tS_{\tstep}^2, \pi) + 2\tstep^2 M_2^2 \eqsp.
\end{equation}
Finally by \Cref{lem:ent-grad-flow-step-inequality}, we have
\begin{equation}
  \label{eq:lem:bound_decompo_spgld_4}
  2 \tstep\{\Hscr(\mu_0 \tRker_{\step,\tstep})- \Hscr(\pi)\} \leq  W_2^2(\mu_0 \tS^2_\step \tS^1_{\tstep}, \pi) - W_2^2(\mu_0 \tRker_{\step,\tstep}, \pi) \eqsp.
\end{equation}
Combining \eqref{eq:lem:bound_decompo_spgld_1}-\eqref{eq:lem:bound_decompo_spgld_2}-\eqref{eq:lem:bound_decompo_spgld_3}-\eqref{eq:lem:bound_decompo_spgld_4} in \eqref{eq:decomp_sgpld} concludes the proof. 
\end{proof}
\subsubsection{Proof of \Cref{thm:step_conv_sp}}
\label{sec:proof-crefthm:st-1}
  Using the convexity of Kullback-Leibler divergence and \Cref{lem:bound_decompo_spgld}, we obtain
\begin{align*}
&\KLarg{\tnu^N_n}{\pi} \leq \Weight_{N,N+n} ^{-1}\sum_{k=N+1}^{N+n} \weight_{k} \KLarg{\mu_0 \tQkers[k]}{\pi} \\
& \leq (2 \Weight_{N,N+n}) ^{-1} \left[ \frac{(1-m\stepa[N+2])\weighta[N+1]}{\stepa[N+2]} \wassersteinTarg{\mu_0 \tQkers[N] \tS^2_{\step_{N+1}}}{ \pi} - 
\frac{\weighta[N+n]}{\stepa[N+n+1]} \wassersteinTarg{\mu_0 \tQkers[N+n]\tS^2_{\step_{N+n+1}}}{ \pi} \right. \\
& \qquad \qquad \left. + \sum_{k=N+1}^{N+n-1}  \defEns{\frac{(1-m \stepa[k+2])\weighta[k+1]}{\stepa[k+2]}-\frac{\weighta[k]}{\stepa[k+1]}}  \wassersteinTarg{\mu_0 \Qkers[k-1] \tS^2_{\step_{k+1}}}{ \pi}  \right.\\
 & \qquad \qquad \left. + \sum_{k=N+1}^{N+n}   \weighta[k] \stepa[k+1]\{2Ld + (1+\stepa[k+1] L) \vargrad_1(\mu_0 \Qkers[k] \tS^2_{\step_{k}}) + 2 M_2^2\} \right] \eqsp.
\end{align*}
We get the thesis using that $\weighta[k+1](1-m\stepa[k+2])/\stepa[k+2] \leq\weighta[k]/\stepa[k+1]$ for all $k\in \nset$.

\subsubsection{Proof of \Cref{coro:fixed_step_conv_sp}}
\label{sec:proof-cor:fixed_step_conv_sp}
Using \Cref{thm:step_conv_sp} we get:
\[
\KLarg{\tnu^N_n}{\pi} \leq  \left. \wassersteinTarg{\mu_0 \tQkers[N] \tSker_{\steps}^2}{ \pi} \middle/(2 \steps n)
+ \steps ( Ld + M_2^2) + \frac{\steps}{2n} \sum_{k=N+1}^{N+n}  (1+\steps L) \vargrad_1(\mu_0 \Qkers[k] \tS^2_{\steps}) \right.
\]
and using   \Cref{propo:bound_variance_sto_grad_spgld} we obtain:
\begin{multline*}
2\gamma (\tilde{L}^{-1}-\gamma) \left(\sum_{k=N+1}^{N+n} \vargrad_1(\mu_0 \Qkers[k] \tS^2_{\steps}) \right) \leq  \int_{\rset^d} \norm[2]{y-\xstar} \rmd \mu_0 \Qkers[N+1] \tS^2_{\steps} (y) \\
-\int_{\rset^d} \norm[2]{y-\xstar}
\rmd \mu_0 \Qkers[N+n+1] \tS^2_{\steps} (y)
 +2n \gamma^2 \vargrad_1(\updelta_{\xstar}) + 2n \gamma d \eqsp,
\end{multline*}

Combining the two inequalities above finishes the proof of the first part of \Cref{coro:fixed_step_conv_sp}. For the second part, observe that since $\steps_{\varepsilon} \leq L^{-1}$ and $\steps_{\varepsilon} \leq (2\tilde{L})^{-1}$ we have $(1 + \step L)(2(\tilde{L}^{-1}-\gamma))^{-1} \leq 2 \tilde{L}$. Therefore from definition of $\steps_{\varepsilon}$ we have $\steps_{\varepsilon}(Ld + M_2^2 + 2\tilde{L} d) \leq \varepsilon / 4$, as well as $\steps_{\varepsilon}^2 2\tilde{L}\vargrad_1(\updelta_{\xstar}) \leq \varepsilon/4$. On the other hand, from definition of $n_{\varepsilon}$ we have  $W_2^2(\mu_0\tSker_{\step_{\varepsilon}}^{2}, \pi)/(2n_{\varepsilon} \steps_{\varepsilon}) \leq \varepsilon /4$ as well as $2\tilde{L}(2n_{\varepsilon})^{-1} \int_{\rset^d}  \norm[2]{x-\xstar} \rmd \mu_0\tS^2_{\steps}(y) \leq \varepsilon / 4$. Combining this four bounds we get the thesis.


\subsubsection{Proof of \Cref{thm:step_conv_sp_wasser}}
\label{sec:proof-crefthm:st-2}
  Using \Cref{lem:bound_decompo_spgld} and since the
  Kullback-Leibler divergence is non-negative, we get for all $k \in \{1,\ldots,n\}$,
\begin{multline*}
  \wassersteinTarg{\mu_0 \tQkers[k] \tS^2_{\step_{k+1}}}{ \pi} \leq (1-m\stepa[k+1])\wassersteinTarg{\mu_0 \Qkers[k-1]\tS^2_{\step_{k}}}{ \pi} \\ + \stepa[k+1]^2\{2Ld + (1+\stepa[k+1] L) \vargrad_1(\mu_0 \Qkers[k-1] \tS^2_{\step_{k}}) + 2 M_2^2\} \eqsp.
\end{multline*}
The proof then follows from a direct induction.

\subsubsection{Proof of \Cref{propo:coco_st_convex}}
\label{sec:proof-crefpr}
Let $\gamma >0$, $x \in \rset^d$ and consider
  $ \tX_1 = \prox_{U_2}^{\step}\defEns{x-\step \tilde{\Theta}_1(x,\gradrv_1) +
    \sqrt{2\step} G_1}$, where $\gradrv_1$ and $G_1$ are two
  independent random variables, $\gradrv_1$ has distribution $\eta_1$
  and $G_1$ is a standard Gaussian random variable, so that $\tX_1$ has distribution $\tS^1_\step  T_{\step}\tS^2_\step(x,\cdot)$.  First by
  \cite[Theorem 26.2(vii)]{bauschke:combettes:2011}, we have that
  $\xstar = \prox_{U_2}^{\step}(\xstar - \step \nabla U_1(\xstar))$
  and by \cite[Proposition 12.27]{bauschke:combettes:2011}, the
  proximal is non-expansive, for all $x,y \in \rset^d$,
  $\normLigne{\prox^{\step}_{U_2}(x) -\prox^{\step}_{U_2}(y)} \leq
  \norm{x-y}$. Using these two results and the fact that $\tilde{\Theta}_1$ satisfies
  \Cref{assum:cocoercitivity_sto_grad}, we have
\begin{align*}
 & \expe{\norm[2]{\tX_1 - \xstar}} = \expe{\norm{ \prox_{U_2}^{\step}\defEns{x-\step \tilde{\Theta}_1(x,\gradrv_1) +
    \sqrt{2\step} G_1} -\prox_{U_2}^{\step}\{\xstar - \step \nabla U_1(\xstar)\} }^2} \\
& \qquad \leq \expe{\norm{ \left(x-\step \tilde{\Theta}_1(x,\gradrv_1) +
    \sqrt{2\step} G_1\right) - \left( \xstar - \step \nabla U_1(\xstar)\right) }^2} \\
&  \qquad \leq  \norm[2]{x-\xstar}\\
& \qquad \qquad +  \expe{2\gamma \ps{x-\xstar }{\nabla U_1(\xstar) - \tilde{\Theta}_1(x,\gradrv_1)} + \gamma^2 \norm[2]{\nabla U_1(\xstar) - \tilde{\Theta}_1(x,\gradrv_1)}} + 2 \gamma d \\
& \qquad \leq (1-\tm_1\gamma)\norm[2]{x-\xstar} - 2 \gamma(\tilde{L}^{-1}_1-\gamma) \expe{\norm[2]{ \tilde{\Theta}_1(x,\gradrv_1) -  \tilde{\Theta}_1(\xstar,\gradrv_1)}}\\
&\phantom{aaaaaaaaaaaaaaaaaaaaaaaaaaaa} +2 \gamma^2\expe{\norm[2]{ \tgradst_1(\xstar,\gradrv_1) - \nabla U_1(\xstar)}}+  2 \gamma d  \eqsp.
\end{align*}
The proof is completed upon noting that $\vargrad_1(\updelta_x) \leq \expeLigne{\norm[2]{\gradst(x,\gradrv_1) - \gradst(\xstar,\gradrv_1)}}$.

\subsubsection{Proof of \Cref{cor:spgula_str_conv_wass}}
\label{sec:proof-cor:spgld}
Using  \Cref{thm:step_conv_sp_wasser} we get:
\begin{align}
  \nonumber
  W_2^2(\mu_0\tRker_{\gamma,\gamma}^n \tSker_{\step}^2, \pi) &\leq (1-m\step)^n W_2^2(\mu_0  \tSker_{\step}^2, \pi)\\
    \nonumber
                                                             &+ \step^2\sum_{k=1}^{n}  (1-m\step)^{n-k} \left(2Ld + (1+\step L) \vargrad_1(\mu_0 \tRker_{\gamma,\gamma}^k \tS^2_{\step}) + 2 M_2^2\right) \\
    \nonumber
                                                           &  \leq (1-m\step)^n W_2^2(\mu_0  \tSker_{\step}^2, \pi) + 2(Ld+M_2)\gamma/m\\
  \label{cor:spgula_str_conv_wass_eq_2}
&+ \step^2\sum_{k=1}^{n}  (1-\tm\step)^{n-k}  (1+\step L) \vargrad_1(\mu_0 \tRker_{\gamma,\gamma}^k \tS^2_{\step})  \eqsp.
\end{align}
In addition, using \Cref{propo:coco_st_convex} and $\steps \leq (2\tilde{L}_1)^{-1}$, we have
\begin{multline*}
  \gamma \tL_1^{-1} \sum_{k=1}^{n}  (1-\tm\step)^{n-k} \vargrad_1(\mu_0 \tRker_{\gamma,\gamma}^k \tS^2_{\step})   \leq  2\gamma \sum_{k=1}^{n}(1-\tm\step)^{n-k} (\gamma \vargrad_1(\updelta_{\xstar}) + d) \\
+  \sum_{k=1}^{n}  (1-\tm\step)^{n-k+1} \int_{\rset^d} \norm[2]{x-\xstar} \rmd \mu_0 \Rker_{\gamma,\gamma}^k \tS^2_{\step}( x) \\-  \sum_{k=1}^{n}  (1-\tm\step)^{n-k} \int_{\rset^d} \norm[2]{x-\xstar} \rmd \mu_0 \Rker_{\gamma,\gamma}^{k+1}\tS^2_{\step}( x) \eqsp.
\end{multline*}
Combining this result and \eqref{cor:spgula_str_conv_wass_eq_2} concludes the proof of \eqref{cor:spgula_str_conv_wass_eq_1}.

Now, for $\steps_{\varepsilon}, n_{\varepsilon}$ as defined in the thesis of the corollary we have $\steps_{\varepsilon} \Delta_1 \leq \varepsilon/4$ and $\Delta_2 \steps_{\varepsilon}^2 \leq \varepsilon/4$. Furthermore, $(1-m\step_{\varepsilon})^{n_{\varepsilon}} W_2^2(\mu_0  \tSker_{\step_{\varepsilon}}^2, \pi) \leq \exp(- n_{\varepsilon} m \steps_{\varepsilon})  W_2^2(\mu_0  \tSker_{\step_{\varepsilon}}^2, \pi) \leq \varepsilon/4$, 
and $(1 - \steps_{\varepsilon} \tm) \Delta_3  \leq \varepsilon/4$ similarly. 
Together, the above inequalities conclude the proof.







\bibliographystyle{plain}
\bibliography{biblio}

\appendix
\section{Definitions and useful results from theory of gradient flows}
\label{sec:defin-usef-results}
Let $I \subset \rset$ be an open interval of $\rset$ and $(\mu_t)_{t \in
  I}$ be a curve on $\Pens_2(\rset^d)$, \ie~a family of probability
measures belonging to $\Pens_2(\rset^d)$. $(\mu_t)_{t \in I}$ is said
to be absolutely continuous if there exists $\ell \in \Lone(I)$ such
that for all $s,t \in I$, $s \leq t$, $\wasserstein_2(\mu_s,\mu_t)
\leq \int_{s}^t \abs{\ell}(u) \rmd u$. Denote by $\AC{I}$ the set of
absolutely continuous curves on $I$ 
 and 
 \begin{equation*}
 \ACloc{\rset^*_+} = \defEns{(\mu_t)_{t \geq 0} \ : \,  (\mu_t)_{t \in I} \in \AC{I} \text{ for any open interval } I \subset \rset^*_+ }   \eqsp.
 \end{equation*}
Note that if $(\mu_t)_{t \in I} \in \AC{I}$, then for any $\nu \in
\Pens_2(\rset^d)$, $t \mapsto \wasserstein_2(\nu,\mu_t)$ is absolutely
continuous on $I$ (as a curve from $I$ to $\rset_+$). Therefore by
\cite[Theorem 20.8]{nielsen:1997} and \cite[Exercice 4,
p.45]{mitrovic:zubrinic:1997}, $t \mapsto \wasserstein_2(\nu,\mu_t)$
has derivative for almost all $t \in I$ and there exists $\delta : I
\to \rset$ satisfying
 \begin{equation}
   \label{eq:carac_derivative_wasser_nu}
   \int_{I} \abs{\delta}(u) \rmd u < \plusinfty \text{ and }
\wasserstein_2^2(\nu,\mu_t)-\wasserstein_2^2(\nu,\mu_s) = \int_{s}^t
\delta(u) \rmd u \eqsp, \text{ for all } s,t \in I
 \end{equation}


Let $\mu,\nu \in \Pens_2(\rset^d)$. A constant speed geodesic
$(\lambda_t)_{t \in \ccint{0,1}}$ between $\mu$ and $\nu$ is a curve
in $\Pens_2(\rset^d)$ such that $\lambda_0=\mu$, $\lambda_1=\nu$ and
for all for all $s,t \in \ccint{0,1}$,
$\wasserstein_2(\lambda_s,\lambda_t) = \abs{t-s}
\wasserstein_2(\mu,\nu)$. Note that by the triangle inequality, this
definition is equivalent to for all $s,t \in \ccint{0,1}$,
$\wasserstein_2(\lambda_s,\lambda_t) \leq \abs{t-s}
\wasserstein_2(\mu,\nu)$.  Indeed by the triangle inequality and the assumption $\wasserstein_2(\lambda_s,\lambda_t) \leq \abs{t-s}
\wasserstein_2(\mu,\nu)$, we have for all $s,t \in \ccint{0,1}$, $s <t$,
\begin{equation*}
\wasserstein_2(\mu,\nu) \leq \wasserstein_2(\mu,\lambda_t) +  \wasserstein_2(\lambda_t,\lambda_s) + \wasserstein_2(\lambda_s,\nu) 
 \leq \wasserstein_2(\mu,\nu) \eqsp.
\end{equation*}
Therefore the first inequality is in fact an equality, and therefore using again the assumption for $\wasserstein_2(\mu,\lambda_t)$ and $\wasserstein_2(\lambda_s,\nu) $ concludes the proof.
By definition of the Wasserstein distance of order
$2$, a constant speed geodesic $(\lambda_t)_{t \in \ccint{0,1}}$
between $\mu$ and $\nu$ is given for all $t \in \ccint{0,1}$ by
$\lambda_t = (t\proj_1+(1-t)\proj_2)_\sharp \zeta$ where $\zeta$ is an
optimal transport plan between $\mu$ and $\nu$ and $\proj_1,\proj_2 :
\rset^{2d} \to \rset^d$ are the projections on the first and last $d$
components respectively.

Let $\Sscr : \Pens_2(\rset^d) \to \ocint{-\infty,\plusinfty}$. The functional $\Sscr$
 is said to be  lower semi-continuous  if for all $M \in \rset$, $\{
\Sscr \leq M \}$ is a closed set of $\Pens_2(\rset^d)$ and 
$m$-geodesically convex for $m \geq 0$ if for any $\mu, \nu \in
\Pens_2(\rset^d)$ there exists a constant speed geodesic
$(\lambda_t)_{t \in \ccint{0,1}}$ between $\mu$ and $\nu$ such that for all $t \in \ccint{0,1}$ 
\begin{equation*}
  \Sscr(\lambda_t) \leq t \Sscr(\mu) + (1-t) \Sscr(\nu) - t(1-t)(m/2) \wasserstein_2^2(\mu,\nu) \eqsp.
\end{equation*}
If $m=0$, $\Sscr$ will be simply said geodesically convex. 

A curve $(\mu_t)_{t >0} \in \ACloc{\rset^*_+}$ is said to be a
gradient flow for the lower semi-continuous and $m$-geodesically
convex function $\Sscr : \Pens_2(\rset^d) \to
\ocint{-\infty,\plusinfty}$ if for all $\nu \in \Pens_2(\rset^d)$, 
$\Sscr(\nu) < \plusinfty$, and for almost all $t \in \rset^*_+$,
\begin{equation*}
  (1/2)\delta_t +(m/2) \wasserstein_2^2(\mu_t,\nu) \leq \Sscr(\nu) - \Sscr(\mu_t) \eqsp,
\end{equation*}
where $\delta : \rset^*_+ \to \rset$ satisfies
\eqref{eq:carac_derivative_wasser_nu} for all open interval of
$\rset^*_+$. We say that $(\mu_t)_{t \in \rset^*_+}$ starts at $\mu$ if
$\lim_{t \to 0} \wasserstein_2(\mu_t,\mu) = 0$ and then set $\mu_0 =
\mu$. By \cite[Theorem 11.1.4]{ambrosio2008gradient}, there exists at
most one gradient flow associated with $\Sscr$. 

Consider the functional $\tFscr : \Pens_2(\rset^d) \to \ocint{-\infty,
  \plusinfty}$ given by $\tFscr = \Hscr + \tEscr$ where $\Hscr$ is
defined by \eqref{eq:def_Boltz_entropy} and $\tEscr$ for all $\mu \in
\Pens_2(\rset^d)$ by
\begin{equation*}
  \tEscr(\mu) = \int_{\rset^d} V(x) \rmd \mu(x) \eqsp,
\end{equation*}
where $V : \rset^d \to \ocint{-\infty, \plusinfty}$ is a convex
lower-semicontinuous function (for all $M \geq 0$, $\{ V \leq M\}$ is
closed subset of $\rset^d$) with $\{ V < \plusinfty \} \not =
\emptyset$ and the interior of this set non empty as well. By
\cite[Proposition 9.3.2, Theorem 9.4.12]{ambrosio2008gradient},
$\tFscr$ is geodesically convex and \cite[Theorem 11.2.8,Theorem
11.1.4]{ambrosio2008gradient} shows that there exists a unique
gradient flow $(\mu_t)_{t \geq 0}$ starting at $\mu \in \Pens_2(\rset^d)$ and
this curve is the unique solution of the Fokker-Plank equation (in the sense of distributions) : 
\begin{equation*}
    \frac{\partial \mu_t}{\partial t} = \divergence (\nabla \mu_t^{x}+ \mu_t^{x} \nabla V(x)) \eqsp,
\end{equation*}
\ie~for all $\phi \in C_c^{\infty}( \rset^d)$ and $t >0$,
\begin{equation*}
  \frac{\partial }{\partial t} \int_{\rset^d} \phi(y) \mu_t(\rmd y) = \int_{\rset^d} \generator \phi(y) \, \mu_t(\rmd y) \eqsp.
\end{equation*}
In addition for all $t >0$, $\mu_t$ is absolutely continuous with respect to the Lebesgue measure. 
In particular for $V =0$, we get the following result.

\begin{theorem}
  \label{theo:heat_flow_prop}
For all $\mu \in \Pens_2(\rset^d)$, there exists a unique solution of the Fokker-Plank equation (in the sense of distributions) : 
\begin{equation*}
    \frac{\partial \mu_t}{\partial t} = \Delta \mu_t \eqsp.
\end{equation*}
In addition $(\mu_t)_{t \geq 0} \in \AC{\rset^*_+}$ and satisfies for almost all $t \in \rset^*_+$,
\begin{equation*}
\delta_t/2 \leq \Hscr(\nu)  -  \Hscr(\mu_t) \eqsp,
\end{equation*}
where $\delta_t$ is given in \eqref{eq:carac_derivative_wasser_nu}.
\end{theorem}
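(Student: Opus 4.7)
The plan is to deduce this theorem as a direct specialization of the general existence-and-uniqueness result for gradient flows of functionals of the form $\tFscr = \Hscr + \tEscr$ on $(\Pens_2(\rset^d),\wasserstein_2)$, which was recalled just above the statement, applied to the potential $V \equiv 0$.

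First I would verify that $V \equiv 0$ meets the standing hypotheses of that general result: it is convex, lower semi-continuous, and its effective domain is all of $\rset^d$, hence nonempty with nonempty interior. Consequently the associated functional $\tEscr$ vanishes identically, so $\tFscr = \Hscr$. By \cite[Proposition 9.3.2, Theorem 9.4.12]{ambrosio2008gradient} (invoked in the preceding paragraph), $\Hscr$ is then geodesically convex and lower semi-continuous on $\Pens_2(\rset^d)$, and by \cite[Theorem 11.2.8, Theorem 11.1.4]{ambrosio2008gradient} there exists a unique gradient flow $(\mu_t)_{t \geq 0} \in \ACloc{\rset^*_+}$ starting from any $\mu \in \Pens_2(\rset^d)$. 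This gradient flow is moreover characterized as the unique distributional solution of the Fokker-Planck equation $\partial_t \mu_t = \divergence(\nabla \mu_t + \mu_t \nabla V)$, which collapses to the heat equation $\partial_t \mu_t = \Delta \mu_t$ when $V \equiv 0$. This gives existence and uniqueness of the distributional solution and its absolute continuity as a Wasserstein curve.

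Next I would extract the announced energy-dissipation inequality directly from the definition of a gradient flow for a $0$-geodesically convex functional. By that definition, for every $\nu \in \Pens_2(\rset^d)$ with $\Hscr(\nu) < \plusinfty$ and for almost every $t \in \rset^*_+$, one has
\begin{equation*}
(1/2)\,\delta_t + (0/2)\,\wasserstein_2^2(\mu_t,\nu) \leq \Hscr(\nu) - \Hscr(\mu_t),
\end{equation*}
where $\delta_t$ is the a.e.\ derivative of $t \mapsto \wasserstein_2^2(\nu,\mu_t)$ provided by \eqref{eq:carac_derivative_wasser_nu}. Since the quadratic Wasserstein term drops out, this is precisely the claimed inequality $\delta_t/2 \leq \Hscr(\nu) - \Hscr(\mu_t)$. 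For $\nu$ with $\Hscr(\nu) = \plusinfty$ the inequality is trivial, so the statement holds for all $\nu \in \Pens_2(\rset^d)$.

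The only delicate point is that the general machinery of \cite{ambrosio2008gradient} is quoted in the paper in a form that requires a nontrivial $V$; strictly speaking one must check that the degenerate choice $V \equiv 0$ is still covered (the displayed Fokker-Planck equation and the absolute continuity conclusion both specialize cleanly, but one should confirm that the existence of a gradient flow for $\Hscr$ alone is indeed included, rather than merely for $\Hscr + \tEscr$ with $V$ of full effective domain and nontrivial growth). This is the main obstacle I foresee, but it is purely a bookkeeping issue: the proofs of \cite[Theorem 11.2.8]{ambrosio2008gradient} go through verbatim when $V$ vanishes since the potential energy term only contributes nonnegatively to the displacement-convexity and lower semicontinuity arguments. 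Once this is granted, the theorem follows with no further computation.
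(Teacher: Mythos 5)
Your proposal is correct and matches the paper's intended argument exactly: the paper gives no separate proof of this theorem, deriving it as the $V \equiv 0$ specialization of the gradient-flow discussion recalled immediately before it, which is precisely what you do. Your final caveat is unnecessary, though: $V \equiv 0$ plainly satisfies the hypotheses stated in that preceding paragraph (convex, lower semi-continuous, with nonempty effective domain whose interior is nonempty), so the general results from \cite{ambrosio2008gradient} apply verbatim without any further bookkeeping.
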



\section{On the second order moment of logconcave measures }
\label{sec:second-order-moment}
\begin{assumption}
  \label{assum:super_log_concave}
  There exist $\eta >0$, $M_{\eta} \geq 0$ such that for all $x \in \rset^d$, $x \not \in \boule{0}{M_{\eta}}$,
  \begin{equation*}
    U(x) - U(\xstar) \geq \eta \norm{x-\xstar} \eqsp. 
  \end{equation*}
\end{assumption}

In this section, we give some bounds on to deal with the distance of the  initial condition of the algorithms from $\pi$ in $W_2$.
\begin{proposition}
  \label{propo:bound_wasser_init_condition}
  Assume \Cref{assum:convexity}($0$)  and \Cref{assum:super_log_concave}.  Then, we have
  \begin{align*}
    \int_{\rset^d} \norm[2]{x - \xstar} \rmd \pi(x) &\leq 2 \eta^{-2}d (1+d) + M_{\eta}^2  \eqsp. 
  \end{align*}
\end{proposition}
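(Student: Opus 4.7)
The strategy is to reduce the $d$-dimensional integral to a one-dimensional radial estimate via polar coordinates centered at $\xstar$, then combine a trivial bound on $\boule{\xstar}{M_\eta}$ with an integration-by-parts estimate on its complement, controlling all remaining terms by convexity.

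First, setting $V(x) = U(x) - U(\xstar)$, the function $V$ is convex, nonnegative, vanishes at $\xstar$, and satisfies $V(x) \geq \eta \norm{x - \xstar}$ whenever $\norm{x - \xstar} \geq M_\eta$. Writing $x = \xstar + r\,u$ with $u \in S^{d-1}$, $r \geq 0$, and $\phi_u(r) := V(\xstar + r u)$, the target $\int \norm{x - \xstar}^2 \rmd \pi$ becomes a weighted average over $u$ of the one-dimensional ratio
\[
E_u[R^2] := \frac{\int_0^\infty r^{d+1}\rme^{-\phi_u(r)}\,\rmd r}{\int_0^\infty r^{d-1}\rme^{-\phi_u(r)}\,\rmd r}.
\]
Each $\phi_u$ is convex on $[0,\infty)$, vanishes at $0$, and dominates $\eta r$ on $[M_\eta,\infty)$; so it suffices to prove $E_u[R^2] \leq M_\eta^2 + 2d(d+1)/\eta^2$ uniformly in $u$.

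Split the numerator at $r = M_\eta$. On $[0, M_\eta]$ the trivial bound $r^2 \leq M_\eta^2$ contributes at most $M_\eta^2$ times the denominator. For the tail, convexity together with $\phi_u(r) \geq \eta r$ forces the right derivative $\phi_u'(r) \geq \eta$ on $(M_\eta, \infty)$. Two integrations by parts against $-\rmd \rme^{-\phi_u} = \phi_u'\rme^{-\phi_u}\rmd r$, discarding the non-negative $\phi_u''$-terms at each step, yield
\[
\int_{M_\eta}^\infty r^{d+1}\rme^{-\phi_u}\,\rmd r \leq \frac{d(d+1)}{\eta^2}\int_{M_\eta}^\infty r^{d-1}\rme^{-\phi_u}\,\rmd r + \frac{M_\eta^{d+1}}{\eta}\rme^{-\phi_u(M_\eta)} + \frac{(d+1)M_\eta^d}{\eta^2}\rme^{-\phi_u(M_\eta)}.
\]

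The main obstacle is absorbing the two boundary terms into the denominator, since $\phi_u(M_\eta)$ is only known to satisfy $\phi_u(M_\eta) \geq \eta M_\eta$ and may be arbitrarily large. For this I would invoke convexity in the opposite direction: from $\phi_u(0) = 0$, the secant bound gives $\phi_u(r) \leq (r/M_\eta)\phi_u(M_\eta)$ on $[0, M_\eta]$, hence
\[
\int_0^{M_\eta} r^{d-1}\rme^{-\phi_u(r)}\,\rmd r \geq \bigl(M_\eta/\phi_u(M_\eta)\bigr)^d\,\gamma\bigl(d, \phi_u(M_\eta)\bigr),
\]
where $\gamma$ is the lower incomplete gamma function. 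Writing $t = \phi_u(M_\eta) \geq \eta M_\eta$, the ratio of each boundary term to this lower bound reduces to an expression in $t$ alone. The elementary estimate $t^d\rme^{-t} \leq d\,\gamma(d,t)$, obtained by noting $\tfrac{\rmd}{\rmd t}[d\gamma(d,t) - t^d\rme^{-t}] = t^d\rme^{-t} \geq 0$ with value $0$ at $t = 0$, controls the second boundary term uniformly in $t$, and a similar one-variable critical-point analysis controls the first. Collecting these with the trivial inner bound and the tail coefficient $d(d+1)/\eta^2$ produces $E_u[R^2] \leq M_\eta^2 + 2d(d+1)/\eta^2$, which completes the proof.
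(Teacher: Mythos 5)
Your proof is correct and takes a genuinely different, self-contained route. The paper first reduces the statement to $\eta^{-2}\int_{\rset^d}|U(x)-U(\xstar)|^2\,\rmd\pi(x) + M_\eta^2$, then closes by citing two nontrivial concentration-of-information results for log-concave densities: Bobkov--Madiman gives $|{-\Hscr(\pi)}-\log\rmZ-U(\xstar)|\leq d$ and Fradelizi--Madiman--Wang gives $\int_{\rset^d}|U(x)+\log\rmZ+\Hscr(\pi)|^2\,\rmd\pi(x)\leq d$. Your polar-coordinate argument is elementary and avoids both citations. Two points you should flesh out. First, the radial reduction is legitimate because the target second moment is the $b_u$-weighted average of $E_u[R^2]$, with $b_u=\int_0^\infty r^{d-1}e^{-\phi_u}\,\rmd r$, so a pointwise-in-$u$ bound suffices. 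Second, your implicit bound on $B_1=(M_\eta^{d+1}/\eta)\,e^{-t}$, $t=\phi_u(M_\eta)\geq\eta M_\eta$, does close: using $M_\eta\leq t/\eta$ it reduces to $t^{d+1}e^{-t}\leq d(d+1)\int_0^t s^{d-1}e^{-s}\,\rmd s$, and this holds because the derivative of the difference is $t^{d-1}e^{-t}\{t^2-(d+1)t+d(d+1)\}\geq 0$ (the quadratic has discriminant $(d+1)(1-3d)<0$); with this, $B_1+B_2\leq 2d(d+1)\eta^{-2}\int_0^{M_\eta}r^{d-1}e^{-\phi_u}\,\rmd r$, which suffices. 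A final caveat: \Cref{assum:convexity}$(0)$ gives no smoothness, so $\phi_u$ is only convex; the two integrations by parts should be carried out against the right derivative $\phi_u'$ and the nonnegative Stieltjes measure $\rmd\phi_u'$, which changes no inequality.
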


\begin{proof}
  Note that under \Cref{assum:super_log_concave}, we have
  \begin{align}
    \nonumber
    &\int_{\rset^d} \norm[2]{x - \xstar} \rmd \pi(x) \leq     \eta^{-2} \int_{\rset^d} \abs{U(x) - U(\xstar)}^2   \rmd \pi(x) + M_{\eta}^2 \\
    \label{eq:proofbound_wasser_init_condition_1}
    &  \leq  2\eta^{-2} \int_{\rset^d} \abs{U(x)+\log(\rmZ) + \Hscr(\pi)}^2   \rmd \pi(x) + 2\eta^{-2} \abs{-\Hscr(\pi)-\log(\rmZ) - U(\xstar)}^2+ M_{\eta}^2 \eqsp. 
  \end{align}
  where $\Hscr$ is defined by \eqref{eq:def_Boltz_entropy} and $\rmZ = \int_{\rset^d} \rme^{-U(y)} \rmd y$. Then, by
  \cite[Proposition I.2]{bobkov:madiman:2011},
  $\abs{-\Hscr(\pi)-\log(\rmZ)- U(\xstar)} \leq d$ and by \cite[Theorem 2.3]{fradelizi:madiman:wang:2016}, (see also \cite{nguyen:2013phd} and \cite{wang:2014phd}), $ \int_{\rset^d} \abs{U(x)+\log(\rmZ) + \Hscr(\pi)}^2   \rmd \pi(x) \leq d$. Combining these two results in \eqref{eq:proofbound_wasser_init_condition_1} concludes the proof. 
  
\end{proof}



\end{document}